\documentclass[11pt,letterpaper]{article}
\pdfoutput=1
\usepackage[margin=1in]{geometry}
\usepackage[ascii]{inputenc}
\usepackage{mathtools}
\usepackage{mathpazo}

\usepackage[dvipsnames,svgnames]{xcolor}
\usepackage[bookmarksnumbered,linktocpage,hypertexnames=false,colorlinks=true,linkcolor=NavyBlue,urlcolor=NavyBlue,citecolor=ForestGreen,anchorcolor=green,breaklinks=true,pagebackref=true,pdfusetitle]{hyperref}
\usepackage[normalem]{ulem}
\usepackage{bbm,braket,mathrsfs,amsmath,amssymb,amsthm,amsfonts,latexsym,graphicx,multirow,booktabs,bm,xspace,float,mathdots,caption,subcaption,ellipsis,mleftright,setspace,dsfont,faktor,kantlipsum,algorithm,tocloft}
\usepackage[shortlabels]{enumitem}
\usepackage[T1]{fontenc}
\usepackage{textcomp}
\usepackage[english]{babel}
\usepackage[capitalize,nameinlink]{cleveref}
\newcommand{\crefpart}[2]{\hyperref[#2]{\namecref{#1}~\labelcref*{#1}~\ref*{#2}}}
\newtheorem{theorem}{Theorem}[section]
\newtheorem{corollary}[theorem]{Corollary}\AddToHook{env/corollary/begin}{\crefalias{theorem}{corollary}}
\newtheorem{definition}[theorem]{Definition}\AddToHook{env/definition/begin}{\crefalias{theorem}{definition}}
\newtheorem{lemma}[theorem]{Lemma}\AddToHook{env/lemma/begin}{\crefalias{theorem}{lemma}}
\newtheorem{proposition}[theorem]{Proposition}\AddToHook{env/proposition/begin}{\crefalias{theorem}{proposition}}\crefname{proposition}{Proposition}{Propositions}
\AddToHook{env/problem/begin}{\crefalias{theorem}{problem}}
\AddToHook{env/assumption/begin}{\crefalias{theorem}{assumption}}
\AddToHook{env/claim/begin}{\crefalias{theorem}{claim}}
\AddToHook{env/conjecture/begin}{\crefalias{theorem}{conjecture}}
\newtheorem{remark}[theorem]{Remark}\AddToHook{env/remark/begin}{\crefalias{theorem}{remark}}
\newtheorem{example}[theorem]{Example}\AddToHook{env/example/begin}{\crefalias{theorem}{example}}
\newtheorem{question}[theorem]{Question}\AddToHook{env/question/begin}{\crefalias{theorem}{question}}
\newtheorem*{theorem*}{Theorem}
\newtheorem*{corollary*}{Corollary}
\newtheorem*{proposition*}{Proposition}
\AddToHook{cmd/appendix/after}{\crefalias{section}{appendix}}
\numberwithin{equation}{section}
\allowdisplaybreaks[4]
\newcommand{\eps}{\varepsilon}
\newcommand{\ot}{\otimes}
\newcommand{\op}{\oplus}

\renewcommand{\AA}{\mathbb{A}}
\newcommand{\RR}{\mathbb{R}}
\newcommand{\CC}{\mathbb{C}}
\newcommand{\ZZ}{\mathbb{Z}}
\newcommand{\NN}{\mathbb{N}}

\newcommand{\bbH}{\mathbb{H}}
\newcommand{\bbP}{\mathbb{P}}

\newcommand{\diff}{\mathrm{d}}

\newcommand{\MM}{\mathcal{M}}
\newcommand{\VV}{\mathcal{V}}

\newcommand{\qfunc}{F}
\newcommand{\logqfunc}{E}
\DeclareMathOperator*{\argmin}{argmin}
\DeclareMathOperator{\LR}{LR}

\DeclareMathOperator{\poly}{poly}
\DeclareMathOperator{\intr}{int}
\DeclareMathOperator{\dom}{dom}
\DeclareMathOperator{\PD}{PD}
\DeclareMathOperator{\PSD}{PSD}
\DeclareMathOperator{\U}{U}

\DeclareMathOperator{\spec}{spec}
\DeclareMathOperator{\capacity}{cap}
\DeclareMathOperator{\Capacity}{Cap}

\DeclareMathOperator{\dist}{d}
\DeclareMathOperator{\rk}{rk}
\DeclareMathOperator{\GL}{GL}
\DeclareMathOperator{\SL}{SL}
\DeclareMathOperator{\Lin}{L}
\DeclareMathOperator{\Or}{O}
\DeclareMathOperator{\SOr}{SO}
\DeclareMathOperator{\Sym}{S}
\DeclareMathOperator{\diag}{diag}

\DeclareMathOperator{\tr}{tr}
\DeclareMathOperator{\ope}{op}
\DeclareMathOperator{\ncrk}{ncrk}
\DeclareMathOperator{\Lie}{Lie}

\DeclareMathOperator{\Herm}{Herm}

\DeclareMathOperator{\Ten}{Ten}
\DeclareMathOperator{\Shf}{shift}
\DeclareMathOperator{\Sum}{sum}
\DeclareMathOperator{\Hol}{Hol}
\DeclarePairedDelimiter\abs{\lvert}{\rvert}
\DeclarePairedDelimiter\norm{\lVert}{\rVert}

\DeclarePairedDelimiter\parens{\lparen}{\rparen}
\DeclarePairedDelimiter\braces{\lbrace}{\rbrace}

\newcommand{\minproblem}[2]{\text{minimize } #1 \quad \text{over } #2}
\title{Convex optimization on moment polytopes: Hadamard~mirror~descent and efficient algorithms for quantum functionals and other tensor parameters}
\author{M. Levent Do\u{g}an\thanks{LMU Munich \& MCQST, \texttt{\{mahmut.dogan,keiya.sakabe,michael.walter\}@lmu.de}} \and Keiya Sakabe\texorpdfstring{\footnotemark[1]}{} \and Michael Walter\texorpdfstring{\footnotemark[1]}{}}
\date{}
\begin{document}
\begin{titlepage}
\maketitle
\thispagestyle{empty}
\abstract{Convex optimization on polytopes arises in many areas of science. 
When the polytope is given implicitly or has exponentially many vertices and facets, standard methods may not apply or be ineffective.
This is the case for moment polytopes, such as the \emph{entanglement polytopes}, which play a foundational role in quantum information and algebraic complexity.
They give rise to important entanglement measures and tensor parameters such as the \emph{quantum functionals}, yet general effective methods for computing these quantities have been elusive.

In this paper we address this challenge.
We develop a first-order framework called \emph{Hadamard mirror descent} to optimize suitable convex functions over moment polytopes and, more generally, the gradient sets of geodesically convex functions.
It operates locally and does not rely on any explicit description of the polytope.
Our framework extends mirror descent, an effective and widely used framework for convex optimization, from the Euclidean setting to Hadamard manifolds, and is motivated by a recent work by Hirai, which we interpret as a Hadamard version of mirror flow.
Applying the framework to entanglement polytopes yields the first efficient first-order algorithms to compute the quantum functionals and other tensor parameters.
In more detail, our contributions are as follows:
\begin{itemize}
\item We introduce \emph{Hadamard mirror descent}, a framework to optimize general \emph{holonomy-invariant} convex objectives over the gradient sets of geodesically convex functions on Hadamard manifolds, and provide rigorous convergence guarantees under natural hypotheses.
\item We show that this framework applies to \emph{convex optimization on moment polytopes}, which arise from the gradients of geodesically convex Kempf--Ness functions.
\item We obtain the first efficient algorithm for computing quantum functionals: a simple \emph{entropic tensor scaling} method converges in a number of steps polynomial in the input length of the tensor and the inverse accuracy.
\item We obtain analogous algorithms for other important tensor parameters (symmetric quantum functional, $G$-stable rank, and non-commutative rank), as well as for norm minimization on moment polytopes (a key stability parameter in invariant theory).
Before this work, such algorithms were only known in special cases and for the non-commutative rank; for the latter, our algorithm appears conceptually simpler than prior work.
\end{itemize}}
\end{titlepage}
\setcounter{tocdepth}{3}
\setlength{\cftbeforesecskip}{0.75em}
\tableofcontents
\clearpage

\section{Introduction}\label{sec:intro}
Convex optimization on polytopes is a fundamental problem that arises broadly in computer science and mathematics, with applications ranging from graph algorithms to machine learning:
\begin{equation}\label{eq:intro-convex-opt}
    \minproblem{\phi(p)}{p \in D} \quad (D \text{ is a polytope}, \, \phi \text{ is convex}).
\end{equation}
In the classical setting of linear programming, one optimizes a linear function subject to linear constraints that define a polytope.
For example, the famous maximum-flow problem is a linear program on the flow polytope, and bipartite matching reduces to optimization over the Birkhoff polytope (the convex hull of the permutation matrices).
Linear relaxations of \textsc{NP}-hard combinatorial optimization problems, such as arise from the cut-polytope formulation of \textsc{MAX-CUT} or the integer-programming formulation of vertex cover, are ubiquitous and can yield approximation algorithms with provable guarantees.
In computer science, information theory, and physics, one frequently encounters the problem of maximizing the \emph{Shannon entropy} $H(p) \coloneqq -\sum_{i=1}^n p_i\log p_i$, which measures the uncertainty in a distribution.
Entropy maximization with mean constraints is a fundamental tool in approximation algorithms, including algorithms for the traveling salesman problem~\cite{AGMGS-10,vishnoi-singh,vishnoi-straszak}, and for volume approximation of polytopes and spectrahedra~\cite{Barvinok-09,Barvinok-Hartigan-10,Barvinok-12,Barvinok-Rudelson-21,Dogan-Leake-Ravichandran-22}.
In the context of probability theory and statistical physics, the same problem arises through the \emph{maximum-entropy principle} of Jaynes \cite{jaynes1957informationi,jaynes1957informationii}, which asserts that the best guess for an unknown probability distribution is the one that maximizes the entropy subject to the known constraints.

When the polytope is given explicitly, e.g., as a list of vertices or inequalities, standard techniques such as ellipsoid, cutting-plane, or interior-point methods often yield efficient algorithms for convex optimization~\cite{kelley1960cutting,Khachiyan79,Khachiyan80,Renegar-book}.
However, for many interesting problems, the polytope is only given more \emph{implicitly}, such as the polytope of all spanning trees associated to a graph, i.e., the matroid polytope of spanning trees.

In this setting, more advanced techniques are required that do not rely on efficient enumeration of vertices or facets.
One particularly successful strategy is to represent a polytope in terms of the \emph{gradients} of a convex function.
Suppose that $D\subset\RR^n$ is a polytope with vertices $\omega_1,\omega_2,\dots,\omega_m$.
An easy calculation shows that $D$ coincides with the topological closure of the gradient set of a \emph{log-sum-exp} function:
\begin{equation}\label{eq:gradient-trick}
    D = \overline{\{\nabla f(x) \mid x\in\RR^n\}}, \qquad \text{where } f(x) \coloneqq \log\mleft( \sum_{i=1}^m e^{\langle \omega_i,x\rangle} \mright).
\end{equation}
Functions of log-sum-exp form have found numerous applications, such as in machine learning \cite{ml-lse-3,ml-lse-1,ml-lse-2}.
Their optimization falls into the framework of geometric programming \cite{geomet-tut}.
Using \cref{eq:gradient-trick}, we may recast problem~\eqref{eq:intro-convex-opt} as follows:
\begin{equation}\label{eq:gradient-opt}
    \minproblem{\phi\bigl(\nabla f(x) \bigr)}{x \in \RR^n}.
\end{equation}
This reformulation offers two significant advantages.
First, viewing a polytope as (the closure of) the set of gradients of a convex function can provide a \emph{succinct} ``encoding'' of complicated polytopes.
For example, even for a polytope with an exponential number of vertices (such as the convex hull of all spanning trees of a graph), the convex function~$f$ still has only $n$ variables (the number of edges in the aforementioned example).
Moreover, even if $f$ has exponentially many terms when written naively in the log-sum-exp form, it is sometimes possible to represent it compactly and evaluate it efficiently;
for spanning-tree polytopes this is provided by the weighted version of Kirchhoff's theorem \cite{Bollobas-02}; see, e.g., \cite{AGMGS-10}.
More generally, recent works on~entropy maximization consider evaluation access to~$f$ as a generalized counting oracle~\cite{vishnoi-singh,vishnoi-straszak}.

Second, the reformulation in \eqref{eq:gradient-opt} transforms the original constrained optimization problem into an \emph{unconstrained} one, which enables simpler algorithms.
To see this, assume that both~$f$ and the objective~$\phi$ are sufficiently differentiable.
Then the composite function~$\phi(\nabla f(x))$ is differentiable and its derivative can be computed via the chain rule:
\begin{equation}\label{eq:intro-euclidean-descent}
    \left.\frac{\diff}{\diff t}\right|_{t = 0}\phi(\nabla f(x + tY)) = \nabla^2f(x)\left[ \nabla \phi\left(\nabla f(x)\right), Y\right ].
\end{equation}
Because $f$ is convex, its Hessian $\nabla^2 f(x)$ is positive semidefinite, and hence we see from \cref{eq:intro-euclidean-descent} that moving infinitesimally in the direction of $Y \coloneqq -\nabla \phi(\nabla f(x))$ will never increase $\phi(\nabla f(x))$.
This motivates the following first-order descent algorithm:
\begin{equation}\label{eq:intro-mirror-descent}
    x_{k+1} \coloneqq x_k - \eta\nabla \phi(\nabla f(x_k)).
\end{equation}
In fact, under standard hypotheses, this coincides precisely with the celebrated \emph{mirror descent} algorithm~\cite{NY1983} when expressed in dual (i.e., gradient) variables, for the distance-generating function $h = f^\ast$ and the objective function $\phi$.
Foundational results in the modern theory of mirror descent~\cite{BBT2017,LFN2018} state that the step size can be taken as the inverse of the \emph{relative smoothness parameter}~$L$ of $\phi$ with respect to~$h$.
In dual variables, this parameter is defined in terms of a ratio of Bregman divergences as follows:
\begin{equation}\label{eq:intro-relative-smoothness}
    D_\phi(\nabla f(y)\Vert\nabla f(x)) \le L\cdot D_f(x\Vert y) \qquad (x, y \in \RR^n),
\end{equation}
where $D_F(w\Vert z)\coloneqq F(w) - F(z) - \langle\nabla F(z), w - z\rangle$ is the \emph{Bregman divergence} of a convex function~$F$.
For the step size $\eta = 1/L$, mirror descent minimizes~$\phi$ at a rate of $\mathcal{O}(k^{-1})$~\cite{BBT2017,LFN2018}.

\medskip

Remarkably, convex polytopes arise from gradients of convex functions in situations far more general than the Euclidean setting discussed above.
In pure mathematics, this goes back to celebrated convexity theorems in symplectic geometry by Kirwan and Ness--Mumford~\cite{Kirwan1984,NessMumford1984}, generalizing earlier works on special cases by Kostant, Atiyah, Guillemin--Sternberg, and many others~\cite{kostant1973convexity,atiyah1982convexity,guillemin1982convexity}.%
\footnote{The well-known Schur--Horn theorem on the relation between the diagonal entries and the eigenvalues of Hermitian matrices is a special case.}
Much more recently, a line of work in computer science has over the past decade begun to develop a \emph{quantitative} and \emph{algorithmic} framework that unifies and generalizes paradigmatic methods, such as \emph{matrix} and \emph{operator scaling}~\cite{linial1998deterministic,GGOW-16,GGOW-18,AGLOW-18,BGOWW-alternating,BFGOWW2018,Franks2018,BFGOWW2019,FW2020,HNW2023,HiraiSakabe2024,Hirai2025,Hirai-ncrank}.
We describe the setting and the emergence of polytopes more concretely in \cref{sub:intro primer} below, but for now it suffices to say that the corresponding polytopes are known as \emph{moment polytopes} and they arise from convex functions on curved spaces called \emph{Hadamard manifolds} associated with \emph{non-commutative} group actions.
Importantly, they capture foundational problems across mathematics and computer science.
We only give a few examples and refer to \cite{BFGOWW2019} for many more applications:
\begin{itemize}
    \item In algebraic complexity, important tensor parameters are defined by convex optimization over moment polytopes, which in this case are called \emph{entanglement polytopes}.
    An important example is the family of \emph{quantum functionals}~\cite{CVZ2018}; they are universal spectral points in the sense of Strassen, constrain the asymptotic restriction of tensors, and bound complexity-theoretic measures such as the \emph{asymptotic~subrank}~\cite{Burgisser-book,zuiddam2018algebraic}.
    \item In analysis, they arise as \emph{Brascamp--Lieb polytopes}, which characterize the validity of the so-called Brascamp--Lieb inequalities.
    These are integral inequalities that generalize the H\"older and Loomis--Whitney inequalities~\cite{bennett2008brascamp,GGOW-18}.
	\item In quantum information, entanglement polytopes characterize which local quantum states can be obtained from multipartite quantum states in a given entanglement class~\cite{WDGC2013_entanglement}; as a special case, this captures the one-body \emph{quantum marginal} and \emph{$N$-representability problems}~\cite{Klyachko02,Klyachko04,Klyachko06}.
    In particular, natural problems such as marginal entropy maximization are captured by convex optimization on entanglement polytopes.
	\item In geometric invariant theory, a fundamental problem is to determine whether a vector is in the so-called \emph{null cone} of a group action and, if so, to determine its \emph{instability}.
    This is captured by minimizing the Euclidean norm over its moment polytope, as follows from celebrated results by Kempf--Ness and others~\cite{Kempf-78,Kempf-Ness-79,NessMumford1984,Kirwan1984}.
    \item In matrix analysis, the \emph{Horn polytopes} capture the relation between the eigenvalues of Hermitian matrices~$A,B,C$ such that $A+B=C$ \cite{Horn-62,Klyachko-98,Knutson-Tao}; see also \cite{Fulton-00}.
\end{itemize}
Despite its significance, efficient optimization over moment polytopes has proved to be a major challenge.
Moment polytopes often have exponentially many vertices and facets; moreover, their explicit description is rarely known; see, e.g., \cite{vandenberg2025computing} and references therein.
Thus, standard convex optimization techniques often cannot be applied or are inefficient.
To date, there has been major algorithmic progress in important special cases---most notably through operator scaling, tensor scaling, and the broader non-commutative optimization framework \cite{GGOW-16,GGOW-18,Franks2018,AGLOW-18,BGOWW-alternating,BFGOWW2018,BFGOWW2019,HNW2023,Hirai2025}.
While general algorithms are known that address the membership problem in moment polytopes, they do so inefficiently (in particular, this is the case for all examples given above).
Hence these prior works do not imply an efficient algorithm for convex optimization on these polytopes, a major open problem.

\medskip

In this work, we address this challenge: we develop a first-order framework called Hadamard mirror descent to optimize suitable convex functions over moment polytopes and, more generally, the gradients of geodesically convex functions.
Importantly, it operates locally and does not rely on any explicit description of the polytope.
We achieve this by extending mirror descent from the Euclidean setting to Hadamard manifolds.
It is motivated by recent work by Hirai, which we interpret as a Hadamard version of the continuous-time dynamics corresponding to mirror descent.
Applied to moment polytopes, this framework yields efficient first-order algorithms to compute quantum functionals and other tensor parameters.
In more detail, our main contributions~are:
\begin{itemize}
\item We introduce \emph{Hadamard mirror descent}, a framework to optimize a broad class of convex objectives over the gradient set of geodesically convex functions on Hadamard manifolds, and provide rigorous~$\mathcal O(k^{-1})$ convergence guarantees under natural hypotheses.
\item We show that this framework applies to \emph{convex optimization on moment polytopes}, which arise from the gradients of geodesically convex Kempf--Ness functions.
This application requires us to quantitatively bound the relevant complexity parameters of the general framework.
\item We obtain the first efficient algorithm for computing quantum functionals: a simple \emph{entropic tensor scaling} method converges in a number of steps polynomial in the input length of the tensor and the inverse accuracy.
\item We obtain analogous algorithms for other important tensor parameters (symmetric quantum functional, $G$-stable rank, and non-commutative rank), as well as for norm minimization on moment polytopes (a key stability parameter in invariant theory).
Before our work, such algorithms were only known in special cases, and for the non-commutative rank, for which we obtain an algorithm that appears conceptually simpler than prior work.
\end{itemize}
The rest of the introduction is organized as follows:
\Cref{sub:intro primer} explains concretely, using the example of entanglement polytopes, how convex polytopes can emerge from the gradients of convex functions on curved spaces, and sketches how one naturally obtains a descent algorithm generalizing \cref{eq:intro-mirror-descent}.
\Cref{sub:intro framework} describes Hadamard mirror descent in its appropriate generality and states our convergence theorem; we also discuss the associated complexity parameters.
\Cref{sub:convex-moment} applies the preceding to moment polytopes and \cref{sub:convex-entanglement} specializes further to the entanglement polytopes of tensors to obtain efficient algorithms for quantum functionals, the symmetric quantum functional, the $G$-stable rank, and the non-commutative rank.

\subsection{Motivation: Moment polytopes, non-commutative gradients, and mirror descent}\label{sub:intro primer}
Moment polytopes are implicitly defined polytopes that arise from ``nice'' group actions.
To make this concrete and motivate our general approach, we use the \emph{tensor action} as our running example:
the action of the group $\GL\coloneqq  \GL(n_1)\times\cdots\times \GL(n_d)$ on the space $\bigotimes_{\ell=1}^d \CC^{n_\ell} = \CC^{n_1} \ot \cdots \ot \CC^{n_d}$ of $d$-tensors via $(g_1,\dots,g_d)\cdot T\coloneqq (g_1\otimes\dots\otimes g_d)T$.
We denote the positive-definite group elements by~$\PD \coloneqq \PD(n_1)\times\dots\times\PD(n_d)$ and denote by $\langle\cdot,\cdot\rangle$ the inner product on $\bigotimes_{\ell=1}^d \CC^{n_\ell}$.
Our goal is to associate to any tensor~$T\neq0$ a polytope.
To this end, we consider the \emph{Kempf--Ness~function}%
\footnote{Note that this function simply measures the norm of the tensor as we vary over the \emph{orbit} $\GL \cdot T \coloneqq \{ g \cdot T \mid g \in \GL \}$, since $f_T(g^\dagger g) = \log \, \norm{g \cdot T}^2$; this was the original motivation for the definition~\cite{Kempf-Ness-79}.}
\begin{equation}\label{eq:kn T}
    f_T \colon \PD\rightarrow\RR, \quad f_T(x_1,\dots,x_d) = \log\,\langle T, (x_1\otimes\dots\otimes x_d)T\rangle.
\end{equation}
We can think of this as a non-commutative generalization of the log-sum-exp function in \cref{eq:gradient-trick}.
To see the connection, write $x_i = g_i^\dagger e^{\diag(z_i)} g_i$ for $g_i \in \GL(n_i)$ and $z_i \in \RR^{n_i}$.
Then:
\begin{align*}
    f_T(x_1,\dots,x_d) = \log \smash{\sum_{i_1,\dots,i_d}} \abs{S_{i_1,\dots,i_d}}^2 e^{z_{1,i_1} + \cdots + z_{d,i_d}},
    \quad
    \text{where } S \coloneqq (g_1,\ldots,g_d) \cdot T,
\end{align*}
which has the form of \cref{eq:gradient-trick}.
As a consequence, we see that $f_T$ is convex along arbitrary curves of the form~$x_i(t) = g_i^\dagger e^{\diag(z_i + t w_i)} g_i$ ($t \in \RR$).%
\footnote{Note that $\diag(z_i)$ and $\diag(w_i)$ are diagonal matrices and hence they commute.
If we replace them by non-commuting Hermitian matrices, we no longer obtain a log-sum-exp function and indeed convexity will in general fail.
In other words, we \emph{cannot} simply turn $f_T$ into an ordinary Euclidean convex function by the change of variables~$x_i = e^{X_i}$.}
What are these curves? Clearly, they are not straight lines in the naive Euclidean sense of $n_i\times n_i$ matrices.
Instead, they are the \emph{geodesics} if $\PD(n_i)$ is equipped with the \emph{affine-invariant} Riemannian metric.%
\footnote{The affine-invariant Riemannian metric on $\PD(n)$ is given by the formula $g(X,Y)_x \coloneqq \tr[x^{-1} X x^{-1} Y]$ at $x \in \PD(n)$; we regard the tangent vectors~$X,Y$ as Hermitian matrices in~$\Herm(n)$, using that $\PD(n)$ is an open subset of~$\Herm(n)$.}
This metric turns~$\PD(n_i)$ and hence~$\PD$ into a so-called \emph{Hadamard manifold}---a particularly nice kind of Riemannian manifold with nonpositive curvature.
Thus, the Kempf--Ness function~$f_T$ is a \emph{geodesically convex} function on~$\PD$.

Now, how does a polytope emerge from the gradient set of this function?
One immediate challenge we face is that the gradients $\nabla f_T(x)$ for $x \in \PD$ all live in different tangent spaces.
We therefore use the manifold's parallel transport to move them all to the same tangent space, say, the tangent space at the identity~$I \in \PD$, which is naturally identified with the $d$-tuples of Hermitian matrices $\bbH \coloneqq \Herm(n_1)\times\cdots\times\Herm(n_d)$ with the Hilbert--Schmidt inner product.
But now we face the real challenge: unlike in the Euclidean setting, the parallel transport will depend on the choice of curve along which we transport%
\footnote{Arguably there is a canonical choice for this curve in our setting---the geodesic---but this does not yield a convex polytope in~$\bbH$ (in general, no fixed choice of curves works for all tensors).}%
---different curves will give results that differ by the manifold's holonomy, which in the present case amounts to conjugating a matrix tuple in~$\bbH$ by unitary matrices.
To overcome this ambiguity, we can look at unitarily invariant data, namely the eigenvalue spectra.
The result of parallel transporting the gradients to the identity, computing the eigenvalues, and taking the closure has a simple concrete description:%
\footnote{This follows by a short calculation: $\nabla f_T(x) = (x_1^{1/2} \rho_1 x_1^{1/2}, \dots, x_d^{1/2} \rho_d x_d^{1/2})$, where $\rho_\ell = S^{(\ell)} S^{(\ell)\dagger} / \norm S^2$ and $S \coloneqq (x_1^{1/2}, \dots, x_d^{1/2}) \cdot T$; we identify $T_x \PD \cong \bbH$ in the canonical way by using the fact that $\PD \subseteq \bbH$ is an open subset.
In the same coordinates, parallel transport along the geodesic to the identity is given by $x^{-1/2} \nabla f_T(x) x^{-1/2} = (\rho_1,\dots,\rho_d)$.}
\begin{equation}\label{eq:intro-entanglement-polytope}
\begin{aligned}
    \Delta(T)
&\coloneqq \overline{ \{ \spec \tau_{x\to I} \nabla f_T(x) : x \in \PD \} } \\
&= \left\{\left(\spec{\frac{S^{(1)}S^{(1)\dag}}{\|S\|^2}},\ldots, \spec{\frac{S^{(d)}S^{(d)\dag}}{\|S\|^2}} \right) \ \middle|\ 0\neq S \in \overline{\GL\cdot T} \right\} \subseteq \RR_{\downarrow}^{n_1} \!\times\! \cdots \!\times\! \RR_{\downarrow}^{n_d},
\end{aligned}
\end{equation}
where $\tau_{x\to I}$ denotes parallel transport from~$x$ to~$I$;
$\spec(\cdot)$ denotes the sorted eigenvalues of a Hermitian matrix, and the componentwise extension to matrix tuples;
and $S^{(\ell)}$ denotes the $\ell$-th \emph{flattening} obtained by regarding~$S$ as a linear map $S^{(\ell)} \colon \bigotimes_{j\neq \ell} \CC^{n_j}\rightarrow \CC^{n_\ell}$.
Note the similarity to \cref{eq:gradient-trick}.
We also observe that $\Delta(T)$ is a subset of a Cartesian product of probability simplices.
Remarkably, $\Delta(T)$ is a convex polytope---known as the tensor's \emph{entanglement polytope}~\cite{WDGC2013_entanglement}.

General moment polytopes arise in just the same way from any vector in a ``nice'' (rational) representation of a ``nice'' (self-adjoint) matrix group $G \subseteq \GL(n)$.
One considers the \emph{Kempf--Ness function} associated with the vector, which is a \emph{geodesically convex} function on a nonpositively curved space $G \cap \PD(n)$.
Suitably interpreted, the closure of the gradient set of the Kempf--Ness function becomes a convex polytope, known as the vector's \emph{moment polytope}~\cite{Kirwan1984,NessMumford1984,BFGOWW2019}.

\medskip

We now consider optimization over the entanglement polytope.
Let $\phi \colon \RR^{n_1}\times\cdots\times\RR^{n_d}\to\overline{\RR}$ be a convex objective function that we wish to minimize.
Since the entanglement polytope is described in terms of eigenvalue spectra, which carry no intrinsic order, it is natural to assume that $\phi$ is permutation-symmetric in each argument.
Such a function extends to a unitarily invariant convex function $\Phi \colon \bbH\to \overline{\RR}$ by $\Phi(X_1,\dots,X_d)\coloneqq \phi(\spec X_1,\dots,\spec X_d)$.
Then, assuming mild regularity, convex optimization over the entanglement polytope can be written as
\begin{equation}\label{eq:intro phi to Phi}
    \min_{p\in\Delta(T)} \phi(p) = \inf_{x\in\PD}\Phi(\tau_{x\to I} \nabla f_T(x)).
\end{equation}
This reformulates the problem of convex optimization over entanglement polytopes, which have a rather intricate and implicit definition, as an unconstrained optimization over the gradients of the simple Kempf--Ness function---in precise analogy to the transition from \cref{eq:intro-convex-opt} to \cref{eq:gradient-opt}.

Following Hirai~\cite{Hirai2025}, this can also be expressed more abstractly, yet more intrinsically, by extending the objective to the \emph{cotangent bundle}~$T^*\PD$.
To this end, we regard~$\Phi$ as a function~$Q_I$ on the cotangent space $T^*_I\PD$ using the Riemannian metric, and extend it to a function~$Q$ on the entire cotangent bundle via parallel transport ($Q_x \coloneqq Q_I \circ \tau_{I \to x}^*$).
Then:
\begin{equation}\label{eq:intro Phi to Q}
    \inf_{x \in \PD}\Phi(\tau_{x \to I}\nabla f_T(x))
= \inf_{x \in \PD}Q(\diff f_T(x)).
\end{equation}
The advantage of this formulation is that, unlike the gradient, the differential $\diff f_T$ is defined independently of the metric.%
\footnote{The same is true in the Euclidean setting: Problem~\eqref{eq:gradient-opt} is more intrinsically cast as a minimization of $Q\bigl(\diff f(x) \bigr)$ for a convex objective~$Q \colon (\RR^n)^* \to \overline\RR$.}
Motivated by this, Hirai proposed the general minimization problem
\begin{equation}\label{eq:intro-general-minimization}
    \minproblem{Q(\diff f(x))}{x \in \MM},
\end{equation}
where $f$ is a geodesically convex function on an arbitrary Hadamard manifold $\MM$, and $Q$ is a fiberwise convex function on the cotangent bundle that is invariant under parallel transport; we call such a function~$Q$ a \emph{holonomy-invariant convex objective}.
Using this invariance and the chain rule, Hirai showed that (under suitable regularity and domain assumptions) the derivative of~$Q(\diff f(x_t))$ along an arbitrary curve~$x_t$ is given by
\begin{equation}\label{eq:intro q grad vs hessian}
    \left.\frac{\diff}{\diff s}\right|_{s=t} Q(\diff f(x_s)) = \nabla^2 f(x_t)[\nabla^Q f(x_t), \dot x_t],
\end{equation}
where $\nabla^Q f(x)\coloneqq \diff Q_x (\diff f(x)) \in T_x\MM$ is called the \emph{$Q$-gradient} of~$f$.%
\footnote{In other words, the $Q$-gradient of~$f$ is the Riemannian gradient of~$Q(\diff f(x))$ if the manifold~$\MM$ is equipped with the Riemannian metric induced by the Hessian~$\nabla^2 f$ (assuming the latter is everywhere positive definite).}
The geodesic convexity of~$f$ implies that its Riemannian Hessian~$\nabla^2 f$ is positive semidefinite, and hence~$Q(\diff f(x_t))$ is nonincreasing along the following descent flow:
\begin{equation}\label{eq:intro-flow}
    \dot{x}_t = -\nabla^Q f(x_t).
\end{equation}
Hirai proposed this $Q$-gradient flow and established qualitative convergence to the infimum of problem~\eqref{eq:intro-general-minimization} under suitable hypotheses, for a broad class of Hadamard manifolds such as~$\PD(n)$.

The starting point for our work is a new interpretation of Hirai's flow:
we observe that \cref{eq:intro q grad vs hessian} is precisely the manifold analogue of the Euclidean identity in \cref{eq:intro-euclidean-descent}.
Thus, Hirai's flow can be interpreted as a \emph{mirror flow} on a Hadamard manifold.
This connection motivates us to extend the powerful theory of mirror descent to the Hadamard setting.
This allows us to obtain \emph{quantitative} convergence guarantees for both the flow and its discrete-time counterpart, and then apply the resulting framework to moment polytope optimization to derive our concrete algorithmic results.

\subsection{Framework and general results: Hadamard mirror descent}\label{sub:intro framework}
We now present our framework of Hadamard mirror descent and our general convergence results.
Throughout, we consider a Hadamard manifold~$\MM$, that is, a complete, simply connected Riemannian manifold with nonpositive sectional curvature.
Our goal is to address the general minimization problem~\eqref{eq:intro-general-minimization}, which we restate for convenience:
\begin{equation}\label{eq:intro-general-minimization restated}
    \minproblem{Q(\diff f(x))}{x \in \MM}.
\end{equation}
Here, $f$ is a \emph{geodesically convex} function on~$\MM$, meaning it is convex when restricted to any geodesic, and $Q$ is a \emph{holonomy-invariant convex objective} on the cotangent bundle~$T^*\MM$.
As explained above, the latter means that its restrictions $Q_x$ to any cotangent space are lower semicontinuous (l.s.c.) and Euclidean convex, and related to each other by parallel transport (independently of the choice of curves).
We also assume sufficient regularity so that~\eqref{eq:intro-general-minimization restated} and the $Q$-gradient are well-defined; the precise details can be found in \cref{def:holoinvariant,def:grad}.


In \cref{sec:continuous-time}, we formalize our observation that Hirai's $Q$-gradient flow~\eqref{eq:intro-flow} is a Hadamard-manifold generalization of \emph{mirror flow}, the continuous-time version of mirror descent~\cite{NY1983}, and show a general $\mathcal{O}(t^{-1})$ convergence result for \emph{arbitrary} Hadamard manifolds that applies under natural hypotheses; this significantly generalizes the qualitative convergence result of~\cite{Hirai2025} and makes it quantitative.
In \cref{sec:discrete-time}, we consider the corresponding discrete-time method, which we call \emph{Hadamard mirror descent}, as it precisely generalizes the mirror descent update~\eqref{eq:intro-mirror-descent}:
\begin{equation}\label{eq:intro-Hadamard-mirror-descent}
    x_{k+1} \coloneqq \exp_{x_k}( -\eta \nabla^Q f(x_k) ),
\end{equation}
where $\nabla^Q f(x)\coloneqq \diff Q_x (\diff f(x)) \in T_x\MM$ is Hirai's $Q$-gradient and $\exp_x \colon T_x\MM\rightarrow \MM$ the exponential map.

We now explain the quantities that control the convergence of Hadamard mirror descent.
On the one hand, the convergence depends on the choice of step size~$\eta$; we comment more on this below.
On the other hand, it depends on a certain natural function gap.
Let $b^\xi \colon \MM \to \RR$ be a Busemann function: an analog of a linear function on Hadamard manifolds.
The invariance under parallel transport implies that $Q(\xi) \coloneqq Q(-\diff b^\xi(x))$ is well-defined, independently of~$x$.
Let $f^\xi \coloneqq f + b^\xi$ be the shifted geodesically convex function.
Then we have the following result:

\begin{theorem}[Convergence of Hadamard mirror descent, see \cref{thm:main-discrete,lem:relsmooth}]\label{thm:intro-main}
    Under a natural step-size condition (satisfied for $\eta=1/L$ with $L$ the relative smoothness parameter discussed below), we have, for any Busemann function~$b^\xi$ and any~$k\geq1$,
    \[
        Q(\diff f(x_k)) - Q(\xi) \leq \frac{f^\xi(x_0) - \inf f^\xi}{\eta k}.
    \]
    In particular:
    \[
        \lim_{k \to \infty}Q(\diff f(x_k)) = \inf_{x \in \MM}Q(\diff f(x)).
    \]
\end{theorem}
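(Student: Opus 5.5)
We follow the Euclidean mirror-descent analysis of \cite{BBT2017,LFN2018} and take the shifted function $f^\xi = f + b^\xi$ as a Lyapunov-type potential. Write $g_k \coloneqq \nabla^Q f(x_k) = \diff Q_{x_k}(\diff f(x_k))$, so that $x_{k+1} = \exp_{x_k}(-\eta g_k)$. The argument has three ingredients: a one-step descent inequality for $Q(\diff f(x_k))$, a one-step inequality linking the decrease of $f^\xi$ to the gap $Q(\diff f(x_k)) - Q(\xi)$, and a telescoping sum.

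\emph{Step 1 (monotonicity).} Let $\gamma(t) = \exp_{x_k}(-t g_k)$. By \cref{eq:intro q grad vs hessian}, together with convexity of $Q_x$ along the curve, $Q(\diff f(\gamma(t)))$ is controlled by the Hessian term $-\nabla^2 f[g_k,\dot\gamma]$. The step-size condition should be precisely what guarantees $Q(\diff f(x_{k+1})) \le Q(\diff f(x_k))$. I expect this condition to be phrased as relative smoothness: a Hadamard analogue of \cref{eq:intro-relative-smoothness}, in which the $Q$-Bregman divergence of the transported differentials is bounded by $L$ times the Bregman divergence of $f$ along the geodesic. With $\eta = 1/L$ this gives monotone decrease.

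\emph{Step 2 (key inequality).} Let $\zeta \coloneqq -\diff b^\xi(x_k)$. By holonomy invariance $Q_{x_k}(\zeta) = Q(\xi)$, and by fiberwise convexity of $Q_{x_k}$,
\[
Q(\diff f(x_k)) - Q(\xi) \le \langle g_k, \diff f(x_k) - \zeta\rangle = \diff f^\xi(x_k)[g_k].
\]
The right-hand side is exactly $-\tfrac{1}{\eta}\tfrac{\diff}{\diff t}\big|_{t=0} f^\xi(\gamma(t\eta)/\ldots)$, i.e. the directional derivative of $f^\xi$ along the step. We then need to convert this derivative into an actual decrease
\[
\eta\bigl(Q(\diff f(x_{k+1})) - Q(\xi)\bigr) \le f^\xi(x_k) - f^\xi(x_{k+1}).
\]
For this we write
\[
f^\xi(x_k) - f^\xi(x_{k+1}) = \int_0^\eta \diff f^\xi(\gamma(t))[g_k']\,\diff t,
\]
where $g_k'$ is the velocity of $\gamma$. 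Along the geodesic, $\diff b^\xi(\gamma(t))[\dot\gamma]$ is monotone, since Busemann functions are convex. We then compare $\diff f(\gamma(t))[\dot\gamma]$ with the value at the endpoint via convexity of $f$ and of $Q$ at $x_{k+1}$, using $\diff Q$ evaluated at the endpoint. The cleanest route is probably an ``implicit'' form: the gap at $x_{k+1}$ is bounded by $\langle \diff Q_{x_{k+1}}(\diff f(x_{k+1})), \diff f^\xi(x_{k+1})\rangle$, and the step-size condition is used to relate $\nabla^Q f(x_{k+1})$ to $-\dot\gamma(\eta)/\eta$ up to a Hessian-controlled error.

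\emph{Step 3 (telescoping).} Sum the key inequality over $j = 0, \dots, k-1$ and use the monotonicity from Step 1. This gives
\[
\eta k\,\bigl(Q(\diff f(x_k)) - Q(\xi)\bigr) \le f^\xi(x_0) - f^\xi(x_k) \le f^\xi(x_0) - \inf f^\xi.
\]
If $\inf f^\xi = -\infty$ the bound is vacuous, so assume it is finite.

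\emph{Limit statement.} We need $\inf_{\xi:\, \inf f^\xi > -\infty} Q(\xi) \le \inf_x Q(\diff f(x))$; the reverse inequality is trivial. This is a duality statement: for each $x$, take $\xi$ to be the boundary point with $-\diff b^\xi(x) = \diff f(x)$ (scaled appropriately). Then $f^\xi$ has a critical point at $x$, hence is bounded below by geodesic convexity, and $Q(\xi) = Q(\diff f(x))$. This presumes that every cotangent vector is realized as $-\diff b^\xi(x)$, including its norm scaling, which holds for Busemann functions of rays, possibly after scaling. Taking $k \to \infty$ then gives the limit.

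\emph{Main obstacle.} The hard part is Step 2 on curved space: converting $\diff f^\xi(x_k)[g_k]$ into the discrete decrease requires comparing differentials at $x_k$ and $x_{k+1}$ across parallel transport. Holonomy invariance of $Q$ and geodesic convexity of $b^\xi$ should absorb the curvature, and I would first try to establish the Hadamard three-point inequality with the relative smoothness definition chosen to make it work.
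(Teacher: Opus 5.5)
Your scaffolding matches the paper: $f^\xi$ as a Lyapunov function, monotonicity of $Q(\diff f(x_k))$, telescoping to $\eta k\bigl(Q(\diff f(x_k))-Q(\xi)\bigr)\le f^\xi(x_0)-f^\xi(x_k)$, and, for the limit, $\xi$ taken as the class of $t\mapsto\exp_x(t\nabla f(x))$. No rescaling is needed there, since the Busemann functions are unnormalized and $\diff b^\gamma(\gamma(t))=-\flat\dot\gamma(t)$. You also state the right target, $\eta\bigl(Q(\diff f(x_{k+1}))-Q(\xi)\bigr)\le f^\xi(x_k)-f^\xi(x_{k+1})$.

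\textbf{The gap.} Step~2 is the heart of the theorem, and it is never proved. Moreover, the route you start with points the wrong way. Taking $\zeta=-\diff b^\xi(x_k)$ bounds the gap at $x_k$ by the slope $\diff f^\xi(x_k)[g_k]$. But geodesic convexity of $f^\xi$ along the step gives $f^\xi(x_{k+1})\ge f^\xi(x_k)-\eta\,\diff f^\xi(x_k)[g_k]$. So this slope is an \emph{upper} bound on $\bigl(f^\xi(x_k)-f^\xi(x_{k+1})\bigr)/\eta$, and ``gap $\le$ slope at $x_k$'' cannot be turned into ``gap $\le$ decrease$/\eta$''. Your fallback does not work either. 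It uses convexity of $Q_{x_{k+1}}$ plus a Hessian-controlled comparison of $\nabla^Q f(x_{k+1})$ with the transported step. That needs regularity of $Q$ which is not assumed (the entropy objective is not smooth). It is also not what the step-size condition supplies, since that condition involves $\diff Q$ only at $x_k$.

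\textbf{The missing idea.} Treat the two summands of $f^\xi$ differently, and linearize $b^\xi$ at the \emph{endpoint}. Write $g_k=\nabla^Q f(x_k)$.
\begin{itemize}
\item For $f$, use the step-size condition as stated: $f(x_k)-f(x_{k+1})\ge\eta\bigl(Q(\diff f(x_{k+1}))-Q(\diff f(x_k))\bigr)+\eta\,\diff f(x_k)[g_k]$.
\item For $b^\xi$, use convexity at $x_{k+1}$ together with $\exp_{x_{k+1}}^{-1}x_k=-\tau_{x_k\to x_{k+1}}\exp_{x_k}^{-1}x_{k+1}$. This gives $b^\xi(x_k)-b^\xi(x_{k+1})\ge\eta\,\bigl(\tau^*_{x_k\to x_{k+1}}\diff b^\xi(x_{k+1})\bigr)[g_k]$.
\item Add the two. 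Setting $\zeta\coloneqq-\tau^*_{x_k\to x_{k+1}}\diff b^\xi(x_{k+1})$, the linear terms equal $\eta\,\diff Q_{x_k}(\diff f(x_k))[\diff f(x_k)-\zeta]$.
\item Convexity of $Q_{x_k}$ bounds this below by $\eta\bigl(Q(\diff f(x_k))-Q_{x_k}(\zeta)\bigr)$. By parallel-transport invariance, $Q_{x_k}(\zeta)=Q_{x_{k+1}}(-\diff b^\xi(x_{k+1}))=Q(\xi)$.
\end{itemize}
The $Q(\diff f(x_k))$ terms then cancel, and the key inequality follows with no curvature or Hessian error. The only object transported across the step is a Busemann differential, whose $Q$-value is invariant.

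Step~1 is similarly direct. The right-hand side of the step-size condition equals $-D^{\MM}_f(x_{k+1}\Vert x_k)\le0$. Relative smoothness at $(x,y)=(x_k,x_{k+1})$ with $L=1/\eta$ is equivalent to that condition once you substitute $\exp_{x_k}^{-1}x_{k+1}=-\eta\nabla^Q f(x_k)$. So neither the flow identity nor a Hessian argument is needed.
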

\noindent
In the Euclidean case $\MM = \RR^n$, the quantity $f^\xi(x) - \inf f^\xi$ for the linear function $b^\xi = -\langle \xi, \cdot\rangle$ equals the Bregman divergence $D_h(\xi\Vert\nabla f(x))$ of the distance-generating function $h = f^\ast$ (under natural hypotheses).
Hence, \cref{thm:intro-main} is a natural generalization of the foundational convergence results in Euclidean mirror descent~\cite{BBT2017,LFN2018}.

\Cref{thm:intro-main} serves as the general convergence theorem from which the algorithmic results of the paper are derived.
The remainder of the introduction successively specializes this result, first to convex optimization on moment polytopes, and then further to concrete tensor parameters.
As for the choice of step size~$\eta$, we show that the notion of relative smoothness readily generalizes to this setting when phrased in the dual form~\eqref{eq:intro-relative-smoothness}:
we say that the pair~$(f, Q)$ is \emph{$L$-relatively smooth} if
\begin{equation}
\label{eq:rel-smoothness-ineq}
    D_{Q_x}(\tau_{x\to y}^*\diff f(y)\Vert \diff f(x)) \leq L \cdot D_f^{\MM}(x\Vert y) \quad (\forall x, y \in \MM),
\end{equation}
where $D_f^{\MM}(x\Vert y)\coloneqq f(x)-f(y)-\diff f(y) [ \exp^{-1}_y(x) ]$ is the Riemannian Bregman divergence.
Then, \cref{thm:intro-main} applies with step size~$\eta = 1/L$.\footnote{In fact, we observe that the relative smoothness only needs to hold along the steps actually taken by the algorithm; we formalize this as a natural step-size condition in the body of the paper (see \cref{def:step-size-condition} and \cref{lem:relsmooth}~(a) for its connection to relative smoothness).
This additional flexibility is essential for our application to quantum functionals: the corresponding pair $(f,Q)$ is not $1$-relatively smooth, yet we prove that the step-size condition holds for all~$\eta\in(0,1]$. }
In the Euclidean setting, if $f$ and $Q$ are individually smooth%
\footnote{Throughout this paper, ``smoothness'' refers to the standard notion in convex analysis of Lipschitz-continuity of the gradient (not to differentiability).}
with parameters $L_f$ and $L_Q$, respectively, then relative smoothness holds with~$L=L_f L_Q$.
We show that this criterion generalizes to the manifold setting but with an important twist: the smoothness of~$f$ must be replaced by a new notion we call \emph{dual-smoothness}.%
\footnote{We say that a geodesically convex function~$f$ is $L_f$-dual-smooth if $\frac12 \norm{ \tau^{\ast}_{x \to y}\diff f(y) - \diff f(x) }^2 \leq L_f \cdot D_f^\MM(x\Vert y)$ for all~$x,y\in\MM$.
In the Euclidean case, this is equivalent to $L_f$-smoothness; in general it is a more stringent condition.}
Interestingly, $L_f$-dual-smoothness is not in general equivalent to the corresponding notion along geodesics; a similar phenomenon was observed in the context of Riemannian interior-point methods for the notion of \emph{self-concordance}~\cite{HNW2023}.
We refer to \cref{sub:rel-smooth} for details on relative smoothness.

We briefly comment on other works that offer Riemannian perspectives on mirror descent.
One line of research interprets Euclidean mirror descent as a natural descent algorithm on the Riemannian manifold induced by the distance-generating function.
Alvarez et al.~\cite{Hessian-Riemannian} propose \emph{Hessian Riemannian gradient flow}, a natural descent flow on Riemannian manifolds, as a continuous-time model of Bregman-based descent algorithms, which essentially correspond to mirror descent~\cite{BT2003}.
For the discrete-time algorithm, Raskutti and Mukherjee~\cite{Raskutti-15} point out that mirror descent, expressed in dual coordinates, is equivalent to \emph{natural gradient descent} on the associated Riemannian manifold, a well-known method in information geometry.
From a complementary perspective, Gunasekar et al.~\cite{Gunasekar-21} give a primal space interpretation of mirror descent via discretizations of the Hessian Riemannian gradient flow.
Jiang et al.~\cite{Jiang-26} generalize mirror descent to Riemannian manifolds in another direction: their \emph{Riemannian mirror descent}, as the name suggests, aims to minimize a geodesically convex objective directly defined on a Riemannian manifold.
Our framework is distinct from these approaches: Hadamard mirror descent generalizes the dual expression of mirror descent, minimizing an objective of the form $Q(\diff f(x))$.


\subsection{Convex optimization on moment polytopes}\label{sub:convex-moment}
As we saw concretely in \cref{sub:intro primer}, the bridge between \cref{eq:intro-general-minimization restated} and convex optimization is given by the Kempf--Ness functions.
We now discuss the general setting.
Readers interested primarily in optimization over the entanglement polytopes of tensors can skip ahead to \cref{sub:convex-entanglement}.
Let $G\subseteq \GL(n)$ be a connected Zariski-closed subgroup that is self-adjoint, i.e., $G^\dag=G$.
Then, $P\coloneqq G\cap \PD(n)$ inherits from $\PD(n)$ the structure of a Hadamard manifold; it is a so-called symmetric space with nonpositive curvature~\cite{Helgason1978,Hirai2024}.
We call $P$ a \emph{Hadamard symmetric space}, following \cite{HNW2023}.
Its tangent space at the identity group element~$I \in P$ is naturally identified with the Hermitian part of the Lie algebra, $\bbH \coloneqq \Lie G \cap \Herm(n)$.

Consider a rational representation $\pi \colon G\rightarrow\GL(N)$, and assume $K \coloneqq G \cap \U(n)$ acts unitarily.
We write $g \cdot v \coloneqq \pi(g)v$ for the action of a group element~$g \in G$ on a vector~$v \in \CC^N$.
Generalizing \cref{eq:kn T}, the \emph{Kempf--Ness function} associated with a vector $0\neq v\in\CC^N$ is given by
\[
    f_v \colon P \to \RR,\qquad f_v(x) = \log\langle v,x\cdot v\rangle.
\]
This function is geodesically convex on $P$, and its gradients are captured by the so-called \emph{moment map}~$\mu \colon \CC^N\setminus\{0\} \to \bbH$ associated with the action.
Concretely, for every $x\in P$, $\mu(x^{1/2} \cdot v)$ is the tangent vector dual to $\tau^\ast_{I\rightarrow x}\diff f_v(x)$, where $\tau_{I\rightarrow x} \colon T_I P \to T_x P$ denotes parallel transport.
A celebrated theorem in symplectic geometry~\cite{Kirwan1984,NessMumford1984} states that we obtain a convex polytope if we consider the moment map image of the orbit of~$v$, apply a suitable $K$-invariant function~$s \colon \bbH \to C$ that maps~$\bbH$ into a Euclidean convex cone~$C \subseteq \bbH$ called a \emph{Weyl chamber}, and take the closure.%
\footnote{In \cref{eq:intro-entanglement-polytope}, this operation amounted to computing the eigenvalue spectra of the Hermitian matrix tuple, and the Weyl chamber was given by the tuples of sorted vectors: $C = \RR^{n_1}_\downarrow \times \cdots \times \RR^{n_d}_\downarrow$.}
We discuss this construction in more detail in \cref{sec:momentpolytope}.
The polytope obtained in this way is called the \emph{moment polytope} of~$v$, and it is denoted by~$\Delta(v) \coloneqq \overline{\{ s(\mu(g \cdot v)) \mid g \in G \}}$.

We now consider the problem of convex optimization over the moment polytope:
\begin{align}\label{eq:optimization on moment polytope}
    \minproblem{\phi(p)}{p \in \Delta(v)}.
\end{align}
We assume that the objective $\phi$ appears as a restriction%
\footnote{This condition can be characterized more explicitly:
A convex objective on the moment polytope is the restriction of a $K$-invariant convex function on~$\bbH$ if and only if it extends to a convex function on the \emph{Cartan subspace}~$\AA \coloneqq \mathrm{span}_{\RR}(C)$ that is invariant under the so-called Weyl group; see \cref{sub:objective selfadjoint}.
In the tensor setting of \cref{sub:intro primer,sub:convex-entanglement}, we denote this extension by $\phi$; it is a convex function on $\RR^{n_1}\times\cdots\times\RR^{n_d}$ that is permutation-symmetric in each argument.}
of a $K$-invariant convex objective~$\Phi \colon \bbH \to \overline{\RR}$, so that this can be extended to a holonomy-invariant convex objective $Q \colon T^*P \to \overline{\RR}$ on the cotangent bundle (conversely, any such $Q$ restricts to an objective $\phi$).
Thus, following the same reasoning as in \cref{eq:intro phi to Phi,eq:intro Phi to Q}, we see that \cref{eq:optimization on moment polytope} can be phrased as an instance of the minimization problem~\eqref{eq:intro-general-minimization restated}; cf.~\cite{Hirai2025}:
\begin{align*}
    \min_{p \in \Delta(v)} \phi(p)
= \inf_{g \in G} \phi(s(\mu(g \cdot v)))
= \inf_{g \in G} \Phi(\mu(g \cdot v))
= \inf_{x \in P} Q(\diff f_v(x)).
\end{align*}
Thus, we can use Hadamard mirror descent to solve convex optimization on the moment polytope.
While Hadamard mirror descent is formulated on~$P$, it can equivalently be represented by the following simple update of vectors in~$\CC^N$:
\begin{equation}\label{eq:intro iteration}
    v_{k+1} \coloneqq e^{-\frac\eta2\nabla\Phi(\mu(v_k))}\cdot v_k, \quad v_0 \coloneqq v.
\end{equation}
This update has an appealing feature: it preserves unitary symmetries, which is often important in applications.
For example, if $u \cdot v = v$ for some $u \in K$, then $u \cdot v_k = v_k$ for all $k\in\NN$.
See \cref{prop:symmetry} for the general result.

The gap of the shifted function $f_v^\xi \coloneqq f_v + b^\xi$ that controls the convergence of Hadamard mirror descent can be interpreted as a \emph{capacity} of the type studied in matrix, operator and tensor scaling~\cite{FW2020}.
In the present setting, one can show that it is always finite for suitable~$\xi$.
We show that it is polynomially bounded in the input bit-length of vectors with Gaussian integer (or rational) entries, for broad classes of actions of general linear groups.
We also show that the Kempf--Ness functions are always $D(\pi)^2/2$-dual-smooth, where $D(\pi)$ is a natural complexity parameter called the \emph{weight diameter}%
\footnote{\label{footnote:weight diam vs norm}It can be bounded by twice the \emph{weight norm} of prior work~\cite{BFGOWW2019}. We find a similar improvement for the smoothness of Kempf--Ness functions; see \cref{sub:dualsmooth}.}
of the action (\cref{def:weight diameter}).
In this way we deduce the following result from our general \cref{thm:intro-main}:

\begin{theorem}[Convex optimization on moment polytopes, see \cref{thm:KN-general,cor:KN-Qsmooth}]\label{thm:intro mo po}
    Let $\Phi \colon \bbH\to \overline\RR$ be a l.s.c.\ convex function that is $K$-invariant, differentiable on the interior of its domain, and such that $\mu(g \cdot v)\in\intr\dom\Phi$ for all~$g \in G$; let $\phi$ be its restriction.
    Under a natural step-size condition (satisfied for $\eta\leq2/(D(\pi)^2L)$ if~$\Phi$ is $L$-smooth), Hadamard mirror descent~\eqref{eq:intro iteration} minimizes $\phi$ over the moment polytope $\Delta(v)$ at a rate proportional to the inverse iteration count:
    \[
    0 \leq \phi(s(\mu(v_k))) - \min_{p \in \Delta(v)}\phi(p)
    =   \Phi(\mu(v_k)) - \min_{p \in \Delta(v)}\Phi(p)
    \leq \mathcal O_v\mleft( \frac1{\eta k} \mright).
    \]
    The constant in the $\mathcal{O}_v$ notation captures the logarithm of the norm and a ``capacity'' of the input vector~$v$;
    in typical cases, it is polynomially bounded by its bit-length.
\end{theorem}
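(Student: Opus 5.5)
The plan is to reduce \cref{thm:intro mo po} to the general convergence result \cref{thm:intro-main} by instantiating $\MM = P = G\cap\PD(n)$, $f = f_v$, and $Q$ the holonomy-invariant extension of $\Phi$, namely $Q_x \coloneqq Q_I\circ\tau^*_{I\to x}$ with $Q_I$ obtained from $\Phi$ via the metric. I will carry this out in four steps.

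\textbf{Step 1: well-posedness.} Holonomy-invariance of $Q$ follows from $K$-invariance of $\Phi$. On a symmetric space the holonomy group acts on $T_IP\cong\bbH$ through conjugation by elements of $K$, so $Q_x$ does not depend on the transport curve. The hypotheses that $\Phi$ is l.s.c., convex and differentiable on $\intr\dom\Phi$, together with $\mu(g\cdot v)\in\intr\dom\Phi$, place us in the regularity setting of \cref{def:holoinvariant,def:grad}.

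\textbf{Step 2: the iteration.} I will check that Hadamard mirror descent \eqref{eq:intro-Hadamard-mirror-descent} is represented by \eqref{eq:intro iteration}. Using that $\mu(x^{1/2}\cdot v)$ is dual to $\tau^*_{I\to x}\diff f_v(x)$, the $Q$-gradient at $x$ equals the parallel transport of $\nabla\Phi(\mu(x^{1/2}\cdot v))$ from $I$. The exponential map in $P$ is $\exp_x(\tau_{I\to x}Y)=x^{1/2}e^{Y}x^{1/2}$. Setting $v_k \coloneqq x_k^{1/2}\cdot v$, up to a unitary factor which is harmless by $K$-invariance and \cref{prop:symmetry}, the update becomes $v_{k+1}=e^{-\frac{\eta}{2}\nabla\Phi(\mu(v_k))}\cdot v_k$. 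The factor $\tfrac12$ comes from taking the square root. I will also identify the value $Q(\diff f_v(x_k))$ with $\Phi(\mu(v_k))$ and with $\phi(s(\mu(v_k)))$.

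\textbf{Step 3: value identification.} By the chain of equalities displayed before \eqref{eq:intro iteration} and the Kirwan--Ness--Mumford convexity theorem, $\inf_x Q(\diff f_v(x))=\min_{\Delta(v)}\phi$. The minimum is attained by compactness of $\Delta(v)$ and lower semicontinuity of $\phi$.

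\textbf{Step 4: the rate.} To get a $\mathcal O_v(1/(\eta k))$ rate from \cref{thm:intro-main}, I need a Busemann direction $\xi$ with $Q(\xi)=\min_{\Delta(v)}\phi$ and a finite gap $f_v^\xi(x_0)-\inf f_v^\xi$.
\begin{itemize}
\item Take $p^*\in\Delta(v)$ a minimizer, and let $b^\xi$ be the Busemann function associated with the linear functional $-\langle p^*,\cdot\rangle$ on a flat. Then $-\diff b^\xi$ corresponds to $p^*$, so $Q(\xi)=\Phi(p^*)$.
\item The gap $f_v^\xi(I)-\inf f_v^\xi = \log\norm{v}^2 - \log\capacity_{p^*}(v)$ is a capacity. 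It is finite precisely because $p^*\in\Delta(v)$, by the standard duality between moment polytope membership and boundedness of the shifted Kempf--Ness function.
\item If $p^*$ lies on the boundary, $\inf f^\xi_v$ might be $-\infty$. I would handle this either by using the rational structure of the polytope together with the capacity bounds proved later in the paper, or by perturbing $p^*$ into the relative interior and letting the perturbation vanish.
\end{itemize}
The inequality $\phi(s(\mu(v_k)))\ge \min$ is trivial because $s(\mu(v_k))\in\Delta(v)$.

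\textbf{Main obstacle: the step-size condition.} I will show that $f_v$ is $D(\pi)^2/2$-dual-smooth. Combined with $L$-smoothness of $\Phi$ (hence of each $Q_x$, since the metric identifications are isometric), this gives $D_{Q_x}(\tau^*\diff f(y)\Vert\diff f(x))\le \tfrac L2\norm{\tau^*\diff f(y)-\diff f(x)}^2\le \tfrac{L D(\pi)^2}{2}D_f^{\MM}(x\Vert y)$. This is relative smoothness with constant $LD(\pi)^2/2$, so \cref{lem:relsmooth} permits $\eta\le 2/(D(\pi)^2L)$.

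For the dual-smoothness I would work in a maximal flat through $x$ and $y$, which exists after a $K$-rotation. The issue is that the transported differential differs from the flat-coordinate gradient by holonomy. On the flat, $f_v$ is a log-sum-exp of weights with diameter $D(\pi)$. I would first try to control the change of $\mu$ along the geodesic by the variance of the weight distribution, which is at most $D(\pi)^2/4$ per unit Hessian. I would then integrate this bound, using convexity of the moment-map orbit in the spirit of the Euclidean log-sum-exp argument, to get $\tfrac12\norm{\Delta\mu}^2\le \tfrac{D(\pi)^2}{2}D_f$.
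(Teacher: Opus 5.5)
Your Step~4 has a genuine gap. With $\xi$ the class of the ray $t\mapsto e^{tp^\star}$, the gap $f_v^{\xi}(I)-\inf f_v^{\xi}=2\bigl(\log\norm v-\log\capacity_{p^\star}(v)\bigr)$ can be $+\infty$ even though $p^\star\in\Delta(v)$. In that case \cref{thm:intro-main} says nothing. For example, let $\GL(2)$ act on $\CC^2$ and take $v=e_2$. Then $\Delta(v)=\{\diag(1,0)\}$, so $p^\star=\diag(1,0)$, but $f_v^{p^\star}(\diag(e^a,e^b))=b-a$ is unbounded below.

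Membership $p^\star\in\Delta(v)$ only guarantees finiteness after a $K$-rotation (the shifting trick, \cref{prop:shiftingtrick_orig}): $\log\capacity_{u^\dagger p^\star u}(v)>-\infty$ for \emph{some} $u\in K$. The missing idea is to use the $K$-invariance of $\Phi$. Every rotated direction $\xi=u^\dagger p^\star u$ still has $Q(\xi)=\Phi(u^\dagger p^\star u)=\phi(p^\star)=\min_{\Delta(v)}\phi$. So you may apply the convergence theorem for each $u$ and take the best one. This produces the $K$-invariant capacity $\log\Capacity_{p^\star}(v)=\sup_{u\in K}\log\capacity_{u^\dagger p^\star u}(v)$, which is finite by \cref{eq:Delta vs logCap}; this is how \cref{thm:KN-general} is proved.

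Your proposed repair targets the wrong issue. Boundary points are harmless, since $\log\Capacity_p(v)>-\infty$ for every $p\in\Delta(v)$. Perturbing into the relative interior cannot help in the example above, where $\Delta(v)$ is a single point.

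Your sketch of dual-smoothness, which you need for the threshold $\eta\le 2/(D(\pi)^2L)$, also does not work as stated. Restricting $f_v$ to a maximal flat through $y=I$ and $x$ only sees the projection of $\mu(x^{1/2}\cdot v)$ onto the flat's tangent space. The normal component is invisible to the log-sum-exp structure, and in \cref{ex:countersmooth} it is exactly this component that matters. There, the diagonal part of $vv^\dagger-ww^\dagger$ has squared norm $O(\eps^2)$, while $D^P_{f_v}(x\Vert I)\approx\eps$; the off-diagonal part is what saturates the constant $D(\pi)^2/2$.

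Suppose instead you bound the full derivative via a covariance estimate, $\norm{\tfrac{\diff}{\diff t}\mu(e^{tX/2}\cdot v)}_{\mathrm F}\le\tfrac{D(\pi)}2\sqrt{h''(t)}$, with $x=e^X$ and $h(t)=f_v(e^{tX})$. Integrating then requires $\bigl(\int_0^1\sqrt{h''}\bigr)^2\le 4\bigl(h(1)-h(0)-h'(0)\bigr)$. This is false in general: it already fails for weights $0,\omega,2\omega$ with suitably small masses on $\omega$ and $2\omega$, as $\omega\to\infty$. Integrating a pointwise bound pays for the length of the path, whereas $D^P_{f_v}$ only controls the endpoints.

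The paper's argument compares endpoints directly:
\begin{itemize}
\item Reduce to $y=I$ and $\norm v=1$.
\item Write the difference as $Z\mapsto\tr[\Pi(Z)(vv^\dagger-ww^\dagger)]$ with $w=x^{1/2}\cdot v/\norm{x^{1/2}\cdot v}$.
\item Shift $\Pi(Z)$ by scalars, which is allowed because $vv^\dagger-ww^\dagger$ is traceless. This gives the bound $\tfrac{D(\pi)^2}4\norm{vv^\dagger-ww^\dagger}_{\tr}^2=D(\pi)^2(1-\abs{\braket{v,w}}^2)$.
\item Finish with $1-s\le-\log s$ and the convexity estimate $-\log\abs{\braket{v,w}}^2=h(1)-2h(1/2)\le h(1)-h'(0)=D^P_{f_v}(x\Vert I)$.
\end{itemize}

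A minor point: in Step~2 the unitary ambiguity is handled by the polar decomposition $g_k=u_kx_k^{1/2}$ together with $K$-equivariance of $\mu$ and $\nabla\Phi$, as in \cref{lem:discrete-iterations}. It is not handled by \cref{prop:symmetry}.
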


\noindent
This establishes that Hadamard mirror descent finds (approximate) minimizers over the moment polytope in a number of steps inversely proportional to the desired accuracy.
We note two caveats regarding this result.
First, it does not directly yield a polynomial-time algorithm in the Turing model because the iteration~\eqref{eq:intro iteration} involves ``continuous'' operations such as matrix exponentials; this requires a detailed precision analysis which we plan to carry out in future work.
Second, it does not encompass fully arbitrary convex optimization problems over moment polytopes, because the $K$-invariance of~$\Phi$ is an additional hypothesis.
However, we will see in the remainder of this introduction that the theorem applies to many natural convex optimization problems, such as those underlying important tensor parameters.

Before considering tensors, we give a simple application of \cref{thm:intro mo po} that addresses an open problem in the literature: quantitative convergence for minimization of the $\ell^2$ or, more precisely, the Frobenius norm $\norm X_{\mathrm F}$ over the moment polytope.
Let $\Phi(X) = \|X\|_{\mathrm{F}}^2/2$, which is 1-smooth and $K$-invariant.
Then, the $Q$-gradient reduces to the ordinary Riemannian gradient of~$f_v$, and the update~\eqref{eq:intro iteration} for the step size~$\eta=2/D(\pi)^2$ reads%
\footnote{Here we assume that $D(\pi)>0$. When $D(\pi)=0$, every group element acts by a scalar multiple of the identity, so the moment polytope is a single point and the algorithmic problem is trivial; moreover, any finite step size is allowed.}
\begin{equation}\label{eq:intro-gradientdescent}
    v_{k+1} \coloneqq e^{-\frac1{D(\pi)^2}\mu(v_k)}\cdot v_k, \qquad v_0 \coloneqq v.
\end{equation}
Then \cref{thm:intro mo po} implies the following result.

\begin{corollary}[Norm minimization on the moment polytope, see \cref{thm:moment-norm-conv}]\label{cor:intro-moment-norm-conv}
    The iteration \eqref{eq:intro-gradientdescent} achieves the following convergence rate, where $D(\pi)$ is the weight diameter\footref{footnote:weight diam vs norm} of the representation:
    \begin{equation*}
        \|\mu(v_k)\|_{\mathrm{F}}^2 - \min_{p \in \Delta(v)}\|p\|_{\mathrm{F}}^2 = \mathcal{O}_v\mleft( \frac{D(\pi)^2} k \mright).
    \end{equation*}
    The constant in the $\mathcal{O}_v$ notation has the same dependence on~$v$ as in \cref{thm:intro mo po}.
\end{corollary}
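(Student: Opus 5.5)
This follows from \cref{thm:intro mo po} once we specialize to $\Phi(X)=\tfrac12\|X\|_{\mathrm F}^2$ and check its hypotheses.

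\emph{Hypotheses.} $\Phi$ is finite everywhere, so $\dom\Phi=\bbH$ and the interior-of-domain condition $\mu(g\cdot v)\in\intr\dom\Phi$ holds trivially. It is convex, differentiable, and $K$-invariant, because $K$ acts on $\bbH$ by unitary conjugation, which preserves the Hilbert--Schmidt norm. Its gradient $\nabla\Phi(X)=X$ is $1$-Lipschitz, so $\Phi$ is $1$-smooth.

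\emph{Step size and update.} The step-size condition of \cref{thm:intro mo po} then holds for $\eta=2/D(\pi)^2$, assuming $D(\pi)>0$; the degenerate case $D(\pi)=0$ is trivial, as noted in the footnote. Substituting $\nabla\Phi(\mu(v_k))=\mu(v_k)$ and $\eta/2=1/D(\pi)^2$ into \eqref{eq:intro iteration} gives exactly \eqref{eq:intro-gradientdescent}.

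\emph{Rate.} The restriction $\phi$ of $\Phi$ is $\tfrac12\|\cdot\|_{\mathrm F}^2$ on the Weyl chamber. Since $s$ is $K$-invariant and, as a spectrum-type map, preserves the norm, $\phi(s(\mu(v_k)))=\tfrac12\|\mu(v_k)\|_{\mathrm F}^2$. \Cref{thm:intro mo po} then gives
\[
\tfrac12\|\mu(v_k)\|_{\mathrm F}^2-\min_{p\in\Delta(v)}\tfrac12\|p\|_{\mathrm F}^2\le \mathcal O_v\!\left(\frac{1}{\eta k}\right)=\mathcal O_v\!\left(\frac{D(\pi)^2}{2k}\right).
\]
Multiplying by $2$ yields the claim. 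The $v$-dependent constant is the capacity gap $f_v^\xi(I)-\inf f_v^\xi$ from \cref{thm:intro mo po}, so it has the same dependence on $v$.

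\emph{Main obstacle.} The one point needing care is checking that $\|s(X)\|_{\mathrm F}=\|X\|_{\mathrm F}$, i.e.\ that the map to the Weyl chamber is norm-preserving. This holds because $s(X)$ lies in the $K$-orbit of $X$: every Hermitian element is $K$-conjugate into the Cartan subspace, and then a Weyl group element moves it into $C$. Everything else is direct substitution.
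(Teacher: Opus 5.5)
Your proposal is correct and follows essentially the same route as the paper. The proof of \cref{thm:moment-norm-conv} applies \cref{cor:KN-Qsmooth} (the body version of \cref{thm:intro mo po}) to the $1$-smooth objective $\phi=\tfrac12\|\cdot\|_{\mathrm F}^2$ with $\eta=2/D(\pi)^2$, together with exactly your observation $\|s(\mu(v_k))\|_{\mathrm F}=\|\mu(v_k)\|_{\mathrm F}$. This yields the explicit bound $D(\pi)^2(\log\|v\|-\log\Capacity_{p^\star}(v))/k$ on the halved gap, where $p^\star$ is the minimum-norm point of $\Delta(v)$.
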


\noindent
This result improves prior work, which applied only in special cases or did not give quantitative convergence guarantees.
B\"{u}rgisser et al.~\cite{BFGOWW2019} proposed the gradient descent algorithm~\eqref{eq:intro-gradientdescent} with step size~$\eta=1/L$, where $L$ is any upper bound on the smoothness parameter of~$f_v$ (we improve their bound on the smoothness parameter in \cref{prop:kn-smooth}); they showed a best-iterate bound of~$\mathcal{O}(k^{-1})$ assuming the minimum is zero, which is exactly when $f_v$ is bounded below (such~$v$ are called \emph{semistable}).
For the same step sizes, Hirai and Sakabe~\cite{HiraiSakabe2024} showed that $\|\mu(v_k)\|_{\mathrm{F}}^2$ converges to its infimum even if it is positive, but did not provide a quantitative rate.
Our \cref{cor:intro-moment-norm-conv} proves a convergence rate of~$\mathcal{O}(k^{-1})$ that applies in general, regardless of the value of the optimization problem, for the step size~$\eta=2/D(\pi)^2$ arising from the use of the dual-smoothness of the Kempf--Ness function.

We remark that the above also gives an algorithmic proof and interpretation of a widely used semistability criterion by Luna~\cite[Cor.~2]{Luna-75} using conservation of symmetry; see \cref{rem:Luna}.

\subsection{Convex optimization on entanglement polytopes of tensors}\label{sub:convex-entanglement}
We now return to the concrete case of the tensor action.
As discussed in \cref{sub:intro primer}, the moment polytopes in this case are known as \emph{entanglement polytopes}; they are given concretely by \cref{eq:intro-entanglement-polytope}.
This setting has attracted attention due to its applications in algebraic complexity theory, most notably to tensor rank and the exponent of matrix multiplication, and in quantum information, as discussed earlier in this introduction; hence the development of effective algorithms to optimize over entanglement polytopes is a significant open problem.
We can obtain such algorithms from the Hadamard mirror descent framework by specializing the results of the preceding subsection.

For the tensor action, $G = \GL \coloneqq \GL(n_1)\times\cdots\times\GL(n_d)$, $K = \U(n_1)\times\cdots\times\U(n_d)$, $\bbH = \Herm(n_1)\times\cdots\times\Herm(n_d)$, and $\pi(g_1,\dots,g_d) T \coloneqq (g_1 \ot \cdots \ot g_d) T$; the moment map is given by
\begin{equation}\label{eq:intro mu tensors}
    \mu \colon \bigotimes_{\ell = 1}^d\CC^{n_\ell}\setminus\{0\}\rightarrow\bbH,\quad
    \mu(T) = \left(\rho_1,\dots,\rho_d\right), \quad\text{where } \rho_\ell \coloneqq \frac{T^{(\ell)}T^{(\ell)\dag}}{\norm{T}^2}.
\end{equation}
Here, we recall that $T^{(\ell)}$ denotes the $\ell$-th \emph{flattening} of~$T$.
In the language of quantum information, the tensor $T$ represents a (pure) quantum state of $d$ particles, and the $\rho_1,\dots,\rho_d$ describe the quantum states of the individual particles.
The Weyl chamber is given by $C = \RR^{n_1}_\downarrow \times \cdots \times \RR^{n_d}_\downarrow$, and the abstract map $s \colon \bbH \to C$ is simply the map $\spec(\cdot)$ that sends a tuple of Hermitian matrices to their ordered eigenvalue spectra, as in \cref{sub:intro primer}.
Thus, the entanglement polytopes~\eqref{eq:intro-entanglement-polytope} can be written as
$\Delta(T) = \{ \spec(\mu(S)) \mid 0 \neq S \in \overline{\GL \cdot T} \}$,
in agreement with the general formula for the moment polytope given in \cref{sub:convex-moment}.
Because each $\rho_\ell$ is positive semidefinite and has trace one, its eigenvalues form a probability distribution.
Thus, $\Delta(T)$ is a subset of a Cartesian product of probability simplices, as already stated in \cref{sub:intro primer}.

Let us consider convex optimization on the entanglement polytope of a tensor~$T$:
\[
    \minproblem{\phi(p)}{p \in \Delta(T)},
\]
where $\phi \colon \RR^{n_1}\times\cdots\times\RR^{n_d}\to\overline{\RR}$ is a convex objective function that is permutation-symmetric in each argument.
Such a function extends to a unitarily invariant convex function $\Phi \colon \bbH\to \overline{\RR}$ and further to a holonomy-invariant convex objective.
Then the update~\eqref{eq:intro iteration} reads concretely:
\begin{equation}\label{eq:intro iteration tensors}
    T_{k+1} \coloneqq e^{-\frac\eta2\nabla\Phi(\rho_1,\dots,\rho_d)}\cdot T_k \quad
    \text{ where } (\rho_1,\dots,\rho_d) \coloneqq \mu(T_k),
    \quad
    T_0 \coloneqq T.
\end{equation}
The sequence $T_k$ inherits all local unitary symmetries of $T$ (as well as permutation symmetries, for suitable objectives); see \cref{rem:tensor symmetries}.
Since the weight diameter of the tensor action is bounded by $D(\pi) \leq \sqrt{2d}$, \cref{thm:intro-main} along with strong capacity bounds implies the following general~result:

\begin{theorem}[Convex optimization on entanglement polytopes, see \cref{cor:entanglementpolytope,thm:tensor_f+b}]\label{thm:intro ent po}
    Let $0\neq T\in \bigotimes_{\ell=1}^d \CC^{n_\ell}$ be a tensor.
    Let $\phi \colon \RR^{n_1}\times\cdots\times\RR^{n_d}\to\overline{\RR}$ be a l.s.c.\ convex function, permutation-symmetric in each argument and differentiable on the interior of its domain, with $\spec\mu(g \cdot T) \in\intr\dom\phi$ for all~$g \in \GL$.
    Under a natural step-size condition (satisfied for $\eta\leq1/(dL)$ if~$\phi$ is $L$-smooth), Hadamard mirror descent~\eqref{eq:intro iteration tensors} minimizes $\phi$ over the entanglement polytope $\Delta(T)$:
    \[
    0 \leq \phi(\spec(\mu(T_k))) - \min_{p \in \Delta(T)}\phi(p) \leq \mathcal O_T\mleft( \frac1{\eta k} \mright).
    \]
    When $T$ has Gaussian integer entries, the constant in the $\mathcal O_T$ notation is polynomial in its bit-length.
\end{theorem}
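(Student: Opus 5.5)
The plan is to specialize \cref{thm:intro mo po} to the tensor action and then control the two inputs it needs: the step size, through the weight diameter, and the constant, through a capacity bound for tensors.

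First I would set up the specialization. Take $G=\GL$, $K=\U(n_1)\times\cdots\times\U(n_d)$ and $\pi(g)=g_1\ot\cdots\ot g_d$. The Hermitian part of the Lie algebra is $\bbH=\Herm(n_1)\times\cdots\times\Herm(n_d)$, and the moment map is \eqref{eq:intro mu tensors}. Given $\phi$, define $\Phi(X_1,\dots,X_d)\coloneqq\phi(\spec X_1,\dots,\spec X_d)$. By the Lewis/Davis theory of spectral functions, $\Phi$ is l.s.c., convex and unitarily invariant, and it is differentiable on $\intr\dom\Phi$, with
\[
\nabla\Phi(X)=U\diag(\nabla\phi(\spec X))U^\dagger
\]
in each component. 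The same theory shows that if $\phi$ is $L$-smooth, then $\Phi$ is $L$-smooth in the Frobenius norm. The domain hypothesis $\mu(g\cdot T)\in\intr\dom\Phi$ follows from $\spec\mu(g\cdot T)\in\intr\dom\phi$. Hence \cref{thm:intro mo po} applies, and the iteration \eqref{eq:intro iteration} becomes exactly \eqref{eq:intro iteration tensors}.

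Next I would bound the weight diameter. The weights of $\pi$ are the vectors $(e_{i_1},\dots,e_{i_d})\in\RR^{n_1}\times\cdots\times\RR^{n_d}$. The squared distance between two such weights is $\sum_\ell\|e_{i_\ell}-e_{j_\ell}\|^2\le 2d$, so $D(\pi)\le\sqrt{2d}$. Substituting into the step-size condition $\eta\le 2/(D(\pi)^2L)$ gives $\eta\le 1/(dL)$. Then \cref{thm:intro mo po} yields the bound $\mathcal O_T(1/(\eta k))$, where the constant is $f_T^\xi(x_0)-\inf f_T^\xi$ for a suitable Busemann direction $\xi$. The left inequality $0\le\cdot$ is immediate, since $\spec\mu(T_k)\in\Delta(T)$.

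The remaining and main step is to show that this constant is polynomial in the bit-length when $T$ has Gaussian integer entries. Concretely, I would choose $\xi$ to be a minimizer $p^\star\in\Delta(T)$, or a nearby rational point in the polytope, viewed as a diagonal direction in the Weyl chamber. Then $b^\xi(x)$ is a linear combination of $\log\det$'s of leading principal minors of $x$, and the gap becomes
\[
\log\|T\|^2-\log\capacity_{p^\star}(T),
\]
a capacity in the sense of \cite{FW2020}. The term $\log\|T\|^2$ is trivially polynomial. For the capacity lower bound I would follow the invariant-theoretic argument of \cite{BFGOWW2019}. First, $p^\star$ in $\Delta(T)$ implies that some highest-weight vector of weight proportional to $p^\star$, of bounded degree, does not vanish on the orbit closure. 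Second, bounding the integrality and degree of that covariant, together with the Gaussian-integer entries of $T$, gives
\[
\capacity\ge \exp\bigl(-\poly(d,n_\ell,\text{bit-length})\bigr).
\]
This needs degree bounds for covariants, as in Derksen's bound; these depend polynomially on $n$ only after taking logarithms. I would also have to handle a minimizer that is only rational up to vertex denominators. I would do this by noting that any point of $\Delta(T)$ is a convex combination of vertices with controlled denominators. Alternatively, I would apply \cref{thm:intro-main} with $\xi$ set to such a vertex combination and use convexity of $\phi$ to compare.
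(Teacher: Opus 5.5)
\textbf{Gap in the capacity step.} Your specialization matches the paper: the spectral extension $\Phi$, the bound $D(\pi)\le\sqrt{2d}$ (you compute it from the weights, the paper uses \cref{lem:multihomogeneous}), hence $\eta\le 1/(dL)$, and the application of \cref{thm:intro mo po}. The problem is your choice of $\xi$. You take $\xi$ to be $p^\star$ itself, as a direction in the fixed Weyl chamber, and try to lower-bound $\log\capacity_{p^\star}(T)$. But $p^\star\in\Delta(T)$ does \emph{not} imply $\log\capacity_{p^\star}(T)>-\infty$. By the shifting trick (\cref{prop:shiftingtrick_orig}), membership only guarantees $\log\capacity_{p^\star}(u\cdot T)>-\infty$ for some $u\in K$, or $\log\capacity_{p^\star}(g\cdot T)>-\infty$ for generic $g\in G$.

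Concretely, take $T=e_2\otimes e_2\in\CC^2\otimes\CC^2$. Then $\Delta(T)=\{((1,0),(1,0))\}$, so $p^\star=((1,0),(1,0))$ for every objective. Yet at $x=(\diag(a_1,a_2),\diag(b_1,b_2))$ the shifted Kempf--Ness function equals $\log a_2+\log b_2-\log a_1-\log b_1$, which is unbounded below. So your constant is $+\infty$.

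In your invariant-theoretic language: a highest-weight vector $P$ that does not vanish on the orbit closure only satisfies $P(g\cdot T)\neq0$ for some $g$. The integrality estimate $\abs{P(T)}\ge1$ needs $P(T)\neq0$ at $T$ itself. This is why the paper's constant is the $K$-invariant capacity $\log\Capacity_{p^\star}(T)=\sup_{u\in K}\log\capacity_{u^\dagger p^\star u}(T)$. It is obtained by applying \cref{thm:main-discrete} with the rotated directions $\xi=u^\dagger p^\star u$.

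\textbf{Missing transfer argument.} Rotating alone does not rescue your argument, because $u\cdot T$ no longer has Gaussian integer entries, so the integrality bound does not apply to it. What you are missing is the transfer argument in the proof of \cref{thm:tensor_f+b}:
\begin{itemize}
\item Choose an integer tuple $g$ with entries in $[S]$, $\log S\le\poly(d,n_{\max})$, such that $\Delta(T)=\{p\in C\mid\log\capacity_p(g\cdot T)>-\infty\}$. This is \cref{prop:tensor_randomnessbound}; it holds uniformly in $p$, and together with the next step it also covers irrational minimizers.
\item Apply \cref{prop:tensor_capacitylowerbound} to the Gaussian integer tensor $g\cdot T$ to get $\log\capacity_p(g\cdot T)\ge-\sum_\ell\log n_\ell$.
\item Use \cref{lem:asymptoticrays} to find $u\in K$ such that $t\mapsto g^\dagger e^{tp}g$ is asymptotic to $t\mapsto e^{tu^\dagger pu}$. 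Then $f^p_{g\cdot T}(x)=f^{u^\dagger pu}_T(g^\dagger xg)+c$, and $c\le\sum_\ell\norm{p_\ell}_{\mathrm F}\dist(I,g_\ell^\dagger g_\ell)\le\poly(d,n_{\max})$ by the Lipschitz continuity of Busemann functions and \cref{lem:busemann_valuebound}.
\end{itemize}
Without some such passage from $T$ to an integral generic translate, and back to a $K$-translate, your proposal does not give a finite constant, let alone a polynomial one.
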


\noindent
This and all tensor results discussed in this subsection apply more generally to tensor tuples, but for clarity of exposition we only discuss the case of single tensors.

As a simple example, $\ell^2$-norm minimization corresponds to the objective $\phi(p) = \sum_{\ell}\|p_\ell\|^2/2$.
We obtain the following iteration as a special case of~\eqref{eq:intro-gradientdescent}:
\begin{equation}\label{eq:tensor grad descent}
    T_{k+1} \coloneqq \mleft(\bigotimes_{\ell = 1}^de^{-\rho_\ell/(2d)}\mright) T_k,\quad \text{where}\ (\rho_1, \ldots, \rho_d) = \mu(T_k),
\end{equation}
with stronger guarantees than previously known (see \cref{cor:intro-moment-norm-conv} and the discussion following it).
In the remainder, we present applications to paradigmatic tensor parameters:
the quantum functionals, the symmetric quantum functional, the $G$-stable rank, and the non-commutative~rank.

\subsubsection{Quantum functionals}\label{ssub:quantum_functionals}
In 1986, Strassen introduced the \emph{asymptotic spectrum} of tensors to analyze the computational complexity of multilinear problems, with the exponent of matrix multiplication as a main motivation \cite{Strassen-86,Strassen-88,Strassen-05}.
Central to his theory are \emph{universal spectral points}---functionals on tensors that are normalized on the unit tensor, monotone under restriction, additive under direct sum and multiplicative under Kronecker product.
For decades, constructing non-trivial universal spectral points remained a primary open problem in algebraic complexity theory.

Quantum functionals, introduced by Christandl, Vrana, and Zuiddam~\cite{CVZ2018}, provided the first family of non-trivial universal spectral points for tensors over the complex numbers.
These functionals are defined via entropy maximization over the entanglement polytope, and they are parameterized by probability distributions~$\theta \in \Theta \coloneqq \{ \theta \in \RR^d_{\geq0} \mid \sum_{i=1}^d \theta_i=1\}$.
Namely, the \emph{quantum functional}~$\qfunc_\theta$ with parameter~$\theta$ is, for any nonzero tensor $T\in\bigotimes_{\ell=1}^d \CC^{n_\ell}$, expressed as a maximization problem over the entanglement polytope $\Delta(T)$:
\begin{equation}
\label{eq:intro-def-quantum-functional}
\qfunc_\theta(T) \coloneqq e^{\logqfunc_\theta(T)}, \quad \logqfunc_\theta(T) \coloneqq \max_{p\in\Delta(T)}\; H_\theta(p),
\end{equation}
where $H_\theta(p) \coloneqq H_\theta(p_1,\dots,p_d) \coloneqq \sum_{\ell=1}^d \theta_\ell H(p_\ell)$ and $H(p_\ell) \coloneqq -\sum_{i=1}^{n_\ell} p_{\ell,i} \log p_{\ell,i}$ is the \emph{Shannon entropy};%
\footnote{\label{footnote:qfunc_base}In~\cite{CVZ2018}, the entropy was defined using the binary logarithm. Here we use the natural logarithm, so that $\qfunc_\theta(T) \coloneqq e^{\logqfunc_\theta(T)}$ coincides with the original definition of the quantum functionals.}
this is natural because the components $p_\ell$ are always probability distributions, as observed earlier.
One also puts $F_\theta(0)\coloneqq 0$ by convention.
The quantity $\logqfunc_\theta(T)$ is called the \emph{logarithmic quantum functional}.
The quantum functionals are known to satisfy Strassen's axioms of spectral points, namely, they are monotone under restriction, normalized on unit tensors, additive and multiplicative.
As a result, they also upper bound the asymptotic subrank.

We show that by applying Hadamard mirror descent to the convex objective $\phi = -H_\theta$, we obtain a finite-step descent algorithm for approximating the logarithmic quantum functionals.
Without loss of generality, we may assume that $T$ is \emph{concise}: each component of $\mu(T)$ is nonsingular and hence a positive definite matrix.%
\footnote{\label{footnote:intro-concise}One can always reduce to this situation efficiently, with only a polynomial increase in the bit-length; see \cref{footnote:concise}.}
Then the condition in \cref{thm:intro ent po} that $\spec\mu(g \cdot T) \in \intr\dom\phi$ for all~$g \in \GL$ is satisfied.
The question of step size is more delicate because the negative entropy is \emph{not} $L$-smooth for any~$L>0$.
Instead, we directly show that the natural step-size condition in \cref{thm:intro ent po} is satisfied for~$\eta=1$;
this amounts to a novel \emph{quantum entropy inequality}, which we establish using mathematical techniques from quantum information theory (\cref{thm:quantum}).
We now compute the update rule for this step size.
The unitarily invariant extension of~$\phi$ is given by $\Phi(\rho)\coloneqq-\sum_{\ell=1}^d \theta_\ell H(\rho_\ell)$, where $H(\rho_\ell)$ denotes the \emph{von Neumann entropy}.
We calculate its gradient as $(\nabla\Phi(\rho))_\ell=\theta_\ell(\log \rho_\ell+I_{n_\ell})$ for $\ell\in [d]$.
Thus, up to an irrelevant overall scalar, the update~\eqref{eq:intro iteration tensors} takes the following form, which we call \emph{entropic tensor scaling}:
\begin{equation}\label{eq:intro-entropic}
    T_{k+1} \coloneqq \left( \bigotimes_{\ell=1}^d \rho_\ell^{-\theta_\ell/2} \right) T_k,\quad \text{where } (\rho_1,\dots,\rho_d) = \mu(T_k), \quad T_0 \coloneqq T.
\end{equation}
To the best of our knowledge, entropic tensor scaling provides the first rigorous algorithm to approximate the quantum functionals for arbitrary tensors with a polynomial iteration complexity.
This is formalized in the following theorem, obtained from \cref{thm:intro ent po} and the discussion above.

\begin{theorem}[Quantum functionals, see \cref{thm:entropic tensor scaling}]
\label{thm:intro-quantum-functional}
    Let $0\neq T\in \bigotimes_{\ell=1}^d \CC^{n_\ell}$ be a concise tensor and let $\theta \in \Theta$ be a probability distribution.
	Then, entropic tensor scaling \eqref{eq:intro-entropic}, run for $k\geq1$ steps, returns a tensor~$T_k$ such that
    \[
	   0\leq \logqfunc_\theta(T) - H_\theta(\spec(\mu(T_k))) \leq \mathcal O_T\mleft( \frac1k \mright).
	\]
    When $T$ has Gaussian integer entries, the constant in the $\mathcal O_T$ notation is polynomial in its bit-length.
\end{theorem}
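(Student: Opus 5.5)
The plan is to derive the statement from \cref{thm:intro ent po}, applied to the objective $\phi = -H_\theta$ with step size $\eta = 1$. This requires three checks: the hypotheses on $\phi$ and its domain, the step-size condition, and the identification of the iteration \eqref{eq:intro iteration tensors} with entropic tensor scaling \eqref{eq:intro-entropic}.

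\textbf{Hypotheses on $\phi$ and the iteration.} The function $\phi$ is l.s.c., convex and permutation-symmetric in each argument. We extend it to $+\infty$ outside the product of simplices, and it is differentiable on the interior of its domain, i.e.\ on strictly positive probability vectors (after restricting to the affine hull of the simplices, or equivalently after extending by the formula $-\sum\theta_\ell H$ on positive vectors). Since $T$ is concise, each $\rho_\ell$ of $g\cdot T$ is positive definite for all $g\in\GL$, so $\spec\mu(g\cdot T)\in\intr\dom\phi$.

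The gradient is $(\nabla\Phi(\rho))_\ell=\theta_\ell(\log\rho_\ell+I)$. Hence $e^{-\frac12\nabla\Phi}$ acts on factor $\ell$ as $e^{-\theta_\ell/2}\rho_\ell^{-\theta_\ell/2}$. The scalar factors do not change $\mu(T_k)$, since $\mu$ is scale-invariant, so the iteration agrees with \eqref{eq:intro-entropic} up to an overall scalar. The lower bound $0\le$ is immediate because $\spec\mu(T_k)\in\Delta(T)$. Moreover, \eqref{eq:intro phi to Phi} gives the identity between $\max_{\Delta(T)}H_\theta$ and the infimum over $\PD$.

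\textbf{Step-size condition (main obstacle).} Negative entropy is not smooth, so we cannot use $\eta\le 1/(dL)$ and must verify the condition of \cref{def:step-size-condition} directly for $\eta=1$. In the discrete-time analysis this condition is, as I expect, the one-step descent inequality
\[
Q(\diff f(x_{k+1})) \le Q(\diff f(x_k)) - \tfrac1\eta D_f^{\MM}(x_{k+1}\Vert x_k),
\]
or equivalently the relative-smoothness inequality \eqref{eq:rel-smoothness-ineq} restricted to the pair $(x_k,x_{k+1})$.

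Concretely, write $S=T_k/\norm{T_k}$ with marginals $\rho_\ell$, and $S'=(\bigotimes_\ell\rho_\ell^{-\theta_\ell/2})S$ with marginals $\sigma_\ell$ after normalization. The Bregman term becomes
\[
\log\norm{S'}^2+\sum_\ell\theta_\ell\tr[\rho_\ell\log\rho_\ell]
\]
(KN function difference minus the linear term). The Bregman divergence of $\Phi$ between the transported new gradient and the old one reduces to a combination of relative entropies $\theta_\ell D(\cdot\Vert\rho_\ell)$.

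I would reduce the required inequality to
\[
\sum_\ell\theta_\ell H(\sigma_\ell)\ \ge\ \sum_\ell\theta_\ell H(\rho_\ell) + \text{(KN Bregman gap)}.
\]
I would then prove it by quantum information tools: first joint convexity/monotonicity of relative entropy, and then the operator concavity of $t\mapsto t^{\theta}$ combined with a weighted Hölder/Lieb-type inequality $\norm{(\bigotimes\rho_\ell^{-\theta_\ell/2})S}^2\le\prod_\ell(\cdots)^{\theta_\ell}$. If this direct route stalls, I would first try the case $d=1$ or $\theta$ a vertex, where everything commutes, and then interpolate.

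\textbf{Constant.} Finally, \cref{thm:intro ent po} gives the bound $\mathcal O_T(1/k)$. The constant is $f^\xi(x_0)-\inf f^\xi$ for the optimal $\xi$ of the capacity type, which is polynomial in the bit-length by the capacity bounds cited there for Gaussian integer tensors.
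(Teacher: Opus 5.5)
Your overall route is the paper's, and most of it is correct. You apply the entanglement-polytope convergence theorem to $\phi=-H_\theta$ with $\eta=1$, and your hypothesis checks, the identification of the update with entropic tensor scaling, the lower bound, and the capacity constant are all fine. Your reading of the step-size condition as $Q(\diff f(x_{k+1}))\le Q(\diff f(x_k))-\tfrac1\eta D_f^{\MM}(x_{k+1}\Vert x_k)$ is exactly right, and so is your Bregman term. After the $\sum_\ell\theta_\ell H(\rho_\ell)$ terms cancel, the condition reads $\sum_\ell\theta_\ell H(\sigma_\ell/\tr\sigma)\ge\log\tr\sigma$, where $\sigma\coloneqq(\bigotimes_\ell\rho_\ell^{-\theta_\ell/2})\rho(\bigotimes_\ell\rho_\ell^{-\theta_\ell/2})$ and $\sigma_\ell\coloneqq\tr_{\check\ell}\sigma$. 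Equivalently, $\sum_\ell\theta_\ell H(\sigma_\ell)\ge0$ for the unnormalized marginals.

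\textbf{The gap.} This inequality is the only non-formal content of the theorem, and you never prove it. The tools you list do not assemble into an argument.
\begin{itemize}
\item The weighted H\"older bound for the commuting operators $I\otimes\cdots\otimes\rho_\ell^{-1}\otimes\cdots\otimes I$ only gives $\tr\sigma\le\prod_\ell n_\ell^{\theta_\ell}$. This cannot help, since $H(\sigma_\ell/\tr\sigma)\le\log n_\ell$ as well. You need a bound in terms of the marginals of $\sigma$ itself.
\item Joint convexity aimed at a relative-smoothness bound valid for all pairs $x,y$ is also unavailable: as the paper notes, $1$-relative smoothness appears to fail numerically. The proof must use the specific form of the step.
\item ``Interpolating'' from the vertex cases, where the inequality is a trivial equality ($\sigma_\ell=I$), has no basis, because $\sigma$ depends nonlinearly on $\theta$.
\end{itemize}

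\textbf{The missing idea.} It is a per-factor inequality whose $\theta$-weighted sum cancels exactly:
\[
H(\sigma_\ell)\ge\sum_{k\neq\ell}\theta_k\tr[\sigma_k\log\rho_k]-(1-\theta_\ell)\tr[\sigma_\ell\log\rho_\ell]\qquad\text{for each }\ell.
\]
Multiplying by $\theta_\ell$ and summing over $\ell$, the right-hand sides add up to zero.

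To prove it for $\ell=1$, put $Z\coloneqq\bigotimes_{k\ge2}\rho_k^{\theta_k/2}$ and
\[
\omega\coloneqq(\rho_1^{-\theta_1/2}\otimes I)\rho(\rho_1^{-\theta_1/2}\otimes I)=(I\otimes Z)\sigma(I\otimes Z).
\]
Pinch both $\sigma$ and $\omega$ onto the eigenspaces of $I\otimes Z$. This gives $\tilde\sigma$ and $\tilde\omega=(I\otimes Z)\tilde\sigma(I\otimes Z)$.
\begin{itemize}
\item Since $\tilde\sigma$ commutes with $I\otimes Z$, one gets $D(\tilde\sigma\Vert\tilde\omega)=-\tr[\sigma(I\otimes\log Z^2)]=-\sum_{k\ge2}\theta_k\tr[\sigma_k\log\rho_k]$.
\item Pinching leaves the first marginals unchanged: $\tr_{\check1}\tilde\sigma=\sigma_1$ and $\tr_{\check1}\tilde\omega=\rho_1^{1-\theta_1}$.
\end{itemize}
The data-processing inequality for $\tr_{\check1}$ then yields $-\sum_{k\ge2}\theta_k\tr[\sigma_k\log\rho_k]\ge-H(\sigma_1)-(1-\theta_1)\tr[\sigma_1\log\rho_1]$, which is the claim.

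So monotonicity of relative entropy is indeed the key tool. It has to be applied to this pinched pair, not combined with a H\"older estimate.
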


Entropic tensor scaling also provides a new solution to the \emph{tensor scaling problem}, which explains our terminology.
Given a tensor~$T$, this problem asks to find a ``scaling'' $g\cdot T$ such that $\spec(\mu(g \cdot T)) \approx (\bm{1}_{n_1}/n_1, \ldots, \bm{1}_{n_d}/n_d)$ whenever this is possible (such tensors are called \emph{semistable}); equivalently, the quantum marginals $\mu(g\cdot T)\approx(I_{n_1}/n_1, \ldots, I_{n_d}/n_d)$ should be approximately uniform.
This is a natural generalization of the matrix and operator scaling problems, and it finds applications in, e.g., quantum information and tensor networks~\cite{VDDB2003,tensor_canonical_form}.
The problem can be solved by a simple alternating minimization algorithm (called \emph{tensor scaling}) in the spirit of Sinkhorn's algorithm; it was proposed in~\cite{VDDB2003} and analyzed rigorously in~\cite{BFGOWW2018}.
Because uniform distributions are characterized by having maximum entropy, entropic tensor scaling with uniform $\theta$ also solves this problem.
It offers two advantages over the original method:
(1)~it preserves not only the tensor's local unitary but also permutation symmetries between the tensor factors, which is important for applications;
(2)~it also provides meaningful guarantees for tensors that are \emph{unstable}, i.e., not semistable (it computes the quantum functional, whose value quantifies the instability).
The same advantages hold for the gradient descent method~\eqref{eq:tensor grad descent} (which computes the minimum $\ell^2$ norm instead of the maximum entropy) using our \cref{cor:intro-moment-norm-conv}.

One can also use \cref{thm:intro-quantum-functional} to obtain \emph{structural} insight into the nature of the quantum functionals, by using
\begin{equation}\label{eq:qfunc via ets}
   \qfunc_\theta(T) = \lim_{k\rightarrow\infty} e^{H_\theta(\spec\mu(T_k))}
\end{equation}
and the form of the entropic tensor scaling algorithm in \cref{eq:intro-entropic}.
For example, we obtain a new direct proof that the quantum functionals are universal spectral points.
As discussed in \cite{CVZ2018}, it follows directly from the definition and the elementary properties of the moment polytope that $F_{\theta}$ is monotone, normalized on the unit tensor, super-multiplicative, and super-additive.
In contrast, sub-multiplicativity and sub-additivity are rather more involved.
These two properties were proved in \cite{CVZ2018} using a representation-theoretic dual description of the moment polytope.

Entropic tensor scaling allows us to bypass sophisticated representation-theoretic reasoning and prove the multiplicativity and sub-additivity of $F_{\theta}$ as a direct consequence of \cref{thm:intro-quantum-functional}.
Let $S\in\bigotimes_{\ell=1}^{d} \CC^{m_\ell}$ and $T\in\bigotimes_{\ell=1}^d \CC^{n_\ell}$ be two tensors; as discussed, we may assume they~are~concise.
A~direct calculation gives $(S\otimes T)^{(\ell)} = S^{(\ell)}\otimes T^{(\ell)}$ ($\forall \ell\in [d]$) for the flattenings of $S\otimes T$.
Hence,
\[
\mu(S\otimes T) = \left( \mu(S)_1 \otimes \mu(T)_1 , \dots,\mu(S)_d \otimes \mu(T)_d \right),
\]
and we see the entropic tensor scalings~\eqref{eq:intro-entropic} of $S\otimes T$, $S$, and $T$ are related by $(S\otimes T)_{k}=S_k\otimes T_k$.
Consequently,
\[
    \qfunc_\theta(S\otimes T)
= \lim_{k\rightarrow\infty} e^{H_\theta\left(\spec\mu\left( (S\otimes T)_{k} \right)\right)}
= \lim_{k\rightarrow\infty} e^{H_\theta\left(\spec\mu\left( S_{k} \right)\right)+H_\theta\left(\spec\mu\left( T_{k} \right)\right)}
= \qfunc_\theta(S) \qfunc_\theta(T)
\]
using only \cref{eq:qfunc via ets} and the additivity $H(p\otimes q)=H(p)+H(q)$ of the Shannon entropy.
The proof of sub-additivity works in much the same way; we refer to \cref{thm:algorithmicproof} for all formal details.

Two other works have recently given alternative proofs that the quantum functionals are universal spectral points without relying on representation theory.
Hirai~\cite{Hirai2025} uses the continuous-time flow and a duality argument; the work~\cite{SDW2026} shows that the quantum functionals coincide with Strassen's support functionals (which are naturally sub-multiplicative and~sub-additive).

\subsubsection{Symmetric quantum functional}
There is also a symmetric variant of the quantum functionals, called the \emph{symmetric quantum functional} and introduced in~\cite{CFTZ-22}.
It is known to be a spectral point in the asymptotic spectrum of symmetric tensors, that is, it is monotone under symmetric restriction, normalized on the unit tensor, and additive and multiplicative on symmetric tensors.
For symmetric tensors, the symmetric quantum functional reduces to the ordinary quantum functional (for uniform~$\theta$), but in general the two quantities differ.
In the following we show that it can be computed by a simple descent algorithm with polynomial iteration complexity, analogous to \cref{ssub:quantum_functionals}.

We briefly sketch the definition of the symmetric quantum functionals and then present our results.
The relevant group action is the \emph{tensor power action} of $G = \GL(n)$ on $(\CC^n)^{\ot d}$, the space of $d$-tensors with all local dimensions equal to~$n$;
this action is given by $\pi(g) T \coloneqq g^{\ot d} T$.
Then the moment map computes the sum of the quantum marginals in \cref{eq:intro mu tensors}:
\[
    \mu_{\Sym} \colon (\CC^n)^{\ot d} \setminus \{0\} \rightarrow \Herm(n), \quad
    \mu_{\Sym}(T) = \sum_{\ell=1}^d \rho_\ell, \quad\text{where } \rho_\ell \coloneqq \frac{T^{(\ell)}T^{(\ell)\dag}}{\norm{T}^2}.
\]
The corresponding moment polytopes are called \emph{symmetric entanglement polytopes}; they are given by $\Delta_{\Sym}(T) = \{ \spec(\mu_{\Sym}(S)) \mid 0 \neq S \in \overline{\GL(n) \cdot T} \}$.
Their elements are vectors $p\in\RR^{n}_{\geq 0}$ with $\sum_{i=1}^n p_i=d$, so division by~$d$ results in a probability distribution.
The \emph{symmetric quantum functional} is defined by entropy maximization over the distributions thus obtained, in analogy to \cref{eq:intro-def-quantum-functional}:
\begin{equation*}
    \qfunc_{\Sym}(T) \coloneqq e^{\logqfunc_{\Sym}(T)}, \quad \logqfunc_{\Sym}(T) \coloneqq \max_{p\in\Delta_{\Sym}(T)}\; H(p/d),
\end{equation*}
where $H$ denotes the \emph{Shannon entropy};
$\logqfunc_{\Sym}(T)$ is called the \emph{logarithmic symmetric quantum functional}.

We now show that an application of the Hadamard mirror descent framework provides an algorithm to compute the symmetric quantum functional.
Without loss of generality, we may again assume that $\mu_{\Sym}(T)$ is nonsingular;\footnote{In analogy with the quantum functionals, one can always reduce to this situation efficiently; see \cref{footnote:symmetric-concise}.}
this holds, for example, if $T$ is concise.
Then we obtain from~\eqref{eq:intro iteration} the following \emph{symmetric entropic tensor scaling} algorithm:
\begin{equation}\label{eq:intro-sym-entropic}
    T_{k + 1} \coloneqq \left(\rho_\text{avg}^{-1/(2d)}\otimes\cdots\otimes\rho_\text{avg}^{-1/(2d)}\right)T_k, \quad \text{where}\quad\rho_\text{avg} \coloneqq \frac{\mu_{\Sym}(T_k)}d,\qquad T_0 \coloneqq T.
\end{equation}
For symmetric tensors, the above update coincides with the usual entropic tensor scaling~\eqref{eq:intro-entropic} with uniform weights $\theta=(1/d,\dots,1/d)$.
In general, the two iterations differ: instead of using the individual quantum marginals, symmetric entropic tensor scaling uses their \emph{average}.
However, we find that~\eqref{eq:intro-sym-entropic} coincides with entropic tensor scaling for a suitable tensor \emph{tuple} constructed from cyclic shifts of the tensor~$T$, which is useful for the analysis.

Our next theorem gives an iteration bound that is polynomial in the input size of the tensor;
it yields an approximation for the symmetric quantum functional~$\qfunc_{\Sym}(T)$ even if~$T$ is not symmetric.

\begin{theorem}[Symmetric quantum functional, see \cref{thm:symmetric-quantum}]\label{thm:intro-symmetric-quantum-functional}
    Let $0\neq T\in (\CC^n)^{\otimes d}$ be a $d$-tensor such that $\mu_{\Sym}(T)$ is nonsingular.
	Then symmetric entropic tensor scaling \eqref{eq:intro-sym-entropic}, run for $k\geq1$ steps, returns a tensor $T_k$ such that
    \[
        0 \le \logqfunc_{\Sym}(T) - H(\spec(\mu_{\Sym}(T_k))/d) \leq \mathcal O_T\mleft( \frac1k \mright).
    \]
    When $T$ has Gaussian integer entries, the constant in the $\mathcal O_T$ notation is polynomial in its bit-length.
\end{theorem}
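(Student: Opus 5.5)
The plan is to reduce \cref{thm:intro-symmetric-quantum-functional} to the general moment-polytope result (\cref{thm:intro mo po}) for the tensor power action of $G=\GL(n)$ on $(\CC^n)^{\ot d}$. We take the objective $\Phi(X)\coloneqq -H(\spec X/d)$ on $\Herm(n)$, i.e.\ $\Phi(X) = \tr[(X/d)\log(X/d)]$. This function is unitarily invariant, convex, and differentiable on $\PD(n)$, with $\nabla\Phi(X) = \frac1d(\log(X/d)+I)$.

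\textbf{Checking the hypotheses and the update.} Since $\mu_{\Sym}(T)$ is nonsingular and nonsingularity is preserved along the $\GL(n)$-orbit, we have $\mu_{\Sym}(g\cdot T)\in\intr\dom\Phi$ for all $g$. Plugging into \eqref{eq:intro iteration} with step size $\eta=d^2$ gives
\[
e^{-\frac{d}{2}(\log\rho_{\mathrm{avg}}+I)/d}\cdot T_k,
\]
whose group element is $\rho_{\mathrm{avg}}^{-1/2}$ up to a scalar. The tensor power action then produces $(\rho_{\mathrm{avg}}^{-1/(2d)})^{\ot d}$ up to the $d$-th root convention of the scalar. I would fix this normalization carefully: the group element $g$ acts as $g^{\ot d}$, so the exponent in the group is $-\frac{1}{2d}\log\rho_{\mathrm{avg}}$, and this determines which $\eta$ is actually used. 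Scalars are irrelevant because $\mu_{\Sym}$ is scale invariant. Hence \eqref{eq:intro-sym-entropic} is Hadamard mirror descent with some explicit constant step size $\eta_0$ depending only on $d$.

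\textbf{The step-size condition (main obstacle).} As with the ordinary quantum functionals, negative entropy is not smooth, so the condition must be verified directly. I would use the observation stated in the text: embed $T$ into the tensor tuple $(T,\sigma T,\dots,\sigma^{d-1}T)$ of cyclic shifts. The $\ell$-th marginal of this tuple, normalized by total norm, is the average of the $\ell$-th marginals of the shifts, which is exactly $\rho_{\mathrm{avg}}$ for every $\ell$. Consequently, entropic tensor scaling with uniform $\theta$ on the tuple (\cref{thm:entropic tensor scaling}, in its tuple version) applies $\rho_{\mathrm{avg}}^{-1/(2d)}$ in every slot. This reproduces \eqref{eq:intro-sym-entropic} on each component, since shifting commutes with applying the same matrix in all slots. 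The quantum entropy inequality (\cref{thm:quantum}) behind the step-size condition for $\eta=1$ then transfers, because the Kempf--Ness function and objective for the diagonal subgroup $\GL(n)\hookrightarrow\GL(n)^d$ are restrictions of those for the tuple. Restricting both sides of the step-size inequality to the totally geodesic diagonal submanifold preserves it, with $H_\theta(\spec\mu)=H(\spec\rho_{\mathrm{avg}})$ matching $\Phi$. This identification of step size and objective under the diagonal embedding is the delicate point.

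\textbf{Convergence and the capacity constant.} Given the step-size condition, \cref{thm:intro-main} yields the $O(1/(\eta k))$ bound, with constant equal to the gap $f_T^\xi(x_0)-\inf f_T^\xi$ for a Busemann direction $\xi$ attaining the optimum. The minimizer exists because the moment polytope is compact. The remaining task is a polynomial bound on this capacity for Gaussian integer $T$. I would obtain it either from the general capacity bounds for $\GL$-actions used for \cref{thm:intro ent po} (\cref{thm:tensor_f+b}) applied to the cyclic tuple, whose bit-length is only $d$ times larger, or directly, since the tensor power action is a polynomial $\GL(n)$-representation of degree $d$. Finally, the lower bound $0\le$ holds because $\spec\mu_{\Sym}(T_k)\in\Delta_{\Sym}(T)$.
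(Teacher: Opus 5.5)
Your proposal is correct and follows the paper's route. You obtain the step-size condition by identifying the iterates with entropic tensor scaling (uniform $\theta$, $\eta=1$) on the shift stack $T_{\Shf}$ and invoking \cref{thm:quantum}, which is exactly \cref{thm:tensor power reduction}~(a) and~(d); the general convergence theorem plus a capacity bound then finishes.

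Two slips need fixing. First, the step size is $\eta=1$, not $d^2$: you need $\tfrac{\eta}{2}\nabla\Phi(d\rho_{\mathrm{avg}})=\tfrac{\eta}{2d}(\log\rho_{\mathrm{avg}}+I)$ to give the group element $\rho_{\mathrm{avg}}^{-1/(2d)}$, which then acts through its $d$-th tensor power, so no root is taken. Second, your first capacity option does not transfer directly. The capacity $\log\Capacity$ for $T_{\Shf}$ is a supremum over $\U(n)^d$ rather than the diagonal $\U(n)$, so it does not lower-bound the tensor-power capacity of $T$. The paper instead applies the convergence bound to $T_{\Shf}$ itself and transfers it back through \cref{thm:tensor power reduction}; for the upper bound, the easy inequality $\min_{\Delta_{\Sym}(T)}\phi\geq\min_{\Delta(T_{\Shf})}\phi_{\Sum}$ already suffices. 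Your second option, via \cref{thm:GLn_f+b}, works as stated.
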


\subsubsection{$G$-stable rank}
Next, we consider the $G$-stable rank, a tensor parameter introduced by Derksen~\cite{Derksen-22}.
It provides, for example, a multiplicative approximation to the \emph{slice rank}, a notion that has found numerous applications from algebraic complexity theory to additive combinatorics~\cite{Alon-13,Tao-16,Tao-Sawin-16,Blasiak-17,Kleinberg-18}.
The $G$-stable rank~$\rk^G_\alpha(T)$ of a tensor~$T$ is a measure of its \emph{instability}; it is given by an optimization over all one-parameter subgroups of $\GL \coloneqq \GL(n_1) \times \cdots \times \GL(n_d)$ that asymptotically drive~$T$ to the zero tensor.
It is parameterized by a vector~$\alpha\in\RR^d_{>0}$---the cases $\alpha=(1,\ldots,1)$ and $\alpha=(\frac{1}{n_1},\ldots,\frac{1}{n_d})$ are of particular interest due to their connection to slice rank and semistability, respectively~\cite[Props.~4.9 and~2.6]{Derksen-22}.
For tensors over the complex numbers, the reciprocal of the $G$-stable rank can be viewed as a convex optimization problem over the entanglement polytope~\cite{Derksen-22,Hirai2025}:
\[
    \frac1{\rk^G_\alpha(T)}
= \min_{p\in\Delta(T)}\max_{\ell\in [d]} \frac{\norm{p_\ell}_{\infty}}{\alpha_\ell}
= \min_{p\in\Delta(T)}\max_{\ell\in [d], i \in [n_\ell]} \frac{p_{\ell,i}}{\alpha_\ell}.
\]
While the objective $\phi(p)\coloneqq\max_{\ell\in [d],i \in [n_\ell]} \frac{p_{\ell,i}}{\alpha_\ell}$ is a convex function, it is not smooth and also does not satisfy our differentiability hypotheses.
Accordingly, we cannot apply \cref{thm:intro ent po} directly.
Instead, we apply a \emph{smoothing technique}~\cite{Nesterov-smoothing}; we consider a smooth convex function~$\phi_\delta$ which gives an additive approximation to $\phi$:
\[
    \phi_\delta(p)-\delta \log\mleft(\textstyle\sum_{\ell=1}^d n_\ell\mright) \leq \phi(p) \leq \phi_\delta(p);
\]
the parameter~$\delta>0$ controls both the approximation and the smoothness of the objective~$\phi_\delta$.
By applying Hadamard mirror descent~\eqref{eq:intro iteration tensors} to the smoothed objective~$\phi_\delta$ for sufficiently small~$\delta$ and a suitable step size, we obtain an algorithm that computes an estimate~$\phi_\delta(\spec\mu(T_k))$ for the inverse of the $G$-stable rank to any desired additive error.
We only state the algorithmic result and refer to \cref{sub:G-stable-rk} for the precise iteration and choice of step size.

\begin{theorem}[$G$-stable rank, see \cref{thm:Gstable}]
    Let $0\neq T\in \bigotimes_{\ell=1}^d \CC^{n_\ell}$ be a tensor and let $\alpha \in \RR^d_{>0}$.
    Then, $k = \mathcal O_T(d \log(n_{\Sum}) /(\alpha_{\min}\varepsilon)^2)$ iterations of Hadamard mirror descent approximate~$\rk_\alpha^G(T)^{-1}$ to additive error~$\eps>0$,
    where $n_{\Sum} \coloneqq \sum_{\ell=1}^d n_\ell$ and $\alpha_{\min} \coloneqq \min_\ell \alpha_\ell$.
When $T$ has Gaussian integer entries, the constant in the $\mathcal O_T$ notation is polynomial in its bit-length.
\end{theorem}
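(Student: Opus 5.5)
The plan is to apply \cref{thm:intro ent po} to a smoothed version of the objective and then choose $\delta$ to balance the smoothing error against the optimization error. I would use the log-sum-exp smoothing
\[
    \phi_\delta(p) \coloneqq \delta \log\Bigl(\sum_{\ell=1}^d\sum_{i=1}^{n_\ell} e^{p_{\ell,i}/(\alpha_\ell\delta)}\Bigr).
\]
This function is convex, finite and differentiable everywhere, and permutation-symmetric in each argument, so the domain and differentiability hypotheses of \cref{thm:intro ent po} hold trivially.

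The sandwich $\phi \le \phi_\delta \le \phi + \delta\log n_{\Sum}$ holds because a sum of $n_{\Sum}$ exponentials lies between the largest term and $n_{\Sum}$ times it. For smoothness, write $\phi_\delta$ as $\delta\,\mathrm{LSE}(A p/\delta)$, where $A$ is the diagonal map scaling block $\ell$ by $1/\alpha_\ell$. Since LSE is $1$-smooth in $\ell^2$ (its Hessian is $\diag(w)-ww^T \preceq \diag(w)\preceq I$), $\phi_\delta$ is $L$-smooth with $L = 1/(\alpha_{\min}^2\delta)$. \Cref{thm:intro ent po} then allows the step size $\eta = 1/(dL) = \alpha_{\min}^2\delta/d$.

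With that step size, the theorem gives
\[
    \phi_\delta(\spec\mu(T_k)) - \min_{\Delta(T)}\phi_\delta \le \frac{C_T}{\eta k},
\]
where $C_T$ is the capacity-type constant, polynomial in the bit-length for Gaussian integer $T$. Combining this with the sandwich gives
\[
    \bigl|\phi_\delta(\spec\mu(T_k)) - \rk_\alpha^G(T)^{-1}\bigr| \le \delta\log n_{\Sum} + \frac{C_T d}{\alpha_{\min}^2\delta k}.
\]
Taking $\delta = \eps/(2\log n_{\Sum})$ makes the first term $\eps/2$. The second term is at most $\eps/2$ once $k \gtrsim C_T d\log n_{\Sum}/(\alpha_{\min}\eps)^2$, which is the claimed bound. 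Note that the estimate $\phi_\delta(\spec\mu(T_k))$ can over- or undershoot, but both directions are controlled: the lower side by $\phi \le \phi_\delta$ together with $\phi(\spec\mu(T_k)) \ge \min\phi$, and the upper side by the display above.

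The main obstacle is that the constant hidden in $\mathcal O_T$ must not depend on $\delta$. In \cref{thm:intro-main} the constant is the gap $f^\xi(x_0) - \inf f^\xi$ for a Busemann direction $\xi$ chosen to approximate a minimizer of the objective. Here I would take $\xi$ corresponding to a (near-)minimizer $p^\ast$ of the unsmoothed $\phi_\delta$ over $\Delta(T)$, and invoke the capacity bound from \cref{thm:tensor_f+b}. That bound is uniform over points of the moment polytope (rational points with bounded denominators, or the whole polytope via the weight-margin-type estimates), so it does not depend on the objective. If the paper's bound only holds for rational vertices, I would instead compare against a nearby vertex $p$ of $\Delta(T)$, absorbing the loss $\phi_\delta(p) - \phi_\delta(p^\ast)$ via $1/(\alpha_{\min})$-Lipschitzness of $\phi_\delta$ in $\ell^\infty$. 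Finally, I would check that the smoothness-based step-size condition applies to the extension $\Phi_\delta(X) = \delta\log\sum_\ell \tr e^{X_\ell/(\alpha_\ell\delta)}$. This follows from the standard fact that spectral extensions of permutation-invariant $L$-smooth functions remain $L$-smooth.
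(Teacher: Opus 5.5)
Your proposal is correct and follows the paper's proof of \cref{thm:Gstable} step for step. It uses the same log-sum-exp smoothing with the $\delta\log n_{\Sum}$ sandwich, the same $1/(\delta\alpha_{\min}^2)$-smoothness giving $\eta=\delta\alpha_{\min}^2/d$ via \cref{cor:entanglementpolytope}, the same choice $\delta=\eps/(2\log n_{\Sum})$, and the uniform capacity bound of \cref{thm:tensor_f+b} to keep the constant independent of $\delta$. Your vertex-comparison fallback is unnecessary (and would not work as stated, since no vertex need lie close to $p^\ast$), because \cref{thm:tensor_f+b} already holds for every $p\in\Delta(T)$; also, $\phi_\delta(\spec\mu(T_k))\ge\rk_\alpha^G(T)^{-1}$ always, so there is no undershoot.
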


\subsubsection{Non-commutative rank}
Let $\mathcal A = (A_1,\dots,A_m)$ be a matrix tuple in~$(\CC^{n \times n})^{\op m}$.
The \emph{non-commutative rank} $\ncrk(\mathcal{A})$ is defined as the rank of the corresponding symbolic matrix in non-commuting indeterminates:
\[
    \ncrk(\mathcal{A}) \coloneqq \rk(A(x)), \qquad A(x) \coloneqq x_1 A_1 + x_2 A_2 +\dots + x_m A_m,
\]
where $\rk(A(x))$ is the matrix rank over the \emph{free skew field} $\CC\ (\!\!\!\!<\!\!x_1,x_2,\dots,x_m\!\!>\!\!\!\!)$ \cite{Cohn1995}.
See also \cite[Thm.~1.17]{GGOW20} and \cite[Def.~2.1]{FSG-22} for several equivalent definitions and motivations for this fundamental and influential notion.
A celebrated result states that the non-commutative rank is computable in deterministic polynomial time, e.g., when the matrix entries are rational numbers encoded in binary~\cite{GGOW20,IQS-18,Hamada-Hirai-21}.
Here we give a new method for computing the non-commutative rank directly by optimization on a suitable moment polytope.
Without loss of generality, we may assume that $\mathcal{A}$ satisfies the following \emph{kernel condition} (see \cref{sub:ncrk}):
\begin{equation}\label{eq:kernel-condition-intro}
    \bigcap_{i \in [m]}\ker A_i = \{0\} \qquad\text{or}\qquad \bigcap_{i \in [m]}\ker A^\dagger_i = \{0\}.
\end{equation}
Then the non-commutative rank admits a formula as a convex optimization problem over the moment polytope, as recently proved by Hirai~\cite{Hirai2025,Hirai-ncrank}:
\[
\frac{2}{n}\, \left(n-\ncrk(\mathcal{A})\right) = \min_{p\in\Delta_{\mathrm{LR}}(\mathcal{A})} \phi(p),\qquad \text{where}\  \phi(p)\coloneqq\norm*{p_1 - \frac{\mathbf{1}_n}{n}}_1 + \norm*{p_2 - \frac{\mathbf{1}_n}{n}}_1.
\]
Here, $\Delta_{\mathrm{LR}}(\mathcal{A})$ denotes the moment polytope of the matrix tuple $\mathcal{A}$ with respect to the \emph{left-right action} of~$\GL(n)\times\GL(n)$.
This is a special case of the tensor action on tensor \emph{tuples}; see \cref{sub:ncrk}.
Unfortunately, as in the case of the $G$-stable rank, the convex objective~$\phi$ is not smooth, but we can again construct, for every $\delta>0$, a natural smoothing $\phi_\delta$ that provides an additive approximation:
\[
    \phi_\delta(p) - 2n\delta \leq \phi(p) \leq \phi_\delta(p).
\]
Hence, by applying Hadamard mirror descent to $\phi_\delta$ for sufficiently small~$\delta$ and for a suitable step size, we can approximate $\ncrk(\mathcal{A})$ to arbitrarily small additive error.
Since $\ncrk(\mathcal{A})$ is always an integer, this suffices to compute it \emph{exactly} by rounding.
In this way we obtain the following theorem.
For concreteness, here we only state it for matrix tuples with Gaussian integer entries:

\begin{theorem}[Non-commutative rank, see \cref{thm:ncrk}]
\label{thm:nc-rank-intro}
    Let $\mathcal{A}$ be an $m$-tuple of $n\times n$ matrices with Gaussian integer entries that satisfies the kernel condition~\eqref{eq:kernel-condition-intro}.
    Then, $\ncrk(\mathcal{A})$ can be computed using $\poly(n,\braket{\mathcal{A}})$ iterations of Hadamard mirror descent applied to the smoothed objective~$\phi_\delta$ on the moment polytope~$\Delta_{\mathrm{LR}}(\mathcal{A})$, with suitable choices of~$\delta$ and~$\eta$, followed by rounding, where $\braket{\mathcal A}$ denotes the bit-length.
\end{theorem}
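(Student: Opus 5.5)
The plan is to reduce \cref{thm:nc-rank-intro} to \cref{thm:intro ent po}, applied to the left-right action on the matrix tuple $\mathcal A$ viewed as a tensor tuple, with a smoothed version of the $\ell_1$ objective, and then round.

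\emph{Step 1: the smoothing.} The map $p\mapsto\norm{p-\mathbf 1_n/n}_1=\sum_i\abs{p_i-1/n}$ is permutation-symmetric. I would smooth each term $\abs{t}$, writing $\abs t=\max(t,-t)$, by the softmax $\delta\log(e^{t/\delta}+e^{-t/\delta})$. This changes each term by at most $\delta\log 2\le\delta$, so over the $2n$ coordinates we get $\phi_\delta-2n\delta\le\phi\le\phi_\delta$. The resulting $\phi_\delta$ is l.s.c., convex, finite and differentiable everywhere, and it is symmetric in each block. Its Hessian is diagonal with entries at most $1/\delta$, so $\phi_\delta$ is $(1/\delta)$-smooth. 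The domain hypothesis of \cref{thm:intro ent po} is therefore automatic. We may take $\eta=\delta/d$ with $d=2$ for the left-right action (or use the corresponding weight-diameter bound for tuples).

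\emph{Step 2: the rate.} By \cref{thm:intro ent po} and Hirai's formula,
\[
\phi_\delta(\spec\mu(\mathcal A_k)) - \min_{\Delta_{\LR}(\mathcal A)}\phi_\delta \le \frac{C_{\mathcal A}}{\eta k},
\]
where $C_{\mathcal A}$ is the capacity gap $f^\xi(x_0)-\inf f^\xi$. Combining this with the sandwich from Step 1, the estimate $\phi_\delta(\spec\mu(\mathcal A_k))$ lies within $2n\delta+2C_{\mathcal A}/(\delta k)$ of $\frac2n(n-\ncrk\mathcal A)$. To recover the integer $\ncrk$ by rounding, we need the error in $\frac2n(n-r)$ to be below $1/n$, that is, error in $r$ below $1/2$. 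We choose $\delta=1/(8n^2)$ and $k\ge 32n^3C_{\mathcal A}\cdot 8n^2/\ldots=\poly(n)\,C_{\mathcal A}$, and round $n-\frac n2\phi_\delta(\spec\mu(\mathcal A_k))$ to the nearest integer.

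\emph{Step 3, the main obstacle: a polynomial bound on $C_{\mathcal A}$.} This requires choosing a good $\xi$ and bounding $f_{\mathcal A}(I)-\inf f^\xi_{\mathcal A}$ polynomially in $n$ and $\braket{\mathcal A}$. The minimizer of $\phi$ may lie on the boundary, and $\mathcal A$ is typically unstable, so $\xi$ must be a point of $\Delta_{\LR}(\mathcal A)$ (or near it) at which $\phi$ attains its minimum. The point $\xi$ must also be rational with controlled denominators.

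I would invoke the tensor-tuple capacity bound underlying \cref{thm:intro ent po} (the ``strong capacity bounds'', \cref{thm:tensor_f+b}). That bound controls $\inf(f+b^\xi)$ for $\xi$ a vertex-type or rational point of the polytope, via integrality of the invariant polynomials or of the weight matrices and their Gaussian-integer coefficients. Here the optimal $p$ can be taken as a Hirai-type point determined by a shrunk subspace. Its coordinates lie in $\{0,1/r',\dots\}$ with denominators at most $n$, which should keep the bound polynomial.

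If the general capacity bound requires $\xi$ to be in the relative interior or to satisfy some other genericity condition, I would fall back on the kernel condition, since it guarantees concision in one factor. I would also replace the exact optimizer by a nearby rational point of $\Delta_{\LR}$, paying an additional $O(\delta)$ in the objective, which is absorbed by Step 2.
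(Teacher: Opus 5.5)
Your Steps 1 and 2 follow the paper's proof of \cref{thm:ncrk}: apply \cref{cor:entanglementpolytope} with $L=1/\delta$ and $\eta=\delta/2$, sandwich the result with Hirai's formula, then round. Your smoothing $\delta\log(e^{t/\delta}+e^{-t/\delta})$ works as well as the paper's $\sqrt{t^2+\delta^2}$, since both are $1/\delta$-smooth with one-sided error at most $\delta$ per coordinate. With the paper's $\phi_\delta$ the same computation goes through unchanged.

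The gap is in Step 3. You correctly identify it as the crux, but you do not close it. You describe \cref{thm:tensor_f+b} as controlling the capacity only at rational or vertex-type points. You then rest polynomiality on the claim that some minimizer has coordinates with denominators at most $n$, which is unproven. Your fallback of moving to a nearby rational point would not rescue a denominator-dependent bound either.

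No such structure is needed. \cref{thm:tensor_f+b} gives $-\log\Capacity_p(\mathcal A)\le\poly(n)$ uniformly for every $p\in\Delta_{\LR}(\mathcal A)$, whether rational, irrational, interior or on the boundary. Its proof runs as follows:
\begin{itemize}
\item one random integer tuple $g$ with $\log S\le\poly(n)$ makes $\log\capacity_p(g\cdot\mathcal A)>-\infty$ simultaneously for all $p\in\Delta_{\LR}(\mathcal A)$ (\cref{prop:tensor_randomnessbound});
\item the Gaussian-integer bound of \cref{prop:tensor_capacitylowerbound} holds for all $p\in\AA$;
\item the bound transfers from $g\cdot\mathcal A$ to some $u\cdot\mathcal A$ via \cref{lem:asymptoticrays}, the Lipschitz continuity of Busemann functions, and \cref{lem:busemann_valuebound}.
\end{itemize}
So you may take $p^\star$ to be any minimizer of $\phi_\delta$, as the paper does. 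This gives $C_{\mathcal A}\le 2\log\norm{\mathcal A}+\poly(n)=\poly(n,\braket{\mathcal A})$ directly.

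Be careful also about which $\xi$ you use. It must be $u^\dagger p^\star u$ for a suitable $u\in K$, not the Weyl-chamber point $p^\star$ itself. For $p^\star$ itself, $f^{\xi}_{\mathcal A}$ can be unbounded below even though $p^\star\in\Delta_{\LR}(\mathcal A)$. \cref{thm:KN-general} handles this by using the $K$-invariant capacity, which takes the supremum over $u\in K$.
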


\noindent
Concretely, the Hadamard mirror descent iterations for the matrix tuples~$\mathcal A_k = (A^{(k)}_1,\dots,A^{(k)}_m)$~read
\begin{equation*}
    A^{(k+1)}_i \coloneqq e^{-\frac{\delta}4 \sigma_L \left(\sigma_L^2 + \delta^2I\right)^{-\frac12}} A^{(k)}_i \, e^{-\frac{\delta}4\sigma_R\left(\sigma_R^2 + \delta^2I\right)^{-\frac12}},
    \quad
    \sigma_L \coloneqq \sum_{j=1}^{m} \tfrac {A_j^{(k)} A_j^{(k)\dagger}} {\norm{\mathcal A_k}^2} - \tfrac I n, \;
    \sigma_R \coloneqq \sum_{j=1}^{m} \tfrac {A_j^{(k)\dagger} A_j^{(k)}} {\norm{\mathcal A_k}^2} - \tfrac I n,
\end{equation*}
where $\delta=1/(2n^2)$, corresponding to the step size~$\eta=\delta/2$.

Our algorithm can be viewed as a conceptual descendant of the operator-scaling-based approach in~\cite{GGOW20}, but it directly computes $\ncrk(\mathcal{A})$, whereas the algorithm in \cite{GGOW20} operates by reduction to the decision problem $\ncrk(\mathcal{A})\geq r$.
See the discussion after \cref{thm:ncrk} for a comparison between the algorithms in \cite{IQS-18,Hamada-Hirai-21} and our algorithm.

\subsection{Organization of the paper}
In \cref{sec:preliminaries}, we collect preliminary results about Hadamard manifolds and the Riemannian geometry of positive definite matrices and Hadamard symmetric spaces obtained from self-adjoint subgroups.
In \cref{sec:generalized-gradients}, we introduce the concept of \emph{holonomy-invariant convex objectives} which we seek to minimize over the gradient set of geodesically convex functions on Hadamard manifolds, and give several key examples.
We also introduce Hirai's $Q$-gradient, whose negative provides a natural descent direction.
\Cref{sec:continuous-time} analyzes the \emph{Hadamard mirror flow}---the natural descent flow defined by the negative $Q$-gradient.
We give an interpretation in terms of mirror flow and extend Hirai's result by proving a general $\mathcal{O}(t^{-1})$ convergence rate.
In \cref{sec:discrete-time}, we introduce and analyze \emph{Hadamard mirror descent}---the discretization of the Hadamard mirror flow---and demonstrate that it generalizes Euclidean mirror descent.
Furthermore, we discuss a natural step-size condition that yields a $\mathcal{O}(k^{-1})$ convergence rate, and we show that this condition follows from a natural and easy-to-verify notion of relative smoothness.
In \cref{sec:momentpolytope}, we apply Hadamard mirror descent to convex optimization on \emph{moment polytopes}.
We discuss Kempf--Ness function and show that Hadamard mirror descent using Kempf--Ness functions allows us to optimize convex objectives on the moment polytope.
We also give a capacity-based lower bound that controls the constant in the convergence bound of Hadamard mirror descent.
We also specialize this discussion to convex optimization on the entanglement polytopes of (tuples of) tensors, as well as to the moment polytopes of homogeneous polynomial actions of the general linear group.
In \cref{sec:applications}, we explore various applications of these general results, such as computing (ordinary and symmetric) quantum functionals, $G$-stable ranks, and non-commutative ranks, and we also obtain a new convergence result for norm minimization on moment polytopes.
In \cref{app:matrix scaling}, we formalize matrix scaling as marginal entropy maximization.
Although this is an example of Euclidean mirror descent, it serves as a simple illustration of how scaling problems can be treated in our framework.
\Cref{app:technical proofs} contains some technical proofs omitted from the main part of the paper.

\section{Preliminaries}\label{sec:preliminaries}
In this paper, we write $[n] \coloneqq \{1, 2, \ldots, n\}$ and~$S_n$ for the symmetric group.
We let~$\CC^\times \coloneqq \CC\setminus\{0\}$.
We denote the extended real line by $\overline\RR \coloneqq \RR \cup \{\infty\}$.
We write $\RR^n_{\downarrow} \coloneqq \{(x_1, \ldots, x_n) \in \RR^n \mid x_1 \ge \cdots \ge x_n\}$ for the closed convex cone of vectors with entries sorted in nonincreasing order.
Let $\VV$ be a finite-dimensional complex vector space.
We denote by~$\Lin(\VV)$ the space of linear endomorphisms, by~$\GL(\VV) \subseteq \Lin(\VV)$ the general linear group, and by $\SL(\VV) \subseteq \GL(\VV)$ the special linear group.
Now suppose $\VV$ carries a Hermitian inner product.
For an operator~$X \in \Lin(\VV)$, we write~$X^\dagger$ for the adjoint, $\norm{X}_{\mathrm{F}} \coloneqq (\tr X^\dagger X)^{1/2}$ for the \emph{Frobenius norm}, which is induced by the Hilbert--Schmidt inner product~$(X,Y) \mapsto \tr X^\dagger Y$, and $\norm{X}_{\tr} \coloneqq \tr{\sqrt{X^\dagger X}}$ for the \emph{trace norm} (also called nuclear norm);
the \emph{operator norm} (or spectral norm) is denoted~$\norm X_{\ope} = \max_{\norm v=1} \norm{X v}$.
We let $\U(\VV) \subseteq \GL(\VV)$ denote the unitary group, $\Herm(\VV) \subseteq \Lin(\VV)$ the Hermitian endomorphisms of~$\VV$, $\PSD(\VV) \subseteq \Herm(\VV)$ the positive-semidefinite operators, and $\PD(\VV) \subseteq \PSD(\VV)$ the positive-definite ones.
For $A, B \in \Herm(\VV)$, we write $A \succeq B$ if $A - B$ is positive semidefinite and $A \succ B$ if $A - B$ is positive definite.
We abbreviate $\GL(n) \coloneqq \GL(\CC^n)$, $\U(n) \coloneqq \U(\CC^n)$, and so forth.
We write $I$ for identity operators and matrices.
The above-mentioned groups naturally admit a Lie group structure.
We denote by $\Lie G$ the Lie algebra of a Lie group~$G$.

\subsection{Hadamard manifolds}
Throughout, let $\MM$ be a \emph{Hadamard manifold}: a complete and simply connected Riemannian manifold with nonpositive sectional curvature;
we refer the reader to \cite[\S{}V.4]{Sakai1996} and \cite[II]{BridsonHaefliger1999} for an introduction to Hadamard manifolds.
Let $\dist(\cdot, \cdot)$ denote the distance induced by the Riemannian metric of~$\MM$.
Any two points~$x,y\in\MM$ are connected by a unique geodesic; its length is equal to~$\dist(x,y)$.
A real-valued function on~$\MM$ is said to be \emph{geodesically convex} if it is convex along any geodesic.
We write $\tau_{\gamma,s\to t}$ for the parallel transport along a parameterized curve~$\gamma$ from~$\gamma(s)$ to~$\gamma(t)$, and $\tau_{x\to y} \colon T_x\MM \to T_y\MM$ for the parallel transport of tangent vectors along the unique geodesic from~$x$ to~$y$.
Then the dual maps $\tau_{\gamma,s\to t}^* \colon T_{\gamma(t)}^*\MM \to T_{\gamma(s)}^*\MM$ and $\tau_{y \to x}^*\colon T_x^*\MM \to T_y^*\MM$ describe the parallel transport of cotangent vectors.
Parallel transport along closed loops is in general nontrivial due to the manifold's curvature.
This is captured by the \emph{holonomy groups}%
\footnote{Because $\MM$ is simply connected, the restricted and the full holonomy groups coincide.}
\[
    \Hol_x(\MM) \coloneqq \{\tau_{\gamma,0\to1} \in \SOr(T_x\MM)\mid \gamma\colon[0,1]\to\MM \text{ piecewise smooth}, \gamma(0) = \gamma(1) = x\},
\]
which act on $T_x\MM$ and $T^*_x\MM$.
We call two geodesic rays (not necessarily of unit speed) $\gamma(t)\coloneqq\exp_x(tv)$, $\beta(t)\coloneqq\exp_y(tu)$ \emph{asymptotic} if $\dist( \gamma(t) , \beta(t) )$ remains bounded for~$t\to\infty$.
This defines an equivalence relation on the set of all geodesic rays, denoted $\gamma\sim \beta$.
Let $C\MM^\infty$ denote the set of equivalence classes.
For any $x \in \MM$, the natural map $T_x\MM\rightarrow C\MM^{\infty},\; u \mapsto [t \mapsto \exp_x(tu)]$ is a bijection, identifying~$C\MM^\infty$ with~$T_x\MM$.
There is a natural norm on $C\MM^\infty$ by defining $\norm{\xi}\coloneqq \norm{\dot\gamma} \equiv \norm{u}$ for any geodesic ray~$\gamma(t) = \exp_x(tu)$ in the equivalence class~$\xi \in C\MM^\infty$.
For a geodesic ray~$\gamma(t)$, we define the (unnormalized) \emph{Busemann function}
\[
    b^\gamma \colon \MM \to \RR, \quad
    b^\gamma(x) \coloneqq \lim_{t\to\infty} \parens*{ \norm{\dot\gamma} \dist(\gamma(t), x) - \norm{\dot\gamma}^2 t },
\]
which is geodesically convex and $\|\dot\gamma\|$-Lipschitz continuous.
Its differential on~$\gamma$ is given by
\begin{equation}\label{eq:diff busemann}
    \diff b^\gamma(\gamma(t)) = -\flat(\dot \gamma(t)),
\end{equation}
where $\flat \colon T\MM \to T^*\MM$ is the natural isomorphism induced by the Riemannian metric~\cite[Prop.~3.1]{HeintzeImHof1977}.
For $\xi = [\gamma] \in C\MM^\infty$, we also define $b^\xi \coloneqq b^\gamma$, but we caution that this is well-defined only up to an additive constant.
In particular, the differential~$\diff b^\xi$, differences~$b^\xi(x) - b^\xi(y)$, and the boundedness and minimizers of functions~$f + b^\xi$ are still well-defined.

\subsection{Positive-definite matrices}\label{sub:pd}
An important example of a Hadamard manifold is $\PD(n)$, the set of $n\times n$ Hermitian positive definite matrices, equipped with the so-called affine-invariant Riemannian metric.
To define the latter, we use that $\PD(n)$ is an open subset of~$\Herm(n)$, and hence at every point we can canonically identify the tangent space $T_x \PD(n) \cong \Herm(n)$; then the \emph{affine-invariant Riemannian metric} is given by the formula~$(X,Y)_x \coloneqq \tr[x^{-1} X x^{-1} Y]$, so $\norm{X}_x = \norm{x^{-1/2} X x^{-1/2}}_{\mathrm{F}}$.
The exponential map on $\PD(n)$ reads~$\exp_x(Z) = x^{1/2} e^{x^{-1/2} Z x^{-1/2}} x^{1/2}$, its inverse is given by $\log_x(y) = x^{1/2} \log(x^{-1/2} y x^{-1/2}) x^{1/2}$ (in particular, $\exp_I$ and $\log_I$ are the matrix exponential and logarithm, respectively), and the distance between two points $x,y\in\PD(n)$ is given by $\dist(x, y) = \norm{\log_x(y)}_x = \norm{\log(x^{-1/2} y x^{-1/2})}_{\mathrm{F}}$.
Note that $x \mapsto gxg^\dagger$ is an isometry of~$\PD(n)$ for any~$g \in \GL(n)$.
The parallel transport of tangent vectors along the geodesic from the identity~$I$ to~$x \in \PD(n)$ is given by~$\tau_{I\to x}(X) = x^{1/2} X x^{1/2}$.
More generally, the parallel transport of tangent vectors along the geodesic from~$x$ to~$y$ is given by $\tau_{x\to y}(X) = x^{1/2} (x^{-1/2} y x^{-1/2})^{1/2} x^{-1/2} X x^{-1/2} (x^{-1/2} y x^{-1/2})^{1/2} x^{1/2}$, and its dual map~$\tau_{x\to y}^*$ gives the parallel transport of cotangent vectors along the geodesic from~$y$ to~$x$.

\subsection{Hadamard symmetric spaces and self-adjoint subgroups}\label{sub:selfconjugate}
We obtain a broad class of Hadamard manifolds, which we call Hadamard symmetric spaces, by intersecting~$\PD(n)$ with a self-adjoint subgroup of~$\GL(n)$.
A Zariski-closed subgroup $G \subseteq \GL(n)$ is said to be \emph{self-adjoint} if~$G^\dagger = G$, i.e., $g^\dagger\in G$ for every~$g\in G$.
This notion goes back to~\cite{mostow1955self} and provides a concrete model for complex reductive algebraic groups; see~\cite[\S{}2.2]{Wallach2017} and \cite{BFGOWW2019} for an introduction (where such subgroups are called \emph{symmetric}).
For example, $\GL(n)$, $\SL(n)$, and the subgroups of diagonal matrices in these are all self-adjoint subgroups, as are their products (see, e.g., \cref{ex:gl} below).

If $G \subseteq \GL(n)$ is a self-adjoint subgroup, the positive definite elements $P \coloneqq G \cap \PD(n)$ form a totally geodesic submanifold of $\PD(n)$~\cite[Thm.~2.12]{Wallach2017}.
In fact, it is a special kind of Hadamard manifold: a symmetric space with nonpositive curvature~\cite{Helgason1978,Hirai2024,HNW2023}.
We will call $P$ a \emph{Hadamard symmetric space}, following \cite{HNW2023}.
Since~$P \subseteq \PD(n)$ is a totally geodesic submanifold, the exponential map and parallel transport on~$P$ match those in $\PD(n)$, and hence they are given by the formulae presented in \cref{sub:pd}.
Let~$K \coloneqq G \cap \U(n)$, a maximal compact subgroup of~$G$.
Then $P \cong G/K$ because $K$ is the stabilizer of $I \in P$ under the transitive action of~$G$ on~$P$ given by~$g \cdot x \coloneqq g x g^\dagger$.
Under our canonical identification~$T_I \PD(n) \cong \Herm(n)$, the tangent space at the identity identifies with the Hermitian elements in the Lie algebra: $T_I P \cong \bbH \coloneqq i\Lie K = \Lie G \cap \Herm(n)$.
Using the Riemannian metric, which at $I \in P$ is simply the Hilbert--Schmidt inner product, we can also identify~$T_I^* P \cong T_I P$, hence~$T_I^* P \cong \bbH$.
In this paper, we consider only \emph{connected} self-adjoint subgroups; their maximal compact subgroups are also connected~\cite[Thm.~2.16]{Wallach2017}.
Then, identifying $T_I P \cong \bbH$, the holonomy group acts by the adjoint action of~$K$:%
\footnote{In general, the holonomy group at the identity~$\Hol_I(P)$ is a Lie group whose Lie algebra is given by the adjoint action of $[\Lie K, \Lie K] \subseteq \Lie K$; see \cite[III~(6.7), IV~Prop.~6.7 and Thm.~6.9]{Sakai1996}.
For the compact Lie group~$K$, its Lie algebra decomposes as a direct sum $\Lie K = [\Lie K, \Lie K] \op \mathfrak z$ of its semisimple part and center.
Thus, the adjoint action of $[\Lie K, \Lie K]$ coincides with the adjoint action of $\Lie K$, and it follows that $\Hol_I(P)$ is given by the adjoint action of the identity component of~$K$, which, for connected~$K$, is all of~$K$.}
\[
    \Hol_I(P) = \{ \bbH \ni X \mapsto u X u^\dagger \in \bbH \mid u \in K \}.
\]

Choose a maximal commuting subspace~$\AA\subseteq\bbH$; 
this is called a \emph{Cartan subspace}.
Every adjoint orbit~$\{uXu^\dagger\mid u\in K\}$ intersects~$\AA$; see~\cite[Cor.~2.20]{Wallach2017}.
In general, this intersection contains more than one element, but it is always finite, as we will discuss now.
Define the \emph{Weyl group} as $W\coloneqq N_K(\AA)/Z_K(\AA)$, where $N_K(\AA)\coloneqq \{u\in K\mid u\AA u^\dagger=\AA\}$ is the normalizer and $Z_K(\AA)\coloneqq \{u\in K\mid uXu^\dagger=X \; (\forall X\in\AA)\}$ the centralizer of~$\AA$.
The Weyl group acts naturally on~$\AA$: we write $w \cdot X$ for this action, which is induced by the adjoint action of~$K$.
Then there is a bijection between the $K$-orbits in~$\bbH$ and the $W$-orbits in~$\AA$:
for any~$X \in \AA$, we have $\{u X u^\dagger \mid u \in K \} \cap \AA = W \cdot X$~\cite[Prop.~3.61]{Wallach2017}.
Remarkably, the Weyl group is finite~\cite[Cor.~3.60]{Wallach2017}, hence this intersection is finite.
In fact, it is a finite reflection group, so one can always choose as a fundamental domain a closed, convex cone~$C\subseteq\AA$ called a \emph{(closed) Weyl chamber}.
Thus, every Weyl group orbit and hence every adjoint orbit~$\{u Xu^\dagger \mid u \in K\}$ intersects the Weyl chamber~$C$ in a single point, which we denote by~$s(X)$.
This defines a continuous surjection $s\colon\bbH\rightarrow C$.


We now describe these notions concretely in the following important example.

\begin{example}\label{ex:gl}
    The group $\GL(n_1)\times\cdots\times\GL(n_d)$ can be naturally regarded as a subgroup of~$\GL(N)$, where $N \coloneqq n_1+\cdots+n_d$, via the block-diagonal embedding.
    As such, it is clearly Zariski-closed and self-adjoint.
    In this case, we have $P = \PD(n_1)\times\cdots\times\PD(n_d) \subseteq \PD(N)$, $K = \U(n_1)\times\cdots\times\U(n_d) \subseteq \U(N)$, and $\bbH = \Herm(n_1)\times\cdots\times\Herm(n_d) \subseteq \Herm(N)$.
    We take as a Cartan subspace the tuples of diagonal Hermitian matrices: $\AA=\diag(\RR^{n_1})\times\cdots\times\diag(\RR^{n_d}) \subseteq \bbH$.
    The Weyl group can be identified with the tuples of permutation matrices: $W \cong S_{n_1}\times\cdots\times S_{n_d}$.
    This reflects the fact that two diagonal matrices are conjugate under~$\U(n)$ if and only if their diagonal entries are permutations of each other.
    Hence, we can choose as a Weyl chamber $C = \diag(\RR^{n_1}_{\downarrow})\times\cdots\times\diag(\RR^{n_d}_{\downarrow}) \subseteq \AA$.
    Then, the map $s \colon \bbH \to C$ sends a tuple of Hermitian matrices to the tuple of diagonal matrices in the Weyl chamber with the same spectra.
    In other words, $s = (\diag,\dots,\diag)\circ\spec$, where $\spec \colon \bbH \to \RR^{n_1}_{\downarrow}\times\cdots\times\RR^{n_d}_{\downarrow}$ sends a tuple of Hermitian matrices to their spectra, with the eigenvalues listed in nonincreasing order.
\end{example}

We end this section with a lemma about asymptotic geodesic rays in~$P$.

\begin{lemma}\label{lem:asymptoticrays}
    Any geodesic ray $t \mapsto g^\dagger e^{tX}g$ (where $g \in G, X \in \bbH$) on $P$ is asymptotic to $t \mapsto e^{tu^\dagger Xu}$ for some~$u \in K$.
\end{lemma}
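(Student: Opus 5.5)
Our plan is to reduce to a triangular (Borel-type) factor, using the Iwasawa / $KAN$-type decomposition of $G$ together with the isometric action $x \mapsto hxh^\dagger$ of $G$ on $P$.

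First we make the geodesic explicit. The map $x \mapsto g^\dagger x g$ is an isometry of $P$ sending $I$ to $g^\dagger g$. It therefore sends the geodesic $t \mapsto e^{tX}$ to $\gamma(t) = g^\dagger e^{tX} g$. Since $X \in \bbH$, we may conjugate by some $k \in K$ to get $X = k^\dagger D k$ with $D$ in the Cartan subspace $\AA$, in fact in the Weyl chamber $C$. Replacing $g$ by $kg$, we may assume $X = D \in C$ is "diagonal". The goal is then to find $u \in K$ with $\dist(g^\dagger e^{tD} g,\, u^\dagger e^{tD} u)$ bounded in $t$. This suffices, because $u^\dagger e^{tD}u = e^{t u^\dagger D u}$.

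The key step is to write $g = b u$ with $u \in K$ and $b$ in the parabolic subgroup $P_D \coloneqq \{h \in G : \lim_{t\to\infty} e^{tD} h e^{-tD} \text{ exists}\}$. We take the parabolic convention matched to $D$, so that conjugation by $e^{tD}$ contracts the unipotent part. Such a decomposition $G = P_D K$ exists by the Iwasawa decomposition of a connected self-adjoint group; concretely, for $\GL(n)$ it is Gram–Schmidt/QR with a block-triangular factor. Then
\[
\dist(g^\dagger e^{tD} g,\, u^\dagger e^{tD} u) = \dist(b^\dagger e^{tD} b,\, e^{tD}),
\]
by the isometry $x \mapsto u^{-\dagger} x u^{-1}$. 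Next, write $b^\dagger e^{tD} b = (e^{tD/2} b\, e^{-tD/2})^\dagger\, e^{tD}\, (e^{tD/2} b\, e^{-tD/2})$ and set $h_t \coloneqq e^{tD/2} b^{\dagger} \cdots$, more precisely $h_t^\dagger \coloneqq e^{tD/2} b^\dagger e^{-tD/2}$. The orientation of the parabolic is chosen so that $h_t$ converges to some $h_\infty \in G$ as $t \to \infty$. Then
\[
\dist(h_t^\dagger e^{tD} h_t,\, e^{tD}) = \dist(h_t^\dagger y\, h_t,\, y) \quad \text{with } y = e^{tD}.
\]
For an isometry $x \mapsto h^\dagger x h$, this displacement is not uniformly bounded in $y$, so we take a cleaner route. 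We write $e^{tD} = e^{tD/2} \cdot I \cdot e^{tD/2}$ and use
\[
b^\dagger e^{tD} b = e^{tD/2}\,(e^{-tD/2} b^\dagger e^{tD/2})(e^{tD/2} b\, e^{-tD/2})\, e^{tD/2}.
\]
Since $x \mapsto e^{tD/2} x e^{tD/2}$ is an isometry, the distance becomes $\dist(c_t^\dagger c_t,\, I)$ with $c_t \coloneqq e^{tD/2} b\, e^{-tD/2}$. If $c_t$ converges in $G$, this distance is bounded, which proves the claim.

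The main obstacle is arranging the decomposition $g = bu$ with the correct orientation, so that $e^{tD/2} b\, e^{-tD/2}$ converges. Here $b$ must lie in the parabolic subgroup on which $\mathrm{Ad}(e^{tD})$ stays bounded as $t \to +\infty$. For the block-upper/lower convention with $D$ having nonincreasing diagonal entries, this means $b$ is lower block-triangular with respect to the eigenspace decomposition of $D$. Then the entries $b_{ij}$ with $i > j$ get multiplied by $e^{t(D_i - D_j)/2} \le 1$, and the block-diagonal part is unchanged. We would justify $G = P_D K$ in general via the Iwasawa decomposition $G = K A N$ for self-adjoint groups from \cite{Wallach2017}, choosing $N$ compatible with $D$; inverting gives $G = NAK$ with $NA \subseteq P_D$. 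For the product of general linear groups of \cref{ex:gl}, a QR decomposition done blockwise suffices directly.
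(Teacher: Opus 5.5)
Your argument is correct, and it takes a genuinely different route from the paper. Discard the false start with $h_t$: the identity $b^\dagger e^{tD}b=(e^{tD/2}be^{-tD/2})^\dagger e^{tD}(e^{tD/2}be^{-tD/2})$ is wrong, and you rightly abandon it. What remains is sound. Write $g=bu$ with $b\in P_D$ and $u\in K$. Two isometries reduce $\dist(g^\dagger e^{tD}g,\,u^\dagger e^{tD}u)$ to $\dist(c_t^\dagger c_t,I)$ with $c_t=e^{tD/2}be^{-tD/2}$. Then $c_t$ converges to the block-diagonal (Levi) part of $b$, which is invertible, so the distance is bounded and in fact converges. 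For general self-adjoint $G$, state explicitly that you choose the positive system with $\alpha(D)\le 0$ for every positive restricted root. Then $NA\subseteq P_D$, and $G=(KAN)^{-1}=NAK$ gives the factorization you need.

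The paper argues geometrically instead. Since $T_IP\to CP^\infty$ is a bijection, some $e^{tY}$ is asymptotic to $\gamma$. By \cref{lem:longdistanceparalleltransport}, proved via the CAT(0) law of cosines, $Y$ is a limit of parallel transports of $\tau_{\gamma(0)\to I}\dot\gamma(0)$, itself a $K$-conjugate of $X$, around closed loops at $I$. These transports are holonomies, i.e.\ elements of $\mathrm{Ad}(K)$, so $Y$ lies in the compact orbit of $X$. The paper's proof is non-constructive, but it needs no structure theory beyond $\Hol_I(P)=\mathrm{Ad}(K)$, and its auxiliary lemma is reused for \cref{lem:Q-busemann}.

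Your proof identifies $u$ explicitly as the $K$-factor of a parabolic (LQ-type) factorization of $g$; for products of $\GL(n_\ell)$ this is blockwise Gram--Schmidt. It also yields the limiting distance $\dist(c_\infty^\dagger c_\infty,I)$. The cost is invoking the Iwasawa decomposition for general self-adjoint $G$, which the paper does not otherwise develop.
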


\noindent
The proof of the lemma can be found in \cref{app:technical proofs}.

\section{Holonomy-invariant objectives and generalized gradients}\label{sec:generalized-gradients}
In this section, we recall Hirai's notion of $Q$-gradient in a slight variant of his setup~\cite{Hirai2025}.
The context is the problem of minimizing $Q(\diff f)$, where $f$ is a geodesically convex function on a Hadamard manifold, and $Q$ is an objective function on the cotangent bundle.
Roughly speaking, $Q$ should be convex on each cotangent space and invariant under holonomies; we define the precise conditions in \cref{sub:objective}.
Then, under suitable regularity assumptions, a generalized gradient called the $Q$-gradient of~$f$ can be defined so that~$Q(\diff f)$ is nonincreasing under the corresponding flow; see \cref{sub:generalized_gradient}.%
\footnote{\label{footnote:Q}The main difference between our setup and that of \cite{Hirai2025} is as follows:
Hirai considers the problem of minimizing $Q(\diff f)$, but defines the associated $Q$-gradient using $Q^2/2$ instead of $Q$.
Since he additionally assumes that $Q$ is nonnegative (``norm-like''), minimizing $Q$ and $Q^2/2$ are equivalent.
Our $Q$ plays the role of his $Q^2/2$, which we find slightly more convenient.
In particular, we do not require that $Q$ is nonnegative.}
Throughout this section, $\MM$ denotes a Hadamard manifold.

\subsection{Holonomy-invariant convex objectives}\label{sub:objective}
We consider the following class of objectives.

\begin{definition}[Holonomy-invariant convex objective]\label{def:holoinvariant}
A \emph{holonomy-invariant convex objective} is a function $Q \colon T^*\MM \to \overline\RR$ on the cotangent bundle satisfying the following properties:
\begin{enumerate}[leftmargin=4em,noitemsep,label={(Q\arabic*)}]
\item\label{it:Q1}
For every $x\in\MM$, its restriction $Q_x \coloneqq Q|_{T_x^*\MM}$ is a lower-semicontinuous (l.s.c.) and Euclidean convex function on the cotangent space $T^*_x\MM$.
\item\label{it:Q2}
$Q$ is invariant under parallel transport in the following sense:
for every piecewise smooth curve $\gamma \colon [0,1]\rightarrow\MM$, we have $Q_{\gamma(1)} = Q_{\gamma(0)} \circ \tau^*_{\gamma,0\to1}$.
\end{enumerate}
To later define an associated gradient (as opposed to subgradient) notion, we further require:
\begin{enumerate}[leftmargin=4em,noitemsep,label={(Q\arabic*)},resume]
\item\label{it:Q3} For every $x \in \MM$, $\dom Q_x \coloneqq \{\alpha \in T_x^*\MM \mid Q_x(\alpha) < \infty\}$ has a nonempty interior, and~$Q_x$ is differentiable on $\intr\dom Q_x$.
\end{enumerate}
Such an objective is called \emph{$L$-smooth} if its restrictions~$Q_x$ are $L$-smooth, i.e., if each $Q_x$ is finite-valued and its differential is $L$-Lipschitz continuous with respect to the Riemannian metric and its dual norm.%
\footnote{The differential of $Q_x \colon T^*_x\MM \to \overline\RR$ at a point $\alpha \in \intr\dom Q_x$ can be identified with a tangent vector: $\diff Q_x(\alpha) \in T^*_\alpha(T^*_x\MM) \cong T_x\MM$.
Then $Q_x$ is $L$-smooth iff $\dom Q_x = T^*_x\MM$ and $\norm{\diff Q_x(\alpha) - \diff Q_x(\beta)}_x \leq L \norm{\alpha-\beta}^*_x$ for all $\alpha,\beta\in T^*_x\MM$.}
\end{definition}

The name \emph{holonomy-invariant} originates from the following observation:
By property~\ref{it:Q2}, $Q$~is uniquely defined by its restriction~$Q_x$ to the cotangent space at an arbitrary point~$x\in\MM$ and, moreover, invariant under the dual action of the holonomy group~$\Hol_x(\MM)$.
Conversely, if for an arbitrary fixed~$x\in\MM$ one has a function $Q_x \colon T^*_x\MM \to \overline{\RR}$ that satisfies the following conditions, then it must be the restriction of a unique holonomy-invariant convex objective:
\begin{enumerate}[leftmargin=4em,noitemsep,label={($Q_x$\arabic*)}]
\item\label{it:Q1'} $Q_x$ is l.s.c.\ and convex.
\item\label{it:Q2'} $Q_x$ is invariant under the holonomy group: $Q_x = Q_x\circ h^*$ for any $h \in \Hol_x(\MM)$.
\item\label{it:Q3'} $\dom Q_x$ has a nonempty interior and $Q_x$ is differentiable on $\intr \dom Q_x$.
\end{enumerate}
Concretely, its extension $Q\colon T^*\MM \to \overline\RR$ can be defined by $Q_y \coloneqq Q_x \circ \tau_{x \to y}^*$.
Moreover, $Q$ is $L$-smooth if and only if this is the case for its restriction~$Q_x$ at an arbitrary point~$x\in\MM$.

\begin{example}\label{eq:norm squared}
For any Hadamard manifold~$\MM$, we can take $Q = \frac12\norm{\cdot}^2$, with $\norm{\cdot}$ the dual norm on the cotangent bundle induced by the Riemannian metric.
In this case, $Q$ is $1$-smooth,
and the $Q$-gradient defined in \cref{def:grad} below coincides with the ordinary Riemannian gradient.
\end{example}

\begin{example}\label{ex:euclidean}
If $\MM = \RR^n$ is a Euclidean space, we can identify all tangent spaces with~$\RR^n$, and parallel transport is trivial; thus \ref{it:Q2} simply requires all~$Q_x$ to be the same.
Thus a holonomy-invariant convex objective is parameterized by a single convex function~$Q \colon (\RR^n)^* \to \overline\RR$.
After identifying $\RR^n$ with its dual, we obtain a function $\phi \colon \RR^n \to \overline\RR$, as in \cref{sec:intro}.
See also \cref{ex:euclidean as selfadjoint} below.
\end{example}

We end this subsection with the following observation: properties~\ref{it:Q1} and \ref{it:Q2} imply that if we evaluate~$Q$ on the differential of a Busemann function, the function value is independent of the point on $\MM$.

\begin{lemma}\label{lem:Q-busemann}
    Let $Q: T^*\MM \to \overline{\RR}$ be a function that satisfies \ref{it:Q1} and \ref{it:Q2}.
    Then for any pair of asymptotic geodesic rays $\exp_x(t u)\sim \exp_y(tv)$ we have $Q_x(\flat u)=Q_y(\flat v)$, where $\flat \colon T\MM \to T^*\MM$ is the isomorphism induced by the Riemannian metric.
    Hence, for $\xi\in C\MM^\infty$ and $b^\xi$ an associated~Busemann~function,
    \begin{equation}\label{eq:Fxi}
        Q(\xi)\coloneqq Q_x(-\diff b^\xi(x))
    \end{equation}
    is well-defined (i.e., independent of the choice of $x$).
\end{lemma}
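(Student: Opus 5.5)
The plan is to show that $Q_x(\flat u)$ is constant along asymptotic rays by comparing the two rays through parallel transport. The key fact is that, for asymptotic rays $\gamma(t)=\exp_x(tu)$ and $\beta(t)=\exp_y(tv)$ in a Hadamard manifold, the transported velocity $\tau_{\beta(t)\to\gamma(t)}\dot\beta(t)$ converges to $\dot\gamma(t)$ in a suitable sense. Rather than prove this, I will argue more directly with the Busemann function.

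Let $b\coloneqq b^\gamma$, so that $\diff b(\gamma(t))=-\flat\dot\gamma(t)$ by \cref{eq:diff busemann}. Since $\beta\sim\gamma$, the Busemann functions $b^\beta$ and $b^\gamma$ differ by an additive constant. This is standard for asymptotic rays with equal speed, and equal speed holds because bounded distance forces $\norm u=\norm v$ via the triangle inequality. Hence $\diff b(\beta(t))=-\flat\dot\beta(t)$ as well.

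So it suffices to show that the function $z\mapsto Q_z(-\diff b(z))$ is constant on $\MM$. Take $z_0,z_1\in\MM$ and the geodesic $c$ joining them. By \ref{it:Q2},
\[
Q_{c(s)}(-\diff b(c(s)))=Q_{z_0}\bigl(\tau^*_{c,0\to s}(-\diff b(c(s)))\bigr).
\]
The key geometric input is that the gradient field $\nabla b$ of a Busemann function on a Hadamard manifold is $C^1$. Its covariant derivative $\nabla^2 b$ is the Hessian, and transported back to $z_0$ the covector moves with derivative $-\tau^*\nabla^2 b(c(s))[\dot c(s),\cdot]$. This need not vanish, so constancy of the transported covector fails in general, and the argument needs convexity.

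The correct route is the following. Since $\norm{\nabla b}=\norm{\dot\gamma}$ is constant, the transported covectors all lie on a sphere of fixed radius in $T^*_{z_0}\MM$. However, $Q_{z_0}$ is only holonomy-invariant, not rotation-invariant, so the sphere alone does not suffice. Instead, the rays are made to converge. By Lemma-type comparison in nonpositive curvature, for asymptotic rays with $\beta(0)$ chosen on the horosphere through $x$ (a reparametrization shift), the distance $\dist(\gamma(t),\beta(t))$ is nonincreasing and bounded. Transport $\flat\dot\beta(t)$ along the closed loop consisting of $\gamma|_{[0,t]}$, the geodesic from $\gamma(t)$ to $\beta(t)$, and $\beta|_{[0,t]}$ reversed, then go back to $x$ via the segment from $y$ to $x$. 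Along geodesics the velocity is parallel, so by \ref{it:Q2},
\[
Q_y(\flat v)=Q_{\beta(t)}(\flat\dot\beta(t))=Q_{\gamma(t)}\bigl(\tau^*_{\gamma(t)\to\beta(t)}\flat\dot\beta(t)\bigr).
\]
The last step uses transport along the short connecting geodesic, whose length stays bounded. Next I would show that $\tau_{\beta(t)\to\gamma(t)}\dot\beta(t)-\dot\gamma(t)\to0$, using that both vectors equal $-\nabla b$ at nearby points and that $\nabla b$ is uniformly continuous along the ray. Since $\nabla^2 b$ decays along the ray by the Riccati comparison for horospheres in nonpositive curvature, this is the main obstacle and the part I am least sure of.

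Transporting back to $x$ along $\gamma$ is an isometry, so combined with lower semicontinuity from \ref{it:Q1} this gives $Q_y(\flat v)\ge Q_x(\flat u)$. Note that \ref{it:Q1} supplies only lower semicontinuity, not continuity, which is why only one inequality comes out. By symmetry the reverse inequality also holds, so the two are equal.

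Finally, \cref{eq:Fxi} follows. For any $x$, the ray from $x$ asymptotic to $\xi$ has velocity $u$ with $-\diff b^\xi(x)=\flat u$ by \cref{eq:diff busemann}, and all such rays are mutually asymptotic, so the value $Q(\xi)$ does not depend on $x$.
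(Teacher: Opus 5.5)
Your overall structure is fine. You reduce via \ref{it:Q2} to comparing transported vectors in one tangent space, get one inequality from lower semicontinuity in \ref{it:Q1}, and conclude by symmetry; the final deduction of \cref{eq:Fxi} is also correct. The gap is the step you flag yourself: the claim that $\tau_{\beta(t)\to\gamma(t)}\dot\beta(t)-\dot\gamma(t)\to0$ for transport along the \emph{short} geodesic $c_t$ from $\beta(t)$ to $\gamma(t)$.

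The justification you sketch for it is false.
\begin{itemize}
\item Riccati comparison in nonpositive curvature only gives $\nabla^2 b\succeq0$. An upper bound needs a lower curvature bound, which a general Hadamard manifold does not have, so $\nabla^2 b$ need not even be bounded along the ray.
\item There is no decay in any case. In real hyperbolic space of curvature $-1$, $\nabla^2 b$ equals $\norm{\dot\gamma}$ times the identity on directions tangent to horospheres, at every point of the ray.
\item $\dist(\gamma(t),\beta(t))$ need not tend to zero (flat strips), so a first-order estimate at nearby points is unavailable.
\item Angle comparison cannot close the step either. Parallel transport along $c_t$ preserves only the angle with $\dot c_t$, so at best you learn that $\tau_{c_t}\dot\beta(t)$ and $\dot\gamma(t)$ make asymptotically the same angle with $\dot c_t(0)$. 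That angle need not be small, so in dimension $\geq3$ an uncontrolled rotation about $\dot c_t(0)$ remains. This is a holonomy contribution of the ideal triangle $(\gamma(t),\beta(t),\xi)$ that nothing in your argument bounds.
\end{itemize}

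The missing idea, which is how the paper proceeds (\cref{lem:longdistanceparalleltransport}), is to close the loop with the \emph{long} geodesic $\alpha_t$ from $y$ to $\gamma(t)$. Apply the CAT(0) law of cosines to the triangles $(x,y,\gamma(t))$ and $(y,\beta(t),\gamma(t))$. There $\dist(x,y)$ and $\dist(\beta(t),\gamma(t))$ stay bounded while the other sides grow linearly, so $\angle_{\gamma(t)}(\dot\gamma(t),\dot\alpha_t(1))\to0$ and $\angle_y(v,\dot\alpha_t(0))\to0$. Now $\tau_{x\to\gamma(t)}u=\dot\gamma(t)$, and transport along $\alpha_t$ is an isometry sending $\dot\alpha_t(1)$ to $\dot\alpha_t(0)$. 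Since the relevant angle tends to zero, there is no rotational ambiguity, and one obtains $\tau_{\gamma(t)\to y}\tau_{x\to\gamma(t)}u\to v$ without curvature bounds or Hessian estimates. With this in place of your short-geodesic claim, your lower-semicontinuity-plus-symmetry argument goes through as written.
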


\noindent We prove \cref{lem:Q-busemann} in \cref{app:technical proofs}.

\begin{example}\label{ex:Q_xi}
To justify the ``$Q(\xi)$'' notation, consider the case of $\MM = P \coloneqq G \cap \PD(n)$.
As~$T_I P = \bbH$, every geodesic ray is asymptotic to one of the form $t \mapsto e^{tX}$ ($X \in \bbH$), and hence each equivalence class $\xi \in CP^\infty$ is represented by such a geodesic; this leads to a natural identification $CP^\infty \cong \bbH$.
Under this identification, $Q(\xi)$ equals $\Phi(X)$ (where $\Phi$ is defined in the next subsection), which justifies the notation.
\end{example}

\subsection{Objectives on Hadamard symmetric spaces}\label{sub:objective selfadjoint}
We now specialize the discussion to the Hadamard manifolds obtained from a connected self-adjoint subgroup~$G \subseteq \GL(n)$.
Recall from \cref{sub:selfconjugate} that the symmetric space~$P \coloneqq G\cap\PD(n)$ is a Hadamard manifold.
We also saw that the cotangent space at the identity element is canonically identified with $\bbH \coloneqq \Lie G \cap \Herm(n)$, with the action of the holonomy group corresponding to the adjoint action of $K \coloneqq G \cap \U(n)$.
Thus, holonomy-invariant objectives can be specified by functions $\Phi \colon \bbH \to \overline{\RR}$ satisfying the following conditions, which are obtained by translating~\ref{it:Q1'}--\ref{it:Q3'}:
\begin{enumerate}[leftmargin=4em,noitemsep,label={($\Phi$\arabic*)}]
    \item\label{it:Phi1} $\Phi$ is l.s.c.\ and convex.
    \item\label{it:Phi2} $\Phi$ is $K$-invariant: $\Phi(u X u^\dagger) = \Phi(X)$ for all $u \in K$ and $X \in \bbH$.
    \item\label{it:Phi3} $\dom\Phi$ has a nonempty interior and $\Phi$ is differentiable on $\intr\dom\Phi$.
\end{enumerate}
The function $\Phi$ coincides with $Q_I$ under the identification $\bbH \cong T^*_I P$.

By condition~\ref{it:Phi2}, $\Phi$ is uniquely determined by its restriction to a Weyl chamber, and any function on a Weyl chamber extends to a unique function~$\Phi$ satisfying~\ref{it:Phi2}.
However, if we start with a convex function on the Weyl chamber, its extension will \emph{not} in general be convex.
For example, if~$P=\PD(n)$ then $\bbH=\Herm(n)$, and a Weyl chamber is given by $C\cong\RR_\downarrow^n$.
Thus, $x \mapsto -x_1$ is a convex (even linear) function on~$C$, but its extension to~$\Herm(n)$ is given by $\Phi(X) = -\lambda_{\max}(X)$, which is non-convex for $n\geq2$.
In fact, the obstruction to convexity can already be seen by considering diagonal matrices or, equivalently, real vectors (the function $x \mapsto -\max\{x_1,\dots,x_n\}$ is nonconvex on~$\RR^n$).
This observation is general: if, instead, we start with a convex Weyl-group invariant function on the entire Cartan subspace, then its extension will always be convex.

We will now formalize and prove this observation.
Consider the restriction~$\phi \coloneqq \Phi|_{\AA}$ of a function~$\Phi \colon \bbH \to \overline\RR$ to the Cartan subspace.
Then, the conditions~\ref{it:Phi1}--\ref{it:Phi3} on $\Phi$ respectively imply conditions on $\phi$ as follows:
\begin{enumerate}[leftmargin=4em,noitemsep,label={($\phi$\arabic*)}]
    \item\label{it:F1} $\phi$ is l.s.c.\ and convex.
    \item\label{it:F2} $\phi$ is invariant under the Weyl group: $\phi(w \cdot X) = \phi(X)$ for all $X \in \AA$ and $w \in W$.
    \item\label{it:F3} $\dom\phi$ has a nonempty interior and $\phi$ is differentiable on $\intr \dom\phi$.
\end{enumerate}
Furthermore, if $\Phi$ is $L$-smooth, then so is~$\phi$.

Conversely, given a function~$\phi$ satisfying these properties, we can always extend it to a function~$\Phi$ satisfying the corresponding properties, and the smoothness parameter is also preserved.
This is the content of the following proposition.
Its proof draws on the general results of~\cite{lewis1996group}.%
\footnote{Lewis considers the abstract setting of ``normal decomposition systems''---in our case~$(\bbH,K,s)$, where~$K$ acts by conjugation, and~$(\AA,W,s|_{\AA})$.
He explicitly discusses the case of real semisimple groups in~\cite[\S{}4]{lewis1996group}; in particular, this applies when our~$G$ is complex semisimple, viewed as a real group.
The complex reductive case follows by splitting off the center, on which~$K$ acts trivially;
alternatively, it follows from the compact Lie group case discussed in \cite[Ex.~1.3]{tam1999extension}, as noted in \cite[remark following Prop.~3.4]{lewis2000convex}.
The important case~$\bbH=\Herm(n)$ is also treated in \cite{Lewis1996}.}

\begin{proposition}\label{prop:FandQ}
Let $\phi \colon \AA \to \overline\RR$ be a function that satisfies \ref{it:F1}--\ref{it:F3}.
Define its $K$-invariant extension $\Phi \colon \bbH\to \overline{\RR}$ by~$\Phi(X)\coloneqq \phi(s(X))$.
Then:
\begin{enumerate}
    \item[(a)] The function $\Phi$ satisfies \ref{it:Phi1}--\ref{it:Phi3}, and hence it determines a holonomy-invariant convex objective~$Q$.
    \item[(b)] Furthermore, $s^{-1}(\intr\dom\phi)=\intr\dom\Phi$, and the gradient of~$\Phi$ at any~$X \in \intr\dom\Phi$ can be computed as follows: $\nabla\Phi(X) = u^\dagger \nabla\phi(u X u^\dagger) u$ for any $u \in K$ such that $u X u^\dagger \in \AA$.
    \item[(c)] If $\phi$ is additionally $L$-smooth, then $\Phi$ is $L$-smooth, and hence so is~$Q$.
\end{enumerate}
\end{proposition}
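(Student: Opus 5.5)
Part (a) rests on Lewis's theory of normal decomposition systems, applied to $(\bbH, K, s)$ with the Cartan subspace $(\AA, W, s|_{\AA})$. The first step is to confirm this is a normal decomposition system. Here $K$ acts orthogonally on $\bbH$ (Hilbert--Schmidt) by conjugation, and $s$ is $K$-invariant with $s(X)\in K\cdot X$. The key requirement is the von Neumann-type inequality
\[
\langle X, Y\rangle \le \langle s(X), s(Y)\rangle,
\]
with equality iff there is $u\in K$ such that $uXu^\dagger$ and $uYu^\dagger$ both lie in $C$. For complex semisimple $G$ this is Kostant's convexity theorem, as discussed in \cite[\S4]{lewis1996group}. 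For reductive $G$ I would split $\bbH = \mathfrak z \oplus \bbH_{ss}$, where $K$ acts trivially on $\mathfrak z \subseteq \AA$, and reduce to the semisimple case. With this in place, Lewis's theorems yield the conclusions directly.

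Convexity and lower semicontinuity come from Fenchel conjugation. For $K$-invariant $\Phi = \phi\circ s$, the inequality gives $\Phi^*(Y) = \phi^*(s(Y))$. Since $\phi^*$ is again $W$-invariant, $\Phi^{**} = \phi^{**}\circ s = \phi\circ s = \Phi$. Thus $\Phi$ is a conjugate, hence l.s.c. and convex, which is \ref{it:Phi1}. Property \ref{it:Phi2} holds by construction.

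For (b), and then \ref{it:Phi3}, I would first show $s^{-1}(\intr\dom\phi) = \intr\dom\Phi$. The inclusion $\supseteq$ follows because $s$ restricted to $\AA$ is continuous and an $\AA$-neighbourhood of $X\in C$ sits inside $\bbH$-neighbourhoods. The inclusion $\subseteq$ uses the continuity of $s$: the preimage of the open set $\intr\dom\phi$ is open and contained in $\dom\Phi$. For the other direction, if $X\in\AA$ is interior to $\dom\Phi$, it is interior in $\AA$, since $\phi = \Phi|_{\AA}$; $K$-invariance then handles general $X$. Nonempty interior follows from that of $\dom\phi$.

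Differentiability is Lewis's subdifferential formula: $Y\in\partial\Phi(X)$ iff $s(Y)\in\partial\phi(s(X))$ and $X, Y$ admit a simultaneous diagonalization into $C$. Differentiability of $\phi$ at $s(X)$ makes $\partial\phi(s(X))$ a singleton, and then $\partial\Phi(X)$ is a singleton. Since $\Phi$ is convex and finite near $X$, this gives differentiability. For the formula, $K$-invariance gives $\nabla\Phi(uXu^\dagger) = u\nabla\Phi(X)u^\dagger$. So it suffices to treat $X\in\AA$ and show $\nabla\Phi(X) = \nabla\phi(X)$, and in particular that $\nabla\Phi(X)\in\AA$. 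I would obtain this by averaging over the torus $Z_K(\AA)$ fixing $X$, or from Lewis's result that the gradient at $X\in\AA$ commutes with $X$ appropriately. The restriction then identifies it with $\nabla\phi(X)$. This step needs care when $uXu^\dagger$ lies off $C$, and I would handle it via $W$-invariance of $\phi$, which gives $\nabla\phi(w\cdot X) = w\cdot\nabla\phi(X)$.

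For (c), $L$-smoothness is equivalent to $\phi^*$ being $1/L$-strongly convex. I would transfer strong convexity through $\Phi^* = \phi^*\circ s$. Write $\phi^* = g + \frac1{2L}\|\cdot\|^2$, where $g$ is convex and $W$-invariant. Since $\|s(Y)\| = \|Y\|$, we get $\Phi^* = g\circ s + \frac1{2L}\|\cdot\|^2$, and $g\circ s$ is convex by (a). Hence $\Phi^*$ is $1/L$-strongly convex and $\Phi$ is $L$-smooth. The main obstacle is verifying the normal decomposition system axioms in the reductive setting and obtaining the gradient formula cleanly; everything else is formal.
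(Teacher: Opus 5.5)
Your proposal is correct. It works in the same framework as the paper's proof, Lewis's normal decomposition system $(\bbH,K,s)$ over $(\AA,W,s|_{\AA})$, but you prove by hand several steps that the paper simply cites.

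\textbf{What the paper cites.} It obtains (a)--(b) from three results: \cite[Thm.~4.3]{lewis1996group} for convexity and l.s.c.; \cite[Thm.~5.4~(iii)]{lewis1996group} for the interior identity; and \cite[Thm.~6.1]{lewis1996group} for differentiability and the gradient formula when $uXu^\dagger\in C$. The formula is then extended to $uXu^\dagger\in\AA$ by $W$-equivariance of $\nabla\phi$.

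\textbf{What you do instead.} You derive convexity from $\Phi^*=\phi^*\circ s$, which is essentially Lewis's own argument. You prove $s^{-1}(\intr\dom\phi)=\intr\dom\Phi$ elementarily: continuity of $s$ gives one inclusion, and $\Phi|_{\AA}=\phi$ with $K$-invariance of $\intr\dom\Phi$ gives the other. Your first sentence about ``$\supseteq$'' is garbled and should simply be replaced by your last one.

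You get the gradient formula from the fact that $\nabla\Phi(Y)$, for $Y\in\AA$, is fixed by $e^{i\AA}\subseteq K$. Hence it commutes with $\AA$, lies in $\AA$ by maximality, and so equals $\nabla\phi(Y)$. This torus argument covers every $u$ with $uXu^\dagger\in\AA$ at once, so the extra care you mention off $C$ is unnecessary. It does, however, use $i\AA\subseteq\Lie K$, i.e.\ that $G$ is complex; the paper's appeal to Lewis does not need this.

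\textbf{One line to add.} For differentiability you still rely on Lewis's subdifferential characterization. To conclude that $\partial\Phi(X)$ is a singleton, note that every $Y\in\partial\Phi(X)$ has $s(Y)=\nabla\phi(s(X))$. So all such $Y$ have the same Frobenius norm, and a convex set lying on a Euclidean sphere is a single point.

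\textbf{Part (c).} The two proofs are dual to each other. The paper applies (a) to the $W$-invariant convex function $\tfrac L2\norm{\cdot}_{\mathrm F}^2-\phi$. You apply the convexity part of (a) to $\phi^*-\tfrac1{2L}\norm{\cdot}_{\mathrm F}^2$, which needs only \ref{it:F1} and \ref{it:F2}, not \ref{it:F3}. You then invoke the correspondence between smoothness and strong convexity of the conjugate. The paper's primal version is shorter because it avoids conjugates, but yours is equally valid.
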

\begin{proof}
(a)~Property \ref{it:Phi2} holds by definition of~$\Phi$, because~$s$ is constant on adjoint orbits.
Because $\phi$ is Weyl group invariant, $\phi$ is convex if and only if~$\Phi$ is convex, and $\phi$ is l.s.c.\ if and only if~$\Phi$ is l.s.c., by \cite[Thm.~4.3]{lewis1996group}.
Thus, \ref{it:F1} and \ref{it:F2} imply property \ref{it:Phi1}.
Now, $s^{-1}(\dom\phi) = \dom\Phi$.
Because $\dom\phi$ is Weyl group invariant, it follows that $s^{-1}(\intr\dom\phi) = \intr\dom\Phi$ by~\cite[Thm.~5.4~(iii)]{lewis1996group}.
This implies that $\dom\Phi$ has nonempty interior; this is because the interior of $\dom\phi$ is nonempty and Weyl group invariant, hence it meets the Weyl chamber~$C$, and $s$ is surjective onto~$C$.
Moreover, we obtain that $\Phi$ is differentiable on $\intr\dom\Phi$.
This follows from the preceding and \cite[Thm.~6.1]{lewis1996group}, which states that if a convex and Weyl group invariant function~$\phi$ is differentiable at~$s(X)$, then $\Phi$ is differentiable at~$X$.
Thus, \ref{it:F1}, \ref{it:F2} and \ref{it:F3} imply property \ref{it:Phi3}.

(b)~We already proved the statement about the interiors of the domains.
The formula for the gradient follows from~\cite[Thm.~6.1]{lewis1996group}, which states it in the special case where~$u X u^\dagger \in C$; the general case follows from the $W$-invariance of~$\phi$, which implies that~$\nabla\phi$ is $W$-equivariant.

(c)~The convex function $\phi$ is $L$-smooth iff $L\norm{\cdot}_{\mathrm F}^2/2 - \phi(\cdot)$ is convex.
By applying part~(a) to $L\norm{\cdot}_{\mathrm F}^2/2-\phi(\cdot)$, we see that $L\norm{\cdot}_{\mathrm F}^2/2-\Phi(\cdot)$ is convex.
This implies that $\Phi$ is also $L$-smooth.
\end{proof}

\noindent
Despite its simple proof, part~(c) seems little known even in the case of~$\PD(n)$.
For instance, it appears that the smoothness analysis of functions of matrices in \cite[\S{}2]{Nesterov-smoothing} could be simplified by using this property in the case of the Frobenius norm.

We specialize the above to the important symmetric space of positive definite matrix tuples.

\begin{example}\label{ex:P phi}
    Consider the Hadamard symmetric space $P = \PD(n_1)\times\cdots\times\PD(n_d)$, as in \cref{ex:gl}.
    Then, holonomy-invariant convex objectives~$Q$ are parameterized by functions
    $\Phi \colon \Herm(n_1)\times\cdots\times\Herm(n_d) \to \overline\RR$
    satisfying the following properties:
    \begin{enumerate}[leftmargin=4em,noitemsep,label={($\Phi$\arabic*)}]
        \item $\Phi$ is l.s.c.\ and convex.
        \item $\Phi$ is unitarily invariant: $\Phi(u_1 X_1 u_1^\dagger, \dots u_d X_d u_d^\dagger) = \Phi(X_1,\dots,X_d)$ for all $u_\ell \in \U(n_\ell)$ and~$X_\ell \in \Herm(n_\ell)$.
        \item $\dom\Phi$ has a nonempty interior and $\Phi$ is differentiable on $\intr\dom\Phi$.
    \end{enumerate}
    Clearly, such functions are uniquely determined by their behavior on the Weyl chamber~$C$, which consists of tuples of diagonal matrices with nonincreasing diagonal entries.

    \Cref{prop:FandQ} gives a useful characterization in terms of their restriction to the Cartan subspace~$\AA$, which consists of all tuples of diagonal matrices.
    Identifying $\AA \cong \RR^{n_1} \times \cdots \times \RR^{n_d}$ and using the fact that the Weyl group acts by independent permutations of each block, we arrive at the following characterization:
    holonomy-invariant convex objectives are parameterized by functions
    $\phi \colon \RR^{n_1} \times \cdots \times \RR^{n_d} \to \overline\RR$
    such that
    \begin{enumerate}[leftmargin=4em,noitemsep,label={($\phi$\arabic*)}]
        \item\label{it:phi1gl} $\phi$ is l.s.c.\ and convex.
        \item\label{it:phi2gl} $\phi$ is symmetric in each argument: $\phi(P_1p_1, \ldots, P_dp_d) = \phi(p_1, \ldots, p_d)$ for all~$p_\ell \in \RR^{n_\ell}$ and all permutation matrices~$P_\ell$.
        \item\label{it:phi3gl} $\dom\phi$ has a nonempty interior and $\phi$ is differentiable on $\intr \dom\phi$.
    \end{enumerate}
    Furthermore, the corresponding objective~$Q$ is $L$-smooth if and only if~$\phi$ is $L$-smooth.
\end{example}

We now give some concrete examples of holonomy-invariant convex objectives on~$\PD(n)$, which form the building blocks of the objectives underlying our main applications.

\begin{example}\label{ex:F examples}
Let $\MM=\PD(n)$.
By \cref{ex:P phi}, we can define holonomy-invariant convex objectives in terms of functions $\phi \colon \RR^n \to \overline\RR$.
We give three examples (for each, it is clear that~\ref{it:F1}--\ref{it:F3} hold) along with their unitarily invariant extensions~$\Phi \colon \Herm(n) \to \overline\RR$:
\begin{enumerate}
\item[(a)] The \emph{(negative) Shannon entropy} and its unitarily invariant extension, the \emph{(negative) von Neumann entropy} (by common convention denoted by the same symbol):
\begin{align*}
-H(x) \coloneqq
\begin{cases}
    \sum_{j=1}^n x_j \log x_j & \text{ if } x \in \RR^n_{\geq0},\\
    \infty & \text{ otherwise,}
\end{cases}
\qquad
-H(X) \coloneqq
\begin{cases}
    \tr[X \log X] & \text{ if } X \succeq 0,\\
    \infty & \text{ otherwise,}
\end{cases}
\end{align*}
with the usual convention that $0\log0=0$.

\item[(b)] For any fixed $\delta>0$, the functions
\[
    \phi_\delta(x) \coloneqq \delta\log \sum_{j=1}^n e^{x_j/\delta},
    \qquad
    \Phi_\delta(X) \coloneqq \delta\log \tr e^{X/\delta}
\]
provide smooth approximations to the $\ell^\infty$-norm on the nonnegative orthant and to the operator norm on the positive semidefinite matrices, respectively.
\item[(c)] For any fixed $\delta>0$, the functions
\[
    \phi_\delta(x) \coloneqq  \sum_{j=1}^n \sqrt{\left(x_j - \tfrac1n\right)^2 + \delta^2},
    \qquad
    \Phi_\delta(X) \coloneqq \tr\sqrt{\left(X - \frac{I}n\right)^2 + \delta^2I}
\]
provide smooth approximations to the $\ell^1$-distance to the uniform distribution and to the trace-norm distance to the normalized identity, respectively.
\end{enumerate}
These objectives are used as building blocks for the quantum functionals (\cref{sub:quantum,sub:symmetric_quantum}), the $G$-stable rank (\cref{sub:G-stable-rk}) and the non-commutative rank (\cref{sub:ncrk}), respectively.
\end{example}

\begin{example}\label{ex:euclidean as selfadjoint}
The Euclidean setting also arises as a special case of the setup of this section.
Let~$G \coloneqq \diag((\CC^\times)^n) \subseteq \GL(n)$ denote the group of invertible diagonal matrices, a Zariski-closed self-adjoint subgroup.
Then~$P = \diag(\RR_{>0}^n)$.
We can identify~$P$ with~$\RR_{>0}^n$, with the log-Euclidean metric~$\dist(x,y) = \norm{\log x - \log y}_2 = \sqrt{\sum_{j=1}^n (\log x_j - \log y_j)^2}$, and further with the Euclidean space~$\RR^n$ (using the entrywise logarithm).
Thus, $\bbH \cong \RR^n$.
Since $G$ is commutative, $\AA = \bbH$, $\Phi = \phi$, and the adjoint action and Weyl group are trivial; hence~\ref{it:Phi2} and~\ref{it:F2} are vacuous.
Thus we see that objectives are specified by functions $\phi \colon \RR^n \to \overline\RR$ satisfying conditions~\ref{it:F1} and~\ref{it:F3};
we have recovered \cref{ex:euclidean}.
\end{example}

\subsection{Generalized gradient}\label{sub:generalized_gradient}
The goal of this paper is to solve minimization problems of the following kind:
\begin{equation}\label{eq:general_minimization}
    \minproblem{Q(\diff f(x))}{x \in \MM},
\end{equation}
where $f$ is a geodesically convex function and~$Q$ is an objective as above.
We recall the central notion introduced by Hirai.\footref{footnote:Q}

\begin{definition}[$Q$-gradient, \cite{Hirai2025}]\label{def:grad}
Let $Q \colon T^*\MM\to\overline\RR$ be a holonomy-invariant convex objective and let $f \colon \MM\rightarrow\RR$ be a continuously differentiable geodesically convex function such that $\diff f(x) \in \intr\dom Q_x$ for all~$x\in \MM$.
The \emph{$Q$-gradient of $f$} is the continuous vector field, denoted~$\nabla^Q f$, and defined as follows:
\begin{equation}\label{eq:grad}
    \nabla^Q f(x)
\coloneqq
  \diff(Q_x)(\diff f(x))
\in T_x^{**}\MM \cong T_x\MM
\qquad (x \in \MM).
\end{equation}
\end{definition}

\noindent
Note that $\diff (Q_x)$ is a covector field on $\intr\dom Q_x \subseteq V \coloneqq T_x^*\MM$, hence $\diff(Q_x)(\diff f(x)) \in T_{\diff f(x)}^* V$; because $T_{\diff f(x)}^* V \cong V^* \cong T_x^{**}\MM \cong T_x\MM$, this can be canonically identified with a tangent vector in~$T_x\MM$.

The following lemma reveals the fundamental nature of this definition:
by the geodesic convexity of~$f$, the negative $Q$-gradient provides a natural candidate descent direction for problem~\eqref{eq:general_minimization}.

\begin{lemma}[{cf.\ \cite[Lem.~3.9]{Hirai2025}}]\label{lem:dFdf}
Let $Q$ and $f$ be as in \cref{def:grad}, with $f$ twice differentiable.
For any continuously differentiable curve $x_t$ in~$\MM$, we have
\begin{align*}
    \frac \diff {\diff t} Q(\diff f(x_t)) = \nabla^2 f(x_t)[\dot x_t, \nabla^Q f(x_t)].
\end{align*}
In other words:
\[ \diff(Q \circ \diff f)(x) = \nabla^2 f(x)[\cdot, \nabla^Q f(x)]  \qquad (x \in \MM). \]
\end{lemma}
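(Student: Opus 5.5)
The plan is to compute the derivative by working in a parallel-transported frame along the curve, which turns the cotangent-bundle expression into a function on a single fixed vector space. Fix $t_0$ and let $\tau_t \coloneqq \tau_{x,t_0\to t}$ denote parallel transport along the curve $x_\cdot$ from $x_{t_0}$ to $x_t$. By \ref{it:Q2} (applied to the restriction of the curve to $[t_0,t]$), $Q_{x_t} = Q_{x_{t_0}}\circ\tau_t^*$. Hence
\[
  Q(\diff f(x_t)) = Q_{x_{t_0}}\bigl(\alpha(t)\bigr), \qquad \alpha(t)\coloneqq \tau_t^*\,\diff f(x_t)\in T^*_{x_{t_0}}\MM .
\]
This expresses the quantity as a fixed convex function composed with a curve $\alpha$ in the fixed vector space $V\coloneqq T^*_{x_{t_0}}\MM$.

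\textbf{Step 1 (chain rule on $V$).} By hypothesis $\diff f(x_t)\in\intr\dom Q_{x_t}$, and $\tau_t^*$ maps $\intr\dom Q_{x_t}$ onto $\intr\dom Q_{x_{t_0}}$. So $\alpha(t)$ stays in the open set where $Q_{x_{t_0}}$ is differentiable, by \ref{it:Q3}. Once $\alpha$ is shown differentiable at $t_0$, the ordinary chain rule gives
\[
  \frac{\diff}{\diff t}\Big|_{t_0} Q(\diff f(x_t)) = \diff Q_{x_{t_0}}(\alpha(t_0))[\dot\alpha(t_0)] = \dot\alpha(t_0)\bigl[\nabla^Q f(x_{t_0})\bigr],
\]
using $\alpha(t_0)=\diff f(x_{t_0})$ and the identification $V^*\cong T_{x_{t_0}}\MM$ from \cref{def:grad}.

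\textbf{Step 2 (identify $\dot\alpha$).} The derivative of $\tau_t^*\omega(t)$ at $t_0$, for a covector field $\omega$ along the curve, is by definition the covariant derivative $\frac{D}{\diff t}\omega$ at $t_0$. With $\omega(t)=\diff f(x_t)$ this gives $\dot\alpha(t_0)=\nabla_{\dot x_{t_0}}(\diff f) = \nabla^2 f(x_{t_0})[\dot x_{t_0},\cdot\,]$, since the Riemannian Hessian is the covariant derivative of $\diff f$. Twice differentiability of $f$ and $C^1$ of the curve make $\alpha$ differentiable. Substituting into Step 1 yields the claimed formula, and the second displayed form follows by taking $x_t=\exp_x(tv)$ for arbitrary $v$.

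\textbf{Main obstacle.} The delicate point is the regularity in Step 1. $Q_{x_{t_0}}$ is only assumed differentiable, not $C^1$, on the interior of its domain. However, for convex functions differentiability on an open set automatically implies continuous differentiability, and a Fréchet-differentiable function composed with a curve differentiable at $t_0$ is differentiable at $t_0$. So pointwise differentiability suffices.

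One should also check that \ref{it:Q2} is only needed along piecewise smooth curves, and that a $C^1$ curve is allowed. If needed, this extends by approximation or because parallel transport along $C^1$ curves is well defined. With these points settled, the argument goes through pointwise in $t_0$.
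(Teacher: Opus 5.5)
Your proposal is correct and follows essentially the same argument as the paper: it uses \ref{it:Q2} to rewrite $Q(\diff f(x_t))$ as $Q_{x_{t_0}}$ applied to the parallel-transported covector, applies the chain rule in the fixed space $T^*_{x_{t_0}}\MM$, and identifies the derivative of the transported covector with $\nabla_{\dot x_{t_0}}\diff f = \nabla^2 f(x_{t_0})[\dot x_{t_0},\cdot\,]$. Your extra remarks on pointwise differentiability of $Q_{x_{t_0}}$ and on extending \ref{it:Q2} from piecewise smooth to $C^1$ curves (e.g., by approximation together with lower semicontinuity) address technicalities that the paper's proof passes over silently.
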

\begin{proof}
We recall the argument in the proof of \cite[Lemma~3.9]{Hirai2025}.
We have
\begin{align*}
    \frac{\diff}{\diff s}\Big|_{s=t} Q(\diff f(x_s))
&=   \frac{\diff}{\diff s}\Big|_{s=t} Q_{x_t}(\tau^*_{x_\bullet,t\to s} \diff f(x_s))
=   \diff(Q_{x_t})(\diff f(x_t)) [ \nabla_{\dot x_t} \diff f ] \\
&=   (\nabla_{\dot x_t} \diff f)[\nabla^Q f(x_t)]
=   (\nabla \diff f)(x_t)[\dot x_t, \nabla^Q f(x_t)]
=   \nabla^2 f(x_t)[\dot x_t, \nabla^Q f(x_t)],
\end{align*}
where the first equality follows from property~\ref{it:Q2} (the invariance of~$Q$ under parallel transport), the second from the chain rule and the definition of the covariant derivative, the third uses the definition of the $Q$-gradient (\cref{eq:grad}) along with the identification $T_{x_t} \MM \cong T^{**}_{x_t} \MM$, and we conclude by rewriting using the definition of the covariant derivative and Hessian.
\end{proof}

We end this section with a regularity lemma that will be used below.
We omit its proof, which is standard.

\begin{lemma}\label{lem:C1 Q-gradient}
Let $f$ and $Q$ be as in \cref{def:grad}.
Suppose that $f$ is twice continuously differentiable on~$\MM$ and that $Q_x$ is twice continuously differentiable on~$\intr\dom Q_x$ for every (equivalently, some)~$x \in \MM$.
Then the $Q$-gradient~$\nabla^Q f$ is a continuously differentiable vector field.
In particular, it is locally Lipschitz.
\end{lemma}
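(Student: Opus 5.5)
The statement is local in nature, so I will work in a chart around an arbitrary point $x_0\in\MM$. The plan is to write $\nabla^Q f$ as a composition of maps whose regularity is known and then apply the chain rule.

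First, fix $x_0$ and a neighborhood $U$ on which the cotangent bundle is trivialized. I will use parallel transport along radial geodesics from $x_0$: set $\Psi\colon U\times T^*_{x_0}\MM\to T^*\MM|_U$, $(x,\alpha)\mapsto \tau^*_{x\to x_0}\alpha$. Since $\exp_{x_0}$ is a global diffeomorphism on a Hadamard manifold and parallel transport along radial geodesics solves a linear ODE with coefficients smooth in the endpoint, $\Psi$ is a smooth vector-bundle isomorphism. By \ref{it:Q2}, $Q_x = Q_{x_0}\circ \tau^*_{x_0\to x}$, where $\tau^*_{x_0\to x}$ is the inverse of the map above. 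Hence, in this trivialization, $Q$ becomes the $x$-independent function $Q_{x_0}$. The same argument shows that if $Q_x$ is $C^2$ on $\intr\dom Q_x$ for some $x$, then it is for every $x$, which justifies the ``equivalently, some'' in the statement.

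Next, I will write $\alpha(x)\coloneqq \tau^*_{x_0\to x}{}^{-1}\diff f(x)\in T^*_{x_0}\MM$, i.e.\ $\diff f(x)$ transported to $x_0$. Since $f\in C^2$, $\diff f$ is a $C^1$ section, so $\alpha$ is $C^1$, and by assumption $\alpha(x)\in\intr\dom Q_{x_0}$. Differentiating $Q_x=Q_{x_0}\circ\tau^*_{x_0\to x}$ gives
\[
    \nabla^Q f(x) = \tau_{x_0\to x}\bigl(\diff Q_{x_0}(\alpha(x))\bigr),
\]
with $\diff Q_{x_0}(\alpha)\in T_{x_0}\MM$ under the canonical identification. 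The map $\diff Q_{x_0}$ is $C^1$ on the open set $\intr\dom Q_{x_0}$ because $Q_{x_0}$ is $C^2$. Composing with the $C^1$ map $\alpha$ and the smooth transport $\tau_{x_0\to x}$ then shows that $\nabla^Q f$ is $C^1$ on $U$. Since $x_0$ was arbitrary, $\nabla^Q f$ is $C^1$ on all of $\MM$.

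The local Lipschitz claim follows from the standard fact that a $C^1$ vector field is locally Lipschitz in any chart, by the mean value inequality on compact convex coordinate neighborhoods.

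The only step needing care is the smoothness of the radial parallel-transport trivialization, together with checking that the identification $T^*_\alpha(T^*_x\MM)\cong T_x\MM$ commutes with transport. Both are routine linear-algebra and ODE facts: the first follows from smooth dependence of ODE solutions on parameters.
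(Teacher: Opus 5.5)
Your proposal is correct. The paper omits this proof as standard, and your argument is the standard one: the radial parallel-transport trivialization near $x_0$ is smooth (take $U$ a small geodesic ball, so the radial segments stay in one chart and ODE parameter dependence applies), it turns $Q$ into the fixed $C^2$ function $Q_{x_0}$, and it yields $\nabla^Q f(x)=\tau_{x_0\to x}\,\diff Q_{x_0}\bigl(\tau^*_{x_0\to x}\diff f(x)\bigr)$, which is the general-manifold version of the paper's formula in \cref{eq:concrete qgrad}.

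The only slip is notational. Since $\tau^*_{x_0\to x}$ already maps $T^*_x\MM$ to $T^*_{x_0}\MM$, you should define $\alpha(x)\coloneqq\tau^*_{x_0\to x}\diff f(x)$ without the inverse.
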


\begin{example}\label{ex:euclidean q-gradient}
In the Euclidean setting, where $\MM = \RR^n$, we can regard $Q$ as a function~$Q \colon (\RR^n)^* \to \overline{\RR}$, or equivalently~$\phi \colon \RR^n \to \overline{\RR}$, as discussed in \cref{ex:euclidean as selfadjoint}.
Then the $Q$-gradient is simply
\begin{align}\label{eq:euclidean q-gradient}
    \nabla^Q f(x) = \diff Q(\diff f(x)) = \nabla \phi(\nabla f(x)),
\end{align}
where we interpret the differential $\diff f(x)$ as an element of~$T_x^*\MM \cong (\RR^n)^*$, and $\nabla f$ and~$\nabla\phi$ are the usual Euclidean gradients.
See also the discussion in \cref{sec:intro}.
\end{example}

\section{Hadamard mirror flow}\label{sec:continuous-time}
In this section, we study the descent flow associated to the $Q$-gradient, first defined by Hirai, who also proved that $Q(\diff f)$ is monotonically nonincreasing along the flow~\cite{Hirai2025}.
We recall his definition in \cref{sub:flow and euclidean} and give a novel interpretation as a \emph{Hadamard mirror flow}.
In \cref{sub:flow convergence}, we then establish a $\mathcal{O}(t^{-1})$ convergence result (\cref{thm:main-continuous}).
Throughout this section, $\MM$ denotes a Hadamard manifold, and $f$ and $Q$ are functions satisfying the hypotheses of \cref{lem:C1 Q-gradient}.

\subsection{Definition and Euclidean interpretation}\label{sub:flow and euclidean}
We first recall Hirai's $Q$-gradient flow in our setting:
it is the descent flow associated with the negative of the $Q$-gradient~\cite{Hirai2025}.
Below we show that in the Euclidean setting this flow has a mirror flow interpretation;
accordingly we also call it the \emph{Hadamard mirror flow}.

\begin{definition}[Hadamard mirror flow]\label{def:mirror flow}
Let $f$, $Q$ be functions satisfying the hypotheses of \cref{lem:C1 Q-gradient}.
Then the \emph{Hadamard mirror flow}, or $Q$-gradient flow, for problem~\eqref{eq:general_minimization} is defined as the unique maximal forward solution~$x_t$ to the first-order differential equation
\begin{align}\label{eqn:q-gradientflow}
    \dot x_t = -\nabla^Q f(x_t) \qquad (0 \leq t < T_{\max}) 
\end{align}
for a given initial point $x_0 \in \MM$, where $T_{\max} \in (0,\infty]$.
\end{definition}

By our assumption, the $Q$-gradient is a locally Lipschitz vector field (\cref{lem:C1 Q-gradient}).
Then the existence of a unique maximal solution with $T_{\max}>0$ follows from standard ODE theory.
Now Hirai's descent lemma for the $Q$-gradient flow follows immediately from \cref{lem:dFdf}:

\begin{lemma}[{cf. \cite[Lem.~3.9]{Hirai2025}}]\label{lem:descentHirai}
    The function $t \mapsto Q(\diff f(x_t))$ is nonincreasing.
\end{lemma}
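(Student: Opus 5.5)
The plan is to differentiate $t \mapsto Q(\diff f(x_t))$ along the flow and read off the sign from the Hessian. By \cref{lem:C1 Q-gradient}, the $Q$-gradient is continuously differentiable, so the maximal solution $x_t$ of \eqref{eqn:q-gradientflow} is a $C^1$ curve on $[0,T_{\max})$. Because $f$ is twice continuously differentiable, \cref{lem:dFdf} applies to this curve. Substituting $\dot x_t = -\nabla^Q f(x_t)$ gives
\[
    \frac{\diff}{\diff t} Q(\diff f(x_t)) = -\nabla^2 f(x_t)\bigl[\nabla^Q f(x_t), \nabla^Q f(x_t)\bigr].
\]

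Next I would show that this right-hand side is nonpositive. For that it suffices that the Riemannian Hessian of a geodesically convex $C^2$ function is positive semidefinite. Take $v \in T_x\MM$ and the geodesic $\gamma(s) = \exp_x(sv)$. Geodesic convexity means $s \mapsto f(\gamma(s))$ is convex on $\RR$. Since $\nabla_{\dot\gamma}\dot\gamma = 0$, its second derivative at $s=0$ equals $\nabla^2 f(x)[v,v]$, which is therefore $\geq 0$. Applying this with $v = \nabla^Q f(x_t)$ shows the derivative above is $\leq 0$ for every $t$.

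Finally, $t \mapsto Q(\diff f(x_t))$ is finite-valued, since $\diff f(x_t) \in \intr\dom Q_{x_t}$. It is also differentiable, by the chain-rule argument behind \cref{lem:dFdf}. A differentiable real function with nonpositive derivative on an interval is nonincreasing, by the mean value theorem, which gives the claim.

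The only point needing care is the identification used in \cref{lem:dFdf}: the symmetric bilinear form $\nabla \diff f$ evaluated on the same vector twice must coincide with the second derivative along the geodesic. This is standard, and I do not expect a real obstacle anywhere in the argument.
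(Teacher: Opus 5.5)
Your proposal is correct and follows essentially the same route as the paper: apply \cref{lem:dFdf} along the flow, substitute $\dot x_t = -\nabla^Q f(x_t)$ to obtain $-\nabla^2 f(x_t)[\dot x_t,\dot x_t]$, and conclude nonpositivity from the positive semidefiniteness of the Hessian of the geodesically convex function~$f$. The extra details you supply are harmless elaborations of what the paper leaves implicit: deriving positive semidefiniteness of the Hessian from convexity along geodesics, and invoking the mean value theorem.
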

\begin{proof}
It follows at once from \cref{lem:dFdf}, the definition of the Hadamard mirror flow, and the geodesic convexity of~$f$ that
\begin{equation*}
    \frac{\diff}{\diff t} Q(\diff f(x_t))
=   (\nabla^2 f)[\dot x_t, \nabla^Q f(x_t)]
=  -(\nabla^2 f)[\dot x_t, \dot x_t]
\leq 0.
\qedhere
\end{equation*}
\end{proof}

In the Euclidean case, the flow~\eqref{eqn:q-gradientflow} can be interpreted as a continuous-time dynamical system that served as the original motivation for mirror descent~\cite[\S{}3.1.4]{NY1983}.%
\footnote{In \cref{sec:discrete-time} below, we consider the discrete-time variant of the flow, which can be interpreted as a Hadamard generalization of mirror \emph{descent} (in dual coordinates).
Readers familiar with mirror descent may prefer to skip ahead and consult \cref{sub:descent and euclidean}.}
We briefly describe this connection using our notation.
Nemirovski and Yudin consider the problem of minimizing a convex function~$Q$ on a reflexive Banach space~$E$.
To this end, they choose a function $f \colon E^* \to \RR$ and consider the differential equation for~$x_t \in E^*$ given by
\[ \dot x_t = -\diff Q(\diff f(x_t)), \]
where we consider $\diff f(x_t) \in T_{x_t}^*(E^*) \cong E^{**} \cong E$.
If $E = (\RR^n)^*$, then $\MM \coloneqq E^* \cong \RR^n$; using \cref{eq:euclidean q-gradient} we see that the above is nothing but the mirror flow~\eqref{eqn:q-gradientflow} specialized to the Euclidean setting.
Their key observation is that $f^\xi \coloneqq f - \xi$ (for $\xi \in E \cong E^{**}$) can serve as a ``Lyapunov function'':
\begin{equation}\label{eq:nemi yudin}
  \frac {\diff}{\diff t} f^\xi(x_t)
= ( \diff f(x_t) - \xi ) [\dot x_t]
= \diff Q(\diff f(x_t))\mleft[ \xi - \diff f(x_t) \mright]
\leq Q(\xi) - Q(\diff f(x_t)),
\end{equation}
using the convexity of~$Q$.
Thus, $f^\xi(x_t)$ is decreasing as long as $Q(\diff f(x_t)) > Q(\xi)$.
Assuming the flow is defined at all times and $f^\xi$ is bounded below, it follows that $\liminf_{t\to\infty} Q(\diff f(x_t)) \leq Q(\xi)$.
In particular, if $\xi$ is a minimizer, then $\liminf_{t\to\infty} Q(\diff f(x_t))$ is the minimum.

\subsection{Convergence analysis}\label{sub:flow convergence}
We now analyze the convergence of~$Q(\diff f(x_t))$ under the Hadamard mirror flow.
The key idea will be to control the decrease of the objective using the geodesically convex ``Lyapunov function''
\begin{align}\label{eq:lyapunov}
    f^\xi \coloneqq f + b^\xi
\end{align}
and the quantity~$Q(\xi)$ defined in \cref{eq:Fxi}, where~$\xi \in C\MM^\infty$ and $b^\xi$ is a corresponding Busemann function.
This generalizes the approach by Nemirovski and Yudin since in the Euclidean setting, the Busemann function~$b^\xi$ simply becomes a linear function $b^\xi = -\xi$ (up to an arbitrary constant); here we use the same notation as in the previous subsection, $\MM = E^*$, and we identify the equivalence class~$\xi$ of rays in~$\MM$ with a vector $\xi \in E\cong E^{**}$.

We remark that the function $f^\xi$ is related to the asymptotic Legendre--Fenchel conjugate $f^*$ introduced in~\cite[(2.16)]{Hirai2024} by $\inf_{x \in \MM} f^\xi(x) = -f^*(\xi)$ (up to the choice of aditive constant in~$b^\xi$).

We first present a lemma that generalizes \cref{eq:nemi yudin} to the Hadamard setting.

\begin{lemma}[$f^\xi$ decreases by the gap of $Q$]\label{lem:continuous-smallf}
    For all $\xi \in C\MM^\infty$,
    \[
        \frac{\diff}{\diff t} f^\xi(x_t) \leq Q(\xi) - Q(\diff f(x_t)).
    \]
\end{lemma}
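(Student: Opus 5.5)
We mimic the Euclidean computation~\eqref{eq:nemi yudin}, with the linear function replaced by the Busemann function~$b^\xi$. By the chain rule and the flow equation~\eqref{eqn:q-gradientflow},
\[
  \frac{\diff}{\diff t} f^\xi(x_t) = \bigl(\diff f(x_t) + \diff b^\xi(x_t)\bigr)[\dot x_t] = -\bigl(\diff f(x_t) - \alpha_t\bigr)[\nabla^Q f(x_t)],
\]
where $\alpha_t \coloneqq -\diff b^\xi(x_t) \in T^*_{x_t}\MM$. We will need the Busemann function to be continuously differentiable, which is standard on Hadamard manifolds~\cite{HeintzeImHof1977}, so that $t\mapsto b^\xi(x_t)$ is differentiable. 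We then recall that $\nabla^Q f(x_t) = \diff(Q_{x_t})(\diff f(x_t))$ under the identification $T^{**}_{x_t}\MM\cong T_{x_t}\MM$. Hence the right-hand side equals $\diff(Q_{x_t})(\diff f(x_t))[\alpha_t - \diff f(x_t)]$, the directional derivative of the convex function $Q_{x_t}$ at $\diff f(x_t)$ in the direction $\alpha_t - \diff f(x_t)$.

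Next, we use convexity of $Q_{x_t}$ (property~\ref{it:Q1}). Since $Q_{x_t}$ is differentiable at the interior point $\diff f(x_t)\in\intr\dom Q_{x_t}$, the gradient inequality gives
\[
  Q_{x_t}(\alpha_t) \ge Q_{x_t}(\diff f(x_t)) + \diff(Q_{x_t})(\diff f(x_t))[\alpha_t - \diff f(x_t)].
\]
This holds even if $Q_{x_t}(\alpha_t)=\infty$, in which case the bound is trivial. Combining the two displays yields $\frac{\diff}{\diff t} f^\xi(x_t) \le Q_{x_t}(-\diff b^\xi(x_t)) - Q(\diff f(x_t))$.

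Finally, \cref{lem:Q-busemann} identifies $Q_{x_t}(-\diff b^\xi(x_t))$ with $Q(\xi)$, independently of the base point. This is where holonomy invariance enters, and it is the only genuinely non-Euclidean step, but it is already available. So the main point needing care is the differentiability of $b^\xi$ along the flow, and this is supplied by the $C^1$ regularity of Busemann functions, with differential given as in \eqref{eq:diff busemann}. The additive-constant ambiguity in $b^\xi$ is irrelevant, since only its derivative appears.
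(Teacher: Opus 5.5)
Your proposal is correct and follows the paper's own proof essentially step by step: the chain rule with the flow equation $\dot x_t = -\nabla^Q f(x_t)$, the gradient inequality for the convex function $Q_{x_t}$ at the interior point $\diff f(x_t)$, and \cref{lem:Q-busemann} to identify $Q_{x_t}(-\diff b^\xi(x_t))$ with $Q(\xi)$. Your explicit remarks on the differentiability of $b^\xi$ along the flow and on the case $Q_{x_t}(\alpha_t)=\infty$ fill in details the paper leaves implicit.
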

\begin{proof}
This follows by direct computation:
    \begin{align*}
        \frac{\diff}{\diff t}f^\xi(x_t)
        &= \left(\diff f(x_t) + \diff b^\xi(x_t)\right)\left[\dot x_t\right]\\
        &= -\diff (Q_{x_t})(\diff f(x_t))\left[\diff f(x_t) + \diff b^\xi(x_t)\right]\\
        &\leq Q_{x_t}(-\diff b^\xi(x_t)) - Q_{x_t}(\diff f(x_t)) \\
        &= Q(\xi) - Q(\diff f(x_t)),
    \end{align*}
    where the second equality follows from the definition of the Hadamard mirror flow~\eqref{eqn:q-gradientflow}, the inequality from the Euclidean convexity of~$Q_{x_t}$, and the last equality follows from the definition of $Q(\xi)$ in \cref{eq:Fxi}.
\end{proof}

We are now ready to prove a convergence result.
To this end, we assume that the Hadamard mirror flow~$x_t$ is defined for all times~$t\geq0$; this holds under natural assumptions, see \cite[Lem.~3.11]{Hirai2025}.
Then, by combining \cref{lem:descentHirai,lem:continuous-smallf}, we have the following convergence bound:

\begin{theorem}[Convergence of Hadamard mirror flow]
\label{thm:main-continuous}
    Assume that the Hadamard mirror flow~\eqref{eqn:q-gradientflow} is defined for all~$t\geq0$.
    Then, for all $\xi \in C\MM^\infty$ and $t>0$,
    \begin{equation}\label{eq:main continuous first}
        Q(\diff f(x_t)) - Q(\xi) \le \frac{f^\xi(x_0) - \inf_{x \in \MM} f^\xi(x)}t.
    \end{equation}
    In particular, it holds that
    \begin{equation}\label{eq:main continuous second}
        \lim_{t \to \infty}Q(\diff f(x_t)) = \inf_{x \in \MM}Q(\diff f(x))
    \end{equation}
    (both sides can be $-\infty$).
\end{theorem}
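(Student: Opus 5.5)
The plan is to integrate \cref{lem:continuous-smallf} over $[0,t]$ and then use the monotonicity from \cref{lem:descentHirai}. Fix $\xi \in C\MM^\infty$ and $t>0$. If $\inf f^\xi = -\infty$, the right-hand side of \eqref{eq:main continuous first} is $+\infty$ and there is nothing to prove, so assume $f^\xi$ is bounded below. The function $s\mapsto f^\xi(x_s)$ is differentiable: $f$ is $C^1$, Busemann functions on Hadamard manifolds are $C^1$, and the flow is $C^1$. Integrating \cref{lem:continuous-smallf} from $0$ to $t$ gives
\[
    \inf f^\xi - f^\xi(x_0) \le f^\xi(x_t) - f^\xi(x_0) \le \int_0^t \bigl(Q(\xi) - Q(\diff f(x_s))\bigr)\,\diff s .
\]
By \cref{lem:descentHirai}, $Q(\diff f(x_s)) \ge Q(\diff f(x_t))$ for all $s\le t$. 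Hence the integral is at most $t\,(Q(\xi) - Q(\diff f(x_t)))$. Rearranging yields \eqref{eq:main continuous first}.

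Some care is needed when $Q(\xi)=\infty$. In that case \eqref{eq:main continuous first} is trivial, since its left side is $-\infty$ or finite minus infinity. When $Q(\xi)$ is finite, the integrand is finite and continuous, because $\diff f(x_s)\in\intr\dom Q$ and $Q$ is continuous there. So the integration step is legitimate.

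For \eqref{eq:main continuous second}, the inequality ``$\ge$'' is immediate. For ``$\le$'', I would show that
\[
    \inf_{\xi} Q(\xi) \le \inf_x Q(\diff f(x)),
\]
where the left infimum runs over those $\xi$ for which $f^\xi$ is bounded below. Given $x\in\MM$, choose $\xi$ as the class of the ray $\exp_x(-t\,\sharp\,\diff f(x))$. Then $\diff b^\xi(x) = -\diff f(x)$ by \eqref{eq:diff busemann}, so $Q(\xi) = Q_x(\diff f(x))$. Moreover $x$ is a critical point of the geodesically convex function $f^\xi = f + b^\xi$, hence a global minimizer, so $\inf f^\xi = f^\xi(x) > -\infty$.

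Now take $\limsup_{t\to\infty}$ in \eqref{eq:main continuous first} for this $\xi$. This gives $\lim_t Q(\diff f(x_t)) \le Q_x(\diff f(x))$; the limit exists by monotonicity. Taking the infimum over $x$ finishes the proof, and this also covers the case where both sides equal $-\infty$.

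The main obstacle is this last construction: verifying that the differential of the Busemann function at the base point of its own ray is $-\flat(\dot\gamma)$, and that critical points of geodesically convex $C^1$ functions are minimizers. Both follow from \eqref{eq:diff busemann} and the standard first-order characterization of geodesic convexity.
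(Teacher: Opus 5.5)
Your first part coincides with the paper's proof and is correct: integrate \cref{lem:continuous-smallf} over $[0,t]$, then bound the integrand using the monotonicity from \cref{lem:descentHirai}. Your handling of the cases $\inf f^\xi=-\infty$ and $Q(\xi)=\infty$ is also fine. The second part follows the same route as the paper, but it contains a sign error that makes the key step false as written.

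For your ray $\gamma(t)=\exp_x(-t\,\flat^{-1}\diff f(x))$ you have $\dot\gamma(0)=-\flat^{-1}\diff f(x)$. So \eqref{eq:diff busemann} gives $\diff b^\xi(x)=-\flat(\dot\gamma(0))=+\diff f(x)$, not $-\diff f(x)$. This has three consequences:
\begin{itemize}
\item $\diff f^\xi(x)=2\,\diff f(x)$, so $x$ is not a critical point of $f^\xi$.
\item $Q(\xi)=Q_x(-\diff f(x))$ rather than $Q_x(\diff f(x))$.
\item $f^\xi$ can be unbounded below, which makes \eqref{eq:main continuous first} vacuous for this $\xi$.
\end{itemize}
For a concrete instance of the last point, take $\MM=\RR$, $f(y)=\log(1+e^y)$ and $x=0$. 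Your choice gives $b^\xi(y)=y/2$, hence $f^\xi(y)=\log(1+e^y)+y/2\to-\infty$ as $y\to-\infty$.

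The fix is to send the ray in the direction of the gradient, $t\mapsto\exp_x(t\,\flat^{-1}\diff f(x))$, exactly as the paper does. Then \eqref{eq:diff busemann} gives $\diff b^\xi(x)=-\diff f(x)$, so $\diff f^\xi(x)=0$. Geodesic convexity then gives $\inf f^\xi=f^\xi(x)>-\infty$, and the definition \eqref{eq:Fxi} gives $Q(\xi)=Q_x(\diff f(x))$. With this correction the rest of your argument goes through unchanged.
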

\begin{proof}
    The inequality~\eqref{eq:main continuous first} is established by the following calculation:
    \[
        f^\xi(x_0) - \inf_{x \in \MM} f^\xi(x) \ge f^\xi(x_0) - f^\xi(x_t) \ge \int_0^t\left(Q(\diff f(x_s)) - Q(\xi)\right)\diff s \ge t\left(Q(\diff f(x_t)) - Q(\xi)\right),
    \]
    where the second inequality follows from \cref{lem:continuous-smallf}, and the last follows from \cref{lem:descentHirai}.

    To prove \cref{eq:main continuous second}, let $x \in \MM$ be arbitrary (and not to be confused with the curve~$x_t$) and consider the class~$\xi \in C\MM^\infty$ of the geodesic ray~$t \mapsto \exp_x(t \nabla f(x))$, where $\nabla f(x) \coloneqq \flat^{-1} \diff f(x)$ and $\flat \colon T\MM \to T^*\MM$ denotes the natural isomorphism induced by the Riemannian metric.
    By \cref{eq:diff busemann}, we have $\diff b^\xi(x) = -\diff f(x)$.
    In other words $\diff f^\xi(x) = 0$.
    Hence, by geodesic convexity, we have $\inf_{y \in \MM}f^\xi(y) = f^\xi(x) > -\infty$.
    On the other hand, we have $Q(\xi) = Q(\diff f(x))$ by \cref{lem:Q-busemann}.
    Together, and using \cref{eq:main continuous first}, we find that
    \begin{align*}
        Q(\diff f(x_t)) - Q(\diff f(x))
    =   Q(\diff f(x_t)) - Q(\xi)
    \leq \frac{f^\xi(x_0) - \inf_{y \in \MM}f^\xi(y)}{t}
    \underset{t \to \infty}{\longrightarrow} 0;
    \end{align*}
    the numerator is finite by the observation above.
    Since $x \in \MM$ was arbitrary, this implies that
    \[ \liminf_{t\to\infty} Q(\diff f(x_t)) \leq \inf_{x \in \MM} Q(\diff f(x)). \]
    The limit inferior exists (but may be $-\infty$) because $Q(\diff f(x_t))$ is nonincreasing by \cref{lem:descentHirai}.
    Along with the pointwise lower bound $Q(\diff f(x_t)) \geq \inf_{x \in \MM} Q(\diff f(x))$, \cref{eq:main continuous second} follows.
\end{proof}


\section{Hadamard mirror descent}\label{sec:discrete-time}
In this section, we study \emph{Hadamard mirror descent}, the natural discrete-time variant of the Hadamard mirror flow.
In \cref{sub:descent and euclidean} we define the update formally and show that it can be interpreted as a Hadamard generalization of the well-known mirror-descent method (expressed in dual variables), justifying its name.
In \cref{sub:descent convergence}, we state a step-size condition that implies a $\mathcal{O}(1/k)$ convergence result (\cref{thm:main-discrete}).
This condition is ensured by a natural \emph{relative smoothness} assumption, which in particular holds when $Q$ is smooth and $f$ is dual-smooth, as we discuss in \cref{sub:rel-smooth}.
These assumptions are motivated by the Euclidean theory and easy to verify in practice.
Throughout, $\MM$ denotes a Hadamard manifold, and $f$, $Q$ are functions satisfying the hypotheses of \cref{def:grad}.

\subsection{Definition and Euclidean interpretation}\label{sub:descent and euclidean}

We first define the discrete-time variant of \cref{def:mirror flow}.

\begin{definition}[Hadamard mirror descent]
Let $f$, $Q$ be functions satisfying the hypotheses of \cref{def:grad}.
Then the \emph{Hadamard mirror descent}, or $Q$-gradient descent, for problem~\eqref{eq:general_minimization} with step size~$\eta>0$ is defined as the iteration
\begin{align}\label{eq:descent}
    x_{k+1} \coloneqq \exp_{x_k}\mleft( -\eta \nabla^Q f(x_k) \mright) \qquad (k \in \NN)
\end{align}
for a given initial point~$x_0 \in \MM$.
\end{definition}

We remark that Hirai (without assuming differentiability of~$Q$) defined a $Q$-\emph{sub}gradient method and asked for natural conditions under which convergence can be guaranteed~\cite[(3.16), Quest.~3.21]{Hirai2025}.
For the special case of the $Q$-gradient method, we give such conditions in \cref{sub:descent convergence}.

Analogously to the discussion in \cref{sub:flow and euclidean}, in the Euclidean case, the update~\eqref{eq:descent} can be understood as a (dual) formulation of \emph{mirror descent}, justifying our terminology.
We recall the standard setting of mirror descent as follows (see, e.g., \cite[\S{}9]{Beck2017}).
Let~$E$ be a Euclidean space, and let~$h \colon E \to \overline\RR$ be a strongly convex and essentially smooth function, often called the \emph{distance-generating function} in this context.
The associated \emph{Bregman divergence} is defined by
\begin{equation}\label{eq:euclidean bregman}
    D_h(w\Vert z) \coloneqq h(w) - \left( h(z) + \diff h(z)[w-z] \right) \qquad (w \in \dom h, z \in \intr\dom h).
\end{equation}
Then the mirror descent procedure for the objective~$Q \colon E \to \overline{\RR}$ and step size~$\eta$ reads as follows:
\begin{equation*}
    z_{k+1} \coloneqq \argmin_{z \in E} \braces*{ Q(z_k) + \diff Q(z_k)[z - z_k] + \frac1{\eta} D_h(z\Vert z_k) }.
\end{equation*}
The first-order optimality condition for this minimization is $\diff Q(z_k) + \frac1{\eta} \left( \diff h(z_{k+1}) - \diff h(z_k) \right) = 0$, that is, $\diff h(z_{k+1}) = \diff h(z_k) - \eta \diff Q(z_k)$.
If we instead use the ``dual'' variables $x_k \coloneqq \diff h(z_k) \in E^*$, we arrive at the following iteration in ``gradient space'' (or rather differential space):
\begin{align}\label{eq:mirror in grad space}
    x_{k+1} = x_k - \eta \diff Q\mleft( \diff(h^*)(x_k) \mright),
\end{align}
where~$h^* \colon E^* \to \overline{\RR}$ denotes the Legendre--Fenchel conjugate of~$h$, which satisfies $\diff(h^*) = (\diff h)^{-1}$~(e.g., \cite[Cor.~E.1.4.4]{HL2001}).
Let $E=(\RR^n)^*$ and set $\MM \coloneqq E^* \cong \RR^n$.
Using \cref{eq:euclidean q-gradient}, we see that the iteration~\eqref{eq:mirror in grad space} coincides with the Hadamard mirror descent~\eqref{eq:descent} specialized to the Euclidean setting.

From \cite[Thm.~1]{BBT2017}, \cite[Thm.~3.1]{LFN2018} it is known that the step size in mirror descent can be chosen as~$\eta \le 1/L$, where $L$ is the so-called \emph{relative smoothness parameter} of $Q$ with respect to $h$:
\begin{equation}\label{eq:relsmooth-Bregman primal}
    D_{Q}( w \Vert z ) \leq L \cdot D_{h}(w \Vert z).
\end{equation}
Note that this equation simply states that $Lh - Q$ is convex.
In dual coordinates, we have the following expression for the Bregman divergence:
\begin{equation*}
    D_h(\diff(h^*)(y)\Vert \diff(h^*)(x)) = D_{h^*}(x\Vert y).
\end{equation*}
Hence, identifying~$f = h^*$, we can write \cref{eq:relsmooth-Bregman primal} as
\begin{align}\label{eq:relsmooth-Bregman dual}
    D_{Q}(\diff f(y)\Vert \diff f(x)) \leq L \cdot D_f(x\Vert y).
\end{align}
In \cref{sub:rel-smooth}, we will see that this reformulation generalizes appropriately to the Hadamard setting and implies convergence for Hadamard mirror descent.

\subsection{Convergence analysis}\label{sub:descent convergence}
In this section, we present a general analysis of the convergence of Hadamard mirror descent under the following hypothesis.

\begin{definition}[Step-size condition]
\label{def:step-size-condition}
We say that the sequence~\eqref{eq:descent} satisfies the step-size condition if
\begin{equation}\label{eq:step-size-condition}
    \eta\left(Q(\diff f(x_{k+1})) - Q(\diff f(x_k))\right) \le f(x_k) - f(x_{k+1}) + \diff f(x_k)\mleft[\exp_{x_k}^{-1}x_{k+1}\mright]
\qquad (\forall k \in \NN).
\end{equation}
\end{definition}

\noindent
In \cref{sub:rel-smooth} we will discuss a natural and easy-to-verify condition that implies \cref{eq:step-size-condition}.

The step-size condition allows us to proceed analogously to \cref{sub:flow convergence}.
First, we note that the condition immediately implies the following analog of \cref{lem:descentHirai}.

\begin{lemma}[$Q$ is nonincreasing]\label{lem:rel-largeF}
    Under the step-size condition~\eqref{eq:step-size-condition}, for all $k\in\NN$,
    \[
        Q(\diff f(x_{k+1})) \leq Q(\diff f(x_k)).
    \]
\end{lemma}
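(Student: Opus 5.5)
The plan is to show that the right-hand side of the step-size condition~\eqref{eq:step-size-condition} is nonpositive. Once that holds, dividing by~$\eta>0$ gives the claim immediately. Concretely, I will check that
\[
    f(x_k) - f(x_{k+1}) + \diff f(x_k)\mleft[\exp_{x_k}^{-1}x_{k+1}\mright] \le 0,
\]
which says exactly that the Riemannian Bregman divergence $D_f^{\MM}(x_{k+1}\Vert x_k) = f(x_{k+1}) - f(x_k) - \diff f(x_k)[\exp^{-1}_{x_k}(x_{k+1})]$ is nonnegative.

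This nonnegativity follows from geodesic convexity of~$f$. Let $\gamma(t)\coloneqq \exp_{x_k}(t V)$ with $V \coloneqq \exp_{x_k}^{-1}x_{k+1} = -\eta\nabla^Q f(x_k)$. This is the unique geodesic from $x_k$ to~$x_{k+1}$, and the exponential map is a global diffeomorphism on a Hadamard manifold. The function $g(t)\coloneqq f(\gamma(t))$ is convex and differentiable, with $g'(0) = \diff f(x_k)[V]$. Convexity then gives $g(1) \ge g(0) + g'(0)$, that is, $f(x_{k+1}) \ge f(x_k) + \diff f(x_k)[\exp_{x_k}^{-1}x_{k+1}]$. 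So the right-hand side of~\eqref{eq:step-size-condition} is $\le 0$, and hence $\eta\bigl(Q(\diff f(x_{k+1})) - Q(\diff f(x_k))\bigr)\le 0$.

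I do not expect any real obstacle here; this is the discrete analogue of \cref{lem:descentHirai}, where the Hessian's positive semidefiniteness is replaced by the first-order convexity inequality. The only point to note is that $Q(\diff f(x_k))$ is finite, because $\diff f(x_k)\in\intr\dom Q_{x_k}$ by the standing hypotheses of \cref{def:grad}. This ensures the difference in~\eqref{eq:step-size-condition} is meaningful and the rearrangement is legitimate.
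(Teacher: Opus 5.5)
Your proof is correct and matches the paper's argument: the right-hand side of \eqref{eq:step-size-condition} equals $-D_f^{\MM}(x_{k+1}\Vert x_k)$. This is nonpositive by the first-order convexity inequality for $f$ along the geodesic from $x_k$ to $x_{k+1}$, and dividing by $\eta>0$ gives the claim.
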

\begin{proof}
This follows from the step-size condition~\eqref{eq:step-size-condition}, whose right-hand side is nonpositive by the geodesic convexity of~$f$ (along the geodesic through $x_k$ and $x_{k+1}$).
\end{proof}

Next, we prove the analog of \cref{lem:continuous-smallf}.

\begin{lemma}[$f^\xi$ decreases by the gap of $Q$]\label{lem:rel-smallf}
    Under the step-size condition~\eqref{eq:step-size-condition}, for all $\xi \in C\MM^\infty$ and~$k\in\NN$,
    \[
        \frac1\eta \left( f^\xi(x_{k+1}) - f^\xi(x_k) \right) \leq Q(\xi) - Q(\diff f(x_{k+1})),
    \]
    where we recall that $f^\xi \coloneqq f + b^\xi$ as defined in \cref{eq:lyapunov}.
\end{lemma}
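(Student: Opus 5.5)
The plan is to mimic the continuous-time computation in \cref{lem:continuous-smallf}. The step-size condition~\eqref{eq:step-size-condition} controls the change of~$f$, and geodesic convexity of the Busemann function~$b^\xi$, applied \emph{at the new point}~$x_{k+1}$, controls the change of~$b^\xi$. The two bounds are then glued together by the Euclidean convexity of~$Q_{x_k}$ and holonomy invariance. Throughout, write $v \coloneqq \exp_{x_k}^{-1}(x_{k+1}) = -\eta\nabla^Q f(x_k) \in T_{x_k}\MM$.

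\emph{Step 1 (the $f$-part).} Rearranging \eqref{eq:step-size-condition} gives
\[ f(x_{k+1}) - f(x_k) \le \diff f(x_k)[v] - \eta\bigl(Q(\diff f(x_{k+1})) - Q(\diff f(x_k))\bigr). \]

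\emph{Step 2 (the Busemann part).} Since $b^\xi$ is geodesically convex and differentiable, the first-order inequality at~$x_{k+1}$ along the geodesic towards~$x_k$ reads
\[ b^\xi(x_k) \ge b^\xi(x_{k+1}) + \diff b^\xi(x_{k+1})[w], \qquad w \coloneqq \exp_{x_{k+1}}^{-1}(x_k). \]
Along the geodesic, $w = -\tau_{x_k\to x_{k+1}} v$. Hence
\[ b^\xi(x_{k+1}) - b^\xi(x_k) \le \bigl(\tau^*_{x_k\to x_{k+1}}\diff b^\xi(x_{k+1})\bigr)[v], \]
where $\tau^*_{x_k\to x_{k+1}}$ maps $T^*_{x_{k+1}}\MM$ to $T^*_{x_k}\MM$. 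Using convexity at $x_{k+1}$ rather than at $x_k$ is the key choice: it gives an upper bound on the increase of $b^\xi$, and the dual vector is $-\diff b^\xi(x_{k+1})$, whose $Q$-value is exactly~$Q(\xi)$.

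\emph{Step 3 (combine via convexity of $Q_{x_k}$).} Add Steps 1 and 2 and substitute $v = -\eta\,\diff Q_{x_k}(\diff f(x_k))$. The linear term becomes
\[ \eta\,\diff Q_{x_k}(\diff f(x_k))\bigl[\beta - \diff f(x_k)\bigr], \qquad \beta \coloneqq -\tau^*_{x_k\to x_{k+1}}\diff b^\xi(x_{k+1}) \in T^*_{x_k}\MM. \]
Because $\diff f(x_k)\in\intr\dom Q_{x_k}$ and $Q_{x_k}$ is convex, the gradient inequality bounds this by $\eta\bigl(Q_{x_k}(\beta) - Q_{x_k}(\diff f(x_k))\bigr)$. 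This bound is trivially true if $Q_{x_k}(\beta)=\infty$. By \ref{it:Q2}, $Q_{x_k}(\beta) = Q_{x_{k+1}}(-\diff b^\xi(x_{k+1}))$, and by \cref{lem:Q-busemann} this equals~$Q(\xi)$. Adding up, the terms $\pm\eta Q(\diff f(x_k))$ cancel and leave
\[ f^\xi(x_{k+1}) - f^\xi(x_k) \le \eta\bigl(Q(\xi) - Q(\diff f(x_{k+1}))\bigr), \]
which is the claim after dividing by~$\eta$.

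The only delicate step is the bookkeeping of parallel transports and their duals in Step 2. One must check that $\exp^{-1}_{x_{k+1}}(x_k) = -\tau_{x_k\to x_{k+1}}\exp^{-1}_{x_k}(x_{k+1})$, which holds because the velocity of a geodesic is parallel along it. One must also check that pairing with the transported covector is consistent with the convention $\tau^*_{y\to x}\colon T^*_x\MM\to T^*_y\MM$. Everything else is a direct transcription of the proof of \cref{lem:continuous-smallf}.
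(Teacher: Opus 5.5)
Your proposal is correct and follows the paper's proof essentially step for step. The step-size condition controls the change in $f$. First-order convexity of $b^\xi$ at $x_{k+1}$, combined with $\exp_{x_{k+1}}^{-1}x_k = -\tau_{x_k\to x_{k+1}}\exp_{x_k}^{-1}x_{k+1}$, controls the change in $b^\xi$. The gradient inequality for $Q_{x_k}$ and holonomy invariance then turn the combined linear term into $Q(\xi)-Q(\diff f(x_k))$, with the transport and duality bookkeeping handled exactly as you describe.
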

\begin{proof}
We compute:
\begin{align*}
&\qquad \frac1\eta \left( f^\xi(x_{k+1}) - f^\xi(x_k) \right) \\
&= \frac1\eta \left( f(x_{k+1}) - f(x_k) + b^\xi(x_{k+1}) - b^\xi(x_k) \right) \\
&\leq Q(\diff f(x_k)) - Q(\diff f(x_{k+1})) + \frac1\eta \diff f(x_k)\mleft[\exp_{x_k}^{-1}x_{k+1}\mright]
    - \frac1\eta \diff b^\xi(x_{k+1})\mleft[\exp_{x_{k+1}}^{-1}x_k\mright] \\
&= Q(\diff f(x_k)) - Q(\diff f(x_{k+1})) + \frac1\eta \diff f(x_k)\mleft[\exp_{x_k}^{-1}x_{k+1}\mright]
    + \frac1\eta \diff b^\xi(x_{k+1})\mleft[\tau_{x_k\to x_{k+1}} \exp_{x_k}^{-1}x_{k+1}\mright] \\
&= Q(\diff f(x_k)) - Q(\diff f(x_{k+1})) - \diff(Q_{x_k})(\diff f(x_k))\mleft[ \diff f(x_k) + \tau_{x_k\to x_{k+1}}^* \diff b^\xi(x_{k+1}) \mright] \\
&= Q_{x_k}(\diff f(x_k)) + \diff(Q_{x_k})(\diff f(x_k))\mleft[-\tau^*_{x_k\to x_{k+1}}\diff b^\xi(x_{k+1})-\diff f(x_k)\mright] - Q(\diff f(x_{k+1})) \\
&\leq Q_{x_k}\mleft(- \tau^*_{x_k\to x_{k+1}}\diff b^\xi(x_{k+1})\mright) - Q(\diff f(x_{k+1})) \\
&= Q(\xi) - Q(\diff f(x_{k+1})),
\end{align*}
where the first equality holds by substituting the definition of~$f^\xi$;
the subsequent inequality follows from the step-size condition~\eqref{eq:step-size-condition} and the geodesic convexity of~$b^\xi$;
the second equality holds because $\exp_w^{-1}(z) = -\tau_{z\to w}(\exp_z^{-1}(w))$ for all~$z,w \in \MM$;
the third equality follows from the definition of Hadamard mirror descent~\eqref{eq:descent}, i.e., $\exp_{x_k}^{-1}x_{k+1} = -\eta \nabla^Q f(x_k)$, and of the $Q$-gradient~\eqref{eq:grad};
next we rearrange and use the notation for the restrictions $Q_{x_k} \coloneqq Q|_{T_{x_k}^*\MM}$;
the last inequality holds by the Euclidean convexity of~$Q_{x_k}$;
and the last equality uses the parallel transport invariance of~$Q$ and the definition of~$Q(\xi)$ (see \cref{eq:Fxi}).
\end{proof}

We thus obtain the following convergence theorem analogously to \cref{thm:main-continuous}.

\begin{theorem}[Convergence of Hadamard mirror descent under the step-size condition]
\label{thm:main-discrete}
    Under the step-size condition~\eqref{eq:step-size-condition}, for all $\xi \in C\MM^\infty$ and $k\geq1$,
    \begin{equation}\label{eq:main discrete first}
        Q(\diff f(x_k)) - Q(\xi) \le \frac{f^\xi(x_0) - \inf_{x \in \MM} f^\xi(x)}{\eta k}.
    \end{equation}
    In particular, it holds that
    \begin{equation}\label{eq:main discrete second}
        \lim_{k \to \infty}Q(\diff f(x_k)) = \inf_{x \in \MM}Q(\diff f(x))
    \end{equation}
    (both sides can be $-\infty$).
\end{theorem}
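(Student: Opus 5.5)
The plan is to mirror the proof of \cref{thm:main-continuous}, replacing the integral with a telescoping sum. The two ingredients are \cref{lem:rel-largeF} (monotonicity of $Q(\diff f(x_k))$) and \cref{lem:rel-smallf} (a one-step decrease of the Lyapunov function $f^\xi$). Fix $\xi \in C\MM^\infty$. If $\inf f^\xi = -\infty$, inequality~\eqref{eq:main discrete first} is trivial, so I assume it is finite. I also assume $Q(\xi)<\infty$, since otherwise there is nothing to prove.

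Summing \cref{lem:rel-smallf} over $j=0,\dots,k-1$ gives
\[
f^\xi(x_0) - \inf_{x\in\MM} f^\xi(x) \;\ge\; f^\xi(x_0) - f^\xi(x_k) \;\ge\; \eta\sum_{j=1}^{k}\bigl(Q(\diff f(x_j)) - Q(\xi)\bigr).
\]
By \cref{lem:rel-largeF}, each term $Q(\diff f(x_j))$ with $j\le k$ is at least $Q(\diff f(x_k))$. The sum is therefore at least $\eta k\,(Q(\diff f(x_k)) - Q(\xi))$. Dividing by $\eta k$ yields \eqref{eq:main discrete first}.

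For \eqref{eq:main discrete second}, fix any $x\in\MM$ and take $\xi$ to be the class of the ray $t\mapsto \exp_x(t\nabla f(x))$, exactly as in the proof of \cref{thm:main-continuous}.
\begin{itemize}
\item By \cref{eq:diff busemann}, $\diff b^\xi(x) = -\diff f(x)$, so $\diff f^\xi(x)=0$.
\item Since $f^\xi$ is geodesically convex, this gives $\inf f^\xi = f^\xi(x) > -\infty$, so the numerator in \eqref{eq:main discrete first} is finite.
\item By \cref{lem:Q-busemann}, $Q(\xi) = Q(\diff f(x))$, which is finite because $\diff f(x)\in\intr\dom Q_x$.
\end{itemize}
Letting $k\to\infty$ in \eqref{eq:main discrete first} then gives $\limsup_k Q(\diff f(x_k)) \le Q(\diff f(x))$ for every $x$. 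The sequence is nonincreasing by \cref{lem:rel-largeF}, so its limit exists in $[-\infty,\infty)$. Each term is at least $\inf_{x\in\MM} Q(\diff f(x))$, and this gives equality.

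There is no real obstacle here; the substantive work is already in \cref{lem:rel-smallf}. The only care needed is bookkeeping of the index shift: the lemma bounds $f^\xi(x_{j+1})-f^\xi(x_j)$ by the gap at $x_{j+1}$, which is what makes the bound hold at the final iterate $x_k$ rather than $x_0$. One should also handle infinite values, such as $\inf f^\xi=-\infty$ or $Q(\xi)=\infty$, where the inequality holds trivially.
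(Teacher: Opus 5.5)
Your proposal is correct and follows the paper's proof essentially verbatim: you telescope \cref{lem:rel-smallf} over the iterates, lower-bound each gap by the last one via \cref{lem:rel-largeF}, and obtain the limit statement by choosing $\xi$ as the class of the ray $t\mapsto\exp_x(t\nabla f(x))$, exactly as in \cref{thm:main-continuous}. Your explicit treatment of the degenerate cases $\inf f^\xi=-\infty$ and $Q(\xi)=\infty$ is a harmless extra level of care that the paper leaves implicit.
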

\begin{proof}
    The inequality~\eqref{eq:main discrete first} is proved by the following calculation:
    \[
        f^\xi(x_0) - \inf_{x \in \MM} f^\xi(x) \ge f^\xi(x_0) - f^\xi(x_k) \ge \eta\sum_{i = 1}^k\left(Q(\diff f(x_i)) - Q(\xi)\right) \ge \eta k\left(Q(\diff f(x_k)) - Q(\xi)\right),
    \]
    where the second inequality follows from \cref{lem:rel-smallf} and a telescoping sum, and the last follows from \cref{lem:rel-largeF}.
    Then \cref{eq:main discrete second} is shown in the same manner as in \cref{thm:main-continuous}.
\end{proof}

\begin{remark}\label{rem:hypothesis}
    To obtain \cref{thm:main-discrete}, the hypothesis $\diff f(x) \in \intr\dom Q_x$ in \cref{def:grad} is not required for every $x \in \MM$, but only for the iterates $x = x_k$ of the Hadamard mirror descent, so that the iteration~\eqref{eq:descent} is well-defined.
    This fact will be used in \cref{sub:tensor_power_action}.
\end{remark}

\subsection{Relative smoothness and dual smoothness}\label{sub:rel-smooth}
In this section, we identify a natural \emph{relative smoothness} parameter that implies the step-size condition~\eqref{eq:step-size-condition} when it is finite.
The definition is a generalization of the eponymous parameter in the Euclidean setting.
We also show that relative smoothness is implied by smoothness of the objective~$Q$ and \emph{dual-smoothness} of the function~$f$; we define the latter notion below.
We first define a Riemannian generalization of the Bregman divergence and then define relative smoothness.

\begin{definition}[Riemannian Bregman divergence]
    Let $f \colon \MM \to \RR$ be a geodesically convex differentiable function.
    Then its \emph{Riemannian Bregman divergence} is defined as
    \begin{equation}\label{eq:riemannian bregman}
        D_f^{\MM}(x\Vert y) \coloneqq f(x) - \left(f(y) + \diff f(y)\mleft[\exp_{y}^{-1}x\mright]\right) \geq 0 \quad (x, y \in \MM).
    \end{equation}
\end{definition}

When $\MM$ is a Euclidean space, we recover the usual formulation of the Bregman divergence;
in this case we omit the superscript, as in~\eqref{eq:euclidean bregman}.

\begin{definition}[Relative smoothness]
    Let $f$ and $Q$ satisfy the hypotheses of \cref{def:grad}.
    Then the pair~$(f, Q)$ is said to be \emph{$L$-relatively smooth} if for all~$x, y \in \MM$,
    \begin{equation}\label{eq:relsmooth}
        D_{Q_x}(\tau_{x \to y}^*\diff f(y)\Vert\diff f(x)) \leq L\cdot D_f^{\MM}(x\Vert y),
    \end{equation}
    where the left-hand side uses the Euclidean Bregman divergence (defined as in \eqref{eq:euclidean bregman}) and the right-hand side the Riemannian Bregman divergence~\eqref{eq:riemannian bregman}.
    The number~$L>0$ is called the relative smoothness parameter.
\end{definition}

For Euclidean space $\MM=\RR^n$, the relative smoothness condition~\eqref{eq:relsmooth} reduces to~\eqref{eq:relsmooth-Bregman dual}, which in turn is equivalent to the standard definition~\eqref{eq:relsmooth-Bregman primal} of relative smoothness of~$Q$ with respect to the Legendre-Fenchel conjugate~$h=f^\ast$ of~$f$ (under the assumptions discussed therein).

The next result states that, in the general Hadamard setting, relative smoothness implies the step-size condition when applied to the Hadamard mirror descent sequence for any step size~$\eta\leq1/L$.
In particular, our convergence result applies.

\begin{proposition}[Relative smoothness implies step-size condition]\label{lem:relsmooth}~
\begin{enumerate}
    \item[(a)] The step-size condition~\eqref{eq:step-size-condition} is equivalent to~\eqref{eq:relsmooth} with $x = x_k$, $y = x_{k+1}$, and $L = 1/\eta > 0$.
    \item[(b)] If the pair $(f, Q)$ is $L$-relatively smooth, the step-size condition~\eqref{eq:step-size-condition} holds for all $\eta\leq 1/L$ and hence \cref{thm:main-discrete} applies.
\end{enumerate}
\end{proposition}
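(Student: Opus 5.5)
For part~(a), I will expand both sides of~\eqref{eq:relsmooth} at $x = x_k$, $y = x_{k+1}$ and compare them directly with~\eqref{eq:step-size-condition}.
By definition of the Euclidean Bregman divergence on $T^*_{x_k}\MM$,
\[
D_{Q_{x_k}}\bigl(\tau^*_{x_k\to x_{k+1}}\diff f(x_{k+1})\Vert \diff f(x_k)\bigr)
= Q_{x_k}\bigl(\tau^*_{x_k\to x_{k+1}}\diff f(x_{k+1})\bigr) - Q_{x_k}(\diff f(x_k)) - \diff(Q_{x_k})(\diff f(x_k))\bigl[\tau^*_{x_k\to x_{k+1}}\diff f(x_{k+1}) - \diff f(x_k)\bigr].
\]
By~\ref{it:Q2}, the first term equals $Q(\diff f(x_{k+1}))$.
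For the linear term, I will use $\diff(Q_{x_k})(\diff f(x_k)) = \nabla^Q f(x_k) = -\frac1\eta \exp_{x_k}^{-1}x_{k+1}$ from~\eqref{eq:descent}.
This turns the linear term into $\frac1\eta\bigl(\diff f(x_{k+1})[\tau_{x_k\to x_{k+1}}\exp_{x_k}^{-1}x_{k+1}] - \diff f(x_k)[\exp_{x_k}^{-1}x_{k+1}]\bigr)$.
Next I will apply the identity $\tau_{x_k\to x_{k+1}}\exp_{x_k}^{-1}x_{k+1} = -\exp_{x_{k+1}}^{-1}x_k$, already used in the proof of \cref{lem:rel-smallf}.
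With it, the left side becomes
\[
Q(\diff f(x_{k+1})) - Q(\diff f(x_k)) + \tfrac1\eta\bigl(\diff f(x_{k+1})[\exp_{x_{k+1}}^{-1}x_k] + \diff f(x_k)[\exp_{x_k}^{-1}x_{k+1}]\bigr).
\]
The right side with $L = 1/\eta$ is $\frac1\eta\bigl(f(x_k) - f(x_{k+1}) - \diff f(x_{k+1})[\exp_{x_{k+1}}^{-1}x_k]\bigr)$.
The terms $\diff f(x_{k+1})[\exp_{x_{k+1}}^{-1}x_k]$ cancel between the two sides.
Multiplying by $\eta>0$ then gives exactly~\eqref{eq:step-size-condition}, so the equivalence is a bookkeeping exercise.
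The only thing to be careful about is the sign and direction conventions of $\tau^*$ and $\tau$, namely that $\tau^*_{x\to y}$ maps $T^*_y\MM\to T^*_x\MM$ as set up in the preliminaries.

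For part~(b), I first note that $D_f^{\MM}(x\Vert y)\geq 0$ by geodesic convexity of $f$.
Hence $L$-relative smoothness implies~\eqref{eq:relsmooth} with any constant $L' \geq L$, in particular $L' = 1/\eta$ whenever $\eta \leq 1/L$.
Specializing to $x = x_k$, $y = x_{k+1}$ and invoking part~(a) yields~\eqref{eq:step-size-condition} for every $k$.
\Cref{thm:main-discrete} then applies.

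One subtlety is that~\eqref{eq:relsmooth} needs $\diff f(x_{k+1})$ transported into $\dom Q_{x_k}$ and $\diff f(x_k)\in\intr\dom Q_{x_k}$ for the Bregman divergence to make sense.
This is guaranteed by the standing hypothesis $\diff f(x)\in\intr\dom Q_x$ of \cref{def:grad} together with holonomy invariance.
If the left side were $+\infty$, the inequality would simply fail, so no issue arises under the assumed relative smoothness.
I expect no real obstacle; the main care point is the parallel-transport identity and sign tracking in part~(a).
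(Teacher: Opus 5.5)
Your route is the same as the paper's. You expand both Bregman divergences at $(x_k,x_{k+1})$, use $\exp_{x_k}^{-1}x_{k+1}=-\eta\nabla^Q f(x_k)$ together with $\exp^{-1}_w z=-\tau_{z\to w}\exp^{-1}_z w$, and cancel the common term. Part (b) then follows from monotonicity in $L$. The paper applies the parallel-transport identity on the $D_f^{\MM}$ side rather than the $D_{Q_{x_k}}$ side, which is purely cosmetic.

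There is, however, a sign slip in your displayed ``left side''. Substituting $\tau_{x_k\to x_{k+1}}\exp_{x_k}^{-1}x_{k+1}=-\exp_{x_{k+1}}^{-1}x_k$ into your (correct) linear term makes both summands negative, so the display should read
\[
Q(\diff f(x_{k+1})) - Q(\diff f(x_k)) - \tfrac1\eta\bigl(\diff f(x_{k+1})[\exp_{x_{k+1}}^{-1}x_k] + \diff f(x_k)[\exp_{x_k}^{-1}x_{k+1}]\bigr).
\]
As a sanity check, the bracket equals $-\bigl(D_f^{\MM}(x_k\Vert x_{k+1})+D_f^{\MM}(x_{k+1}\Vert x_k)\bigr)\le 0$. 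So with the minus sign the linear part of $D_{Q_{x_k}}$ is nonnegative, which mirrors monotonicity of gradients in the Euclidean case.

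With the $+$ sign you wrote, two things break. The $\diff f(x_{k+1})[\exp_{x_{k+1}}^{-1}x_k]$ terms appear with opposite signs on the two sides and do not cancel. The $\diff f(x_k)[\exp_{x_k}^{-1}x_{k+1}]$ term also ends up with the wrong sign relative to \eqref{eq:step-size-condition}. With the corrected sign, your cancellation and conclusion go through exactly as you state, and part (b) is fine as written.
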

\begin{proof}
(a)~We first compute the two Bregman divergences in~\eqref{eq:relsmooth} for~$x = x_k$ and~$y = x_{k+1}$:
\begin{align*}
    D^\MM_{f}(x_k \Vert x_{k+1})
    &= f(x_k) - f(x_{k+1}) - \diff f(x_{k+1})\mleft[\exp_{x_{k+1}}^{-1} x_k\mright]\\
    &= f(x_k) - f(x_{k+1}) + \mleft(\tau^*_{x_k\to x_{k+1}}\diff f(x_{k+1})\mright)\mleft[\exp_{x_k}^{-1} x_{k+1}\mright],
\end{align*}
because $\exp^{-1}_w(z) = -\tau_{z\to w}(\exp_z^{-1}(w))$ for all $z,w\in\MM$, while
\begin{align*}
&\qquad D_{Q_{x_k}}(\tau^*_{x_k \to x_{k+1}}\diff f(x_{k+1}) \Vert \diff f(x_k)) \\
&= Q_{x_k}(\tau^*_{x_k \to x_{k+1}}\diff f(x_{k+1})) - \left( Q_{x_k}(\diff f(x_k)) + \diff (Q_{x_k})(\diff f(x_k))\mleft[ \tau^*_{x_k \to x_{k+1}}\diff f(x_{k+1}) - \diff f(x_k) \mright] \right) \\
&= Q(\diff f(x_{k+1})) - Q(\diff f(x_k)) + \frac1\eta \left( \tau^*_{x_k \to x_{k+1}}\diff f(x_{k+1}) - \diff f(x_k) \right)\mleft[ \exp_{x_k}^{-1} x_{k+1} \mright],
\end{align*}
where the second equality uses the parallel transport invariance of~$Q$ as well as the definition of the Hadamard mirror descent~\eqref{eq:descent} and the $Q$-gradient~\eqref{eq:grad}: we have $\exp_{x_k}^{-1} x_{k+1} = -\eta \nabla^Q f(x_k) = -\eta \diff (Q_{x_k})(\diff f(x_k))$.
If we substitute the above into~\eqref{eq:relsmooth} and use $L=\frac1\eta$, the term proportional to~$(\tau^*_{x_k\to x_{k+1}}\diff f(x_{k+1}))[\exp_{x_k}^{-1}(x_{k+1})]$ cancels out and we obtain the step-size condition~\eqref{eq:step-size-condition}.

(b)~This follows from part~(a) and the observation that $L$-relative smoothness implies $L'$-relative smoothness for any~$L' \ge L$.
\end{proof}

In the Euclidean case, it is well-known that if $f$ is $L_f$-smooth (equivalently, the distance-ge\-ne\-ra\-ting function~$h=f^*$ is $L_f^{-1}$-strongly convex) and separately $Q$ is~$L_Q$-smooth, then the pair~$(f,Q)$ is relatively smooth with parameter~$L_f L_Q$.
This provides an easy-to-check criterion.
In the Hadamard setting, we replace the smoothness of~$f$ by the following condition.

\begin{definition}[Dual smoothness]\label{def:dual smooth}
    A geodesically convex differentiable function~$f \colon \MM \to \RR$ is called \emph{$L_f$-dual-smooth} if for all $x,y \in \MM$,
    \begin{equation}\label{eqn:dual-smooth}
        \frac12 \norm{ \tau^{\ast}_{x \to y}\diff f(y) - \diff f(x) }^2 \leq L_f \cdot D_f^\MM(x\Vert y),
    \end{equation}
    where $\norm{\cdot}$ denotes the norm of cotangent vectors induced by the Riemannian metric.
    The quantity $L_f>0$ is called the dual smoothness parameter of~$f$.
\end{definition}

\noindent
This definition should be contrasted with \emph{$L_f$-smoothness}: it can be defined as $D_f^{\MM}(y\Vert x) \leq \frac {L_f}2 \dist(x,y)^2$, or equivalently
\begin{align}\label{eq:smoother}
\norm{ \tau^{\ast}_{x \to y}\diff f(y) - \diff f(x) }
\leq L_f \cdot \dist(x,y)
    \quad (x, y \in \MM),
\end{align}
or simply as Euclidean $L_f$-smoothness along any unit-speed geodesic~$\gamma \colon \RR \to \MM$.
For twice differentiable~$f$, the latter means that
\begin{align}\label{eq:smooth}
    \frac{\diff^2}{\diff t^2} f(\gamma(t)) \leq L_f.
\end{align}
In the Euclidean case ($\MM = \RR^n$), smoothness and dual-smoothness coincide:
a function is $L_f$-smooth if and only if it is $L_f$-dual-smooth~\cite[Thm.~5.8]{Beck2017}.
In the Hadamard setting, these parameters are in general distinct:
while any $L_f$-dual-smooth function is $L_f$-smooth,%
\footnote{
To see this, suppose that $f$ is $L_f$-dual-smooth, and apply \cref{eqn:dual-smooth} twice, once with the roles of~$x$ and~$y$ reversed:
we obtain
$\norm{ \tau^{\ast}_{x \to y}\diff f(y) - \diff f(x) }^2
\leq L_f \cdot ( D_f^\MM(x\Vert y) + D_f^\MM(y\Vert x) )
= L_f \cdot (\tau^{\ast}_{x \to y}\diff f(y) - \diff f(x))[\exp_{x}^{-1}y]
\leq L_f \cdot \norm{ \tau^{\ast}_{x \to y}\diff f(y) - \diff f(x) } \cdot\dist(x,y)$.
Thus we have established \cref{eq:smoother}.}
the converse is not true in general (with the same parameter); we give an explicit counterexample in \cref{ex:countersmooth}.
In particular, unlike $L$-smoothness, $L$-dual-smoothness of a function does not in general follow from $L$-dual-smoothness of its restrictions to geodesics (since the latter is equivalent to $L$-smoothness).

Using this notion, we can establish a Hadamard generalization of the Euclidean criterion.
It will be applied to the problem of convex optimization on moment polytopes (\cref{sec:momentpolytope,sec:applications}).

\begin{proposition}
\label{prop:dual-to-relative}
    If $f$ is $L_f$-dual-smooth and $Q$ is $L_Q$-smooth, then $(f, Q)$ is $L_f L_Q$-relatively smooth.
    In particular, for any $\eta\leq 1/(L_f L_Q)$, \cref{thm:main-discrete} applies.
\end{proposition}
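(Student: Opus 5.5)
The plan is to chain two elementary inequalities at a fixed pair $x,y\in\MM$. Put $\alpha\coloneqq\diff f(x)$ and $\beta\coloneqq\tau^*_{x\to y}\diff f(y)$; both lie in the single cotangent space $T^*_x\MM$, so the left-hand side of \eqref{eq:relsmooth} is an ordinary Euclidean Bregman divergence $D_{Q_x}(\beta\Vert\alpha)$ on a finite-dimensional inner product space. Since $Q$ is $L_Q$-smooth, $Q_x$ is finite-valued and has $L_Q$-Lipschitz differential with respect to the dual norm $\norm{\cdot}^*_x$. The Euclidean descent lemma then gives
\[
D_{Q_x}(\beta\Vert\alpha) = Q_x(\beta)-Q_x(\alpha)-\diff Q_x(\alpha)[\beta-\alpha] \le \frac{L_Q}{2}\norm{\beta-\alpha}^2 .
\]
To prove it, I would integrate $\diff Q_x$ along the segment $\alpha+s(\beta-\alpha)$ for $s\in[0,1]$ and use Cauchy--Schwarz together with the Lipschitz bound. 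In this step the pairing between $T_x\MM$ and $T^*_x\MM$ is exactly the duality of the norms $\norm{\cdot}_x$ and $\norm{\cdot}_x^*$.

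Next I would apply the $L_f$-dual-smoothness of $f$ from \cref{def:dual smooth}. It states precisely that $\tfrac12\norm{\beta-\alpha}^2\le L_f\, D_f^\MM(x\Vert y)$. Combining the two bounds yields
\[
D_{Q_x}(\tau^*_{x\to y}\diff f(y)\Vert\diff f(x))\le L_QL_f\,D_f^\MM(x\Vert y),
\]
which is $L_fL_Q$-relative smoothness \eqref{eq:relsmooth}. The hypotheses of \cref{def:grad} hold automatically here, because $\dom Q_x=T^*_x\MM$.

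The final claim then follows directly from \cref{lem:relsmooth}~(b): every $\eta\le 1/(L_fL_Q)$ satisfies the step-size condition, so \cref{thm:main-discrete} applies. There is no real obstacle in this argument. The only point needing care is the first step: the transported covector $\beta$ must sit in the same fiber as $\alpha$, so that the Euclidean descent lemma is used entirely inside $T^*_x\MM$ with the norm induced by the metric at $x$. This is exactly the norm in which both the smoothness of $Q$ and the dual-smoothness of $f$ are measured.
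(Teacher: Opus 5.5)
Your proposal is correct and is the paper's proof. The paper applies the descent-lemma bound $D_{Q_x}(\beta\Vert\alpha)\le\frac{L_Q}{2}\norm{\beta-\alpha}^2$ in the single cotangent space $T^*_x\MM$, chains it with the dual-smoothness inequality, and then invokes \cref{lem:relsmooth}~(b) for the step-size claim, exactly as you do.
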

\begin{proof}
    This is now immediate:
    \begin{equation*}
       D_{Q_x}(\tau^{\ast}_{x\rightarrow y}\diff f(y)\Vert \diff f(x) ) \leq \frac{L_Q}2\|\tau^*_{x\to y}\diff f(y) - \diff f(x)\|^2 \leq L_fL_Q \cdot D_f^{\MM}(x\Vert y),
    \end{equation*}
    where the first inequality follows from the $L_Q$-smoothness of the Euclidean convex function~$Q_x$, and the second inequality follows from the $L_f$-dual-smoothness of~$f$.
\end{proof}

\section{Convex optimization on moment polytopes}\label{sec:momentpolytope}
In this section, we explore a central application of Hadamard mirror descent---convex optimization on moment polytopes.
In \cref{sub:momentpolytope-general}, we recall the definition of moment polytopes and their characterization in terms of Kempf--Ness functions.
In \cref{sub:convex_optimization_on_moment_polytopes}, after recalling Hirai's approach via holonomy-invariant convex objectives, we give general $\mathcal{O}(k^{-1})$ convergence results for convex optimization on moment polytopes using Hadamard mirror descent; we also state analogous results for the mirror flow.
In \cref{sub:dualsmooth}, we show that Kempf--Ness functions are dual-smooth, with a parameter that is naturally computed from the group action.
We also comment on the relation to smoothness, sharpening bounds from the literature.
The remaining subsections are devoted to specific actions.
In \cref{sub:entanglementpolytope}, we consider the tensor action and the associated \emph{entanglement polytopes}, which underlie many applications.
In \cref{sub:tensor_power_action}, we discuss the tensor power action and explain how this case can be reduced to the tensor action.
In \cref{sub:GLaction}, we consider arbitrary homogeneous polynomial actions of~$\GL(n)$.
In all these settings, the convergence of Hadamard mirror descent is polynomial in the input bit-length.

\subsection{Moment polytopes and Kempf--Ness functions}\label{sub:momentpolytope-general}
Let~$G \subseteq \GL(n)$ be a connected self-adjoint Zariski-closed subgroup, as in \cref{sub:selfconjugate,sub:objective selfadjoint}.
Let~$P \coloneqq G \cap \PD(n)$ denote the associated Hadamard symmetric space, $K \coloneqq G \cap \U(n)$ the maximal compact subgroup, $\bbH \coloneqq i\Lie K = \Lie G \cap \Herm(n)$ the Hermitian part of the Lie algebra (which is naturally identified with the tangent space at the identity $\bbH \cong T_IP$), and let $C \subseteq \bbH$ denote a Weyl chamber; recall the map $s \colon \bbH \to C$ such that $\{u X u^\dagger : u \in K\} \cap C = \{s(X)\}$ for all~$X \in \bbH$.

Let $\pi \colon G \to \GL(\VV)$ be a rational representation of~$G$ on a finite-dimensional complex vector space~$\VV$.
We may equip~$\VV$ with a $K$-invariant inner product $\langle\cdot,\cdot\rangle$, so $\pi(K) \subseteq \U(\VV)$;
we denote the associated norm by $\norm\cdot \coloneqq \sqrt{\langle\cdot,\cdot\rangle}$.
We write $g \cdot v \coloneqq \pi(g)v$ for~$g \in G$ and~$v \in \VV$, so $G \cdot v$ denotes the orbit of~$v$.
Let $\Pi \colon \Lie G \to \Lin(\VV)$ denote the corresponding Lie algebra representation:
$\Pi(X)\coloneqq\partial_{t=0} \pi(e^{tX})$ for $X\in\Lie G$.
We observe that $\pi(g^\dagger) = \pi(g)^\dagger$ for $g \in G$ and $\Pi(\bbH) \subseteq \Herm(\VV)$ because $\pi(K) \subseteq \U(\VV)$.
We can then define the moment map and the moment polytope.

\begin{definition}[Moment map]
For a group and representation as above, the \emph{moment map} is defined as
\begin{equation}\label{eq:moment map}
    \mu \colon \VV\setminus\{0\}\rightarrow \bbH, \quad \tr\mleft[ \mu(v) X \mright] = \frac{\langle v, \Pi(X) v\rangle}{\langle v,v\rangle}
    \quad \forall X\in \bbH.
\end{equation}
We remark that there are several equivalent variants of this definition in the literature.
For example, the domain can also be taken as the projective space~$\bbP(\VV)$, since the moment map depends on its argument~$v$ only through the corresponding ray~$[v]$.
It is also natural to use the dual space $\bbH^*$ as the codomain.
\end{definition}

The moment map is $K$-equivariant in the following way: for all $v \in \VV\setminus\{0\}$ and $u \in K$, we~have
\begin{align}\label{eq:mu equivariance}
    \mu(u \cdot v) = u \mu(v) u^\dagger.
\end{align}
This holds because $K$ acts unitarily and we have $\Pi(u^\dagger X u) = \pi(u)^\dagger \Pi(X) \pi(u)$ for all~$X \in \bbH$.

In general, the image of the moment map need not be convex (if the group is non-commutative).
However, one obtains a polytope by intersecting the image with the Weyl chamber~$C$; equivalently, we may apply the map~$s \colon \bbH \to C$.
More precisely, one can obtain a convex polytope for any $G$-invariant irreducible projective subvariety.
For our purposes it suffices to consider orbit closures.

\begin{definition}[Moment polytope]
For a group and representation as above, we define the \emph{moment polytope}~$\Delta(v)$ of a vector~$v \in \VV\setminus\{0\}$,
\begin{equation}\label{eq:momentpolytope}
\begin{aligned}
    \Delta(v)
&\coloneqq \left\{ \mu(w) \mid [w] \in \overline{G \cdot [v]} \right\} \cap C
= \left\{ s(\mu(w)) \mid [w] \in \overline{G \cdot [v]} \right\} \\
&= \overline{ \left\{ s(\mu(g \cdot v)) \mid g \in G \right\} }
= \overline{ \left\{ s(\mu(x \cdot v)) \mid x \in P \right\} } \subseteq C.
\end{aligned}
\end{equation}
This is also called the moment polytope of the orbit closure in projective space.
As above, we write $[w] \in \bbP(\VV)$ for the point in projective space corresponding to a nonzero vector~$w\in\VV$.
The first two characterizations are equivalent by equivariance~\eqref{eq:mu equivariance}, the third follows from the second by continuity and the compactness of orbit closures in projective space, and the last follows using the polar decomposition, equivariance~\eqref{eq:mu equivariance}, and the $K$-invariance of~$s$:
it holds that $s(\mu(g \cdot v)) = s(u \mu(x \cdot v) u^\dagger) = s(\mu(x \cdot v))$ where~$g = ux$ is the polar decomposition ($u \in K, x \in P$).
\end{definition}

Remarkably, $\Delta(v)$ is indeed a convex polytope:

\begin{theorem}[\cite{NessMumford1984,Kirwan1984}]
    For any $v\in\VV\setminus\{0\}$, $\Delta(v)$ is a rational convex polytope.
\end{theorem}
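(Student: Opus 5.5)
The plan is to prove the statement through the standard representation-theoretic description of $\Delta(v)$ (Mumford, Brion, Ness--Mumford), rather than through the optimization machinery of this paper. Let $X \coloneqq \overline{G\cdot[v]} \subseteq \bbP(\VV)$ and let $R = \bigoplus_{k\ge0} R_k$ be its homogeneous coordinate ring. Each $R_k$ is a finite-dimensional $G$-representation; it is a quotient of $\Sym^k(\VV^*)$ with $G$ acting through~$\pi$. Fix a Borel subgroup $B = TU$ compatible with the Weyl chamber~$C$. I would prove
\[
    \Delta(v) \;=\; \overline{\Bigl\{ \tfrac{\lambda}{k} \;\Bigm|\; k\ge1,\ \text{$R_k$ contains a $U$-invariant of $T$-weight $\lambda$} \Bigr\}},
\]
with the weights identified with points of~$C$ through the trace pairing and the sign/dual convention fixed once at the start. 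Convexity and rationality are then read off from the right-hand side.

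\emph{Step 1 (the weight set is a finitely generated semigroup).} Let $S \coloneqq \{(k,\lambda) : (R_k)^U_\lambda \neq 0\}$. The ring $R$ is an integral domain because $X$ is irreducible, being the closure of the orbit of a connected group. If $f_1 \in (R_k)^U_\lambda$ and $f_2\in (R_l)^U_{\lambda'}$ are nonzero, then $f_1f_2 \neq 0$ is $U$-invariant of weight $\lambda+\lambda'$, so $S$ is a semigroup. By Hadzhiev--Grosshans, $R^U$ is a finitely generated algebra. Its multihomogeneous generators therefore generate $S$, and the cone over $S$ is a rational polyhedral cone. Slicing that cone at $k=1$ shows that the set $\{\lambda/k\}$ above is (dense in) a rational convex polytope.

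\emph{Step 2 ($\supseteq$).} Suppose $(R_k)^U_\lambda\ne0$. Then, after the appropriate dualization, some $G$-equivariant linear form on $R_k$ pairs nontrivially with $\lambda$-highest-weight vectors, and $\lambda/k$ is a ``shifted semistable'' point for $X$. I would use the shifting trick: replace $v$ by $v^{\ot k}\ot w_{\lambda^*}$ in $\Sym^k\VV \ot V_{\lambda}^*$. The nonvanishing of a $G$-invariant on the orbit closure of this vector means it lies outside the null cone. The Kempf--Ness theorem then gives a point of the orbit closure with vanishing moment map. Since moment maps add under tensor products, this translates into a point $[w]\in X$ with $s(\mu(w)) = \lambda/k$ (up to the coadjoint orbit of the highest-weight vector). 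Closedness of $\Delta(v)$ then gives the inclusion.

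\emph{Step 3 ($\subseteq$) and the main obstacle.} For the reverse inclusion, I would show that every $p = s(\mu(g\cdot v))$ is a limit of such $\lambda/k$. This step carries the real content. It again goes through the shifting trick plus Kempf--Ness, in the direction ``zero of the shifted moment map implies semistability, hence existence of an invariant''. The complication is that a general $p$ is real, not rational. I would first treat rational $p$ lying in the image. To pass to general $p$, I would use the fact that the set of rational points of the image is dense, via a small perturbation of $p$ inside the Weyl chamber combined with openness of semistability. An alternative route for this density is the Hilbert--Mumford criterion: if a rational point $q$ is not in the right-hand polytope, a separating rational hyperplane yields a one-parameter subgroup destabilizing the shifted vector. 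In terms of the paper's language, this means $f_v + b^\xi$ is unbounded below, so $\Phi(X)=\langle\xi,X\rangle$ strictly separates $q$ from all moment-map values, contradicting $q\in\Delta(v)$. I expect the careful bookkeeping of this separation argument, with its dual/sign conventions and the passage to closure, to be the main difficulty.
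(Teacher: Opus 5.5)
The paper gives no proof of this theorem; it quotes Ness--Mumford and Kirwan. Your Steps~1 and~2, and the rational case of Step~3, are the standard Mumford--Brion argument and are essentially sound. Step~2 needs two small repairs.
\begin{itemize}
\item A nonzero element of $(R_k\otimes V_\lambda^*)^G$ is only guaranteed to be nonzero at $(g\cdot v)^{\otimes k}\otimes w_{\lambda^*}$ for some (generic) $g\in G$, not at $g=I$. This is harmless, since $\Delta(g\cdot v)=\Delta(v)$.
\item To use additivity of moment maps at the Kempf--Ness point, you should note that the orbit closure in $\Sym^k\VV\otimes V_\lambda^*$ consists of vectors $a^{\otimes k}\otimes b$ with $[a]\in X$ and $[b]\in G\cdot[w_{\lambda^*}]$. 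This holds because the Segre--Veronese image of $X\times G\cdot[w_{\lambda^*}]$ is closed.
\end{itemize}

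The genuine gap is the passage to irrational points in Step~3. Once Steps~1--2 and the rational case are in place, the inclusion $\Delta(v)\subseteq\overline{\{\lambda/k\}}$ is \emph{equivalent} to rational points being dense in $\Delta(v)$. So this density is the entire remaining content of the theorem and cannot be quoted as a fact. Indeed, a closed subset of $C$ need not be the closure of its rational points: think of a segment leaving a rational point in an irrational direction.

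Your first justification does not supply the density. Openness of semistability is openness in the point of $X$ for a fixed shift, not openness in the shift $p$. The set of admissible shifts is $\Delta(v)$ itself, which is closed and often has empty interior in $C$ (it is a single point when $X$ is a closed orbit). So perturbing $p$ stays inside $\Delta(v)$ only along its affine hull, and rationality of that hull is exactly what is to be proved.

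Your second route misses the target in two ways.
\begin{itemize}
\item It treats a \emph{rational} $q$, which your Kempf--Ness/Hilbert argument already covers. For irrational $p$ there is no linearized shifted vector to which Hilbert--Mumford could be applied.
\item Its last sentence conflates two roles of $\xi$. Unboundedness of $f_v+b^\xi$ means $\log\capacity_\xi(v)=-\infty$, i.e., the \emph{shift} $\xi$ is inadmissible. It does not make $X\mapsto\langle\xi,X\rangle$ a separating functional for the moment map image. That linear function is also not $K$-invariant, so it is not an objective in the paper's sense. Finally, a Hilbert--Mumford destabilizing one-parameter subgroup depends on the orbit point, so it yields no uniform separating hyperplane.
\end{itemize}

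You need an additional input in which the integrality of weights constrains irrational points. One route goes through the capacity.
\begin{itemize}
\item Write $x=n^\dagger e^{H}n$ with $n$ in a suitable maximal unipotent subgroup and $H\in\AA$. In these coordinates $b^p(x)=-\langle p,H\rangle$ for all $p\in C$. Hence $p\mapsto\log\capacity_p(w)$ is concave on $C$, and its finiteness domain $D_w$ is convex.
\item Show that $D_w$ is a rational polytope (the Borel polytope, which is what the footnote preceding \cref{prop:tensor_capacitylowerbound} alludes to).
\item Show that $D_w\subseteq\Delta(v)$; this holds because a bounded-below shifted Kempf--Ness function has arbitrarily small gradients.
\end{itemize}
Then for $w\in G\cdot v$ with $p=\mu(w)\in C$, you have $p\in D_w$, since $I$ is a critical point of $f^p_w$. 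So $p$ is a limit of rational points of $D_w$, and these lie in $\overline{\{\lambda/k\}}$ by your rational argument. Alternatively, fall back on the symplectic proofs of Kirwan or Sjamaar.
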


The moment map can be interpreted as the gradient of a geodesically convex function, called the Kempf--Ness function.
This is an important insight that is well-known in the literature~\cite{Kempf-78,NessMumford1984,Kirwan1984}.

\begin{definition}[Kempf--Ness function]
The Kempf--Ness function of a vector $v \in \VV\setminus\{0\}$ is defined as
\begin{align}\label{eq:kn}
    f_v \colon P \to \RR, \quad f_v(x) \coloneqq \log\langle v, x\cdot v\rangle.
\end{align}
\end{definition}

The following properties are well-known (see, e.g., \cite{BFGOWW2019,HiraiSakabe2024}).

\begin{lemma}[Gradient vs.\ moment map]\label{lem:KNderivative}
    Let $v \in \VV\setminus\{0\}$. Then:
    \begin{enumerate}
        \item[(a)] The Kempf--Ness function $f_v$ is geodesically convex.
        \item[(b)] The differential of~$f_v$ is directly related to the moment map:
        for every $x\in P$, $\tau^*_{I \to x}\diff f_v(x)$ is identified with~$\mu(x^{1/2}\cdot v)$ by the Hilbert--Schmidt inner product on~$T_I P \cong \bbH$ (the Riemannian metric at~$I \in P$).
        In other words, for every $X \in \bbH$ we have
        \[
            \diff f_v(x)[\tau_{I \to x} X]
        = \tau^*_{I \to x}\diff f_v(x)[X]
        = \tr[ \mu(x^{1/2}\cdot v) X]
        = \frac {\braket{w, \Pi(X) w}} {\braket{w, w}},
        \]
        where $w\coloneqq x^{1/2}\cdot v$.
    \end{enumerate}
\end{lemma}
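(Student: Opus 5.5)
The proof of (b) comes first, by a direct computation along the geodesic through $x$; part (a) then follows from the same computation using a second derivative. Fix $x\in P$ and $X\in\bbH$. Since $P$ is totally geodesic in $\PD(n)$, the geodesic through $x$ with initial velocity $\tau_{I\to x}X = x^{1/2}Xx^{1/2}$ is $\gamma(t)=x^{1/2}e^{tX}x^{1/2}$, by the formula for $\exp_x$ in \cref{sub:pd}. This uses that $\exp_x(Z)=x^{1/2}e^{x^{-1/2}Zx^{-1/2}}x^{1/2}$ and that $\gamma(t)\in P$ because $G$ is a group containing $x^{1/2}$ and $e^{tX}$. (Here $x^{1/2}\in P$ because $P$ is totally geodesic and contains the geodesic midpoint of $I$ and $x$.)

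Since $\pi(g^\dagger)=\pi(g)^\dagger$ and $x^{1/2}$ is Hermitian, $\pi(x^{1/2})$ is self-adjoint. With $w\coloneqq x^{1/2}\cdot v$ this gives
\[
f_v(\gamma(t))=\log\langle v,\pi(x^{1/2})\pi(e^{tX})\pi(x^{1/2})v\rangle=\log\langle w,e^{t\Pi(X)}w\rangle,
\]
where we use $\pi(e^{tX})=e^{t\Pi(X)}$. Differentiating at $t=0$ yields $\diff f_v(x)[\tau_{I\to x}X]=\langle w,\Pi(X)w\rangle/\langle w,w\rangle$, which equals $\tr[\mu(w)X]$ by \eqref{eq:moment map}. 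The remaining equality $\diff f_v(x)[\tau_{I\to x}X]=\tau^*_{I\to x}\diff f_v(x)[X]$ is just the definition of the dual map, so this proves (b).

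For (a), every geodesic in $P$ has the form $t\mapsto\gamma(t)$ above, up to affine reparametrization. So it suffices to show that $h(t)=\log\langle w,e^{t\Pi(X)}w\rangle$ is convex. Because $\Pi(X)\in\Herm(\VV)$, we can diagonalize it in an orthonormal eigenbasis with real eigenvalues $\lambda_j$ and write $w=\sum_j w_j e_j$. Then $h(t)=\log\sum_j |w_j|^2 e^{t\lambda_j}$, a log-sum-exp of affine functions, which is convex (for instance by Hölder or Cauchy--Schwarz, since $h''$ is the variance of $\lambda$ under the weights $|w_j|^2e^{t\lambda_j}$). This proves (a).

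There is no real obstacle here. The only points requiring care are that $\pi(x^{1/2})$ is self-adjoint and that the curve $\gamma$ is indeed the geodesic with the transported velocity, and both are covered by facts recalled in \cref{sub:pd,sub:selfconjugate}.
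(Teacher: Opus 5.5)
Your proof is correct. The paper does not prove this lemma itself; it cites it as well known \cite{BFGOWW2019,HiraiSakabe2024}, and your argument is the standard one: restrict $f_v$ to the geodesic $t\mapsto x^{1/2}e^{tX}x^{1/2}$ to get $\log\langle w,e^{t\Pi(X)}w\rangle$, differentiate at $t=0$ for (b), and recognize a log-sum-exp function for (a). This is also the same computation the paper uses in the introduction and in the proofs of \cref{prop:kn-smooth,thm:kn-dual-smooth}.
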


\subsection{Convex optimization on moment polytopes}\label{sub:convex_optimization_on_moment_polytopes}
We now turn to convex optimization on the moment polytope~$\Delta(v)$ of a vector~$v\in\VV\setminus\{0\}$.
Let~$\phi \colon \AA \to \overline{\RR}$ be a function that satisfies \ref{it:F1}--\ref{it:F3}.
We assume that $s(\mu(g\cdot v)) \in \intr\dom\phi$ for all~$g \in G$.
We consider the minimization of $\phi$ on the moment polytope:
\begin{equation}\label{eq:min_over_momentpolytope}
    \minproblem{\phi(p)}{p \in \Delta(v)}.
\end{equation}
Note that there always exists a minimizer since $\phi$ is l.s.c.\ and $\Delta(v)$ is compact.

We may extend~$\phi$ to an objective~$\Phi \coloneqq \phi \circ s \colon \bbH \to \overline{\RR}$.
Then $\Phi$ satisfies \ref{it:Phi1}--\ref{it:Phi3} by \cref{prop:FandQ}, as well as $\mu(g\cdot v) \in \intr\dom\Phi$ for all $g \in G$.
This extension is natural: $\Phi$ is a convex function on the entire moment map image, yet its $K$-invariance implies that it only depends on the moment polytope.
Importantly, as pointed out by Hirai, the minimization problem~\eqref{eq:min_over_momentpolytope} can be reduced to a problem of the general form~\eqref{eq:general_minimization}~\cite{Hirai2025}.%
\footnote{Hirai stated this reduction in the case where $G$ is a product of $\GL(n)$'s, but his discussion readily generalizes to any connected self-adjoint subgroup $G$.}
Let~$Q \colon T^*P \to \overline{\RR}$ denote the holonomy-invariant convex objective corresponding to~$\phi$ and~$\Phi$.
Then we have, for any $x \in P$,
\begin{equation}\label{eq:Q_momentmap}
    Q(\diff f_v(x))
=   Q_I(\tau_{I\to x}^* \diff f_v(x))
=   \Phi(\mu(x^{1/2} \cdot v))
=   \phi(s(\mu(x^{1/2} \cdot v))),
\end{equation}
where the first equality holds by the invariance of~$Q$ under parallel transport,
the second equality follows from~\cref{lem:KNderivative} and the fact that~$Q_I$ is identified with~$\Phi$ using the Riemannian metric at~$I \in P$,
and the last equality follows from the definition of~$\Phi$ in terms of~$\phi$.
By taking the infimum over~$x \in P$ and using the last expression in \cref{eq:momentpolytope}, we obtain the following fundamental relation:

\begin{proposition}
Let $\phi\colon \AA \to \overline\RR$ be a function that satisfies~\ref{it:F1}--\ref{it:F3}, with associated holonomy-invariant convex objective~$Q\colon T^*P\to\overline{\RR}$.
Let $v \in\VV\setminus\{0\}$ be such that~$s(\mu(g\cdot v)) \in \intr\dom\phi$ for all~$g \in G$.
Then:
\[
    \inf_{x \in P}Q(\diff f_v(x)) = \min_{p \in \Delta(v)}\phi(p).
\]
\end{proposition}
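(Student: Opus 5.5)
The plan is to take the pointwise identity \eqref{eq:Q_momentmap}, namely $Q(\diff f_v(x)) = \phi(s(\mu(x^{1/2}\cdot v)))$ for every $x \in P$, and turn it into equality of the infimum and the minimum in two steps: first identify the set of values attained on the left with $\phi$ evaluated on a dense subset of $\Delta(v)$, then pass to the closure.

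\textbf{Step 1 (reparametrize).} The map $x \mapsto x^{1/2}$ is a bijection of $P$ onto itself. Hence
\[
\inf_{x\in P} Q(\diff f_v(x)) = \inf_{y\in P} \phi(s(\mu(y\cdot v))).
\]
By the last characterization in \eqref{eq:momentpolytope}, the set $S \coloneqq \{s(\mu(y\cdot v)) \mid y\in P\}$ is dense in $\Delta(v)$, so the right-hand side equals $\inf_{p\in S}\phi(p)$. Since $S \subseteq \Delta(v)$, this immediately gives
\[
\inf_{p\in S}\phi(p) \ge \min_{p\in\Delta(v)}\phi(p).
\]
The minimum exists because $\phi$ is l.s.c.\ and $\Delta(v)$ is compact.

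\textbf{Step 2 (reverse inequality).} This is the main obstacle, since lower semicontinuity pushes in the wrong direction: a minimizer $p^*$ on the boundary of $S$ could have $\phi(p^*)$ strictly below the limit of $\phi$ along approximating points of $S$. I would exploit convexity of both $\phi$ and $\Delta(v)$.

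Since $\Delta(v)$ is a convex polytope, its relative interior $\mathrm{ri}\,\Delta(v)$ is contained in $S$. To justify this, note that $S$ is convex and dense in $\Delta(v)$, and a dense convex subset of a polytope contains its relative interior. Alternatively, one could invoke the known fact that $s(\mu(G\cdot v))$ contains the relative interior of the moment polytope. By the hypothesis $S \subseteq \intr\dom\phi$, the function $\phi$ is finite on $\mathrm{ri}\,\Delta(v)$.

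Now pick any $q \in \mathrm{ri}\,\Delta(v)$ and set $p_t \coloneqq (1-t)p^* + t q$. For $t\in(0,1]$ we have $p_t \in \mathrm{ri}\,\Delta(v) \subseteq S$. By convexity,
\[
\phi(p_t) \le (1-t)\phi(p^*) + t\,\phi(q) \longrightarrow \phi(p^*) \quad (t \to 0),
\]
so $\inf_S \phi \le \phi(p^*)$. This holds even if $\phi(p^*) = \infty$, and note that $\phi(p^*) < \infty$ anyway, because $\phi$ is finite on the nonempty set $S$.

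\textbf{Fallback.} If showing that $S$ is convex is problematic, I would instead work directly with the convex closed set $\Delta(v)$. The set $S$ is dense in $\Delta(v)$ and hence has the same affine hull, and one can show that $\mathrm{ri}\,\Delta(v) \subseteq S$ using openness of the $G$-orbit within its closure. The segment argument above then goes through unchanged. Combining Step 1 and Step 2 yields the stated equality.
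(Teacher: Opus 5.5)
Step~1 is fine, and so is the segment argument at the end of Step~2. The gap is that Step~2 rests entirely on the inclusion $\mathrm{ri}\,\Delta(v)\subseteq S$, and you never establish it.

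You derive the inclusion from convexity of $S=\{s(\mu(y\cdot v))\mid y\in P\}$, but you give no argument for that convexity. The Kirwan--Ness--Mumford theorem concerns the image of the compact orbit closure, i.e.\ $\Delta(v)$ itself. By contrast, $S$ is the image of the non-closed orbit, and its convexity does not follow from that theorem. The ``known fact'' you invoke instead comes without a reference.

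The fallback via openness of $G\cdot[v]$ in its closure does not work as stated either. Lower-dimensional boundary orbits can have full-dimensional moment polytopes that meet $\mathrm{ri}\,\Delta(v)$: for three qubits, the W-class polytope is full-dimensional inside the GHZ-class polytope. So openness alone does not show that interior points are attained by the orbit itself.

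You do not actually need $\mathrm{ri}\,\Delta(v)\subseteq S$. The hypothesis $S\subseteq\intr\dom\phi$ is enough, and this is the route the paper takes (citing Rockafellar, Cor.~7.3.2 and Thm.~10.1):
\begin{enumerate}
\item Since $\intr\dom\phi$ is convex and contains $S$, you get $\mathrm{conv}\,S\subseteq\intr\dom\phi$.
\item Since $S$ is dense in the convex set $\Delta(v)$, you have $\overline{\mathrm{conv}\,S}=\Delta(v)$. Hence $\mathrm{ri}\,\Delta(v)=\mathrm{ri}(\mathrm{conv}\,S)\subseteq\intr\dom\phi$.
\item On the open convex set $\intr\dom\phi$, the function $\phi$ is finite and convex, hence continuous. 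So for $q\in\mathrm{ri}\,\Delta(v)$ and $s_j\in S$ with $s_j\to q$, you get $\inf_S\phi\le\lim_j\phi(s_j)=\phi(q)$.
\item Your segment argument only needs $p_t\in\mathrm{ri}\,\Delta(v)$, not $p_t\in S$. Applying the previous step at $q=p_t$ gives $\inf_S\phi\le\phi(p_t)\le(1-t)\phi(p^*)+t\phi(q)\to\phi(p^*)$.
\end{enumerate}
This yields $\inf_S\phi\le\phi(p^*)$ and closes the proof.
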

\begin{proof}
The inequality $\inf_{x \in P}Q(\diff f_v(x)) \ge \min_{p \in \Delta(v)}\phi(p)$ is direct from $\{s(\mu(x\cdot v))\mid x \in P\} \subseteq \Delta(v)$ and \cref{eq:Q_momentmap}.
For the opposite inequality, we use the fact that $\Delta(v)$ is a polytope and the assumption that $\{s(\mu(x\cdot v))\mid x \in P\} \subseteq \intr\dom\phi$; see \cite[Cor.~7.3.2 and Thm.~10.1]{Rockafellar1970}.
\end{proof}

As a consequence, Hadamard mirror flow and descent can be applied to this setting.
First, we compute the $Q$-gradient of~$f_v$ at $x\in P$ concretely in terms of the objective~$\Phi$:
\begin{equation}\label{eq:concrete qgrad}
\begin{aligned}
    \nabla^Q f_v(x)
&= \diff Q_{x}(\diff f_v(x))
= \tau_{I\to x} \diff Q_I(\tau_{I \to x}^* \diff f_v(x))
= \tau_{I\to x} \nabla \Phi(\mu(x^{1/2} \cdot v)) \\
&= x^{1/2} \nabla \Phi(\mu(x^{1/2} \cdot v)) x^{1/2},
\end{aligned}
\end{equation}
where the first equality is the definition of the $Q$-gradient~\eqref{eq:grad},
the second follows from the invariance of~$Q$ under parallel transport;
for the third we use \cref{lem:KNderivative}~(b) and identify~$Q_I$ with~$\Phi$ using the Hilbert--Schmidt inner product on~$\bbH\cong T_IP$;
and the last follows from the concrete formula for the parallel transport on~$P$ (\cref{sub:pd,sub:selfconjugate}).
Using this and the concrete formula for the exponential map on~$P$ (\cref{sub:pd,sub:selfconjugate}), we obtain the following concrete formula for the iteration of the Hadamard mirror descent~\eqref{eq:descent}:
\begin{align}
\label{eq:update_momentpolytope}
    x_{k+1}
\coloneqq \exp_{x_k}\mleft(-\eta\nabla^Q f_v(x_k)\mright)
= {x_k}^{1/2} \, e^{-\eta \nabla \Phi(\mu(x_k^{1/2} \cdot v))} \, {x_k}^{1/2}.
\end{align}
We emphasize that $\nabla \Phi$ denotes the usual Euclidean gradient.

The update~$x_k \mapsto x_{k+1}$ in~\eqref{eq:update_momentpolytope} is easily translated into corresponding updates of group elements~$g_k \mapsto g_{k+1}$ in $G$, and of vectors $v_k \mapsto v_{k+1}$ in the representation space~$\VV$:

\begin{lemma}\label{lem:discrete-iterations}
    Let $0 \neq v \in \VV$.
    Let $\phi \colon \AA \to \overline\RR$ be a function that satisfies \ref{it:F1}--\ref{it:F3} and $s(\mu(g\cdot v)) \in \intr\dom\phi$ for all $g \in G$.
    Let $(x_k)_{k \in \NN}$ be the Hadamard mirror descent sequence~\eqref{eq:update_momentpolytope} starting from $x_0 \coloneqq I$.
    \begin{enumerate}
        \item[(a)] Define a sequence $(g_k)_{k \in \NN}$ in $G$ by
        \begin{equation}\label{eq:update group}
            g_{k + 1} \coloneqq e^{-\frac\eta2\nabla\Phi(\mu(g_k\cdot v))}g_k,
            \qquad g_0 \coloneqq I.
        \end{equation}
        Then, it holds that $g_k^\dagger g_k = x_k$ for all $k \in \NN$.
        That is, $g_k = u_k x_k^{1/2}$ for some $u_k \in K$.
        As a consequence, we have $\nabla\Phi(\mu(g_k\cdot v)) = u_k \nabla\Phi(\mu(x_k^{1/2}\cdot v)) u_k^\dagger$ for all~$k\in\NN$.
        \item[(b)] For any choice of nonzero scalars~$c_k\in\CC^\times$, define a sequence $(v_k)_{k \in \NN}$ in $\VV$ by
        \begin{equation}\label{eq:updatev_discrete}
            v_{k + 1} \coloneqq c_{k+1} \pi\mleft(e^{-\frac\eta2\nabla\Phi(\mu(v_k))}\mright)v_k,
            \qquad v_0 \coloneqq c_0v.
        \end{equation}
        Then, $[v_k] = [g_k \cdot v] \in \bbP(\VV)$ for all $k \in \NN$.
        Moreover, $\phi(s(\mu(v_k))) = \Phi(\mu(v_k)) = Q(\diff f_v(x_k))$.
    \end{enumerate}
\end{lemma}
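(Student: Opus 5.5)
Part (a) is an induction on $k$, driven by the $K$-equivariance of the moment map~\eqref{eq:mu equivariance} and of $\nabla\Phi$ (\cref{prop:FandQ}(b), or directly from the $K$-invariance of~$\Phi$). The base case is $g_0^\dagger g_0 = I = x_0$. For the step, assume $g_k^\dagger g_k = x_k$. Polar decomposition then gives $g_k = u_k x_k^{1/2}$ with $u_k\in K$; here $u_k\coloneqq g_k x_k^{-1/2}$ is unitary and lies in $G$, because $x_k^{1/2}\in P\subseteq G$, since $P$ is closed under square roots as a totally geodesic submanifold through~$I$. Equivariance yields
\[
\mu(g_k\cdot v)=\mu(u_k\cdot(x_k^{1/2}\cdot v))=u_k\,\mu(x_k^{1/2}\cdot v)\,u_k^\dagger .
\]
The $K$-invariance of $\Phi$ implies $\nabla\Phi(uXu^\dagger)=u\nabla\Phi(X)u^\dagger$, which proves the stated consequence at level $k$. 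Writing $Y\coloneqq\nabla\Phi(\mu(x_k^{1/2}\cdot v))\in\bbH$, we get $e^{-\frac\eta2\nabla\Phi(\mu(g_k\cdot v))}=u_k e^{-\frac\eta2 Y}u_k^\dagger$, and hence
\[
g_{k+1}^\dagger g_{k+1}=x_k^{1/2}u_k^\dagger\, u_k e^{-\eta Y}u_k^\dagger\, u_k x_k^{1/2}=x_k^{1/2}e^{-\eta Y}x_k^{1/2}=x_{k+1},
\]
by~\eqref{eq:update_momentpolytope}. We also need $g_{k+1}\in G$, which holds because $e^{tZ}\in G$ for $Z\in\Lie G$. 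Finally, the well-definedness of $\nabla\Phi$ at $\mu(g_k\cdot v)$ follows from \cref{prop:FandQ}(b) together with the hypothesis that $s(\mu(g\cdot v))\in\intr\dom\phi$.

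Part (b) is a second induction, showing $[v_k]=[g_k\cdot v]$. The base case is $v_0=c_0 v$. For the step, the moment map depends only on the ray, so $\mu(v_k)=\mu(g_k\cdot v)$. Therefore
\[
v_{k+1}=c_{k+1}\,\pi\bigl(e^{-\frac\eta2\nabla\Phi(\mu(g_k\cdot v))}\bigr)v_k ,
\]
and since $v_k$ is a nonzero multiple of $g_k\cdot v$ and $\pi$ is a homomorphism, this is a nonzero multiple of $g_{k+1}\cdot v$.

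For the final identities, $\Phi(\mu(v_k))=\Phi(\mu(g_k\cdot v))=\Phi(u_k\mu(x_k^{1/2}\cdot v)u_k^\dagger)=\Phi(\mu(x_k^{1/2}\cdot v))$. This equals $Q(\diff f_v(x_k))$ by~\eqref{eq:Q_momentmap}. Separately, $\Phi=\phi\circ s$ by definition, which gives $\phi(s(\mu(v_k)))=\Phi(\mu(v_k))$.

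The only delicate points are bookkeeping. First, the scalars $c_k$ and the normalization are harmless because $\mu$ is defined projectively. Second, the equivariance of $\nabla\Phi$ is needed only on $\intr\dom\Phi$, which is $K$-stable by \cref{prop:FandQ}(b). I do not expect a genuine obstacle.
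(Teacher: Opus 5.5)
Your proof is correct and follows the paper's argument essentially step for step. For (a), you induct using the polar decomposition $g_k = u_k x_k^{1/2}$ together with the $K$-equivariance of $\mu$ and $\nabla\Phi$. For (b), you induct on rays and finish with \eqref{eq:Q_momentmap} and the $K$-invariance of $\Phi$. The extra bookkeeping you add is sound and only makes explicit what the paper leaves implicit: that $u_k\in K$ and $g_{k+1}\in G$, and that $\nabla\Phi$ is always evaluated in $\intr\dom\Phi$.
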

\begin{proof}
    (a)~We show this by induction.
    The base case $k=0$ holds trivially.
    Thus, assume that~$g_k^\dagger g_k = x_k$.
    By the polar decomposition, this is equivalent to $g_k = u_k x_k^{1/2}$ for some~$u_k \in K$, hence
    \begin{align}\label{eq:K cov grad phi}
        \nabla\Phi(\mu(g_k\cdot v))
        = \nabla\Phi(u_k\mu(x_k^{1/2}\cdot v)u_k^\dagger)
        = u_k\nabla\Phi(\mu(x_k^{1/2}\cdot v))u_k^\dagger,
    \end{align}
    where the first equality holds by the $K$-equivariance property~\eqref{eq:mu equivariance} of the moment map, and the second one follows from the $K$-invariance of~$\Phi$, which implies that~$\nabla\Phi$ is $K$-equivariant.
    Thus,
    \begin{align*}
        g_{k+1}
    = e^{-\frac\eta2 \nabla\Phi(\mu(g_k\cdot v))} g_k
    = e^{-\frac\eta2 u_k\nabla\Phi(\mu(x_k^{1/2}\cdot v))u_k^\dagger} u_k x_k^{1/2}
    = u_k e^{-\frac\eta2 \nabla\Phi(\mu(x_k^{1/2}\cdot v))} x_k^{1/2},
    \end{align*}
    which shows that $g_{k+1}^\dagger g_{k+1} = x_{k+1}$.

    (b)~We again proceed by induction.
    The base case is clear, so assume that $[v_k] = [g_k \cdot v]$.
    Then
    \[
        [v_{k+1}]
    = [e^{-\frac\eta2\nabla\Phi(\mu(v_k))} \cdot v_k]
    = [e^{-\frac\eta2\nabla\Phi(\mu(g_k \cdot v))} g_k \cdot v]
    = [g_{k+1} \cdot v],
    \]
    where the second equality uses the induction hypothesis and the fact that the moment map is insensitive to rescaling its argument.
    This establishes that $[v_k] = [g_k \cdot v]$ for all $k\in\NN$.
    To confirm the second claim, note that
    \begin{align}\label{eq:secondclaim}
        Q(\diff f_v(x_k))
    = \Phi\mleft(\mu(x_k^{1/2} \cdot v)\mright)
    = \Phi\mleft(u_k^\dagger \, \mu(g_k \cdot v) \, u_k\mright)
    = \Phi(\mu(v_k))
    = \phi(s(\mu(v_k))),
    \end{align}
    where we first used \cref{eq:Q_momentmap}, then that $g_k = u_k x_k^{1/2}$ by part~(a), along with the equivariance property~\eqref{eq:mu equivariance} of the moment map, then the rescaling invariance of the moment map and the $K$-invariance of $\Phi$, and finally the definition~$\Phi=\phi\circ s$.
\end{proof}

\noindent We remark that the update formula in \cref{eq:updatev_discrete} can also be written in terms of the gradient of~$\phi$:
\begin{equation}\label{eq:updatev_discrete phi}
    v_{k + 1} \coloneqq c_{k+1} \pi\mleft(u^\dagger e^{-\frac\eta2\nabla\phi(p)} u\mright)v_k
\end{equation}
for any $u \in K$ such that $u \mu(v_k) u^\dagger \eqqcolon p \in \AA$.
This formula is obtained from \cref{eq:updatev_discrete} using the relation between the gradients of~$\Phi$ and~$\phi$ in \cref{prop:FandQ}~(b).
It is used in \cref{sub:entanglementpolytope,sub:tensor_power_action}.

\begin{remark}\label{rem:numerical}
While all three update formulas~\eqref{eq:update_momentpolytope}, \eqref{eq:update group} and \eqref{eq:updatev_discrete} are equivalent by the lemma above, they have different advantages and disadvantages in practice due to numerical errors.
If one stores the matrices~$x_k$ or~$g_k$ as in \cref{eq:update_momentpolytope} or \cref{eq:update group}, these can become increasingly ill-conditioned and may in fact diverge, making the matrix function computation harder.
One can avoid dealing with ill-conditioned matrices by storing vectors~$v_k$ instead, as in \cref{eq:updatev_discrete}, which can even be normalized using the freedom in the scalars~$c_k$.
However, in this case, numerical errors during the update might move~$[v_k]$ outside its orbit~$G\cdot [v]$, resulting in minimization over a different moment polytope.
Developing a numerically stable method remains an open problem.
\end{remark}

Properties similar to those in \cref{lem:discrete-iterations} hold for the Hadamard mirror flow~\eqref{eqn:q-gradientflow}.
By \cref{eq:concrete qgrad}, the Hadamard mirror flow can be written as the following first-order differential equation:
\begin{align}\label{eq:flowupdate_momentpolytope}
    \dot x_t = -x_t^{1/2} \nabla \Phi(\mu(x_t^{1/2} \cdot v)) x_t^{1/2}.
\end{align}
Since Kempf--Ness functions are~$C^\infty$, we only need to require sufficient regularity on~$\Phi$.

\begin{lemma}\label{lem:flow concrete}
    Let $0 \neq v \in \VV$.
    Let $\phi \colon \AA \to \overline\RR$ be a function that satisfies \ref{it:F1}--\ref{it:F3} and $s(\mu(g\cdot v)) \in \intr\dom\phi$ for all $g \in G$.
    Suppose that $\Phi$ is twice continuously differentiable on $\intr\dom\Phi$.
    Let~$(x_t)_{t \in [0,T_{\max})}$ denote the unique maximal solution to the Hadamard mirror flow~\eqref{eq:flowupdate_momentpolytope} with~$x_0 \coloneqq I$.
    \begin{enumerate}
        \item[(a)] The maximal solution $(g_t)$ to the first-order differential equation
        \begin{equation}\label{eq:g flow}
            \dot g_t = -\frac12\nabla\Phi(\mu(g_t\cdot v))g_t, \quad g_0 \coloneqq I
        \end{equation}
        is defined on the same interval~$[0,T_{\max})$, and it holds that $g_t^\dagger g_t = x_t$.
        \item[(b)] The maximal solution $(v_t)$ to the first-order differential equation
        \begin{equation}\label{eq:updatev_continuous}
            \dot v_t = -\frac12\Pi\mleft(\nabla\Phi(\mu(v_t))\mright)v_t,\quad v_0 \coloneqq v
        \end{equation}
        is defined for all times~$t\in[0,T_{\max})$, and it holds that $v_t = g_t \cdot v$, as well as $\phi(s(\mu(v_t))) = \Phi(\mu(v_t)) = Q(\diff f_v(x_t))$.
    \end{enumerate}
\end{lemma}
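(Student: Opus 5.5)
The plan is to mirror the discrete-time argument of \cref{lem:discrete-iterations}, replacing induction by ODE uniqueness. For (a), let $(g_t)$ be the maximal solution of \cref{eq:g flow} on $[0,T_g)$; it exists and is unique because $g \mapsto \nabla\Phi(\mu(g\cdot v))g$ is $C^1$ on the open set where $\mu(g\cdot v)\in\intr\dom\Phi$, and that set contains all of $G$ by hypothesis. Set $y_t \coloneqq g_t^\dagger g_t \in P$, and differentiate:
\[
\dot y_t = \dot g_t^\dagger g_t + g_t^\dagger \dot g_t = -g_t^\dagger \nabla\Phi(\mu(g_t\cdot v))\, g_t ,
\]
using that $\nabla\Phi$ is Hermitian-valued. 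Write the polar decomposition $g_t = u_t y_t^{1/2}$ with $u_t\in K$. By equivariance \eqref{eq:mu equivariance} and the $K$-equivariance of $\nabla\Phi$ (see \eqref{eq:K cov grad phi}), we get $g_t^\dagger \nabla\Phi(\mu(g_t\cdot v)) g_t = y_t^{1/2}\nabla\Phi(\mu(y_t^{1/2}\cdot v))y_t^{1/2}$. Hence $y_t$ solves \eqref{eq:flowupdate_momentpolytope} with $y_0=I$, and by uniqueness $y_t = x_t$ on $[0,\min(T_g,T_{\max}))$. No differentiability of $u_t$ is needed, since only $y_t$ is differentiated.

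It remains to show $T_g = T_{\max}$. Suppose first $T_g < T_{\max}$. Then $x_t = g_t^\dagger g_t$ stays in a compact subset of $P$ for $t\le T_g$, so $\|g_t\|_{\ope}$ and $\|g_t^{-1}\|_{\ope}$ are bounded on $[0,T_g)$. Moreover $\nabla\Phi(\mu(g_t\cdot v)) = u_t\nabla\Phi(\mu(x_t^{1/2}\cdot v))u_t^\dagger$ has bounded norm, because $x_t^{1/2}\cdot v$ ranges over a compact set on which $\nabla\Phi\circ\mu$ is continuous. Therefore $\dot g_t$ is bounded and $g_t$ extends continuously to $T_g$ inside $G$, contradicting maximality. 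Conversely, if $T_{\max} < T_g$, then $g_t^\dagger g_t$ gives an extension of $x_t$ past $T_{\max}$, contradicting maximality of $x_t$. I expect this blow-up/extension bookkeeping to be the main (though routine) obstacle; the key input is that the unitary factor $u_t$ stays in the compact group $K$.

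For (b), set $w_t \coloneqq g_t\cdot v = \pi(g_t)v$. Differentiating through the representation gives $\dot w_t = \Pi(\dot g_t g_t^{-1})w_t = -\tfrac12\Pi(\nabla\Phi(\mu(w_t)))w_t$, since $\frac{\diff}{\diff t}\pi(g_t) = \Pi(\dot g_t g_t^{-1})\pi(g_t)$. Thus $w_t$ solves \eqref{eq:updatev_continuous} with $w_0=v$. The right-hand side of \eqref{eq:updatev_continuous} is locally Lipschitz on the open set $\{w\neq 0 : \mu(w)\in\intr\dom\Phi\}$, which contains the orbit, so uniqueness gives $v_t = w_t$ on $[0,T_{\max})$. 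The solution $v_t$ exists at least that long, and one may also note that it cannot be extended further while remaining in the orbit for the same reasons as above, though the statement only asks for existence on $[0,T_{\max})$. Finally,
\[
\Phi(\mu(v_t)) = \Phi(u_t\,\mu(x_t^{1/2}\cdot v)\,u_t^\dagger) = \Phi(\mu(x_t^{1/2}\cdot v)) = Q(\diff f_v(x_t))
\]
by $K$-invariance and \cref{eq:Q_momentmap}, and $\Phi = \phi\circ s$ gives the remaining equality.
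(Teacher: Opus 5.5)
Your proposal is correct and follows the paper's proof essentially step for step: you differentiate $y_t = g_t^\dagger g_t$, use the polar decomposition and $K$-equivariance to see that $y_t$ solves the mirror flow, conclude $y_t = x_t$ by ODE uniqueness, rule out $T_g < T_{\max}$ because $g_t \in K\cdot x_t^{1/2}$ stays in a compact subset of $G$, and obtain part~(b) from $\frac{\diff}{\diff t}\pi(g_t) = \Pi(\dot g_t g_t^{-1})\pi(g_t)$ together with uniqueness. The only difference is that you spell out the bounded-derivative justification for the continuation step, which the paper compresses into a single appeal to ODE continuation.
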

\begin{proof}
\begin{enumerate}
\item[(a)]
Let $(g_t)_{t \in [0,T'_{\max})}$ denote the unique maximal solution to~\eqref{eq:g flow}.
We will show that~$T_{\max} = T'_{\max}$ and $g_t^\dagger g_t = x_t$ for all $t \in [0,T_{\max})$.
To this end, let $y_t \coloneqq g_t^\dagger g_t$.
Then $(y_t)$ is a solution to~\eqref{eq:flowupdate_momentpolytope} starting from~$y_0 = I$ on the time interval~$[0,T'_{\max})$:
\begin{align*}
    \dot{y_t}
= g_t^\dagger \dot g_t + (\dot g_t)^\dagger g_t
= - g_t^\dagger \nabla\Phi(\mu(g_t\cdot v)) g_t
= - y_t^{1/2} \nabla\Phi(\mu(y_t^{1/2}\cdot v)) y_t^{1/2};
\end{align*}
in the second equality we use that $\nabla\Phi(\mu(g_t\cdot v)) \in \bbH$ is Hermitian, and the last step holds by the same reasoning as in \cref{eq:K cov grad phi}.
By ODE uniqueness, it follows that $y_t = g_t^\dagger g_t$ must agree with the maximal solution~$x_t$ on $[0,T'_{\max})$ and~$T_{\max} \geq T'_{\max}$.
It remains to prove that~$T_{\max} = T'_{\max}$.
Indeed, suppose that $T_{\max} > T'_{\max}$.
Then the solution~$\{g_t\}_{t \in [0,T'_{\max})}$ belongs to a compact subset of~$G$, since~$g_t^\dagger g_t = x_t$, hence $g_t \in K \cdot x_t^{1/2}$, and $\{x_t\}_{t \in [0,T'_{\max}]}$ is compact.
By ODE continuation, this means that~$g_t$ can be extended past $T'_{\max}$, which is a contradiction.

\item[(b)]
Let~$(g_t)$ denote the maximal solution from part~(a).
Then~$v_t \coloneqq g_t \cdot v = \pi(g_t) v$ is a solution to~\eqref{eq:updatev_continuous} starting from~$v_0 = v$:
\begin{align*}
    \dot v_t
=   \frac{\diff}{\diff t} \pi(g_t) v
=   \diff\pi(g_t)[\dot g_t] v
=   \Pi(\dot g_t g_t^{-1}) \pi(g_t) v
=   -\frac12\Pi\mleft(\nabla\Phi(\mu(v_t))\mright) v_t,
\end{align*}
where the third equality follows from the definition of the Lie algebra representation.
By ODE uniqueness, $v_t$ must agree with the maximal solution to~\eqref{eq:updatev_continuous} on~$[0,T_{\max})$.
The second claim is proved just like \cref{eq:secondclaim}.
\end{enumerate}
\end{proof}

We next derive a concrete formula for the step-size condition in terms of $\Phi$ and the sequence $\{v_k\}_{k \in \NN}$ defined in \cref{eq:updatev_discrete}.

\begin{lemma}
Let $v \in\VV\setminus\{0\}$ be such that~$s(\mu(g\cdot v)) \in \intr\dom\phi$ for all~$g \in G$.
Then, the step-size condition~\eqref{eq:step-size-condition} for the Hadamard mirror descent sequence~$(x_k)_{k\in\NN}$ generated by \eqref{eq:update_momentpolytope} starting from $x_0 \coloneqq I$ is equivalent to the following condition:
\begin{equation}\label{eq:step-size-KN}
    \Phi(\mu(v_k)) - \Phi(\mu(v_{k+1})) - \tr\mleft[ \mu(v_k) \nabla\Phi(\mu(v_k)) \mright]
    \geq \frac1\eta \log\frac{\|v_{k+1}\|^2}{|c_{k+1}|^2\|v_k\|^2},
\end{equation}
where $(v_k)_{k \in \NN}$ is any sequence of vectors defined as in \cref{eq:updatev_discrete} with the associated scalars $(c_k)_{k \in \NN}$.
\end{lemma}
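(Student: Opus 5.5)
We simply rewrite each term of the step-size condition~\eqref{eq:step-size-condition} for $f=f_v$, with the iterates expressed through $v_k$. By \cref{lem:discrete-iterations}, $x_k = g_k^\dagger g_k$ with $g_k = u_k x_k^{1/2}$, and $[v_k]=[g_k\cdot v]$. Unwinding the recursion~\eqref{eq:updatev_discrete}, we get $v_k = C_k\, g_k\cdot v$ with $C_k\coloneqq c_0c_1\cdots c_k$. This holds because $\pi(e^{-\frac\eta2\nabla\Phi(\mu(v_k))})g_k\cdot v = g_{k+1}\cdot v$, using that $\mu$ is scale-invariant.

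\textbf{Function values.} From $f_v(x)=\log\langle v,x\cdot v\rangle=\log\norm{x^{1/2}\cdot v}^2$ and $\pi(g^\dagger)=\pi(g)^\dagger$, we get $f_v(x_k)=\log\norm{g_k\cdot v}^2=\log\norm{v_k}^2-\log|C_k|^2$. Hence
\[
f_v(x_k)-f_v(x_{k+1}) = -\log\frac{\norm{v_{k+1}}^2}{|c_{k+1}|^2\norm{v_k}^2}.
\]

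\textbf{The $Q$ terms.} By \cref{lem:discrete-iterations}~(b), $Q(\diff f_v(x_k))=\Phi(\mu(v_k))$, so the left-hand side of~\eqref{eq:step-size-condition} equals $\eta(\Phi(\mu(v_{k+1}))-\Phi(\mu(v_k)))$.

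\textbf{Linear term.} Since $\exp_{x_k}^{-1}x_{k+1}=-\eta\nabla^Qf_v(x_k)=-\eta\,\tau_{I\to x_k}\nabla\Phi(\mu(x_k^{1/2}\cdot v))$ by~\eqref{eq:concrete qgrad}, \cref{lem:KNderivative}~(b) gives
\[
\diff f_v(x_k)[\exp_{x_k}^{-1}x_{k+1}] = -\eta\tr\bigl[\mu(x_k^{1/2}\cdot v)\,\nabla\Phi(\mu(x_k^{1/2}\cdot v))\bigr].
\]
Next we convert this to $v_k$. We have $\mu(g_k\cdot v)=u_k\mu(x_k^{1/2}\cdot v)u_k^\dagger$ by~\eqref{eq:mu equivariance}, and $\nabla\Phi$ is $K$-equivariant as in~\eqref{eq:K cov grad phi}. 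The trace is invariant under this simultaneous conjugation, and $\mu$ is scale-invariant. So the term equals $-\eta\tr[\mu(v_k)\nabla\Phi(\mu(v_k))]$.

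Substituting the three pieces into~\eqref{eq:step-size-condition} gives
\[
\eta\bigl(\Phi(\mu(v_{k+1}))-\Phi(\mu(v_k))\bigr)\le -\log\frac{\norm{v_{k+1}}^2}{|c_{k+1}|^2\norm{v_k}^2}-\eta\tr[\mu(v_k)\nabla\Phi(\mu(v_k))].
\]
Dividing by $\eta>0$ and rearranging yields exactly~\eqref{eq:step-size-KN}. Every step is an equality, so the two conditions are equivalent.

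There is no real obstacle here. The only care needed is the bookkeeping of the scalars $c_k$: they cancel in the difference $f_v(x_k)-f_v(x_{k+1})$, leaving only $|c_{k+1}|^2$. One must also check the identification $\norm{x^{1/2}\cdot v}^2=\langle v,x\cdot v\rangle$, which uses $\pi(x^{1/2})$ Hermitian.
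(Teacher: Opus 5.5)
Your proposal is correct and follows essentially the same route as the paper: both rewrite the three ingredients of \eqref{eq:step-size-condition} using \cref{lem:discrete-iterations}. These are the $Q$-values, the linear term via \eqref{eq:concrete qgrad} and \cref{lem:KNderivative}~(b), and the function difference via $x_k=g_k^\dagger g_k$. The only cosmetic difference is that you track the exact scalar $v_k=c_0c_1\cdots c_k\,g_k\cdot v$, whereas the paper uses only $[v_k]=[g_k\cdot v]$ together with the scale invariance of the ratio $\|e^{-\frac\eta2\nabla\Phi(\mu(v_k))}\cdot v_k\|^2/\|v_k\|^2$.
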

\begin{proof}
First, we have
\begin{align*}
    \diff f_v(x_k)\mleft[ \exp_{x_k}^{-1}x_{k+1} \mright]
&= \diff f_v(x_k)\mleft[ -\eta\nabla^Q f_v(x_k) \mright]
= -\eta \, \diff f_v(x_k) \mleft[ \tau_{I\to x_k} \nabla \Phi(\mu(x_k^{1/2} \cdot v)) \mright] \\
&= -\eta \tr\mleft[ \mu(x_k^{1/2} \cdot v) \nabla \Phi(\mu(x_k^{1/2} \cdot v)) \mright]
= -\eta \tr\mleft[ \mu(v_k) \nabla \Phi(\mu(v_k)) \mright],
\end{align*}
where the first equality uses the definition of Hadamard mirror descent;
the second equality rewrites this using the penultimate expression for~$\nabla^Q f_v(x_k)$ from \cref{eq:concrete qgrad};
the third equality follows from \cref{lem:KNderivative}~(b);
for the last equality, we use \cref{lem:discrete-iterations}~(a) and~(b), along with the $K$-equivariance property~\eqref{eq:mu equivariance} of the moment map.

By \cref{lem:discrete-iterations}~(b), we also know that $Q(\diff f_v(x_k)) = \Phi(\mu(v_k))$ for all~$k$.
Finally, we observe that
\begin{align*}
    f_v(x_{k+1}) - f_v(x_k)
&= \log \frac {\braket{v, g_{k+1}^\dagger g_{k+1} \cdot v}}  {\braket{v, g_k^\dagger g_k \cdot v}}
= \log \frac {\|g_{k+1}\cdot v\|^2}  {\|g_k\cdot v\|^2}
= \log \frac {\|e^{-\frac\eta2\nabla\Phi(\mu(g_k\cdot v))}\cdot(g_k\cdot v)\|^2}  {\|g_k\cdot v\|^2}\\
&= \log \frac {\|e^{-\frac\eta2\nabla\Phi(\mu(v_k))}\cdot v_k\|^2}  {\|v_k\|^2}
= \log \frac {\|v_{k+1}\|^2}  {|c_{k+1}|^2\|v_k\|^2},
\end{align*}
using \cref{lem:discrete-iterations}: the first equality follows from $x_k = g_k^\dagger g_k$, the second equality follows from $\pi(g^\dagger) = \pi(g)^\dagger$ for all $g \in G$, the third equality follows from \cref{eq:update group}, the fourth equality follows from $[v_k] = [g_k \cdot v]$, and the last equality follows from \cref{eq:updatev_discrete}.
If we substitute the above into the step-size condition~\eqref{eq:step-size-condition}, we obtain \cref{eq:step-size-KN}.
\end{proof}

We now instantiate our convergence results (\cref{thm:main-continuous,thm:main-discrete}) in this setting.
In particular, we discuss the quantity $\inf_{x \in P}f_v^\xi(x)$, which must be finite to obtain the desired~$\mathcal{O}(k^{-1})$ and~$\mathcal{O}(t^{-1})$ convergence rates.
First, we fix some conventions.
Recall that the Busemann function~$b^\xi$ is only defined up to an additive constant.
We will make the following choice: using the natural identification $CP^\infty \cong T_I P \cong \bbH$, we can associate to any equivalence class~$\xi\in CP^\infty$ the unique tangent vector~$X\in\bbH$ such that~$\xi$ is represented by the geodesic ray $t \mapsto e^{tX}$.
We write~$b^X$ for the Busemann function corresponding to this geodesic ray.
Explicitly:
\begin{align*}
    b^X \colon P \to \RR, \quad
    b^X(x) \coloneqq \lim_{t\to\infty} \parens*{ \norm{X}_{\mathrm F} \norm{\log(x^{-1/2}e^{Xt}x^{-1/2})}_{\mathrm F} - \norm{X}_{\mathrm F}^2 t }.
\end{align*}
In particular, this fixes the additive constant to be~$b^X(I)=0$.
We analogously write~$f_v^X \coloneqq f_v + b^X$.
Then we have, for all~$X \in \bbH$,
\[
  \frac12 f_v^X(I) = \frac12 f_v(I) = \frac12 \log \norm{v}^2 = \log \norm v.
\]
The infimum is the logarithm of the so-called ``capacity''.
We follow the conventions of \cite{FW2020}.

\begin{definition}[Capacity]
Let $X\in\bbH$.
We define the \emph{logarithmic $X$-capacity} of~$v\in\VV\setminus\{0\}$ as
\[
    \log\capacity_X(v) \coloneqq \frac12\inf_{x \in P}f_v^X(x) \in [-\infty,\infty).
\]
\end{definition}
\noindent
This quantity has been introduced in~\cite{BFGOWW2019} as a generalization of the notions of matrix, operator, polynomial and tensor capacity~\cite{gurvits1998deflation,gurvits2004classical,gurvits2006hyperbolic,BFGOWW2018} for~$X\in C$ in a Weyl chamber.
It was subsequently generalized to all of~$\bbH$ in~\cite{FW2020,botero2021large} (cf.\ the \emph{rate function} in the latter work).%
\footnote{We caution that the various definitions in the literature differ from each other by inconsequential conventions. Here we follow the conventions of~\cite{FW2020}.}
The shifted Kempf--Ness function satisfies the equivariance $f^X_{u \cdot v}(x) = f^{u^\dagger X u}_v(u^\dagger x u)$, so
\begin{align}\label{eq:log cap K equivariance}
    \log\capacity_X(u \cdot v)
= \log\capacity_{u^\dagger X u}(v)
\qquad (u \in K, X \in \bbH).
\end{align}

Now consider the choice $X \coloneqq \mu(v) \in \bbH$.
Then, the moment map~$\mu(v) \in \bbH$ equals~$\diff f_v(I)$ (\cref{lem:KNderivative}~(b)), which in turn means that the geodesically convex~$f_v^{\mu(v)}$ has a minimum at~$I$,~hence
\[ \log\capacity_{\mu(v)}(v) = \log\norm v. \]
If we take $u \in K$ such that $p \coloneqq s(\mu(v)) = u\mu(v)u^\dagger = \mu(u\cdot v)$, we have
$\log\capacity_p(u\cdot v) = \log\norm{u \cdot v} = \log\norm v$
by the $K$-invariance of the norm.
By applying this discussion to any~$x\cdot v$ ($x \in P$) in place of~$v$, one finds that for any $p \in \{s(\mu(x\cdot v)) \mid x \in P\}$, there exists $g \in G$ such that $\log\capacity_p(g\cdot v) > -\infty$.
In fact, it is known that this last condition not only characterizes a dense set of the moment polytope, but is \emph{equivalent} to the condition that $p$ belongs to the moment polytope:

\begin{proposition}[\cite{NessMumford1984,Brion-06,BFGOWW2019,FW2020}]\label{prop:shiftingtrick_orig}
    For any $p \in C$, it holds that
    \[
        p \in \Delta(v)
    \iff \exists g \in G,\ \log\capacity_p(g\cdot v) > -\infty
    \iff \exists u \in K,\ \log\capacity_p(u \cdot v) > -\infty.
    \]
\end{proposition}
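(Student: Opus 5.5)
The implication $(3)\Rightarrow(2)$ is trivial, so I would prove $(2)\Rightarrow(1)$ and then $(1)\Rightarrow(3)$.

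\emph{Step $(2)\Rightarrow(1)$.} Put $w\coloneqq g\cdot v$ and assume $\inf_{x\in P} f_w^p(x)>-\infty$. Since $\Delta(w)=\Delta(v)$, it suffices to show $p\in\Delta(w)$. First, I would apply Ekeland's variational principle on the complete metric space $P$ to the continuous, bounded-below function $f_w^p$. This gives points $x_k\in P$ where the local slope, and hence the Riemannian gradient norm $\norm{\diff f_w(x_k)+\diff b^p(x_k)}$, tends to zero. Second, I would transport to $I$. By \cref{lem:KNderivative}~(b), $\tau^*_{I\to x_k}\diff f_w(x_k)=\mu(x_k^{1/2}\cdot w)$. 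I claim that $-\tau^*_{I\to x_k}\diff b^p(x_k)=u_k p u_k^\dagger$ for some $u_k\in K$. To see this, apply \cref{lem:Q-busemann} to every $K$-invariant function $\Phi$ on $\bbH$; all such functions satisfy \ref{it:Q1}, \ref{it:Q2}, for instance the indicator of the $K$-orbit of $p$ composed with a convex hull is not needed, it suffices to take $\Phi$ ranging over functions separating $K$-orbits. This shows that $-\tau^*_{I\to x_k}\diff b^p(x_k)$ lies in the $K$-orbit of $p$. Alternatively, \cref{lem:asymptoticrays} combined with \cref{eq:diff busemann} gives the same conclusion directly. Hence $\norm{\mu(x_k^{1/2}\cdot w)-u_kpu_k^\dagger}_{\mathrm F}\to0$. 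Third, the map $s$ is $1$-Lipschitz for the Frobenius norm; this is the standard Lidskii/Kostant-type inequality $\norm{s(X)-s(Y)}\le\norm{X-Y}$ for normal decomposition systems~\cite{lewis1996group}. Therefore $s(\mu(x_k^{1/2}\cdot w))\to p$. By the last expression in \cref{eq:momentpolytope} and closedness of $\Delta(w)$, we conclude $p\in\Delta(w)$.

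\emph{Step $(1)\Rightarrow(3)$.} This is the main obstacle, because $\Delta(v)$ is a closure and $p$ need not be attained. I would first treat rational $p$ via the shifting trick. Choose $N$ such that $Np$ is a dominant weight, and consider $v^{\ot N}\ot v_{Np}^*$, where $v_{Np}^*$ is a lowest weight vector of the dual irreducible representation. Then $p\in\Delta(v)$ is equivalent, by the Ness--Mumford/Brion description of the moment polytope through covariants, to the existence of $u\in K$ such that $(u\cdot v)^{\ot N}\ot v_{Np}^*$ is semistable. I would then express $N\log\capacity_p(u\cdot v)$, up to an additive constant, as the Kempf--Ness infimum of this tensor restricted to the relevant subgroup, using $b^p(x)=-\lim(\ldots)$ computed as $-\log\langle v_{Np}^*,x\cdot v_{Np}^*\rangle/N$ along the Iwasawa decomposition. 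Semistability gives finiteness.

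To pass from ``some $g$'' to ``some $u\in K$'', I would write $g=ub$ with $b$ in the Borel subgroup stabilizing the ray direction $p$ (Iwasawa decomposition $G=KB$). The identity $b^p(b^\dagger x b)=b^p(x)+c_b$ then shows $\log\capacity_p(b\cdot v')$ and $\log\capacity_p(v')$ differ by a finite constant. Finally, for irrational $p$, I would use that $\Delta(v)$ is a rational polytope, so $p$ is a convex combination of rational points of $\Delta(v)$. I would then combine via concavity of $X\mapsto\log\capacity_X$ on a common Weyl-chamber face, which follows from the concavity of $\inf_x(f+b^X)$ in $X$ when $b^X$ is linear in $X$ on a flat, as it is for $X$ in a fixed chamber. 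If this last step proves delicate, I would simply cite \cite{FW2020} for the general case.
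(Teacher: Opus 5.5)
\paragraph{Gap: the irrational case of $(1)\Rightarrow(3)$.} Concavity of $X\mapsto\log\capacity_X(w)$ on $C$ is a statement about a \emph{fixed} vector $w$. Write any $x\in P$ as $x=n^\dagger a^2 n$, with $a\in\exp(\AA)$ and $n$ in the unipotent subgroup that $e^{tX}(\cdot)e^{-tX}$ keeps bounded as $t\to\infty$ for all $X\in C$. Then $b^X(x)=-2\tr[X\log a]$, which is linear in $X\in C$ for each fixed $x$. Linearity on the flat $\exp(\AA)$ alone is not enough, since the infimum runs over all of $P$. Hence $\log\capacity_X(w)$ is an infimum of affine functions and is concave.

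Your rational step, however, gives each rational $p_j$ in $p=\sum_j\lambda_jp_j$ its own $u_j\in K$. Nothing produces a single $w$ with $\log\capacity_{p_j}(w)>-\infty$ for all $j$ simultaneously, and with different vectors concavity says nothing. Nor can you invoke concavity for $X\mapsto\log\Capacity_X(v)$, since it is a supremum of concave functions.

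The missing idea is genericity. For rational $p_j$, the set of $g\in G$ such that $(g\cdot v)^{\otimes N}\otimes v^*_{Np_j}$ is semistable is a union of nonvanishing loci of the regular functions $g\mapsto F\bigl((g\cdot v)^{\otimes N}\otimes v^*_{Np_j}\bigr)$, with $F$ a homogeneous invariant. So it is Zariski open, and it is nonempty by the rational case. Since $G$ is irreducible, these finitely many sets share some $g$. Concavity for $w=g\cdot v$ then gives $\log\capacity_p(g\cdot v)>-\infty$, and you pass to $K$ afterwards. Because you derive $(2)\Rightarrow(3)$ through (1), that implication inherits this gap as written.

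\paragraph{Comparison with the paper and smaller points.} The detour through (1) is unnecessary. The paper proves only $(2)\Leftrightarrow(3)$ (\cref{lem:shiftingtrick}) and cites $(1)\Leftrightarrow(2)$. By \cref{lem:asymptoticrays}, $t\mapsto g^\dagger e^{tX}g$ is asymptotic to $t\mapsto e^{tu^\dagger Xu}$ for some $u\in K$. Hence $b^X(x)=b^{u^\dagger Xu}(g^\dagger xg)+c_g$ and $\log\capacity_X(g\cdot v)=\log\capacity_X(u\cdot v)+c_g/2$ for every $X\in\bbH$, with no rationality and no Borel subgroup.

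Your Iwasawa remark is the same mechanism and would prove $(2)\Rightarrow(3)$ directly for all $p\in C$. It works only for the decomposition $g=bu$, with $u\in K$ and $b$ in the Borel subgroup above; with $g=ub$ you land on $\log\capacity_{u^\dagger pu}(b\cdot v)$ instead.

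Your $(2)\Rightarrow(1)$ is correct and is a self-contained proof of a direction the paper only cites. Ekeland gives $\norm{\diff f_w(x_k)+\diff b^p(x_k)}\to0$. Applying \cref{lem:asymptoticrays} and \cref{eq:diff busemann} to the ray $t\mapsto x_k^{1/2}e^{tY}x_k^{1/2}$ in the class of $e^{tp}$ gives $-\tau^*_{I\to x_k}\diff b^p(x_k)=u_kpu_k^\dagger$. Continuity of $s$ already suffices, because $u_k^\dagger\mu(x_k^{1/2}\cdot w)u_k\to p$. Drop the sentence about \cref{lem:Q-busemann}: that lemma needs l.s.c.\ convex objectives, not arbitrary orbit-separating functions.

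For rational $p$ you are effectively citing the statement itself via covariants, as the paper does. Fix the sign: for $v^*_{Np}$ of weight $-Np$ one has $Nb^p(x)=\log\langle v^*_{Np},x\cdot v^*_{Np}\rangle-\log\norm{v^*_{Np}}^2$ on all of $P$. So $N\log\capacity_p(w)$ equals, up to an additive constant, half the Kempf--Ness infimum of $w^{\otimes N}\otimes v^*_{Np}$ for the full group $G$, with no restriction to a subgroup.
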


This equivalence is known as the \emph{shifting trick} and it was originally stated for rational points in the Weyl chamber, where it has a representation-theoretic interpretation;
the generalization to arbitrary points and the second characterization are stated, e.g., in \cite{FW2020}.
It is also known that if there exists $g \in G$ as above, then this is the case for generic~$g \in G$.
We give an elegant proof of the second equivalence in \cref{lem:shiftingtrick} that directly applies to arbitrary points~$p \in C$ (even~$X \in \bbH$).
The shifting trick was used in~\cite{BFGOWW2019} to give algorithms for deciding membership of points in the moment polytope by minimization of the shifted Kempf--Ness function $f_{g\cdot v}^p$ for a random choice of~$g\in G$.
Here, we make use of it for the convergence analysis of Hadamard mirror flow and descent.

Following~\cite{FW2020}, and motivated by \cref{prop:shiftingtrick_orig}, we also define the following quantity.

\begin{definition}[$K$-invariant capacity]
Let~$p\in C$.
Then we define the \emph{$K$-invariant logarithmic $p$-capacity} of $v\in\VV\setminus\{0\}$ as
\[
    \log\Capacity_p(v)
\coloneqq \sup_{u \in K} \log\capacity_p(u \cdot v)
= \sup_{u \in K} \log\capacity_{u^\dagger p u}(v)
= \sup_{X \in \bbH, \, s(X)=p} \log\capacity_X(v).
\]
\end{definition}

\noindent
This quantity is indeed $K$-invariant by definition:
we have $\log\Capacity_p(u \cdot v) = \log\Capacity_p(v)$ for~$u \in K$.
Now it follows from \cref{prop:shiftingtrick_orig} that, for every~$p \in C$,
\begin{align}\label{eq:Delta vs logCap}
    p \in \Delta(v)
\iff \log\Capacity_p(v) > -\infty.
\end{align}

We now state the main result of this section: an $\mathcal{O}(t^{-1})$ and $\mathcal{O}(k^{-1})$ convergence rate for moment polytope optimization using the Hadamard mirror flow and descent, respectively.
Remarkably, the constant in the $\mathcal{O}$ notation depends on the $K$-invariant logarithmic capacity of a minimizer~$p^\star$ of the objective~$\Phi$ over the moment polytope (as well as the initial norm of the vector and, in the discrete-time case, the inverse step size).
As explained following \cref{eq:min_over_momentpolytope}, such a minimizer always exists and by \cref{eq:Delta vs logCap} its $K$-invariant logarithmic capacity~$\log\Capacity_{p^\star}(v)$ is~finite.

\begin{theorem}[Convergence for moment polytope optimization]\label{thm:KN-general}
    Let $0 \neq v \in \VV$.
    Let $\phi \colon \AA \to \overline\RR$ be a function that satisfies \ref{it:F1}--\ref{it:F3} and $s(\mu(g\cdot v)) \in \intr\dom\phi$ for all $g \in G$.
    Let $p^\star$ be a minimizer of~$\phi$ in~$\Delta(v)$.
    Then, the following convergence rates hold:

    \begin{enumerate}
    \item[(a)] \textbf{Discrete-time:} Let $(v_k)_{k \in \NN}$ be the sequence generated by \cref{eq:updatev_discrete}. Under the step-size condition~\eqref{eq:step-size-KN}, it holds that, for all $k\geq1$,
    \[
        \phi\mleft(s(\mu(v_k))\mright) - \min_{p \in \Delta(v)}\phi(p) \le 2 \frac{\log\|v\| - \log\Capacity_{p^\star}(v)}{\eta k}.
    \]
    \item[(b)] \textbf{Continuous-time:}
    Assume that $\Phi$ is twice continuously differentiable on the interior of its domain and that the solution~$(x_t)$ to \cref{eq:flowupdate_momentpolytope} with initial condition~$x_0\coloneqq I$ is defined for all times~$t\geq0$.
    Then the solution $(v_t)_t$ to \cref{eq:updatev_continuous} is also defined for all times and it holds that, for all $t>0$,
    \[
        \phi\mleft(s(\mu(v_t))\mright) - \min_{p \in \Delta(v)}\phi(p) \le 2 \frac{\log\|v\| - \log\Capacity_{p^\star}(v)}t.
    \]
    \end{enumerate}
\end{theorem}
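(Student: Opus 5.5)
We will deduce both parts directly from the general convergence results, \cref{thm:main-discrete} and \cref{thm:main-continuous}, applied on $\MM = P$ with $f = f_v$ and the holonomy-invariant objective $Q$ determined by $\Phi=\phi\circ s$. First, the iterates are translated into moment-map terms. By \cref{lem:discrete-iterations}~(b) we have $\phi(s(\mu(v_k))) = Q(\diff f_v(x_k))$ for the sequence $x_k$ started at $x_0 = I$. By \cref{lem:flow concrete}~(b) the analogous identity $\phi(s(\mu(v_t))) = Q(\diff f_v(x_t))$ holds in continuous time, and the same lemma shows that $v_t$ exists for all $t$ whenever $x_t$ does. The step-size condition~\eqref{eq:step-size-KN} is equivalent to~\eqref{eq:step-size-condition} by the preceding lemma, so \cref{thm:main-discrete} applies.

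Next we choose the ray $\xi$. Fix any $X\in\bbH$ with $s(X)=p^\star$, and let $\xi$ be the class of $t\mapsto e^{tX}$. By \cref{ex:Q_xi}, $Q(\xi)=\Phi(X)=\phi(s(X))=\phi(p^\star)=\min_{p\in\Delta(v)}\phi(p)$. To verify this directly, use \cref{lem:Q-busemann} at $x=I$. By \cref{eq:diff busemann}, $-\diff b^X(I)$ corresponds to $X$ under the Hilbert--Schmidt identification, so $Q_I(-\diff b^X(I))=\Phi(X)$. With the normalization $b^X(I)=0$, the numerator in the general bounds is
\[
f_v^X(I)-\inf_P f_v^X = \log\|v\|^2 - 2\log\capacity_X(v) = 2\bigl(\log\|v\| - \log\capacity_X(v)\bigr).
\]
If $\log\capacity_X(v)=-\infty$, the resulting bound is $+\infty$ and is vacuous, so nothing is lost.

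Since $X$ ranges over all elements with $s(X)=p^\star$, we may take the infimum of the right-hand side over such $X$. That infimum equals $2(\log\|v\|-\sup_{s(X)=p^\star}\log\capacity_X(v)) = 2(\log\|v\|-\log\Capacity_{p^\star}(v))$ by the definition of the $K$-invariant capacity. This quantity is finite by \cref{eq:Delta vs logCap}, because $p^\star\in\Delta(v)$. Dividing by $\eta k$, respectively by $t$, gives (a) and (b).

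The only step needing care is the identification $Q(\xi)=\Phi(X)$, that is, getting the sign and the $\flat$-identification of the Busemann differential right at $I$. This follows from $\diff b^X(I) = -\flat(X)$ by \cref{eq:diff busemann}, since the ray passes through $I$ at $t=0$ with velocity $X$, together with the fact that the metric at $I$ is the Hilbert--Schmidt inner product. Everything else is bookkeeping.
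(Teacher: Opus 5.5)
Your proposal is correct and follows essentially the same route as the paper's proof. You apply \cref{thm:main-discrete,thm:main-continuous} to the ray $t\mapsto e^{tX}$ for each $X$ with $s(X)=p^\star$, use $Q(\xi)=\phi(p^\star)$ and the gap $2(\log\|v\|-\log\capacity_X(v))$, and take the infimum to obtain $\log\Capacity_{p^\star}(v)$; the paper parametrizes the same set of directions as $X=u^\dagger p^\star u$ with $u\in K$.
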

\begin{proof}
For every $u \in K$, choose~$\xi \equiv X \coloneqq u^\dagger p^\star u$ (under the identification $CP^\infty \cong \bbH$).
Then we have
\[ Q(\xi) = \Phi(u^\dagger p^\star u) = \phi(p^\star) = \min_{p \in \Delta(v)}\phi(p) \]
(see \cref{ex:Q_xi} for the first equality).
On the other hand,
\begin{align*}
    f_v^\xi(I) - \inf_{x \in P} f_v^\xi(x) = 2 \left( \log\norm v - \log\capacity_{u^\dagger p^\star u}(v) \right).
\end{align*}
We also know that $\phi(s(\mu(v_k))) = Q(\diff f_v(x_k))$ for the descent sequence by \cref{lem:discrete-iterations}~(b), while $\phi(s(\mu(v_t))) = Q(\diff f_v(x_t))$ for the solution to the flow by \cref{lem:flow concrete}~(b).
Thus the theorem follows by applying \cref{thm:main-discrete,thm:main-continuous} to $\xi \equiv X = u^\dagger p^\star u$, for every $u \in K$, and taking the infimum over~$u \in K$, using $\log\Capacity_{p^\star}(v) = \sup_{u \in K} \log\capacity_{u^\dagger p^\star u}(v)$.
\end{proof}

\Cref{thm:KN-general} depends on two quantities that need to be bounded to obtain concrete convergence guarantees:
the $K$-invariant capacity of the (a priori unknown) minimizer~$p^\star$, and a suitable step size~$\eta$.
For the former, polynomial lower bounds can be shown in important cases by combining known $p$-capacity lower bounds and randomness bounds;
this will be discussed in \cref{sub:entanglementpolytope,sub:GLaction}.
For the latter, a useful ingredient is the dual smoothness of the Kempf--Ness functions, which will be established in the following subsection; this allows choosing an appropriate step size for arbitrary smooth objectives~$\Phi$ by \cref{prop:dual-to-relative}.

Since the moment map $\mu \colon \bbP(\VV) \to \bbH$ is continuous and $\phi$ is l.s.c., the theorem above immediately implies accumulation in projective space:

\begin{corollary}
Under the hypotheses of \cref{thm:KN-general}~(a), any accumulation point of~$[v_k]_{k \in \NN}$ is a minimizer of $\Phi\circ\mu = \phi\circ s\circ\mu$ in the orbit closure~$\overline{G\cdot[v]} \subseteq \bbP(\VV)$.
Under the hypotheses of \cref{thm:KN-general}~(b), the same holds for any accumulation point of $[v_t]_t$ (as $t\to\infty$).
\end{corollary}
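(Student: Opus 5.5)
The plan is to turn the quantitative bound of \cref{thm:KN-general}~(a) into a statement about accumulation points via a limiting argument in $\bbP(\VV)$. Let $[w]$ be an accumulation point of $([v_k])_{k\in\NN}$, say $[v_{k_j}]\to[w]$.

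First, I will show that $[w]$ lies in the orbit closure $\overline{G\cdot[v]}$. By \cref{lem:discrete-iterations}~(b) we have $[v_k]=[g_k\cdot v]\in G\cdot[v]$, and $\overline{G\cdot[v]}$ is closed, so this is immediate.

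Next, I will compute the value of the objective at $[w]$. By \cref{thm:KN-general}~(a),
\[
\Phi(\mu(v_k))=\phi(s(\mu(v_k)))\le \min_{p\in\Delta(v)}\phi(p)+\mathcal O(1/k),
\]
and the constant is finite because $p^\star\in\Delta(v)$ and \eqref{eq:Delta vs logCap} holds. The moment map is continuous on $\bbP(\VV)$, $s$ is continuous, and $\phi$ is l.s.c. Hence $\phi\circ s\circ\mu$ is l.s.c. on $\bbP(\VV)$, and
\[
\phi(s(\mu(w)))\le\liminf_j \phi(s(\mu(v_{k_j})))\le \min_{p\in\Delta(v)}\phi(p).
\]

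Then I will establish the matching lower bound. Since $[w]\in\overline{G\cdot[v]}$, the first characterization in \eqref{eq:momentpolytope} gives $s(\mu(w))\in\Delta(v)$. Therefore $\phi(s(\mu(w)))\ge\min_{\Delta(v)}\phi$. By the same characterization, every $[w']$ in the orbit closure satisfies $\phi(s(\mu(w')))\ge\min_{\Delta(v)}\phi$. So $[w]$ attains the infimum of $\Phi\circ\mu=\phi\circ s\circ\mu$ over $\overline{G\cdot[v]}$. The equality $\Phi=\phi\circ s$ holds by definition.

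The continuous-time case is identical. I will replace the sequence by $[v_t]=[g_t\cdot v]$, using \cref{lem:flow concrete}~(b), and invoke \cref{thm:KN-general}~(b). I do not expect any step to be a real obstacle. The only point needing care is that l.s.c. (rather than continuity) of $\phi$ suffices, because it gives exactly the inequality direction needed for the upper bound.
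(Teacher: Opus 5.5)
Your proposal is correct and follows the same route as the paper, which justifies the corollary in one line from the continuity of $\mu$ on $\bbP(\VV)$, the lower semicontinuity of $\phi$, and \cref{thm:KN-general}. You fill in the details the paper leaves implicit: the iterates (resp.\ the flow) stay in $G\cdot[v]$, lower semicontinuity gives the upper bound at the limit point, and the first characterization in \eqref{eq:momentpolytope} gives the matching lower bound over the orbit closure.
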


\noindent
When $\phi$ is the squared norm, it is known that $[v_t]_t$ converges~\cite[Theorem~3.3]{GRS_Book}.
We do not know whether such convergence also holds in the discrete-time case, or for other objectives:

\begin{question}
    Under which hypotheses do $[v_t]_t$ and $[v_k]_{k \in \NN}$ converge?
\end{question}

Vectors that arise in applications often possess additional symmetries; for example, a tensor may be invariant under certain local unitaries or under a permutation of its factors (\cref{rem:tensor symmetries}).
In this case, it is natural to ask whether the discrete sequence $(v_k)_{k\in\NN}$ and the continuous flow $(v_t)_{t\in[0,T_{\max}) }$ retain such symmetries.
The following proposition shows that every unitary symmetry compatible with the representation and the objective is preserved by both dynamics.

\begin{proposition}[Conservation of symmetries]\label{prop:symmetry}
    Let $0 \neq v \in \VV$.
    Let $\phi \colon \AA \to \overline\RR$ be a function that satisfies \ref{it:F1}--\ref{it:F3} and suppose $s(\mu(g\cdot v)) \in \intr\dom\phi$ for all $g \in G$.
    For the assertions concerning the Hadamard mirror flow, assume additionally that~$\Phi$ is twice continuously differentiable on~$\intr\dom\Phi$.
    Let~$(v_k)_{k\in\NN}, (v_t)_{t\in[0,T_{\max})}$ be as in \cref{eq:updatev_discrete,eq:updatev_continuous}.
    \begin{enumerate}
        \item[(a)] Let $U \in \U(\VV)$ be a unitary such that $Uw = w$ implies $U \Pi(\nabla\Phi(\mu(w))) U^\dagger = \Pi(\nabla\Phi(\mu(w)))$ for all $w \in \VV \setminus \{0\}$ with $\mu(w) \in \intr\dom\Phi$.
        If $Uv = v$, then
        \begin{equation}
            \label{eq:U-fixes}
            Uv_k = v_k\quad (\forall k\in\NN) \quad\text{and}\quad Uv_t = v_t\quad (\forall t\in [0,T_{\max}) ).
        \end{equation}
        \item[(b)]
        The hypothesis on~$U$ in part~(a) is satisfied if there exists $R\in\Or(\bbH)$, orthogonal with respect to the Hilbert--Schmidt inner product, such that $\Phi$ is $R$-invariant and $U \Pi(X) U^\dagger = \Pi(RX)$ for all~$X\in \bbH$.
        Thus, in this case, if $Uv=v$ then \cref{eq:U-fixes} holds.
        \item[(c)]
        In particular, if $u\in K$ satisfies $u \cdot v=v$, then
        \[
         u\cdot v_k = v_k\quad (\forall k\in\NN)\quad\text{and}\quad u\cdot v_t = v_t\quad (\forall t\in [0,T_{\max}) ).
        \]
    \end{enumerate}
\end{proposition}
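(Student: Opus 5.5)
Part (a) is an induction for the discrete sequence and an ODE uniqueness argument for the flow; (b) and (c) then reduce to checking the hypothesis of (a).

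\emph{Discrete time.} Suppose $Uv_k = v_k$. If the update is $v_{k+1} = c_{k+1}\pi(e^{-\frac\eta2 \nabla\Phi(\mu(v_k))})v_k$, then $\pi(e^{Y}) = e^{\Pi(Y)}$ for $Y\in\bbH$. Hence $v_{k+1} = c_{k+1} e^{-\frac\eta2 \Pi(\nabla\Phi(\mu(v_k)))}v_k$. Here $\mu(v_k)\in\intr\dom\Phi$, since $[v_k]=[g_k\cdot v]$ by \cref{lem:discrete-iterations} and the standing assumption gives $\mu(g\cdot v)\in\intr\dom\Phi$. The hypothesis on $U$ says $U$ commutes with $A_k \coloneqq \Pi(\nabla\Phi(\mu(v_k)))$, hence with $e^{-\frac\eta2 A_k}$. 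Therefore $Uv_{k+1} = c_{k+1}e^{-\frac\eta2 A_k}Uv_k = v_{k+1}$.

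\emph{Continuous time.} Let $(v_t)$ solve \cref{eq:updatev_continuous} and set $w_t \coloneqq U v_t$. At first sight one wants to write $\dot w_t = -\frac12 U\Pi(\nabla\Phi(\mu(v_t)))U^\dagger w_t$, but the hypothesis only gives commutation at points fixed by $U$, which is circular. So I would avoid it as follows. Take the closed subspace $\VV^U \coloneqq \ker(U - I)$ and consider the ODE restricted to $\VV^U$. For $w\in\VV^U$, the vector field $F(w) \coloneqq -\frac12\Pi(\nabla\Phi(\mu(w)))w$ satisfies $UF(w) = -\frac12 U\Pi(\nabla\Phi(\mu(w)))U^\dagger Uw = F(w)$, so $F(w)\in\VV^U$. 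Thus $F$ is tangent to $\VV^U$. Local Lipschitzness holds because $\Phi\in C^2$ and $\mu$ is smooth, so the ODE solved inside $\VV^U$ from $v_0=v\in\VV^U$ has a solution. By uniqueness in $\VV$ this solution coincides with $v_t$ on its interval of existence.

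It remains to show the restricted solution exists on all of $[0,T_{\max})$. Its maximal interval cannot be shorter: a compact piece of the trajectory $v_t$ lies in $\VV^U$ wherever the two agree, so the standard continuation argument applies. Alternatively, run a connectedness argument on $\{t : Uv_t = v_t\}$, which is closed, and open by the local argument. This continuation bookkeeping, and checking that $\mu(w)\in\intr\dom\Phi$ along the trajectory so the hypothesis applies, is the only delicate point. It is handled by $v_t = g_t\cdot v$ from \cref{lem:flow concrete}.

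(b) Let $Uw=w$ and $X\coloneqq \mu(w)$. By definition of $\mu$, for $Y\in\bbH$ we have $\tr[\mu(Uw)Y] = \langle w, U^\dagger\Pi(Y)Uw\rangle/\|w\|^2$. Now $U^\dagger\Pi(Y)U = \Pi(R^{-1}Y)$, obtained by applying the relation to $R^{-1}Y$. Hence $\tr[\mu(w)Y] = \tr[\mu(w)R^{-1}Y] = \tr[(R\mu(w))Y]$ using orthogonality, so $R\mu(w)=\mu(w)$. The $R$-invariance of $\Phi$ with $R$ orthogonal gives $\nabla\Phi(RX) = R\nabla\Phi(X)$. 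Therefore $U\Pi(\nabla\Phi(X))U^\dagger = \Pi(R\nabla\Phi(X)) = \Pi(\nabla\Phi(RX)) = \Pi(\nabla\Phi(X))$, and also $RX\in\intr\dom\Phi$ since $RX = X$.

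(c) Take $U = \pi(u)$ and $R(X) = uXu^\dagger$. Then $R$ is Hilbert--Schmidt orthogonal and preserves $\bbH$, $\Phi$ is $K$-invariant, and $\pi(u)\Pi(X)\pi(u)^\dagger = \Pi(uXu^\dagger)$ as in the proof of \cref{eq:mu equivariance}. So (b) applies.
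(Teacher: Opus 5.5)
Your proof is correct and follows essentially the same route as the paper: induction using $Ue^{M}U^\dagger = e^{UMU^\dagger}$ for the discrete sequence; for the flow, invariance of the fixed subspace $\ker(U-I)$ under the vector field together with ODE uniqueness; and in (b), the same orthogonality and gradient-equivariance computation for $\mu$ and $\nabla\Phi$. The differences are minor. The paper proves the global identity $\mu(Uw)=R\mu(w)$ for all $w$, whereas you only show $R\mu(w)=\mu(w)$ at fixed points, which is all that (a) needs. The paper also leaves implicit the continuation and domain bookkeeping for the restricted ODE that you spell out.
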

\begin{proof}
(a)~We prove $U v_k=v_k$ for all~$k$ by induction.
The base case $k=0$ holds by assumption since~$v_0\propto v$.
Thus, assume that $Uv_k=v_k$.
Then,
\begin{align*}
    Uv_{k+1}
&= c_{k+1} U e^{-\frac{\eta}2\Pi(\nabla\Phi(\mu(v_k)))} v_k
= c_{k+1} e^{-\frac{\eta}2U \Pi(\nabla\Phi(\mu(v_k))) U^\dagger} U v_k
= c_{k+1} e^{-\frac{\eta}2 \Pi(\nabla\Phi(\mu(v_k)))} v_k
= v_{k+1},
\end{align*}
where the first and last equalities follow from \cref{eq:updatev_discrete} and~$\pi(e^X) = e^{\Pi(X)}$ for~$X\in\bbH$, the second equality uses $U e^M U^\dagger = e^{U M U^\dagger}$, and the penultimate equality follows from our assumption  and the induction hypothesis.

To prove that $U v_t = v_t$ for all $t\in[0,T_{\max})$, we observe that the vector field defining the flow preserves the fixed-point subspace of~$U$:
if $Uw=w$, then the vector field~$V_w \coloneqq -\frac12\Pi(\nabla\Phi(\mu(w))) w$ at that point satisfies $U V_w = V_w$ directly from our assumption on~$U$.
By ODE uniqueness, it follows that the solution starting at $v = Uv$ remains in this subspace.

(b)~For every $w\in\VV\setminus\{0\}$ and $X\in\bbH$, we have
\[
\tr[\mu(Uw)X]
=\frac{\langle Uw,\Pi(X)Uw\rangle}{\langle w,w\rangle}
=\frac{\langle w,\Pi(R^{-1}X)w\rangle}{\langle w,w\rangle}
=\tr\mleft[\mu(w)\left(R^{-1}X\right)\mright]
=\tr\mleft[\left(R\mu(w)\right)X\mright],
\]
where the first and the penultimate equalities follow from the definition~\eqref{eq:moment map} of the moment map;
the second equality uses the unitarity of~$U$ and our assumption on its relation to~$R$,
and the last equality uses the orthogonality of~$R$ with respect to the Hilbert--Schmidt inner product.
Thus:
\begin{equation}\label{eq:U-R-equiv}
    \mu(Uw) = R\mu(w) \qquad (\forall w\in\VV\setminus\{0\}).
\end{equation}
Next, we observe that the $R$-invariance of $\Phi$ implies the $R$-equivariance of its gradient:%
\footnote{We already used this fact in the proof of \cref{lem:discrete-iterations} in the special case of the $K$-action.}
\begin{equation}\label{eq:grad-Phi-equiv}
\nabla\Phi(RX)= R\,\nabla\Phi(X) \qquad (\forall X \in \intr\dom\Phi).
\end{equation}
To see this, note that for every $Y\in\bbH$,
\[
    \tr[\nabla\Phi(RX) \, Y]
    =\diff\Phi(RX)[Y]
    =\diff\Phi(X)\mleft[R^{-1}Y\mright]
    =\tr\mleft[\nabla\Phi(X) \left(R^{-1}Y\right)\mright]
    =\tr[(R \, \nabla\Phi(X))\, Y],
\]
where the second equality follows from the chain rule applied to $\Phi = \Phi \circ R$.
Using \cref{eq:U-R-equiv,eq:grad-Phi-equiv} and our assumption on the relation between~$U$ and~$R$, we obtain
\begin{equation*}
    U\Pi\mleft(\nabla\Phi(\mu(w))\mright)U^\dagger = \Pi\mleft(R\nabla\Phi(\mu(w))\mright) = \Pi\mleft(\nabla\Phi(R\mu(w))\mright) = \Pi\mleft(\nabla\Phi(\mu(Uw))\mright),
\end{equation*}
for every $w \in \VV \setminus \{0\}$ such that $\mu(w) \in \intr\dom\Phi$.
Thus the hypothesis in part~(a) is satisfied.

(c)~This follows at once from part~(b) applied to $U \coloneqq \pi(u) \in \U(\VV)$ and $R(X) \coloneqq u X u^\dagger$.
The map~$\Phi$ is $R$-invariant by~\ref{it:Phi2}, which is equivalent to assumption~\ref{it:F2}, and we have $U \Pi(X) U^\dagger = \pi(u) \Pi(X) \pi(u)^\dagger = \Pi(u X u^\dagger) = \Pi(RX)$ by $K$-equivariance of the Lie algebra action.
\end{proof}

\begin{remark}\label{rem:tensor symmetries}
In addition to the symmetries induced by elements of~$K$, \cref{prop:symmetry} applies to ``external'' symmetries, such as permutations of tensor factors.
Consider, for example, the tensor action discussed in \cref{sub:entanglementpolytope} when all factors have the same dimension, that is, the canonical action of $G \coloneqq \GL(n)^d$ on $\VV\coloneqq (\CC^n)^{\otimes d}$ (or on tensor tuples).
Any permutation~$\sigma\in S_d$ gives rise to a unitary~$U_\sigma\in \U(\VV)$, which acts by permuting the tensor factors, and to an orthogonal transformation~$R_\sigma \in \mathrm{O}(\bbH)$, which acts by permuting the components of~$\bbH=\Herm(n)^{\op d}$.
These satisfy $U_{\sigma}\Pi(X)U_{\sigma}^\dag=\Pi(R_{\sigma}X)$ for all~$X\in\bbH$.
Thus, whenever a tensor~$v \in \VV$ and the objective $\phi$ (equivalently, $\Phi$) are invariant under a permutation~$\sigma$, this symmetry is preserved throughout the Hadamard mirror descent sequence $(v_k)_{k\in\NN}$ and flow $(v_t)_{t\in [0,T_{\max})}$.
This applies, in particular, to objective functions of the form $\phi_{\Sum}$ that we will consider later; see \cref{eq:phisum}.
\end{remark}

\subsection{Smoothness and dual-smoothness of Kempf--Ness functions}\label{sub:dualsmooth}
It is known that Kempf--Ness functions are smooth.
However, we require \emph{dual-smoothness} as defined in \cref{def:dual smooth} to control the step size of Hadamard mirror descent.
We will now show that Kempf--Ness functions are dual-smooth with a natural parameter (\cref{thm:kn-dual-smooth}).
Although not required for our applications, we also determine the optimal smoothness parameter, improving on the known results in the literature~\cite{BFGOWW2019,HiraiSakabe2024}, and show that in the Hadamard setting, these two parameters are in general distinct (\cref{prop:kn-smooth,ex:countersmooth}).
As in the preceding subsections, we assume that~$\pi\colon G \to \GL(\VV)$ is a representation of a connected self-adjoint subgroup~$G\subseteq\GL(n)$.

\begin{definition}[Weight diameter and norm]\label{def:weight diameter}
We define the \emph{weight diameter} of the representation $\pi \colon G \to \GL(\VV)$, with associated Lie algebra representation~$\Pi \colon \Lie(G) \to \Lin(\VV)$, as
\begin{align*}
    D(\pi)
\coloneqq \max_{X \in \bbH, \, \norm X_{\mathrm F} = 1} \bigl( \lambda_{\max}(\Pi(X)) - \lambda_{\min}(\Pi(X)) \bigr)
= 2 \max_{X \in \bbH, \, \norm X_{\mathrm F} = 1} \min_{s \in \RR} \; \norm{\Pi(X) - s I}_{\ope},
\end{align*}
where $\lambda_{\max}$ and~$\lambda_{\min}$ denote the largest and smallest eigenvalues of a Hermitian operator, respectively.
The \emph{weight norm} of~$\pi$ is defined as
\begin{align*}
    N(\pi)
\coloneqq \norm{\Pi}_{\mathrm{F} \to \ope}
= \max_{X \in \bbH, \, \norm X_{\mathrm F} = 1} \norm{\Pi(X)}_{\ope}.
\end{align*}
\end{definition}

\noindent
The terminology is motivated by the notion of weights in representation theory:
one can associate to~$\pi$ a finite set of vectors in a Euclidean space, called the set of \emph{weights};
then $D(\pi)$ is the diameter of this set, while $N(\pi)$ is the maximal norm of any element in the set.%
\footnote{\label{foot:weights}We briefly sketch the connection.
As discussed in \cref{sub:selfconjugate}, every matrix~$X \in \bbH$ is conjugate to one in the Cartan subspace~$\AA$ by a unitary in~$K$.
Thus~$D(\pi)$ and~$N(\pi)$ can be computed by maximizing over elements~$X \in \AA$.
Now, $\AA$~is a linear space consisting of pairwise commuting Hermitian operators, hence the same is true for its image~$\Pi(\AA)$ under the Lie algebra representation; we may thus jointly diagonalize the operators in~$\Pi(\AA)$.
Because~$\Pi$ is linear, the eigenvalue of~$\Pi(X)$ on each common eigenspace must depend linearly on~$X\in\AA$.
In other words, there exists a finite set of so-called \emph{weights}~$\Omega \subseteq \AA^*$ such that~$\VV = \bigoplus_{\omega\in\Omega} \VV_\omega$ and each $\Pi(X)$ acts on the \emph{weight subspace}~$\VV_\omega$ by multiplication by~$\omega(X)$.
Equip~$\AA^*$ with the inner product inherited from~$\AA$.
Then it is easy to see that~$D(\pi) = \max_{\omega,\omega'\in\Omega} \norm{\omega-\omega'}$, while $N(\pi) = \max_{\omega\in\Omega} \norm\omega$.}
Consistent with this interpretation, but also directly from the definition, these are related by
\begin{align}\label{eq:weight diam vs norm}
    D(\pi) \leq 2 N(\pi).
\end{align}
In general, this bound is strict (see \cref{ex:countersmooth} below).

The weight norm was defined in~\cite{BFGOWW2019}, where it was shown that the Kempf--Ness function~$f_v$ is $N(\pi)^2$-smooth; see \cite[Prop.~3.13]{BFGOWW2019} (but note that their Riemannian metric differs by a constant factor) or \cite[Lem.~4.2~(1)]{HiraiSakabe2024}.
We believe the weight diameter has not been defined before; however, it is a natural quantity that generalizes the diameter of the Newton polytope, which appeared as a condition number in~\cite{BLNW2020}.
We will now show that the Kempf--Ness function~$f_v$ is in fact $D(\pi)^2/4$-smooth, which is optimal.

\begin{proposition}[Smoothness of Kempf--Ness functions]\label{prop:kn-smooth}
Let~$\pi \colon G \to \GL(\VV)$ be as above.
Then the Kempf--Ness function~$f_v$ is $D(\pi)^2/4$-smooth for all~$v \in \VV\setminus\{0\}$.
That is:
\begin{equation}\label{eq:smooth concrete}
    \frac{\diff^2}{\diff t^2} f_v\mleft(g e^{Xt} g^\dagger\mright) \leq \frac{D(\pi)^2}4 \norm{X}_{\mathrm F}^2 \qquad (t \in \RR, g \in G, X \in \bbH).
\end{equation}
Moreover, this is optimal: no smaller constant satisfies~\eqref{eq:smooth concrete} for all~$v \in \VV\setminus\{0\}$.
\end{proposition}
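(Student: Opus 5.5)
We will reduce the second derivative along an arbitrary geodesic to a variance computation at a single point. Since $x \mapsto g x g^\dagger$ is an isometry of $P$ and $f_v(g y g^\dagger) = \log\langle v, \pi(g)\pi(y)\pi(g)^\dagger v\rangle = f_{g^\dagger\cdot v}(y)$, it suffices to bound $\frac{\diff^2}{\diff t^2} f_w(e^{Xt})$ with $w \coloneqq g^\dagger\cdot v$. Moreover, $e^{X(t+t_0)} = e^{Xt_0/2} e^{Xt} e^{Xt_0/2}$ by commutativity, so the same reduction lets us evaluate the second derivative only at $t=0$.

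The core computation is as follows. With $A \coloneqq \Pi(X) \in \Herm(\VV)$ we have $f_w(e^{Xt}) = \log \langle w, e^{tA} w\rangle$. Differentiating twice at $t=0$ gives
\[
  \frac{\langle w, A^2 w\rangle}{\langle w,w\rangle} - \left(\frac{\langle w, A w\rangle}{\langle w,w\rangle}\right)^2 ,
\]
which is the variance of $A$ in the state $w/\norm w$. For any $s\in\RR$ this variance is at most $\langle w,(A-sI)^2 w\rangle/\langle w,w\rangle \le \norm{A-sI}_{\ope}^2$. Taking the minimum over $s$ and applying the second formula in \cref{def:weight diameter}, the bound becomes $(D(\pi)\norm X_{\mathrm F}/2)^2$ after scaling $X$ to unit norm. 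This proves \eqref{eq:smooth concrete}. Alternatively, one can invoke Popoviciu's inequality, $\mathrm{Var} \le (\lambda_{\max}-\lambda_{\min})^2/4$, directly.

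For optimality, pick a unit-norm $X\in\bbH$ attaining the maximum in the definition of $D(\pi)$. Let $e_+, e_-$ be unit eigenvectors of $\Pi(X)$ for $\lambda_{\max}$ and $\lambda_{\min}$, and take $v \coloneqq e_+ + e_-$ with $g = I$ and $t = 0$. The state is then an equal mixture of the two extreme eigenvalues, and its variance is exactly $(\lambda_{\max}-\lambda_{\min})^2/4 = D(\pi)^2/4$. Hence no smaller constant works. In the degenerate case $D(\pi)=0$ there is nothing to prove.

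The only mildly delicate points are two. First, we must check that $t\mapsto ge^{Xt}g^\dagger$ is indeed a geodesic through general points; this follows from the isometry property together with the fact that $t\mapsto e^{Xt}$ is a geodesic through $I$ (\cref{sub:pd,sub:selfconjugate}). Second, we must use that $\pi(e^{tX}) = e^{t\Pi(X)}$ and $\pi(g)^\dagger = \pi(g^\dagger)$, both of which were noted earlier. Beyond these, the variance bound is the main step, and it is elementary.
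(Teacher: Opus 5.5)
Your proposal is correct and follows essentially the same route as the paper. You reduce by equivariance ($f_v(gyg^\dagger)=f_{g^\dagger\cdot v}(y)$) to $g=I$, $t=0$, write the second derivative as the variance of $\Pi(X)$ in the normalized vector, bound it by $\min_s\norm{\Pi(X)-sI}_{\ope}^2\le D(\pi)^2\norm{X}_{\mathrm F}^2/4$, and attain equality with the equal superposition of the two extreme eigenvectors. Your explicit treatment of the time shift via $e^{X(t+t_0)}=e^{Xt_0/2}e^{Xt}e^{Xt_0/2}$ simply spells out a step the paper leaves implicit.
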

\begin{proof}
By the $G$-equivariance $f_v(g^\dagger xg) = f_{g\cdot v}(x)$, it suffices to show \cref{eq:smooth concrete} for~$g = I$ and~$t=0$.
Thus, define $h \colon \RR \to \RR$ by $h(t) \coloneqq f_v(e^{Xt}) = \log \braket{v, e^{\tilde Xt} v}$, where $\tilde X \coloneqq \Pi(X)$.
We may further assume that~$\norm v=1$, since rescaling~$v$ only changes~$h$ by an additive constant, which does not impact the derivatives.
The first derivative is $h'(t) = \frac {\braket{v, \tilde X e^{\tilde Xt} v}} {\braket{v, e^{\tilde Xt} v}}$ and hence, using $\norm v=1$,
\begin{align*}
    h''(0) = \braket{v, \tilde X^2 v} - \braket{v, \tilde X v}^2.
\end{align*}
This ``variance-type'' expression is invariant under substituting~$\tilde X \mapsto \tilde X - sI$ for any~$s\in\RR$.
Thus:
\begin{align}\label{eq:variance bound}
    h''(0)
\leq \min_{s \in \RR} \braket{v, (\tilde X - s I)^2 v}
\leq \min_{s \in \RR} \, \norm{\tilde X - s I}^2_{\ope}
\leq \frac {D(\pi)^2} 4 \norm{X}_{\mathrm F}^2
\end{align}
from the definition of the weight diameter.
This proves the desired smoothness.

To see that this cannot be improved, choose~$X \in \bbH$ with~$\norm X_{\mathrm F}=1$ that achieves the weight diameter, i.e., $D(\pi) = \lambda_{\max}(\Pi(X)) - \lambda_{\min}(\Pi(X))$.
If $D(\pi)=0$ there is nothing to show.
Otherwise, $\lambda_{\max}(\Pi(X)) > \lambda_{\min}(\Pi(X))$, so we can find orthonormal eigenvectors $v_{\max}$ and $v_{\min}$ of~$\Pi(X)$ corresponding to these eigenvalues.
Then $v = \frac1{\sqrt2} (v_{\max} + v_{\min})$ saturates \cref{eq:variance bound}.
\end{proof}

Next, we show that $f_v$ is also dual-smooth with parameter~$D(\pi)^2/2$ (which is \emph{twice} the optimal smoothness parameter determined above).
We caution that this dual-smoothness parameter is \emph{not} in general optimal: e.g., if~$G$ is commutative, then $P$ is Euclidean, and hence smoothness and dual-smoothness coincide (as discussed in \cref{sub:rel-smooth}).%
\footnote{In fact, since the weight system is the same for a group and its maximally commuting subgroup, the optimal dual-smoothness parameter cannot in general be computed from this data alone (without using, e.g., the root system or other information sensitive to the non-commutativity of the group). See \cref{ex:countersmooth} for an explicit example.}
However, \cref{thm:kn-dual-smooth} is still tight in the sense that there are representations such that no smaller constant is valid for all vectors (we give such an example in \cref{ex:countersmooth} below).

\begin{theorem}[Dual smoothness of Kempf--Ness functions]\label{thm:kn-dual-smooth}
Let~$\pi \colon G \to \GL(\VV)$ be as above.
Then the Kempf--Ness function~$f_v$ is $D(\pi)^2/2$-dual-smooth for all~$v \in \VV\setminus\{0\}$.
That is:%
\footnote{\Cref{eq:dual-smooth repeated} is equivalent to the original definition of dual-smoothness in \cref{eqn:dual-smooth} since parallel transport is an isometry. We have also multiplied both sides of the inequality by a factor of~2.}
\begin{equation}\label{eq:dual-smooth repeated}
    \norm{ \diff f_v(y) - \tau^{\ast}_{y \to x}\diff f_v(x) }^2 \leq D(\pi)^2 \left( f_v(x) - f_v(y) - \diff f_v(y)\mleft[\exp_{y}^{-1}x\mright]\right) \qquad (x,y\in P).
\end{equation}
\end{theorem}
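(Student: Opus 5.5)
The plan is to reduce \cref{eq:dual-smooth repeated} to a one-dimensional, essentially classical inequality between two probability distributions. The reduction uses equivariance and \cref{lem:KNderivative}.

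\emph{Step 1 (reduction to $y=I$).} Let $y = g^\dagger g$. The map $x\mapsto g^{-\dagger} x g^{-1}$ is an isometry of $P$ that sends $y$ to $I$, and $f_v(g^\dagger x' g) = f_{g\cdot v}(x')$. Both sides of \cref{eq:dual-smooth repeated} are invariant under isometries: the norms, the parallel transport, $\exp^{-1}$ and the Bregman divergence are all intrinsic. Hence it suffices to treat $y=I$, with $v$ replaced by $g\cdot v$. Since rescaling $v$ shifts $f_v$ by a constant, we may also assume $\norm v=1$.

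Write $x = e^{Z}$ with $Z\in\bbH$, set $A\coloneqq \Pi(Z)$, let $w \coloneqq x^{1/2}\cdot v = e^{A/2}v$, and let $\hat w \coloneqq w/\norm w$. By \cref{lem:KNderivative}~(b), $\diff f_v(I)$ corresponds to $\mu(v)$ and $\tau^*_{I\to x}\diff f_v(x)$ corresponds to $\mu(\hat w)$. The left-hand side of \cref{eq:dual-smooth repeated} is therefore $\norm{\mu(\hat w)-\mu(v)}_{\mathrm F}^2$.

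For the right-hand side we use $\exp_I^{-1}x = Z$. It equals
\[
\log\braket{v,e^{A}v} - \braket{v,A v}.
\]
Expand $v$ in an eigenbasis of $A$ with eigenvalues $a_i$, and set $p_i\coloneqq \abs{v_i}^2$ and $q_i \coloneqq p_i e^{a_i}/\sum_j p_j e^{a_j}$. Then this quantity is exactly $\mathrm{KL}(p\Vert q) = \sum_i p_i\log(p_i/q_i)$.

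\emph{Step 2 (bounding the left-hand side by infidelity).} Pick $X\in\bbH$ with $\norm X_{\mathrm F}=1$ and any $s\in\RR$, and set $B=\Pi(X)-sI$. The pairing is
\[
\tr[(\mu(\hat w)-\mu(v))X] = \braket{\hat w,B\hat w}-\braket{v,Bv},
\]
because the $s$-terms cancel for unit vectors. This is at most
\[
\norm B_{\ope}\,\bigl\| \ket{\hat w}\bra{\hat w} - \ket v\bra v\bigr\|_{\tr} = 2\norm B_{\ope}\sqrt{1-\abs{\braket{v,\hat w}}^2}.
\]
Minimizing over $s$ and using the second expression for $D(\pi)$ in \cref{def:weight diameter}, we get the bound $D(\pi)\sqrt{1-F}$ with $F \coloneqq \abs{\braket{v,\hat w}}^2$. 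Taking the supremum over $X$ gives
\[
\norm{\mu(\hat w)-\mu(v)}_{\mathrm F}^2 \le D(\pi)^2(1-F).
\]

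\emph{Step 3 (infidelity versus relative entropy).} A direct computation gives
\[
F = \Bigl(\sum_i p_i e^{a_i/2}\Bigr)^2\Big/\sum_i p_i e^{a_i} = \Bigl(\sum_i\sqrt{p_iq_i}\Bigr)^2,
\]
which is the squared Bhattacharyya coefficient of $p$ and $q$. It remains to show $1-F \le \mathrm{KL}(p\Vert q)$. First, $1-F\le -\log F$. Second, $-\log F = -2\log\sum_i\sqrt{p_iq_i}$ is the Rényi divergence of order $1/2$, which is at most the KL divergence; this follows from Jensen's inequality, since $-2\log \mathbb E_p[\sqrt{q/p}] \le \mathbb E_p[\log(p/q)]$. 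Combining Steps 2 and 3 yields \cref{eq:dual-smooth repeated}.

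The main obstacle is Step 2: choosing the right intermediate quantity so that the non-commutativity of $\Pi(X)$ and $\Pi(Z)$ does not matter. Passing through the pure-state trace distance and the fidelity achieves this, because the fidelity depends only on the single operator $A$, and for that operator everything commutes. This step is also where the factor $2$ relative to \cref{prop:kn-smooth} enters.
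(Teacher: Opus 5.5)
Your proof is correct and follows the paper's argument step for step: the same reduction to $y=I$ and $\norm v=1$, and the same duality bound $\norm{\mu(\hat w)-\mu(v)}_{\mathrm F}^2 \le D(\pi)^2\bigl(1-\abs{\braket{v,\hat w}}^2\bigr)$ obtained via the pure-state trace distance, followed by $1-F\le -\log F$. Your final step, $-\log F \le \mathrm{KL}(p\Vert q)$ by Jensen, is the same inequality the paper derives from convexity of $h(t)=f_v(e^{tZ})$ in the form $h(1)-2h(\tfrac12)\le h(1)-h'(0)$; you have simply written it in the classical $p,q$ language that the paper itself uses in \cref{ex:countersmooth}.
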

\begin{proof}
By the $G$-equivariance $f_v(g^\dagger xg) = f_{g\cdot v}(x)$, it suffices to show \cref{eq:dual-smooth repeated} for~$y = I$.
We may further assume that~$\norm v=1$, since rescaling~$v$ only changes~$f_v$ by an additive constant, which does not impact~\cref{eq:dual-smooth repeated}.
We start by computing the difference of cotangent vectors on the left-hand side: using \cref{lem:KNderivative}~(b), we find that, for any~$Z \in \bbH$,
\begin{align*}
    \left( \diff f_v(I) - \tau^{\ast}_{I \to x}\diff f_v(x) \right)\mleft[Z\mright]
=   \braket{v, \Pi(Z) v} - \braket{w, \Pi(Z) w}
=   \tr\mleft[ \Pi(Z) \left( v v^\dagger - w w^\dagger \right) \mright],
\end{align*}
where $w = (x^{1/2} \cdot v) / \norm{x^{1/2} \cdot v}$.
Because $v v^\dagger - w w^\dagger$ is traceless, this expression is invariant under substituting~$\Pi(Z) \mapsto \Pi(Z) - sI$ for any~$s\in\RR$.
Thus:
\begin{align*}
    \norm{ \diff f_v(I) - \tau^{\ast}_{I \to x}\diff f_v(x) }^2
= \max_{Z \in \bbH, \, \norm Z_{\mathrm F}=1} \min_{s \in \RR} \tr\mleft[ \left( \Pi(Z) - s I \right) \left( v v^\dagger - w w^\dagger \right) \mright]^2
\leq \frac {D(\pi)^2} 4 \norm{v v^\dagger - w w^\dagger}_{\tr}^2
\end{align*}
by duality and the definition of the weight diameter.
On the other hand:
\begin{align*}
    \frac14 \norm{v v^\dagger - w w^\dagger}_{\tr}^2
=   1 - \abs{\braket{v,w}}^2
=   1 - \frac {\braket{v, x^{1/2} \cdot v}^2} {\braket{v, x \cdot v}}
\leq - \log \frac {\braket{v, x^{1/2} \cdot v}^2} {\braket{v, x \cdot v}},
\end{align*}
where the first equality is the standard formula for the trace norm distance of two rank-one projections in terms of the overlap of the corresponding unit vectors (e.g., \cite[(1.186)]{watrous});
the second equality is obtained by substituting the definition of~$w$;
and the final inequality uses the general fact that $\log s \leq s - 1$ for $s\in\RR_{>0}$.
Write $x = e^X$ for some~$X \in \bbH$, and consider~$h(t) \coloneqq f_v(e^{Xt})$.
This is the restriction of the Kempf--Ness function to a constant-speed geodesic, so it is convex.
Thus:
\begin{align*}
  - \log \frac {\braket{v, x^{1/2} \cdot v}^2} {\braket{v, x \cdot v}}
&= h(1) - 2 h(1/2)
\leq h(1) - 2 \left( h(0) + \tfrac12 h'(0) \right)
= h(1) - h'(0) \\
&= f_v(x) - f_v(I) - \diff f_v(I)[\exp_I^{-1}x],
\end{align*}
where we first use the definition of~$h$, then its convexity, and in the last two steps we substitute
$h(0)=f_v(I)=0$,
$h(1) = f_v(x)$,
and $h'(0) = \diff f_v(I)[X] = \diff f_v(I)[\exp_I^{-1}x]$.
This proves the desired dual-smoothness.
\end{proof}

The dual-smoothness of the Kempf--Ness function implies an immediate corollary of \cref{thm:KN-general}.
When the objective~$\phi$ is $L$-smooth, the step size can be set to $\eta = 2/(D(\pi)^2 L)$:

\begin{corollary}[Convergence for smooth moment polytope optimization]\label{cor:KN-Qsmooth}
    Let $0 \neq v \in \VV$.
    Let $\phi \colon \AA \to \RR$ be an $L$-smooth convex Weyl group invariant function.
    Let $p^\star$ be a minimizer of~$\phi$ in~$\Delta(v)$.
    Let $(v_k)_{k \in \NN}$ be the sequence generated by \cref{eq:updatev_discrete} with step size~$0 < \eta \leq 2/(D(\pi)^2 L)$.
    Then, for all $k\geq1$,
    \[
        \phi\mleft(s(\mu(v_k))\mright) - \min_{p \in \Delta(v)}\phi(p) \le \frac{2 \left( \log\|v\| - \log\Capacity_{p^\star}(v) \right)}{\eta k}.
    \]
    In particular, taking $\eta = 2/(D(\pi)^2 L)$ gives
    \[
        \phi\mleft(s(\mu(v_k))\mright) - \min_{p \in \Delta(v)}\phi(p) \le \frac{D(\pi)^2 L \left( \log\|v\| - \log\Capacity_{p^\star}(v) \right)}k.
    \]
\end{corollary}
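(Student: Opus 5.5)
The corollary follows by chaining three results already in hand: \cref{prop:FandQ}, \cref{thm:kn-dual-smooth} with \cref{prop:dual-to-relative}, and \cref{thm:KN-general}~(a). The plan is to verify the hypotheses of \cref{thm:KN-general}~(a) and then read off the bound.

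The first step is to put $\phi$ into the framework. Since $\phi$ is $L$-smooth, it is finite-valued and differentiable, so $\dom\phi=\AA$ and \ref{it:F3} holds trivially. Convexity together with Weyl-group invariance gives \ref{it:F1} and \ref{it:F2}. By \cref{prop:FandQ}~(a),(c), the extension $\Phi=\phi\circ s$ is then an $L$-smooth, $K$-invariant convex function, and the associated holonomy-invariant objective $Q$ is $L$-smooth. The domain condition $s(\mu(g\cdot v))\in\intr\dom\phi$ holds automatically because $\intr\dom\phi=\AA$.

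The second step is the step size. By \cref{thm:kn-dual-smooth}, $f_v$ is $D(\pi)^2/2$-dual-smooth, so \cref{prop:dual-to-relative} shows that $(f_v,Q)$ is $L D(\pi)^2/2$-relatively smooth. \Cref{lem:relsmooth}~(b) then gives the step-size condition~\eqref{eq:step-size-condition} for every $\eta\le 2/(D(\pi)^2L)$.

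The remaining point is to match this with the formulation in \cref{thm:KN-general}~(a). The sequence $(v_k)$ from \eqref{eq:updatev_discrete} corresponds to the $x_k$-iteration started at $x_0=I$ by \cref{lem:discrete-iterations}. The preceding lemma shows that condition~\eqref{eq:step-size-KN} is equivalent to~\eqref{eq:step-size-condition} for that sequence. Hence the hypothesis of \cref{thm:KN-general}~(a) is met.

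\Cref{thm:KN-general}~(a) then yields the first bound. Substituting $\eta=2/(D(\pi)^2L)$ into it gives the second. If $D(\pi)=0$, the constraint on $\eta$ is vacuous, and the statement should be read with the natural convention that any $\eta>0$ is allowed. No step is a real obstacle; the only care needed is this bookkeeping between the vector iteration and the manifold iteration, which the two cited lemmas handle.
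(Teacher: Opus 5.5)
Your proposal is correct and follows essentially the same route as the paper. The paper's proof likewise chains \cref{prop:FandQ}~(c), \cref{thm:kn-dual-smooth} and \cref{prop:dual-to-relative} to obtain $L D(\pi)^2/2$-relative smoothness, invokes \cref{lem:relsmooth} for the step-size condition, and concludes via \cref{thm:KN-general}. Your explicit bookkeeping between \eqref{eq:step-size-KN} and \eqref{eq:step-size-condition}, and your remark on the degenerate case $D(\pi)=0$, are details the paper leaves implicit.
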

\noindent We emphasize that $\log\Capacity_{p^\star}(v)$ is finite by \cref{eq:Delta vs logCap}, as discussed above \cref{thm:KN-general}.
\begin{proof}
Note that any smooth convex Weyl group invariant function satisfies conditions~\ref{it:F1}, \ref{it:F2} and \ref{it:F3}.
Moreover, trivially, we have $s(\mu(g\cdot v)) \in \intr\dom\phi = \AA$ for all $g \in G$ because smooth functions are finite-valued by assumption.
Now, \cref{thm:kn-dual-smooth,prop:dual-to-relative,prop:FandQ}~(c), together with our assumption that~$\phi$ is~$L$-smooth imply that the pair~$(f_v,Q)$ is relatively $L D(\pi)^2/2$-smooth.
By \cref{lem:relsmooth}, it follows that the step-size condition is satisfied with our choice of~$\eta$.
Now the result follows by applying \cref{thm:KN-general}.
\end{proof}

\begin{example}\label{ex:countersmooth}
We now exhibit a group and representation such that, for every~$\delta\in(0,1)$, we can find a vector~$v$ such that the corresponding Kempf--Ness function~$f_v$ is $1/2$-smooth, but not $(1-\delta)$-dual-smooth.
This shows that in the Hadamard setting the two parameters can be distinct and establishes the tightness of \cref{thm:kn-dual-smooth}.
For~$n\geq2$, let $G=\GL(n)$ act on $\VV=\CC^n$ by matrix-vector multiplication, i.e., $\pi(g) = g$ and hence~$\Pi(X)=X$.
Then $P=\PD(n)$, $\bbH=\Herm(n)$, and we see that
\begin{align}\label{eq:fundamental diam}
    D(\pi)
= \max_{X \in \Herm(n), \, \norm X_{\mathrm F} = 1} \left( \lambda_{\max}(X) - \lambda_{\min}(X) \right)
= \sqrt 2,
\end{align}
hence $f_v$ is $\frac12$-smooth and $1$-dual-smooth for every~$v \in \VV\setminus\{0\}$ by \cref{prop:kn-smooth,thm:kn-dual-smooth}.
We also note that $N(\pi) = 1$; thus the inequality~\eqref{eq:weight diam vs norm} is strict.

We will now show that the dual-smoothness parameter cannot in general be improved.
To this end, we consider~$x=\diag(x_1,\dots,x_n)\in\PD(n)$ and~$y=I$; we also assume that~$\norm v=1$.
Then the Bregman divergence reduces to a KL-divergence
\begin{align*}
  D_{f_v}^P(x \Vert I)
= \log \braket{v, x v} - \braket{v, \log(x) v}
= \sum_{j=1}^n p_j \log \frac {p_j} {q_j}
\eqqcolon D_{\mathrm{KL}}(p \Vert q)
\end{align*}
between~$p_j \coloneqq \abs{v_j}^2$ and~$q_j \coloneqq p_j x_j / \sum_{j=1}^n p_j x_j$.
On the other hand, we can compute the left-hand side as in the proof of \cref{thm:kn-dual-smooth}, except that now $\Pi(X)=X$ for all~$X\in\Herm(n)$ and hence we obtain an equality:
if we set $w \coloneqq x^{1/2} v / \norm{x^{1/2} v}$, then
\begin{align*}
    \frac12 \norm{\diff f_v(I) - \tau^{\ast}_{I \to x}\diff f_v(x)}^2
&=   \frac12 \norm{ v v^\dagger - w w^\dagger }_{\mathrm F}^2
=   1 - \abs{\braket{v, w}}^2 \\
&=   1 - \frac{\braket{v, x^{1/2} v}^2}{\braket{v, x v}}
=   1 - \left( \sum_{j=1}^n \sqrt{p_j q_j} \right)^2
\eqqcolon 1 - F^2(p,q),
\end{align*}
where $F$ denotes the fidelity or Bhattacharyya coefficient.
Thus, we see that the optimal dual-smoothness parameter~$L_v$ of~$f_v$ can be lower bounded as
\begin{align*}
    L_v \geq \frac {1 - F^2(p,q)} {D_{\mathrm{KL}}(p \Vert q)},
\end{align*}
where~$q \neq p$ is any probability distribution with the same support as~$p$.
In particular, consider the vector~$v_\eps=(\sqrt{1-\eps^2},\eps,0,\dots,0)$, with associated probability distribution $p_\eps=(1-\eps^2,\eps^2,0,\dots,0)$, as well as $q_\eps=(1-\eps,\eps,0,\dots,0)$ for small~$\eps>0$.
Then:
\begin{align*}
    D_{\mathrm{KL}}(p_\eps \Vert q_\eps)
&= (1-\eps^2) \log(1+\eps) + \eps^2 \log \eps
= \eps + o(\eps),
\\
    1 - F^2(p_\eps,q_\eps)
&=  1 - \left( \sqrt{(1-\eps^2)(1-\eps)} + \eps^{3/2} \right)^2
= \eps + o(\eps),
\end{align*}
which shows that~$L_{v_\eps} \geq 1 - o(1)$.
\end{example}

\begin{remark}
The same reasoning as in \cref{ex:countersmooth} shows that the optimal dual-smoothness constant valid simultaneously for all Kempf--Ness functions~$f_v$, $v \in \VV \setminus \{0\}$, associated with a given representation~$\pi$ can be computed as follows:
\begin{align*}
    L_\pi \coloneqq \sup_{\norm v = 1, \, X \in \AA, D_{\mathrm{KL}}(p\Vert p^X)>0} \frac {\frac12 \norm{\mu(v) - \mu(e^{X/2} \cdot v)}_{\mathrm F}^2} {D_{\mathrm{KL}}(p\Vert p^X)},
\end{align*}
where $p_\omega = \norm{v_\omega}^2$ and $p_\omega^X = e^{\omega(X)} p_\omega / \sum_{\omega'} e^{\omega'(X)} p_{\omega'}$,
with $v=\sum_{\omega\in\Omega}v_\omega$ the weight-space decomposition relative to~$\AA$, as in \cref{foot:weights}.
We will not use this further in this paper.
\end{remark}

Finally, we bound the weight diameter and norm for (multi)homogeneous polynomial representations of products of general linear groups.
A representation is called \emph{polynomial} if each matrix entry of~$\pi(g)$ (the choice of basis is not relevant) is given by a polynomial function of the matrix entries of~$g \in G \subseteq \GL(n)$.
The following lemma improves the degree-based smoothness bounds obtained in~\cite{BFGOWW2019}.%
\footnote{We give an analytic proof, but \cref{lem:multihomogeneous} can be proved geometrically using the notion of weights.
For a multi-homogeneous representation of degrees~$m_1,\dots,m_d$, the weights can be identified with integer tuples~$\omega=(\omega_1,\dots,\omega_d)$ such that each $\omega_\ell$ is in the $m_\ell$-times dilated standard simplex; the latter has diameter at most~$m_\ell \sqrt{2}$ and its vertices have norm~$m_\ell$.}

\begin{lemma}\label{lem:multihomogeneous}
Let $\pi$ be a polynomial representation of $\GL(n_1) \times \cdots \times \GL(n_d)$ that is multi-homogeneous of degrees~$m_1,\dots,m_d$, i.e., $\pi(\lambda_1 g_1, \dots, \lambda_d g_d) = \lambda_1^{m_1} \cdots \lambda_d^{m_d} \pi(g_1,\dots,g_d)$ for all $g_\ell \in \GL(n_\ell)$,~$\lambda_\ell \in \CC^\times$.
Then the weight diameter and the weight norm can be bounded as
\begin{align*}
    D(\pi) \leq \sqrt{2 \sum_{\ell=1}^d m_\ell^2}
    \qquad\text{and}\qquad
    N(\pi) \leq \sqrt{\sum_{\ell=1}^d m_\ell^2}.
\end{align*}
In particular, the Kempf--Ness functions are $\frac12 (\sum_{\ell=1}^d m_\ell^2)$-smooth and $(\sum_{\ell=1}^d m_\ell^2)$-dual-smooth.
\end{lemma}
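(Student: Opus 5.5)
The plan is to reduce to the Cartan subspace and exploit the structure of the Lie algebra representation. Since $D(\pi)$ and $N(\pi)$ are defined as maxima over $X \in \bbH$ with $\norm X_{\mathrm F}=1$, and both $\lambda_{\max}-\lambda_{\min}$ and $\norm{\cdot}_{\ope}$ are invariant under unitary conjugation, the equivariance $\Pi(uXu^\dagger)=\pi(u)\Pi(X)\pi(u)^\dagger$ for $u\in K$ lets us assume $X=(X_1,\dots,X_d)$ is a tuple of real diagonal matrices, $X_\ell=\diag(a_{\ell,1},\dots,a_{\ell,n_\ell})$ with $\sum_{\ell,i}a_{\ell,i}^2=1$. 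Then $\pi(e^{tX})$ is the action of a commuting tuple of diagonal matrices. Since $\pi$ is polynomial, the entries of $\pi(\diag(z_1),\dots,\diag(z_d))$ are polynomials in the $z_{\ell,i}$. The representation of the torus $(\CC^\times)^{n_1+\cdots+n_d}$ is therefore diagonalizable with characters that are monomials $\prod_{\ell,i} z_{\ell,i}^{\omega_{\ell,i}}$ with $\omega_{\ell,i}\in\ZZ_{\geq0}$. Multi-homogeneity applied to scalar tuples forces $\sum_i \omega_{\ell,i}=m_\ell$ for each $\ell$.

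Consequently $\Pi(X)$ is diagonalizable with eigenvalues $\omega(X)=\sum_{\ell,i}\omega_{\ell,i}a_{\ell,i}$, where each $\omega_\ell=(\omega_{\ell,1},\dots,\omega_{\ell,n_\ell})$ lies in $m_\ell\cdot\Delta_{n_\ell}$, the dilated probability simplex.

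The bounds then become Euclidean estimates. For the weight norm, Cauchy--Schwarz gives
\[ \abs{\omega(X)}\le\norm\omega\norm X_{\mathrm F}=\norm\omega. \]
Moreover $\norm{\omega_\ell}^2\le(\sum_i\omega_{\ell,i})^2=m_\ell^2$ because the entries are nonnegative, so $N(\pi)\le\sqrt{\sum_\ell m_\ell^2}$. For the diameter, two weights $\omega,\omega'$ satisfy $\omega(X)-\omega'(X)\le\norm{\omega-\omega'}$. For each $\ell$ we need $\norm{\omega_\ell-\omega'_\ell}^2\le 2m_\ell^2$. This follows by writing the difference as $\alpha-\beta$ with $\alpha,\beta\geq0$ of disjoint support and equal sums $s\le m_\ell$, so that
\[ \norm{\alpha-\beta}^2=\norm\alpha^2+\norm\beta^2\le 2s^2\le 2m_\ell^2. \]
Summing over $\ell$ gives $D(\pi)\le\sqrt{2\sum_\ell m_\ell^2}$.

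Finally, the smoothness and dual-smoothness claims follow by plugging these bounds into \cref{prop:kn-smooth} and \cref{thm:kn-dual-smooth}: $D(\pi)^2/4\le\frac12\sum_\ell m_\ell^2$ and $D(\pi)^2/2\le\sum_\ell m_\ell^2$.

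The main obstacle is justifying rigorously that polynomiality plus multi-homogeneity yields weights with nonnegative integer entries and fixed block sums $m_\ell$. I would argue this through the torus: restricted to diagonal tuples, $\pi$ is a rational representation of an algorithmic torus, hence a direct sum of characters. Polynomiality excludes negative exponents, since a character occurring in a matrix entry which is polynomial must be a monomial. Comparing $\pi(\lambda_1 I,\dots,\lambda_d I)$ with $\lambda_1^{m_1}\cdots\lambda_d^{m_d}I$ then fixes the block degrees. Differentiating at the identity identifies the eigenvalues of $\Pi(X)$ with $\omega(X)$.
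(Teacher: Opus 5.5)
Your proof is correct, but it takes a different route from the paper's main argument. It is essentially the weight-based proof that the paper only mentions in a footnote.

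The paper argues analytically. Because $\pi$ is polynomial and multi-homogeneous, it has polynomial growth, $\norm{\pi(g_1,\dots,g_d)}_{\ope}\le c_\pi\prod_\ell\norm{g_\ell}_{\ope}^{m_\ell}$. Applying this to $e^{tX}$ and letting $t\to\infty$ gives $\lambda_{\max}(\Pi(X))\le\sum_\ell m_\ell\lambda_{\max}(X_\ell)$, and the same argument with $-X$ bounds $\lambda_{\min}$. The bound $\lambda_{\max}(X_\ell)-\lambda_{\min}(X_\ell)\le\sqrt2\norm{X_\ell}_{\mathrm F}$ plus Cauchy--Schwarz then finishes.

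That route uses only continuity, homogeneity, and the fact that $\Pi(X)$ is Hermitian, so that $\norm{e^{t\Pi(X)}}_{\ope}=e^{t\lambda_{\max}(\Pi(X))}$. It needs no reduction to diagonal $X$, no decomposition of the torus action into characters, and no integrality or nonnegativity of exponents.

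Your route pays for that structure theory. Each character is a diagonal entry of $S\,\pi(\diag(z_1),\dots,\diag(z_d))\,S^{-1}$ for a constant $S$, hence a linear combination of polynomial matrix entries, hence a monomial with nonnegative exponents. In exchange, you identify the weights explicitly as lattice points of $m_1\Delta_{n_1}\times\cdots\times m_d\Delta_{n_d}$. The constants then appear transparently as the diameter $\sqrt2\,m_\ell$ and the vertex norm $m_\ell$ of the dilated simplices, matching the paper's weight interpretation of $D(\pi)$ and $N(\pi)$.

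The two arguments are dual to each other. The paper's inequality $\lambda_{\max}(\Pi(X))\le\sum_\ell m_\ell\lambda_{\max}(X_\ell)$ is exactly the support-function form of the statement that the weights lie in this product of simplices.

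When you write it up, state explicitly that $\norm{\Pi(X)}_{\ope}=\max_\omega\abs{\omega(X)}$ relies on $\Pi(X)$ being Hermitian for the $K$-invariant inner product. Being diagonalizable via your change of basis~$S$ is not enough; Hermitian-ness holds in the paper's setting.
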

\begin{proof}
Since $\pi$ is a polynomial function, it has polynomial growth.
To see this, note that as a polynomial function, $\pi$ can be extended continuously to $\CC^{n_1\times n_1} \times \cdots \times \CC^{n_d\times n_d}$.
Then the multi-homogeneity implies that there exists a constant~$c_\pi$ such that, for all $g_1,\dots,g_d$,
\begin{align*}
    \norm{\pi(g_1,\dots,g_d)}_{\ope} \leq c_\pi \prod_{\ell=1}^d \norm{g_\ell}_{\ope}^{m_\ell}.
\end{align*}
Thus, for all~$X=(X_1,\dots,X_d)\in\bbH=\Herm(n_1) \times \dots \times \Herm(n_d)$ and all~$t>0$,
\begin{align*}
    e^{t\lambda_{\max}(\Pi(X))}
= \norm{e^{t \Pi(X)}}_{\ope}
\leq c_\pi \prod_{\ell=1}^d \norm{e^{tX_\ell}}_{\ope}^{m_\ell}
= c_\pi e^{t \sum_{\ell=1}^d m_\ell \lambda_{\max}(X_\ell)}.
\end{align*}
Letting $t\to\infty$, we see that $\lambda_{\max}(\Pi(X)) \leq \sum_{\ell=1}^d m_\ell \lambda_{\max}(X_\ell)$.
Applying the same argument to~$-X$ gives $\lambda_{\min}(\Pi(X)) \geq \sum_{\ell=1}^d m_\ell \lambda_{\min}(X_\ell)$.
Together:
\begin{align*}
    \lambda_{\max}(\Pi(X)) - \lambda_{\min}(\Pi(X))
&\leq \sum_{\ell=1}^d m_\ell \left( \lambda_{\max}(X_\ell) - \lambda_{\min}(X_\ell) \right) \\
&\leq \sqrt 2 \sum_{\ell=1}^d m_\ell \norm{X_\ell}_{\mathrm F}
\leq \sqrt 2 \sqrt{\sum_{\ell=1}^d m_\ell^2} \cdot \norm{X}_{\mathrm F},
\end{align*}
using the Cauchy--Schwarz inequality.
This proves that $D(\pi) \leq \sqrt 2 \sqrt{\sum_{\ell=1}^d m_\ell^2}$.
Then the smoothness and dual-smoothness bounds follow from \cref{prop:kn-smooth,thm:kn-dual-smooth}.
The claim for the weight norm follows in the same manner, using $\norm{X}_{\ope} = \max \{ \lambda_{\max}(X), -\lambda_{\min}(X) \}$.
\end{proof}

We note that our bound on the weight diameter is stronger than what is obtained from the bound on the weight norm and \cref{eq:weight diam vs norm}.
In the single-factor case~$d=1$, the weight norm bound was proved in \cite[Lem.~6.1]{BFGOWW2019}.

\subsection{Tensor action and entanglement polytopes}\label{sub:entanglementpolytope}
In this subsection, we specialize the preceding results to a group action that underlies several important applications:
the \emph{tensor action}.
Let $\Ten(n_0; n_1, \ldots, n_d) \coloneqq \CC^{n_0}\otimes\CC^{n_1}\otimes\cdots\otimes\CC^{n_d}$ denote the space of $(d+1)$-tensors of shape $n_0\times n_1\times\cdots\times n_d$.
Each element $T \in \Ten(n_0; n_1, \ldots, n_d)$ can also be interpreted as an $n_0$-tuple of tensors $T = (t_1, \ldots, t_{n_0})$, where each $t_i \in \CC^{n_1}\otimes\cdots\otimes\CC^{n_d}$ is a tensor of shape~$n_1\times\cdots\times n_d$ ($i \in [n_0]$).
Accordingly, we will call $T$ a \emph{tensor tuple} (unless $n_0 = 1$).
Recall that $G = \GL(n_1)\times\cdots\times \GL(n_d)$ is naturally a self-adjoint subgroup of~$\GL(n_1+\cdots+n_d)$, with maximal compact subgroup $K = \U(n_1)\times\cdots\times\U(n_d)$ (\cref{ex:gl}).

\begin{definition}[Tensor action]
The \emph{tensor action} of $\GL(n_1)\times\cdots\times \GL(n_d)$ is the representation~$\pi$ on $\Ten(n_0;n_1,\ldots,n_d)$ given by
\begin{equation}\label{eq:tensor action}
    (g_1, \ldots, g_d) \cdot T \coloneqq \pi(g_1,\dots,g_d)T \coloneqq (I\otimes g_1\otimes\cdots\otimes g_d)T.
\end{equation}
If we think of $T=(t_i)$ as a tuple, then $g \cdot T=(t'_i)$ is the tuple of tensors $t'_i = (g_1\otimes\cdots\otimes g_d)t_i$.
\end{definition}

The weight diameter of this representation is bounded by~$D(\pi) \leq \sqrt{2d}$ because it is multilinear (\cref{lem:multihomogeneous}).
Recall also that $\bbH = \Herm(n_1) \times \cdots \times \Herm(n_d)$.

The moment map corresponding to the tensor action computes the normalized \emph{(one-body) quantum marginals} of the tensor tuple~$T$ (which can be thought of as a multipartite quantum state):
\begin{equation}\label{eq:momentmap_tensor}
\begin{aligned}
    \mu \colon \Ten(n_0;n_1, \ldots, &n_d)\setminus\{0\} \to \PSD(n_1) \times\cdots\times\PSD(n_d), \\
    \mu(T) &= \left( \frac{T^{(1)}T^{(1)\dagger}}{\norm{T}^2}, \ldots, \frac{T^{(d)}T^{(d)\dagger}}{\norm{T}^2} \right),
\end{aligned}
\end{equation}
where $T^{(\ell)} \colon \bigotimes_{j\in \{0, 1, \ldots, d\}\setminus\{\ell\}}\CC^{n_j} \to \CC^{n_\ell}$ denotes the \emph{$\ell$-th principal flattening} of a tensor $T \in \CC^{n_0} \ot \CC^{n_1} \ot \cdots \ot \CC^{n_d}$.
Regarding~$T=(t_i)_{i \in [n_0]}$ as a tensor tuple, we have $T^{(\ell)}T^{(\ell)\dagger} = \sum_{i\in [n_0]}t_i^{(\ell)}t_i^{(\ell)\dagger}$.
To make the connection to quantum information theory explicit, we note that the moment map only depends on the positive-semidefinite \emph{density operator} $\rho \coloneqq \sum_{i=1}^{n_0} t_i t_i^\dagger / \norm T^2$ on~$\CC^{n_1} \ot \cdots \ot \CC^{n_d}$.
Then
\begin{equation}\label{eq:moment map vs partial trace}
    \mu(T)_\ell = \rho_\ell,
\quad\text{where}\quad \rho_\ell \coloneqq \tr_{\check\ell}\rho
\end{equation}
and the operation $\tr_{\check\ell}$ denotes the \emph{partial trace} over all but the $\ell$-th tensor factor; it is defined for arbitrary operators~$M \in \Lin(\CC^{n_1} \ot \cdots \ot \CC^{n_d})$ by
\begin{align}\label{eq:partial trace}
(\tr_{\check\ell}M)_{i,j} \coloneqq
\qquad\mathclap{\sum_{\substack{i_1\in[n_1],\dots,i_{\ell-1}\in[n_{\ell-1}],\\i_{\ell+1}\in[n_{\ell+1}],\dots,i_d\in[n_d]}}}\qquad
M_{(i_1,\dots,i_{\ell-1},i,i_{\ell+1},\dots,i_d),(i_1,\dots,i_{\ell-1},j,i_{\ell+1},\dots,i_d)}.
\end{align}
The density operator $\rho$ describes a quantum state of~$d$ particles, and $\rho_\ell$ describes the quantum state of the $\ell$-th particle ($\ell\in[d]$);
the normalized tensor~$T/\norm T$ is a so-called \emph{purification} of~$\rho$.

As in \cref{ex:gl}, we take the Cartan subspace~$\AA$ to be the tuples of real diagonal matrices, and the Weyl chamber~$C \subseteq \AA$ to be the tuples of diagonal matrices with nonincreasing entries.
Hence, we may naturally identify
$\AA \cong \RR^{n_1}\times\cdots\times\RR^{n_d}$ and
$C \cong \RR^{n_1}_{\downarrow}\times\cdots\times\RR^{n_d}_{\downarrow}$.
Then the moment polytope~$\Delta(T)$, which in this setting is called the \emph{entanglement polytope}~\cite{WDGC2013_entanglement} of~$T$, is
\begin{equation}\label{eq:entanglementpolytope}
    \Delta(T) = \overline{\left\{\left(\spec\frac{S^{(1)}S^{(1)\dagger}}{\|S\|^2}, \ldots, \spec\frac{S^{(d)}S^{(d)\dagger}}{\|S\|^2}\right)\ \middle|\ S = (I \ot g_1 \ot \cdots \ot g_d) T, \; g_\ell \in \GL(n_\ell) \right\}}.
\end{equation}
As a special case of~\eqref{eq:min_over_momentpolytope}, we now consider convex optimization on the entanglement polytope:
\begin{equation}\label{eq:min_over_entanglementpolytope}
    \minproblem{\phi(p)}{p \in \Delta(T)},
\end{equation}
for an objective~$\phi \colon \RR^{n_1}\times\cdots\times\RR^{n_d} \to \overline\RR$ satisfying \ref{it:phi1gl}--\ref{it:phi3gl} in \cref{ex:P phi}.
Then Hadamard mirror descent leads to the following iteration of tensor tuples $(T_k)_{k \in \NN}$ as a special case of \cref{eq:updatev_discrete,eq:updatev_discrete phi}:
\begin{align}\label{eq:updateTk}
    T_{k + 1}
\coloneqq c_{k+1} \left( I \ot \bigotimes_{\ell=1}^d e^{-\frac\eta2\nabla_\ell \Phi(\mu(T_k))} \right) T_k
= c_{k+1} \left( I \ot \bigotimes_{\ell=1}^d u_\ell^\dagger e^{-\frac\eta2\diag(\nabla_\ell \phi(p))} u_\ell \right) T_k,
\quad
    T_0 \coloneqq c_0 T,
\end{align}
where the $c_k\in\CC^\times$ are arbitrary nonzero scalars, and $\nabla_\ell\Phi$ denotes the $\ell$-th component of the gradient ($\ell \in [d]$);
in the second formula, the unitaries~$u_\ell \in \U(n_\ell)$ are chosen to diagonalize the quantum marginals, i.e., $u_\ell \mu(T_k)_\ell u_\ell^\dagger = \diag(p_\ell)$ for $p_\ell \in \RR^{n_\ell}$, and we denote $p=(p_1,\dots,p_d)$.

Using these formulas, we can now state concrete versions of \cref{thm:KN-general,cor:KN-Qsmooth} in this setting.
For conciseness, we only present the discrete-time results.

\begin{theorem}[Convergence for entanglement polytope optimization]\label{cor:entanglementpolytope}
    Let $0 \neq T \in \Ten(n_0; n_1, \ldots, n_d)$.
    Let $\phi \colon \RR^{n_1} \times \cdots \times \RR^{n_d} \to \overline\RR$ be a function that satisfies~\ref{it:phi1gl}--\ref{it:phi3gl} and $\spec\mu(g\cdot T) \in \intr\dom\phi$ for all~$g \in \GL(n_1)\times\cdots\times\GL(n_d)$.
    Let $p^\star$ be a minimizer of~$\phi$ in~$\Delta(T)$.
    Let $(T_k)_{k \in \NN}$ be the sequence of tensor tuples generated by \cref{eq:updateTk}.
    Under the step-size condition~\eqref{eq:step-size-KN}, it holds that, for all $k\geq1$,
    \[
        \phi(\spec \mu(T_k)) - \min_{p \in \Delta(T)} \phi(p) \leq 2 \frac{\log \norm T - \log\Capacity_{p^\star}(T)}{\eta k}.
    \]
    Moreover, if $\phi$ is $L$-smooth, then the step size can be taken as $\eta = 1/(d L)$.
\end{theorem}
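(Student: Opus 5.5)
This is a direct specialization of \cref{thm:KN-general}~(a) and \cref{cor:KN-Qsmooth} to the tensor action, so the plan is to check hypotheses and translate notation.

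First, we identify the setting. The group is $G=\GL(n_1)\times\cdots\times\GL(n_d)$, which is a connected self-adjoint subgroup (\cref{ex:gl}), acting on $\VV=\Ten(n_0;n_1,\ldots,n_d)$ via \cref{eq:tensor action}. This is a rational representation, and $K=\U(n_1)\times\cdots\times\U(n_d)$ acts unitarily for the standard inner product. By \cref{ex:P phi}, a function $\phi$ satisfying \ref{it:phi1gl}--\ref{it:phi3gl} is exactly a function on $\AA\cong\RR^{n_1}\times\cdots\times\RR^{n_d}$ satisfying \ref{it:F1}--\ref{it:F3}. Moreover, $s$ corresponds to $\spec$, so $s(\mu(g\cdot T))\in\intr\dom\phi$ is the stated hypothesis.

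Second, the iteration \cref{eq:updateTk} is literally \cref{eq:updatev_discrete}. This holds because $\pi(e^{-\frac\eta2 X})=I\ot\bigotimes_\ell e^{-\frac\eta2 X_\ell}$ for $X\in\bbH$. The second form in \cref{eq:updateTk} follows from \cref{prop:FandQ}~(b), as in \cref{eq:updatev_discrete phi}. We also have $\phi(s(\mu(T_k)))=\phi(\spec\mu(T_k))$. The convergence bound then follows verbatim from \cref{thm:KN-general}~(a), with $v=T$ and $\Delta(v)=\Delta(T)$ as in \cref{eq:entanglementpolytope}. Note that $\log\Capacity_{p^\star}(T)$ is finite by \cref{eq:Delta vs logCap}.

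Third, for the smooth case we invoke \cref{cor:KN-Qsmooth}, which allows any step size $\eta\le 2/(D(\pi)^2L)$. The representation is multi-homogeneous of degrees $m_1=\cdots=m_d=1$, since the $\CC^{n_0}$ factor is acted on trivially and does not contribute. Hence \cref{lem:multihomogeneous} gives $D(\pi)\le\sqrt{2d}$, so $D(\pi)^2\le 2d$ and $2/(D(\pi)^2L)\ge 1/(dL)$. Therefore $\eta=1/(dL)$ is admissible.

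One point needs care. \cref{cor:KN-Qsmooth} assumes $\phi$ is finite-valued and $L$-smooth; our $L$-smooth $\phi$ is finite-valued by the convention on smoothness, so the domain condition holds trivially. Relative smoothness, via \cref{prop:dual-to-relative}, then gives the step-size condition \eqref{eq:step-size-KN} through \cref{lem:relsmooth}. No step is a real obstacle; the only thing to verify is the degree and weight-diameter computation for the tensor action, including the trivial action on the $n_0$ factor.
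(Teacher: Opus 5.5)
Your proposal is correct and matches the paper's proof. Like the paper, you specialize \cref{thm:KN-general}~(a) to the tensor action. For the smooth case you use \cref{cor:KN-Qsmooth} together with the bound $D(\pi)\leq\sqrt{2d}$ from \cref{lem:multihomogeneous}, which makes $\eta = 1/(dL) \leq 2/(D(\pi)^2 L)$ admissible.
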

\begin{proof}
This follows immediately from \cref{thm:KN-general,cor:KN-Qsmooth} and the formulas above, including the bound~$D(\pi) \leq \sqrt{2d}$ on the weight diameter from \cref{lem:multihomogeneous}.
\end{proof}

Importantly, for tensor tuples with Gaussian integer entries,%
\footnote{As an easy consequence, the bound on the objective gap also holds for tensors with Gaussian rational entries.}
 i.e., entries in~$\ZZ[\mathrm{i}]$, we are able to give a lower bound on the $K$-invariant logarithmic capacity that is polynomial in the bit-length of the tensor tuples and applies uniformly to any point in the entanglement polytope.

\begin{theorem}[Polynomial iteration bound for Gaussian integer tensors]\label{thm:tensor_f+b}
    Let $n_{\max} \coloneqq \max_{\ell = 1}^dn_\ell$.~Then,
    \[
        -\log\Capacity_p(T) \leq \poly(d, n_{\max})
    \]
    for all tensor tuples $0 \neq T \in \Ten(n_0; n_1, \ldots, n_d)$ with Gaussian integer entries and all $p \in \Delta(T)$.

    Accordingly, \cref{cor:entanglementpolytope} holds with the following convergence bound, where $\braket T$ denotes the bit-length of the tensor tuple~$T$:
    \[
        \phi(\spec \mu(T_k)) - \min_{p \in \Delta(T)} \phi(p) \leq \frac1{\eta k} \poly(d, n_{\max}, \braket T).
    \]
\end{theorem}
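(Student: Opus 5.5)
The plan is to reduce the uniform statement to the vertices of $\Delta(T)$ by a concavity argument, and then to bound the capacity at each vertex algebraically using integrality. The final convergence bound then follows by plugging into \cref{cor:entanglementpolytope}. There, $\log\norm T \le \braket T$ trivially, and $\Capacity_{p^\star}(T)$ is controlled by the first assertion applied to $p=p^\star\in\Delta(T)$.

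\emph{Step 1 (concavity in $p$ on the Weyl chamber).} I would first show that for fixed $v$ the map $C\ni p\mapsto \log\capacity_p(v)$ is concave.
\begin{itemize}
\item By an Iwasawa-type computation, for $p\in C$ and $x\in P$ the Busemann function $b^{p}(x)$ is linear in $p$. Concretely, write $x=h^\dagger h$ with $h$ in the Borel subgroup of lower triangular blocks (via the Cholesky decomposition). Then $b^p(x) = -\tr[p\,\log \diag(h^\dagger h)]$ up to the paper's sign and scale conventions, and the identity $b^{\mu(v)}$-type computation in \cref{lem:asymptoticrays} fixes these. This is where $p\in C$ is needed: the ordering of $p$ makes the limit along $e^{tp}$ pick out the diagonal of the Cholesky factor.
\item Consequently $f_v + b^p$ is affine in $p$ on $C$ for each fixed $x$.
\item Hence $\log\capacity_p(v)=\frac12\inf_x(\cdot)$ is concave in $p$, being an infimum of affine functions.
\end{itemize}
Now fix a single $u\in K$ such that $\log\capacity_{q}(u\cdot T)\ge -B$ for every vertex $q$ of $\Delta(T)$. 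For arbitrary $p=\sum_j\lambda_j q_j\in\Delta(T)$, concavity gives $\log\Capacity_p(T)\ge\log\capacity_p(u\cdot T)\ge -B$. This reduces the claim to finding one good $u$ that works simultaneously for all vertices.

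\emph{Step 2 (vertices via highest-weight polynomials).} The vertices of $\Delta(T)$ are rational points. Each is determined by weights of the tensor action, whose entries lie in $\{0,1\}$, so their denominators $m_q$ are at most $\exp(\poly(d,n_{\max}))$ by Cramer's rule. By \cref{prop:shiftingtrick_orig} and its representation-theoretic form, for each vertex $q$ there is a highest-weight polynomial $P_q$ of weight $m_q q$ and degree $m_q$ such that $P_q(u\cdot T)\neq 0$ for generic $u$. By standard constructions (as in \cite{BFGOWW2019,FW2020}), $P_q$ can be taken with integer coefficients and coefficient norm at most $\exp(\poly\cdot m_q)$. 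The standard capacity bound $\capacity_q(w)\ge |P_q(w)|^{1/m_q}/\norm{P_q}^{1/m_q}$ then yields the estimate.

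For this to work, $|P_q(u\cdot T)|$ must be bounded below. Replacing $u$ by an element $g$ of the Borel subgroup leaves the highest-weight vector invariant up to a character, so $\capacity_q(g\cdot T)$ and $\capacity_q(u\cdot T)$ are comparable. I would then take $g$ to be unipotent lower triangular with small integer entries. Two consequences follow:
\begin{itemize}
\item $P_q(g\cdot T)$ is a nonzero Gaussian integer, so $|P_q(g\cdot T)|\ge 1$.
\item Its contribution to $\log\capacity_q$ is therefore $\ge -\frac{1}{m_q}\log\norm{P_q}-O(\log\norm{g})\ge -\poly(d,n_{\max})$.
\end{itemize}
Replacing $\norm T$ by the $\log\norm{g}$ term keeps the bound independent of $\braket T$, as the statement requires. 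The existence of one integer $g$ that works simultaneously for all (exponentially many) vertices follows from Schwartz–Zippel applied to the product $\prod_q P_q(g\cdot T)$. That product has degree $\le \sum_q m_q$, which is $\exp(\poly)$, so the integer grid must have side length $\exp(\poly)$; its entries therefore have only $\poly$ bit-length and $\log\norm g\le\poly(d,n_{\max})$.

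\emph{Main obstacle.} I expect the main difficulty to be the quantitative bound on $\norm{P_q}$ and on $\deg P_q$ uniformly over all vertices $q$. This requires a degree bound for highest-weight vectors that certify membership of $q$ in $\Delta(T)$, for which I would invoke the Derksen/BFGOWW-type bounds. The second delicate point is the precise Busemann–Cholesky formula in Step 1, which must be verified with the paper's normalization of $b^X$ (including $b^X(I)=0$).
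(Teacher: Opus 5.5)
Your reduction to vertices in Step~1 is sound in substance. Linearity of $b^p$ in $p$ on the closed chamber is standard, although the formula should involve the diagonal of the triangular factor $h$, not $\diag(h^\dagger h)=\diag(x)$. The final deduction from \cref{cor:entanglementpolytope} is also fine.

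The genuine gap is the step that transfers a bound for an integer translate $g\cdot T$ back to $\Capacity_q(T)=\sup_{u\in K}\capacity_q(u\cdot T)$. You take $g$ unipotent lower triangular, i.e., inside the very Borel subgroup from Step~1. That subgroup makes each highest weight vector $P_q$ semi-invariant and fixes the point at infinity of $t\mapsto e^{tq}$. For such $g$ the character is trivial, so $P_q(g\cdot T)=P_q(T)$ and $\capacity_q(g\cdot T)=\capacity_q(T)$ identically in $g$. The Schwartz--Zippel step is therefore vacuous, and the argument breaks down as soon as some vertex has $\capacity_q(T)=0$. This already happens for $T=e_2\otimes e_2\in\CC^2\otimes\CC^2$. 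Here $\Delta(T)=\{((1,0),(1,0))\}$, every lower-unipotent $g$ fixes $T$, and $\log\capacity_{((1,0),(1,0))}(T)=-\infty$ (take $x=(\diag(a,1),\diag(a,1))$, $a\to\infty$). Borel elements only compare $\capacity_q(b\cdot w)$ with $\capacity_q(w)$ for the \emph{same} $w$. They cannot produce the generic $u\in K$ that the shifting trick requires.

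The missing idea is the paper's. First, draw $g$ from a set that is generic modulo that Borel. The paper uses arbitrary integer matrices with entries in $[S]$ via \cref{prop:tensor_randomnessbound}; the opposite unipotent group would also work. Second, relate $g\cdot T$ to some $u\cdot T$ through the Iwasawa decomposition $g=bu$, equivalently \cref{lem:asymptoticrays}: $t\mapsto g^\dagger e^{tp}g$ is asymptotic to $t\mapsto e^{tu^\dagger pu}$. This gives $2\log\capacity_{u^\dagger pu}(T)=\inf_x f^p_{g\cdot T}(x)-c$ with $\abs{c}\le\norm{p}_{\mathrm F}\,\dist(I,g^\dagger g)$. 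Bounding $\dist(I,g^\dagger g)$ requires control of the smallest as well as the largest singular value of $g$, not just $\log\norm g$. \Cref{lem:busemann_valuebound} supplies this from integrality and $\abs{\det g}\ge1$.

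With this repair your vertex-by-vertex route could be completed, but you would still owe two quantitative inputs you only sketch. You need a bound on the vertex denominators. You also need a degree bound for nonvanishing highest weight vectors to feed Schwartz--Zippel; the minimal degree can be a proper multiple of $m_q$, since the relevant semigroup of highest weights need not be saturated. The paper avoids both by citing \cref{prop:tensor_capacitylowerbound}, whose bound holds for all $p\in\AA$ independently of any degree. It combines this with \cref{prop:tensor_randomnessbound}, which provides one $g$ valid for all $p\in\Delta(T)$ simultaneously.
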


The proof is given at the end of this subsection.
It relies on a combination of results from the literature~\cite{BFGOWW2018,BCLNWZ2025computingmomentpolytopes}, as well as the following lemma.

\begin{lemma}\label{lem:busemann_valuebound}
    Let $g \in \GL(n)$ be such that each entry~$g_{ij}$ is an integer with absolute value at most~$S\geq1$.
    Then,
    $\dist(I, g^\dagger g) \leq 2 n^{3/2}\log (nS)$.
\end{lemma}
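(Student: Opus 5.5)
We need to bound $\dist(I, g^\dagger g) = \norm{\log(g^\dagger g)}_{\mathrm F}$ for an invertible integer matrix $g$ with entries bounded by $S$. Since $g^\dagger g$ is positive definite with eigenvalues $\sigma_i^2$ (the squared singular values of $g$), we have
\[
\dist(I,g^\dagger g) = \Bigl(\sum_{i=1}^n (2\log\sigma_i)^2\Bigr)^{1/2} \leq 2\sqrt n \max_i \abs{\log\sigma_i}.
\]
So the plan is to bound $\log\sigma_{\max}$ from above and $\log\sigma_{\min}$ from below, each in absolute value by roughly $n\log(nS)$.

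For the upper bound, we use $\sigma_{\max} = \norm{g}_{\ope} \leq \norm{g}_{\mathrm F} \leq nS$. This gives $\log\sigma_{\max} \leq \log(nS)$.

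For the lower bound we use integrality. Since $g$ is invertible with integer entries, $\abs{\det g}\geq 1$. Also $\abs{\det g} = \prod_i \sigma_i \leq \sigma_{\min}\,\sigma_{\max}^{n-1}$. Hence
\[
\sigma_{\min} \geq \sigma_{\max}^{-(n-1)} \geq (nS)^{-(n-1)},
\]
and so $-\log\sigma_{\min} \leq (n-1)\log(nS)$. An alternative route is Cramer's rule with Hadamard's inequality, which bounds $\norm{g^{-1}}_{\ope}$ by roughly $n\,(\sqrt n S)^{n-1}$; the determinant argument is cleaner, so we take that one.

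Combining the two bounds:
\[
\max_i\abs{\log\sigma_i} \leq \max\{\log(nS),\,(n-1)\log(nS)\} \leq n\log(nS),
\]
using $nS\geq1$ so that all logarithms are nonnegative. Therefore $\dist(I,g^\dagger g) \leq 2\sqrt n\cdot n\log(nS) = 2n^{3/2}\log(nS)$.

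The only step needing care is the lower bound on $\sigma_{\min}$. The point is that it must come from integrality via $\abs{\det g}\geq1$, not from any analytic estimate; once that is in place, the rest is bookkeeping. The constant $2$ arises from $\log\sigma_i^2 = 2\log\sigma_i$.
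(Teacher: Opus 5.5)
Your proof is correct and follows essentially the same argument as the paper: bound $\sigma_{\max}\le\|g\|_{\mathrm F}\le nS$, use $|\det g|\ge 1$ to get $\sigma_{\min}\ge (nS)^{-(n-1)}$, and then combine the two in $\|\log(g^\dagger g)\|_{\mathrm F}=2\bigl(\sum_i(\log\sigma_i)^2\bigr)^{1/2}$. The only cosmetic difference is that your cruder bound $\max_i|\log\sigma_i|\le n\log(nS)$ also covers $n=1$, whereas the paper uses $(n-1)\log(nS)$ for $n\ge 2$ and handles $n=1$ separately.
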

\begin{proof}
    We first assume that $n\geq2$.
    We have $\norm g_{\ope} \leq \norm g_{\mathrm F} \leq nS$.
    Hence, the singular values~$\sigma_1,\dots,\sigma_n$ of~$g$ are upper bounded by~$nS$.
    Since $\abs{\det g}\geq1$ ($g$ is an integer matrix) and equals the product of all~$n$ singular values, every singular value is also lower bounded by~$(nS)^{-(n-1)}$.
    Thus:
    \begin{align*}
        \dist(I, g^\dagger g)
    = \norm{\log(g^\dagger g)}_{\mathrm F}
    = 2 \sqrt{\sum_{i=1}^n (\log \sigma_i)^2}
    \leq 2 \sqrt n (n-1) \log(nS)
    \leq 2 n^{3/2} \log(nS),
    \end{align*}
    which proves the claim when~$n\geq2$.
    For $n=1$, we have
    $\dist(I, g^\dagger g)
    = \abs{\log (\abs{g_{11}}^2)}
    \leq 2 \log(S)$.
\end{proof}

We will also use the following bound on the $p$-capacity for points in the Cartan subspace.%
\footnote{The theorem in \cite{BFGOWW2018} is stated for integer tensor tuples and~$p$ in a specific Weyl chamber ($-C$ in our conventions, corresponding to reversing each component of~$p$), and the highest-weight argument given there a priori only applies to rational~$p$.
It applies equally to Gaussian integer tensor tuples because a nonzero Gaussian integer has absolute value at least one.
The case of general (irrational)~$p$ follows from the concavity of the logarithmic capacity on a Weyl chamber and the rationality of the Borel polytope; see, e.g., \cite[App.~A.1]{BFGOWW2018}.
Finally, the result extends to the entire Cartan subspace~$\AA$ by applying \cref{eq:log cap K equivariance} with tuples of permutation matrices, which preserve the integrality of the entries.}

\begin{proposition}[{\cite[Thm.~2.13]{BFGOWW2018}}]\label{prop:tensor_capacitylowerbound}
    Let $0 \neq T \in \Ten(n_0; n_1, \ldots, n_d)$ have Gaussian integer entries.
    Then, for all $p \in \AA = \diag(\RR^{n_1})\times\cdots\times\diag(\RR^{n_d})$, it holds that
    \[
        \log\capacity_p(T) > -\infty \iff \log\capacity_p(T) \ge -\sum_{\ell = 1}^d\log n_\ell.
    \]
\end{proposition}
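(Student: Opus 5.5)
The plan is to follow the highest-weight argument of \cite[Thm.~2.13]{BFGOWW2018}. I first reduce to a convenient class of~$p$, then prove a uniform lower bound there by exhibiting a semi-invariant polynomial with integer coefficients, and finally extend to all of~$\AA$. Only the forward implication needs proof; the converse is trivial.

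\emph{Reduction.} The logarithmic capacity is equivariant under~$K$ by \cref{eq:log cap K equivariance}. Tuples of permutation matrices lie in~$K$ and preserve Gaussian-integrality of~$T$. So it suffices to treat $p$ in one fixed Weyl chamber, say $-C$ (each component nondecreasing). For such~$p$, the Busemann function $b^p$ restricted to the lower-triangular Borel subgroup $B \subseteq G$ is a character. This gives the usual formula
\[
\capacity_p(T)^2 = \inf_{b\in B} \norm{b\cdot T}^2 / \abs{\chi_p(b)}^2,
\]
with $\chi_p(b) = \prod_{\ell,i} \abs{(b_\ell)_{ii}}^{p_{\ell,i}}$ (up to the sign convention fixed by $-C$). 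I will first take $p$ rational, and handle irrational~$p$ at the end.

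\emph{Rational case.} Let $m$ be such that $mp$ is integral. By the shifting trick (\cref{prop:shiftingtrick_orig} and its representation-theoretic form), finiteness of $\log\capacity_p(T)$ yields a polynomial~$F$ on $\Ten(n_0;n_1,\dots,n_d)$, homogeneous of degree~$m$, with three properties: it is a $B$-semi-invariant of weight~$mp$, it has Gaussian integer coefficients, and $F(T)\neq 0$. Concretely, I would build $F$ as a linear combination of products of leading principal minors of flattenings of copies of~$T$, which is the standard model of highest-weight vectors in $\Sym^m$ of the dual space. Then for every $b\in B$,
\[
\abs{\chi_p(b)}^m \abs{F(T)} = \abs{F(b\cdot T)} \leq \norm{F}\,\norm{b\cdot T}^m,
\]
where $\norm{F}$ is the norm of $F$ as an $m$-linear form. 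Since $F(T)$ is a nonzero Gaussian integer, $\abs{F(T)}\geq 1$. Hence $\capacity_p(T)\geq \norm{F}^{-1/m}$.

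\emph{Main obstacle.} The crux is bounding $\norm{F}\leq \prod_\ell n_\ell^{m}$, up to the square-root normalization of the capacity, so that $\log\capacity_p(T)\geq -\sum_\ell \log n_\ell$. I would take $F$ to be a single product of determinants rather than a general combination; this is the choice I would try first. Each $k\times k$ minor of a flattening, viewed as a multilinear form on the copies of~$T$, has operator norm at most a factor polynomial in~$n_\ell$ per degree, by Hadamard's inequality or Cauchy--Binet. Multiplying these bounds gives a per-degree factor of at most $\prod_\ell n_\ell$. The delicate point is showing that a \emph{single} such product already does not vanish at~$T$ whenever some semi-invariant of weight~$mp$ does. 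For this I would use that the relevant weight space in the coordinate ring is spanned by such products, so one spanning element is nonzero at~$T$.

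\emph{Extension.} For irrational~$p$ in~$-C$, $\log\capacity_p(T)$ is concave in~$p$ on the chamber. Its finiteness domain is the rational Borel polytope, and rational points are dense in it. Approximating~$p$ by rational points and using concavity (hence continuity on the relative interior, and upper semicontinuity at the boundary) transfers the uniform bound. Finally, permutations extend the statement from~$-C$ to all of~$\AA$.
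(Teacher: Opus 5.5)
Your outer framework matches what the paper does. It cites \cite[Thm.~2.13]{BFGOWW2018} for rational $p$ in $-C$ and uses $\abs{F(T)}\geq1$ for a nonzero Gaussian integer. It passes to irrational $p$ via concavity and rationality of the Borel polytope, and extends to all of $\AA$ by permutation matrices. Those steps are fine. For the irrational case, upper semicontinuity of $p\mapsto\log\capacity_p(T)$, an infimum of affine functions on the chamber, goes in the right direction; alternatively, write $p$ as a convex combination of rational vertices and use concavity.

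The gap is the step you call delicate: it is false as stated. The weight-$mp$ semi-invariants are not spanned by products of leading principal minors of flattenings once there are three tensor factors. That description is the classical Cauchy-formula picture for $\GL(n_1)\times\GL(n_2)$ on matrices, so it holds for $d=2$, $n_0=1$, but it does not survive further factors. Take $n_0=1$, $d=3$, $n_1=n_2=n_3=2$ and $p$ uniform, with the lower-triangular Borel.
\begin{itemize}
\item The only semi-invariant $1\times1$ minor is $T_{111}$, of weight $((1,0),(1,0),(1,0))$.
\item A semi-invariant $2\times2$ minor of $T^{(\ell)}$ uses both rows and a down-set of two columns in $[2]\times[2]$. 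Its weight is therefore $(1,1)$ in factor $\ell$, $(2,0)$ in one other factor, and $(1,1)$ in the third.
\item Hence every product of such minors has weight $(w_{j,1},w_{j,2})_j$ with $w_{j,1}\geq w_{j,2}$ for all $j$ and strict inequality for some $j$. None has the balanced weight $((k,k),(k,k),(k,k))$.
\end{itemize}
Yet the Cayley hyperdeterminant is a semi-invariant of weight $((2,2),(2,2),(2,2))$. It is nonzero at the Gaussian-integer tensor $e_1^{\ot3}+e_2^{\ot3}$, whose marginals are uniform. So for this $T$ and $p$ your construction yields no $F$ at all, and the factor-by-factor Hadamard bound has nothing to act on.

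The missing ingredient is the Schur--Weyl description used in \cite{BFGOWW2018}. Let $\lambda^{(\ell)}$ denote $mp_\ell$, viewed as a partition of $m$ with at most $n_\ell$ parts. Let $v_{\lambda}\in(\CC^{n})^{\ot m}$ be the tensor product, over the columns $j$ of the Young diagram, of the unnormalized wedges $e_1\wedge\dots\wedge e_{\lambda'_j}$. The weight space is spanned by the polynomials $F_\sigma(X)=\langle w_\sigma,X^{\ot m}\rangle$ (bilinear pairing), where
\[
w_\sigma=u\ot\bigotimes_\ell\sigma_\ell\cdot v_{\lambda^{(\ell)}}.
\]
Here $\sigma_\ell\in S_m$ permutes the $m$ tensor factors and $u$ runs over standard basis vectors of $(\CC^{n_0})^{\ot m}$.
\begin{itemize}
\item These polynomials have integer coefficients, so some $F_\sigma(T)$ is a nonzero Gaussian integer.
\item They are not products of minors, so the norm must be bounded globally. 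Permuting tensor factors is an isometry, so Cauchy--Schwarz gives $\abs{F_\sigma(Y)}\leq\norm{w_\sigma}\norm{Y}^m$.
\item The norm satisfies $\norm{w_\sigma}=\prod_\ell\prod_j(\lambda^{(\ell)\prime}_j!)^{1/2}\leq\prod_\ell n_\ell^{m/2}$, since $\lambda^{(\ell)\prime}_j\leq n_\ell$ and $\sum_j\lambda^{(\ell)\prime}_j=m$.
\end{itemize}
Then $\abs{\chi_p(b)}^m\leq\abs{F_\sigma(b\cdot T)}\leq\prod_\ell n_\ell^{m/2}\norm{b\cdot T}^m$, which gives $\capacity_p(T)\geq\prod_\ell n_\ell^{-1/2}$ and implies the claim.

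A minor point: the existence of a nonvanishing semi-invariant of weight $mp$ comes from the highest-weight characterization of the Borel polytope. It does not come from the shifting trick of \cref{prop:shiftingtrick_orig}.
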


\noindent
We need to bound $\log\Capacity_p(T)$, which is defined as the supremum of $\log\capacity_p(u \cdot T)$ over~$u \in K$.
However, $u \cdot T$ will not in general be integral and hence \cref{prop:tensor_capacitylowerbound} cannot be directly applied.
Instead, we will apply \cref{prop:tensor_capacitylowerbound} to $g \cdot T$, where $g$ is a tuple of integer matrices.
From \cref{prop:shiftingtrick_orig}, we know that there exists $u \in K$ such that $\log\capacity_p(u \cdot T) > -\infty$ if and only if there exists $g \in G$ such that $\log\capacity_p(g\cdot T)>-\infty$.
Prior work has shown that if one chooses~$g$ as a tuple of random integer matrices whose entries have polynomially-bounded bit-length, then $\log\capacity_p(g\cdot T) > -\infty$ follows with high probability; see~\cite[Cor.~2.7]{BFGOWW2018}, \cite[Thm.~3.6]{BCLNWZ2025computingmomentpolytopes}.
The latter reference states such a result uniformly for~$p \in \Delta(T)$.%
\footnote{Although the proof of \cite[Thm.~3.6]{BCLNWZ2025computingmomentpolytopes} is written for $d = 3$ and $n_0 = 1$, it readily generalizes.
We remark that while the proof in \cite{BCLNWZ2025computingmomentpolytopes} uses arbitrary matrices (as in \cref{prop:tensor_randomnessbound}), their theorem statement uses upper triangular random matrices; the statement holds in either case using the same Schwartz--Zippel argument.}

\begin{proposition}[{\cite[Thm.~3.6]{BCLNWZ2025computingmomentpolytopes}}]\label{prop:tensor_randomnessbound}
    There exists an integer $S\geq1$ such that~$\log S \leq \poly(d, n_{\max})$ with the following property:
    Let $0 \neq T \in \Ten(n_0; n_1, \ldots, n_d)$ be a tensor tuple.
    Let $g=(g_1, \ldots, g_d)$ be a random tuple of integer matrices, with each matrix entry drawn independently uniformly at random from~$[S]$.
    Then, with probability at least $1/2$, it holds that $g \in \GL(n_1) \times \cdots \times \GL(n_d)$ and
    \begin{align*}
        \Delta(T) = \{ p \in C \mid \log\capacity_p(g \cdot T) > -\infty \}.
    \end{align*}
\end{proposition}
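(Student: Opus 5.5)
The plan is a Schwartz--Zippel argument. The algebraic input is the representation-theoretic description of the Borel polytope of a single vector, combined with the convexity of $\Delta(T)$. Only the inclusion ``$\subseteq$'' needs the randomness. For ``$\supseteq$'', whenever $g \in G$ and $\log\capacity_p(g\cdot T) > -\infty$, \cref{prop:shiftingtrick_orig} gives $p \in \Delta(g\cdot T) = \Delta(T)$, since $\Delta(g\cdot T)=\Delta(T)$ (the two vectors have the same orbit).

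\textbf{Step 1 (invertibility).} The polynomial $\prod_\ell \det g_\ell$ is nonzero, of total degree $\sum_\ell n_\ell$, and I would regard it as a polynomial in all entries of $g$ jointly. By Schwartz--Zippel it vanishes with probability at most $\sum_\ell n_\ell/S$, so $g \in \GL(n_1)\times\cdots\times\GL(n_d)$ with high probability.

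\textbf{Step 2 (reduction to vertices).} Write $B(S') \coloneqq \{p\in C \mid \log\capacity_p(S')>-\infty\}$. I would use the fact (cf.\ the footnote to \cref{prop:tensor_capacitylowerbound}) that $B(S')$ is a convex polytope, by concavity of the logarithmic capacity on the Weyl chamber. Then $\Delta(T)\subseteq B(g\cdot T)$ follows once every vertex $p$ of $\Delta(T)$ lies in $B(g\cdot T)$. Each such vertex is rational. For rational $p$, the highest-weight characterization says $p\in B(S')$ if and only if some highest-weight vector $P$ of weight $kp$ in the degree-$k$ coordinate ring (suitably dualized, following \cite{BFGOWW2018}) satisfies $P(S')\neq 0$. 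Since $p\in\Delta(T)$, some such $P$ has $P(h\cdot T)\neq 0$ for some $h\in G$. Hence $g\mapsto P(g\cdot T)$ is a nonzero polynomial in the entries of $g$, of degree at most $dk$ because the action is multilinear. Schwartz--Zippel then shows that $p\notin B(g\cdot T)$ with probability at most $dk/S$.

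\textbf{Step 3 (uniform bounds and union bound).} This is the main obstacle. I need an a priori set $\mathcal{P}$ of candidate vertices, independent of $T$, with $\log\abs{\mathcal P}\le\poly(d,n_{\max})$, together with a degree bound $k\le\exp(\poly(d,n_{\max}))$ valid for every candidate. The natural source is that $\Delta(T)$ is cut out by finitely many inequalities whose coefficients are determined by the weights of the representation, which are $0/1$-type vectors. Every vertex is then the solution of a linear system with integer coefficients of polynomially bounded size. By Cramer's rule the denominators are at most $\exp(\poly)$, and so is the number of such systems. The degree $k$ can be taken as a common denominator times a bounded factor coming from the inhomogeneous character scaling.

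A union bound over $\mathcal P$ together with Step 1 gives failure probability at most
\[
  \frac{\sum_\ell n_\ell + \abs{\mathcal P}\, d\,k_{\max}}{S},
\]
where $k_{\max}$ is the uniform degree bound. Choosing $S$ with $\log S=\poly(d,n_{\max})$ large enough makes this at most $1/2$.
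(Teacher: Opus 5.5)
Your architecture is the Schwartz--Zippel argument behind \cite[Thm.~3.6]{BCLNWZ2025computingmomentpolytopes}, which the paper simply cites. Its parts are: the easy inclusion via \cref{prop:shiftingtrick_orig}; the reduction to vertices of $\Delta(T)$ using convexity of $\{p\in C : \log\capacity_p(g\cdot T)>-\infty\}$; highest-weight certificates $P$ for which $g\mapsto P(g\cdot T)$ has degree $dk$; Schwartz--Zippel; and a union bound.

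The gap is in Step~3, which carries all of the quantitative content. You need a degree bound $k_{\max}$ with $\log k_{\max}\le\poly(d,n_{\max})$, uniform over all tensor tuples $T$ and all vertices of $\Delta(T)$. Your justification (``a common denominator times a bounded factor coming from the inhomogeneous character scaling'') is not an argument, and no such principle is available. If $Dp$ is integral, nothing forces a highest weight vector of weight $mDp$ (suitably dualized) to be nonzero on $G\cdot T$ for some small $m$. The least admissible degree depends on $T$ and is not controlled by the denominator of $p$. Already at the uniform point, where the certificates are $\SL(n_1)\times\cdots\times\SL(n_d)$-invariants, the degree needed to detect semistability of tuples of $n\times n\times n$ tensors grows exponentially in $n$. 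You give no reason why vertices should behave better.

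The missing ingredient is an invariant-theoretic degree bound. By the shifting trick, a certificate for $p=\lambda/m$ amounts to an invariant of a shifted representation that does not vanish at a vector built from $T^{\otimes m}$. This representation is built from $\mathrm{Sym}^m$ of the tensor space and an irreducible representation determined by $\lambda$. Derksen's bound on the degrees of invariants cutting out the null cone is exponential in $\dim G$ but only polynomial in the degree of the representation. It therefore gives certificates of degree $\exp(\poly(d,n_{\max}))$ for all candidate points whose denominators are $\exp(\poly(d,n_{\max}))$. Bounds of this kind underlie per-point randomization results such as \cite[Cor.~2.7]{BFGOWW2018}. With one in hand, your failure-probability estimate and union bound go through as written.

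Your candidate set $\mathcal P$ likewise rests on a structural fact you only gesture at. It is true that the vertices of $\Delta(T)$ solve linear systems built from walls of $C$ and from affine spans of subsets of the weights. But this is a theorem about moment polytopes of orbit closures, not a consequence of your Steps~1 and~2. It follows, for instance, from describing $\{p\in C : \log\capacity_p(w)>-\infty\}$ as $C$ intersected with the convex hulls of the weights occurring in the Borel translates $b\cdot w$, applied to $w=g\cdot T$ for generic $g$. Only the inclusion ``$\subseteq$'' of that description is elementary. Once this is supplied, your Cramer's-rule estimates for $\abs{\mathcal P}$ and for the denominators are fine.
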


We now prove the desired lower bound on the $K$-invariant logarithmic $p$-capacity.

\begin{proof}[Proof of \cref{thm:tensor_f+b}]
    Let $S\geq 1$, $\log S \leq \poly(d,n_{\max})$ be the constant from \cref{prop:tensor_randomnessbound}.
    The proposition ensures that for every tensor tuple~$T$ there exists a tuple $g = (g_1, \ldots, g_d)$ of invertible matrices in~$\GL(n_1)\times\cdots\times\GL(n_d)$ whose entries are positive integers at most~$S$ and such that~$\log\capacity_p(g\cdot T) > -\infty$ for all $p \in \Delta(T)$.
    Since $g\cdot T$ has Gaussian integer entries, using \cref{prop:tensor_capacitylowerbound} it follows that, in fact,
    \begin{equation}\label{eq:tensor bound 1}
        \log\capacity_p(g\cdot T) \geq -\sum_{\ell = 1}^d\log n_\ell \qquad (p \in \Delta(T)).
    \end{equation}
    We can obtain from this a lower bound on $\log\capacity_p(u\cdot T)$ for some~$u \in K=\U(n_1)\times\cdots\times\U(n_d)$, which in turn is a lower bound on $\log\Capacity_p(T)$.
    By \cref{lem:asymptoticrays}, the geodesic $t \mapsto g^\dagger e^{tp}g$ is asymptotic to $t \mapsto e^{tu^\dagger pu}$ for some $u \in K$.
    Then, $b^p(x) - b^{u^\dagger pu}(g^\dagger xg) \eqqcolon c$ is a constant, which we can upper bound as
    \begin{align}
    \nonumber
        c
    &= b^p(I) - b^{u^\dagger pu}(g^\dagger g)
    = - b^{u^\dagger pu}(g^\dagger g)
    = - \sum_{\ell=1}^d b^{u_\ell^\dagger p_\ell u_\ell}(g_\ell^\dagger g_\ell) \\
    \label{eq:tensor bound 2}
    &\leq \sum_{\ell=1}^d \norm{p_\ell}_{\mathrm F} \dist(I, g_\ell^\dagger g_\ell)
    \leq \sum_{\ell=1}^d 2 n_\ell^{3/2}\log (n_\ell S)
    \leq 2 d n_{\max}^{3/2}\log (n_{\max} S)
    \leq \poly(d,n_{\max}),
    \end{align}
    where we recall that we fixed the constant in the Busemann functions as~$b^X(I) = 0$; the first inequality follows from the fact that~$b^X$ is $\norm X_{\mathrm{F}}$-Lipschitz continuous, and the second inequality holds because $\norm{p_\ell}_{\mathrm F} \leq \norm{p_\ell}_{\tr} \leq 1$ for every~$\ell$ (from \cref{eq:entanglementpolytope}) and \cref{lem:busemann_valuebound}.
    Thus:
    \begin{align}\label{eq:tensor bound 3}
        2 \log\Capacity_p(T)
    \geq 2 \log\capacity_{u^\dagger p u}(T)
    = \inf_{x \in P} f_T^{u^\dagger p u}(x)
    = \inf_{x \in P} f_{g \cdot T}^p(x) - c,
    \end{align}
    and the right-hand side can be bounded using \cref{eq:tensor bound 1,eq:tensor bound 2}.
\end{proof}

We expect that the bound in \cref{thm:tensor_f+b} may not be tight.
This is because, in the proof above, we used a probabilistic claim merely to certify the existence of a suitably generic group element~$g \in G$.
For instance, in situations where one can choose $g$ to be a tuple of permutation matrices, we have~$c = 0$ in the proof and we obtain a much improved bound.

\subsection{Tensor power action and symmetric entanglement polytopes}\label{sub:tensor_power_action}
Another important action on tensor tuples with $n_1=\dots=n_d\eqqcolon n$ is the tensor power action of~$\GL(n)$.
It corresponds to the restriction of the tensor action to the subgroup $\GL(n) \subseteq \GL(n)^d \coloneqq \GL(n) \times \cdots \times \GL(n)$ under the diagonal embedding.
We annotate most objects associated to this action with a subscript ``$\Sym$'' (for ``symmetric''); however, the tensors do \emph{not} need to be symmetric.

\begin{definition}[Tensor power action]
The \emph{tensor power action} of $\GL(n)$ on $\Ten(n_0;\smash{\underbrace{n,\ldots,n}_{d \text{ times}}})$ is the representation~$\pi_{\Sym}$ given by
\begin{equation}\label{eq:tenpow}
    g \cdot T \coloneqq \pi_{\Sym}(g)T \coloneqq (I \ot g^{\ot d}) T = (I\otimes g\otimes\cdots\otimes g)T.
\end{equation}
If we think of $T=(t_i)$ as a tuple, then $g\cdot T=(t'_i)$ is the tuple of tensors $t'_i = g^{\ot d} t_i$.
\end{definition}

The weight diameter of this representation is bounded by~$D(\pi_{\Sym}) \leq d\sqrt{2}$ because it is homogeneous of degree~$d$ (\cref{lem:multihomogeneous}).

The associated moment map $\mu_{\Sym}$ computes the sum of the one-body quantum marginals, as follows from \cref{eq:momentmap_tensor}:
\begin{align}\label{eq:sym moment map}
    \mu_{\Sym} \colon \Ten(n_0; n,\dots,n) \setminus \{0\} \to \PSD(n),\quad
    \mu_{\Sym}(T) = \sum_{\ell = 1}^d \frac {T^{(\ell)}T^{(\ell)\dagger}}{\norm T^2}.
\end{align}
Thus, the associated moment polytope, called the \emph{symmetric entanglement polytope} of~$T$, is~given by
\begin{equation}\label{eq:sym moment polytope}
    \Delta_{\Sym}(T) = \overline{\left\{\spec\sum_{\ell = 1}^d \frac {S^{(\ell)}S^{(\ell)\dagger}} {\norm S^2} \ \middle|\ S = (I\otimes g\otimes\cdots\otimes g)T, \, g \in \GL(n)\right\}} \subseteq \RR^n_{\downarrow}.
\end{equation}
We consider the problem of convex optimization over the symmetric entanglement polytope:
\begin{equation}\label{eq:min_over_symmetricpolytope}
    \minproblem{\phi(p)}{p \in \Delta_{\Sym}(T)},
\end{equation}
for an objective~$\phi \colon \RR^n \to \overline\RR$ satisfying \ref{it:phi1gl}--\ref{it:phi3gl} in \cref{ex:P phi} (for $P = \PD(n)$).
Then Hadamard mirror descent amounts to the following update formula for tensor tuples $(T_k)_{k \in \NN}$, analogous to \cref{eq:updateTk}, as a special case of \cref{eq:updatev_discrete,eq:updatev_discrete phi}:
\begin{equation}\label{eq:updatesymmetric}
    T_{k+1}
\coloneqq c_{k+1}\left(I\otimes\left(e^{-\frac\eta2\nabla\Phi(\mu_{\Sym}(T_k))}\right)^{\otimes d}\right)T_k
= c_{k+1}\left(I\otimes\left(u^\dagger e^{-\frac\eta2\diag\nabla\phi(p)}u\right)^{\otimes d}\right)T_k,
    \quad T_0 \coloneqq c_0 T,
\end{equation}
where the $c_k \in \CC^\times$ are arbitrary nonzero scalars;
in the second formula, the unitary~$u \in \U(n)$ is chosen to diagonalize the moment map, i.e., $u \mu_{\Sym}(T_k) u^\dagger = \diag(p)$ for $p \in \RR^n$.

\bigskip

We now observe that the tensor power action case can be interpreted as a special case of the tensor action case; this will allow us to leverage the results of the preceding subsection.
For a tensor tuple $T = (t_i)_{i \in [n_0]} \in \Ten(n_0; n,\dots,n)$, consider the cyclic index shifts of each~$t_i$, defined as
\[
    \Shf_a \colon (\CC^n)^{\otimes d} \to (\CC^n)^{\otimes d}, \quad
    (\Shf_a t)_{j_1,\dots,j_d} \coloneqq t_{j_{[1+a]_d},\dots,j_{[d+a]_d}}
    \quad (a \in [d]),
\]
where $[\cdot]_d$ denotes the residue in~$[d]$ modulo~$d$, and stack them all into a single tuple indexed by pairs $(a,i) \in [d] \times [n_0] \cong [dn_0]$:

\begin{definition}\label{def:shift_stack}
We define the \emph{shift stack} of a tensor tuple~$T = (t_i)_{i\in[n_0]} \in \Ten(n_0; n,\dots,n)$ by
\begin{align*}
    T_{\Shf} \coloneqq (\Shf_a t_i)_{a\in[d], i \in [n_0]} \in \Ten(dn_0; n,\dots,n).
\end{align*}
\end{definition}

By an easy calculation, the image of $T_{\Shf}$ under the moment map~\eqref{eq:momentmap_tensor} for the tensor action of~$\GL(n)^d$ on $\Ten(dn_0; n,\dots,n)$ is given by:
\begin{equation}\label{eq:mu_rot}
    \mu(T_{\Shf}) = \frac1d(\smash{\underbrace{\mu_{\Sym}(T), \ldots, \mu_{\Sym}(T)}_d}).
\end{equation}
Furthermore, it is clear that
\begin{align}\label{eq:shift equivariance}
    (g,\ldots,g) \cdot T_{\Shf} = (g\cdot T)_{\Shf} \qquad (g \in \GL(n)).
\end{align}
Therefore, the symmetric entanglement polytope~$\Delta_{\Sym}(T)$ is ``included'' in the entanglement polytope~$\Delta(T_{\Shf})$ in the sense that
\begin{align}\label{eq:inclusion-symmetricpolytope}
    p \in \Delta_{\Sym}(T)
    \quad\Rightarrow\quad
    \frac1d(\smash{\underbrace{p,\dots,p}_d}) \in \Delta(T_{\Shf}).
\end{align}
Let us extend the objective function $\phi \colon \RR^n \to \overline{\RR}$ to $(\RR^n)^d$ by
\begin{equation}\label{eq:phisum}
    \phi_{\Sum} \colon (\RR^n)^d \to \overline{\RR},\quad\phi_{\Sum}(p_1, \ldots, p_d) \coloneqq \frac1d\sum_{\ell=1}^d \phi(dp_\ell);
\end{equation}
then $\phi_{\Sum}$ satisfies the conditions \ref{it:F1}--\ref{it:F3}.
Its unitarily invariant extension~$\Phi_{\Sum}$ is given by the analogous formula.
Moreover, $\phi$ is $L$-smooth if and only if $\phi_{\Sum}$ is $dL$-smooth.
By the inclusion~\eqref{eq:inclusion-symmetricpolytope}, it follows that
\[ \min_{p \in \Delta_{\Sym}(T)}\phi(p) \ge \min_{p \in \Delta(T_{\Shf})}\phi_{\Sum}(p). \]
In fact, the two values always coincide.
This follows from the observation that our algorithms for the two minimization problems coincide:

\begin{theorem}\label{thm:tensor power reduction}
    Let $0 \neq T \in \Ten(n_0; n, \ldots, n)$.
    Let $\phi \colon \RR^n \to \overline{\RR}$ be a function satisfying \ref{it:phi1gl}--\ref{it:phi3gl} and $\spec\mu_{\Sym}(g\cdot T) \in \intr\dom\phi$ for all $g \in \GL(n)$.
    Let $(T_k)_{k \in \NN}$ be the sequence of tensor tuples generated by~\eqref{eq:updatesymmetric}, and let~$((T_{\Shf})_k)_{k \in \NN}$ be the sequence of tensor tuples in $\Ten(d n_0; n, \ldots, n)$ generated by \eqref{eq:updateTk} for the objective~$\phi_{\Sum}$, with the same step size and scalars~$c_k$, starting at $(T_{\Shf})_0 \coloneqq c_0 T_{\Shf}$.
    \begin{enumerate}
    \item[(a)] Then $(T_{\Shf})_k = (T_k)_{\Shf}$ for all $k\in\NN$.
    \item[(b)] The following two minimization problems have the same value:
    \[
        \min_{p \in \Delta_{\Sym}(T)}\phi(p) = \min_{p \in \Delta(T_{\Shf})}\phi_{\Sum}(p).
    \]
    This equality holds even if there does not exist any step size satisfying the step-size condition.
    \item[(c)]
    $\phi(\spec \mu_{\Sym}(T_k)) \to \displaystyle\min_{p \in \Delta_{\Sym}(T)}\phi(p)$ if and only if
    $\phi_{\Sum}(\spec \mu((T_{\Shf})_k)) \to \displaystyle\min_{p \in \Delta(T_{\Shf})}\phi_{\Sum}(p)$.
    \item[(d)] The step-size conditions~\eqref{eq:step-size-KN} for $(T_k)_{k \in \NN}$ and $((T_{\Shf})_k)_{k \in \NN}$ are equivalent.
    \end{enumerate}
\end{theorem}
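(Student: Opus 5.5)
Part~(a) is the core, and I would prove it by induction on~$k$, using \cref{eq:mu_rot,eq:shift equivariance}. The base case holds because $(T_{\Shf})_0 = c_0 T_{\Shf} = (c_0 T)_{\Shf}$. For the inductive step, assume $(T_{\Shf})_k = (T_k)_{\Shf}$. Then \cref{eq:mu_rot} gives $\mu((T_{\Shf})_k) = \frac1d(M,\dots,M)$ with $M \coloneqq \mu_{\Sym}(T_k)$.

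Next I compute the gradient of $\Phi_{\Sum}(X_1,\dots,X_d) = \frac1d\sum_\ell \Phi(dX_\ell)$ by the chain rule: $\nabla_\ell \Phi_{\Sum}(X) = \nabla\Phi(dX_\ell)$. Evaluating at $X = \frac1d(M,\dots,M)$ gives $\nabla_\ell\Phi_{\Sum} = \nabla\Phi(M)$ for every~$\ell$. Two domain facts are needed here. The unitarily invariant extension of $\phi_{\Sum}$ is exactly $\Phi_{\Sum}$, because spectra commute with scaling. Also, $\frac1d(M,\dots,M) \in \intr\dom\Phi_{\Sum}$ whenever $M \in \intr\dom\Phi$, which follows from the hypothesis on $\spec\mu_{\Sym}(g\cdot T)$ together with \cref{prop:FandQ}(b). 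Consequently the update \eqref{eq:updateTk} applies $I\otimes (e^{-\frac\eta2\nabla\Phi(M)})^{\otimes d}$ to $(T_k)_{\Shf}$. By \eqref{eq:shift equivariance}, applied with the group element $g = e^{-\frac\eta2\nabla\Phi(M)}$, and by linearity in $c_{k+1}$, the result is $(T_{k+1})_{\Shf}$.

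Part~(b) then follows from the convergence theory. The inequality $\ge$ is already established. For $\le$, note that by (a) and \eqref{eq:mu_rot} the objective values agree along the iterates: $\phi_{\Sum}(\spec\mu((T_{\Shf})_k)) = \frac1d\sum_\ell\phi(\spec M_k) = \phi(\spec\mu_{\Sym}(T_k))$. The problem is that (b) is claimed even without a valid step size, so \cref{thm:main-discrete} cannot be invoked directly. I would instead pass to the Hadamard mirror flow, using \cref{thm:main-continuous} and \cref{lem:flow concrete}, and run the same induction argument as an ODE-uniqueness argument. However, the flow requires $C^2$ regularity of~$\Phi$ and global existence, so this route may also fail.

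The more robust route is a direct one. Take any $x\in P$ for the tensor action, i.e. $g = (g_1,\dots,g_d)$. I would show $\phi_{\Sum}(\spec\mu(g\cdot T_{\Shf})) \ge \inf_h \phi(\spec\mu_{\Sym}(h\cdot T))$ and then take infima. The moment polytope of $T_{\Shf}$ is invariant under the cyclic permutation $\sigma$ of the $d$ factors, because $T_{\Shf}$ is fixed by $U_\sigma$ up to reindexing of the tuple. The objective $\phi_{\Sum}$ is likewise $\sigma$-invariant and convex. Hence the minimum over $\Delta(T_{\Shf})$ is attained at a $\sigma$-symmetric point $(q,\dots,q)$, obtained by averaging over $\sigma$ using convexity of both the polytope and~$\phi_{\Sum}$.

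It therefore suffices to show that $(q,\dots,q)\in\Delta(T_{\Shf})$ implies $dq\in\Delta_{\Sym}(T)$. This is the main obstacle. My plan is to use \cref{prop:symmetry} with $U_\sigma$ and the norm objective $\phi(p)=\sum_\ell\|p_\ell - q\|^2/2$. The resulting gradient descent stays in the $\sigma$-fixed subspace, so its iterates are of the form $(h,\dots,h)\cdot T_{\Shf} = (h\cdot T)_{\Shf}$, as in part~(a). By \cref{cor:intro-moment-norm-conv} these iterates approach the minimum~$0$, so $dq$ is a limit point of $\spec\mu_{\Sym}(h\cdot T)$ and lies in $\Delta_{\Sym}(T)$ by closedness. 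This norm objective is $1$-smooth, so a valid step size exists, which removes the dependence on~$\eta$ for the given~$\phi$.

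Parts (c) and (d) are then immediate. For (c), the objective values coincide termwise by (a), and the minima coincide by (b). For (d), plug (a) into \eqref{eq:step-size-KN}. The $\Phi$-terms agree as in (b). The trace terms agree because $\tr[\frac1d(M,\dots,M)\cdot(\nabla\Phi(M),\dots,\nabla\Phi(M))] = \tr[M\nabla\Phi(M)]$. The norm ratios agree since $\|T_{\Shf}\|^2 = d\|T\|^2$ for every tensor, so the factor~$d$ cancels in the ratio.
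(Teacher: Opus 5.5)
Your parts (a), (c) and (d) are correct and match the paper's argument. The gap is in (b), at the step you yourself flag as the main obstacle. The implication $(q,\dots,q)\in\Delta(T_{\Shf})\Rightarrow dq\in\Delta_{\Sym}(T)$ is false in general.

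Here is a counterexample. Take $d=n=2$, $n_0=1$ and $T=e_1\otimes e_2-e_2\otimes e_1$.
\begin{itemize}
\item Since $g\cdot T=\det(g)\,T$, we get $\Delta_{\Sym}(T)=\{(1,1)\}$.
\item On the other hand, $T_{\Shf}=(-T,T)$ and $(g_1,g_2)\cdot T_{\Shf}=(-S,S)$ with $S=(g_1\otimes g_2)T$. The marginals of $(-S,S)$ are those of $S$, and $S$ ranges over all invertible $2\times 2$ matrices.
\item Hence $\Delta(T_{\Shf})=\{(p,p) : p=(a,1-a),\ a\in[\tfrac12,1]\}$. This contains $(q,q)$ for $q=(1,0)$, while $2q=(2,0)\notin\Delta_{\Sym}(T)$.
\end{itemize}
So symmetric points of $\Delta(T_{\Shf})$ may be reachable only through non-diagonal group elements, whereas symmetry-preserving dynamics only ever explore the diagonal orbit $(h,\dots,h)\cdot T_{\Shf}$. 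Here that orbit is just $\CC^\times T_{\Shf}$.

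Your proposed tool cannot get around this either. The objective $p\mapsto\sum_\ell\|p_\ell-q\|^2/2$ is not permutation invariant in each argument, so it violates \ref{it:phi2gl}. Its unitarily invariant extension $X\mapsto\frac12\sum_\ell\|\spec X_\ell-q\|^2$ is a squared distance to a unitary orbit and is not convex. Hence neither the convergence theory (\cref{cor:KN-Qsmooth}, \cref{thm:moment-norm-conv}) nor \cref{prop:symmetry} applies to it. Symmetrizing a minimizer is valid, but it does not help, because an arbitrary symmetric minimizer need not come from $\Delta_{\Sym}(T)$.

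Your first instinct for (b) was the right one. By (a), the objective values of the two sequences coincide termwise, so it suffices to make both sequences converge.
\begin{itemize}
\item If $\phi$ is $L$-smooth, then $\phi_{\Sum}$ is $dL$-smooth. Since $D(\pi_{\Sym})\le d\sqrt2$ and $D(\pi)\le\sqrt{2d}$, the common step size $\eta=1/(Ld^2)$ satisfies the step-size condition for both problems by \cref{cor:KN-Qsmooth}. Both sequences therefore converge to their respective minima, which must coincide.
\item For general $\phi$, this is how the paper removes the step-size issue. Apply the smooth case to the Moreau envelope $E_\lambda\phi$, which is convex, permutation invariant, finite (so the domain hypothesis is automatic) and $1/\lambda$-smooth. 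Note that $(E_\lambda\phi)_{\Sum}=E_{\lambda/d}(\phi_{\Sum})$. Then let $\lambda\to0$, using that minima of Moreau envelopes over a compact set converge to the minimum of the l.s.c.\ convex function.
\end{itemize}
This argument needs neither the flow nor any membership statement for $\Delta_{\Sym}(T)$.
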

\noindent
As a consequence of~(b), minimization on the symmetric entanglement polytope reduces to the minimization over the entanglement polytope of the shift-stack tensor tuple.
\begin{proof}
(a)~This follows by induction.
The base case holds by definition.
Thus, assume that $(T_{\Shf})_k = (T_k)_{\Shf}$ for some~$k$.
Then:
\begin{align*}
 (T_{k+1})_{\Shf}
&= c_{k+1} \left(I\otimes\left(e^{-\frac\eta2\nabla\Phi(\mu_{\Sym}(T_k))}\right)^{\otimes d}\right) (T_k)_{\Shf} \\
&= c_{k+1} \left(I\otimes \bigotimes_{\ell=1}^d e^{-\frac\eta2\nabla_\ell\Phi_{\Sum}(\mu((T_{\Shf})_k))} \right) (T_{\Shf})_k = (T_{\Shf})_{k+1},
\end{align*}
using \cref{eq:updatesymmetric,eq:shift equivariance}, then the induction hypothesis, \cref{eq:mu_rot}, and the fact that $\nabla\Phi(d X) = \nabla_\ell\Phi_{\Sum}(X, \ldots, X)$ for $\ell \in [d]$ and~$X\in\Herm(n)$, and finally we use \cref{eq:updateTk}.

(b)~We first assume that~$\phi$ is $L$-smooth.
Then $\phi_{\Sum}$ is $dL$-smooth.
Recall also that $D(\pi_{\Sym}) \leq d\sqrt{2}$, while $D(\pi) \leq \sqrt{2d}$.
Thus, $\eta = 1/(L d^2)$ satisfies the step-size condition for both minimization problems, and our convergence result (\cref{cor:KN-Qsmooth}) implies that
\begin{align*}
    \lim_{k\to\infty} \phi(\spec\mu_{\Sym}(T_k)) = \min_{p \in \Delta_{\Sym}(T)}\phi(p)
    \quad\text{and}\quad
    \lim_{k\to\infty} \phi_{\Sum}(\spec\mu((T_{\Shf})_k)) = \min_{p \in \Delta(T_{\Shf})}\phi_{\Sum}(p).
\end{align*}
But $\phi_{\Sum}(\spec\mu((T_{\Shf})_k)) = \phi(\spec\mu_{\Sym}(T_k))$ for all~$k$ by part~(a) and \cref{eq:mu_rot,eq:phisum}.
We conclude that the two minima must be equal.

We now consider the case where $\phi$ is not smooth.
In this case, we may consider its smoothing, known as the \emph{Moreau envelope} of $\phi$ (e.g., \cite[\S{}1.G]{RW1998}): for $\lambda > 0$, it is defined as
\[
    E_\lambda\phi \colon \RR^n \to \RR,\quad E_\lambda\phi(p) \coloneqq \inf_{q \in \RR^n}\phi(q) + \frac1{2\lambda}\|p - q\|^2.
\]
Since $E_\lambda\phi$ is $1/\lambda$-smooth~\cite[Thm.~2.26]{RW1998}, we have the desired equality for $E_\lambda\phi$ in place of $\phi$.
Moreover, directly from the definition, it follows that $(E_\lambda\phi)_{\Sum} = E_{\lambda/d}(\phi_{\Sum})$.
Then, by taking the limit as $\lambda \to 0$, and using the fact that the minimum of the Moreau envelope on a compact set converges to the minimum of the original convex l.s.c.\ function,%
\footnote{This is proved as follows.
We show that $\lim_{\lambda \to 0}\min_{p \in C}E_\lambda\phi(p) = \min_{p \in C}\phi(p)$ for a compact set $C$.
From $E_\lambda\phi \le \phi$ it follows that $\lim_{\lambda \to 0}\min_{p \in C}E_\lambda\phi(p) \le \min_{p \in C}\phi(p)$.
Let $p^\star_\lambda \in C$ be a minimizer of $E_\lambda\phi$ and $p^\star \in C$ be an accumulation point as~$\lambda\to0$; for simplicity in notation, we assume that $p^\star_\lambda \to p^\star$.
By the definition of the Moreau envelope, there exists $q_\lambda \in \RR^n$ such that $\phi(q_\lambda) \le E_\lambda\phi(p^\star_\lambda) - \|q_\lambda - p^\star_\lambda\|^2/(2\lambda) + \varepsilon$ for any fixed $\varepsilon > 0$.
From this, we can conclude $\|q_\lambda - p^\star_\lambda\|^2 \to 0$.
Since $\phi$ is convex, it is lower bounded by some affine linear function $\phi(p) \ge c + \langle u, p\rangle$.
Then we have $\|q_\lambda - p^\star_\lambda\|^2 \le 2\lambda(E_\lambda\phi(p_\lambda^\star) + \varepsilon - c - \langle u, q_\lambda\rangle)$ and equivalently $\|q_\lambda + \lambda u- p^\star_\lambda\|^2 \le 2\lambda(E_\lambda\phi(p_\lambda^\star) + \varepsilon - c - \langle u, p^\star_\lambda\rangle) + \lambda^2\|u\|^2\to 0$;
$E_\lambda\phi(p_\lambda^\star)$ is uniformly bounded since it is at most~$\min_{p \in C} \phi(p)$, while the compactness of~$C$ uniformly bounds $\langle u,p_\lambda^\star\rangle$.
Hence, $q_\lambda \to p^\star$.
Since $\phi$ is l.s.c., $\phi(p^\star) \le \liminf_{\lambda \to 0}\phi(q_\lambda) \le \liminf_{\lambda \to 0}E_\lambda\phi(p^\star_\lambda) + \varepsilon$ for any~$\varepsilon > 0$.}
we have the desired result.

(c)~This follows immediately from~(a) and~(b).

(d)~The right-hand sides of~\eqref{eq:step-size-KN} coincide for the two sequences, since, for all $k \in \NN$, we have
\begin{align*}
\frac{\|(T_{\Shf})_{k+1}\|^2}{|c_{k+1}|^2\|(T_{\Shf})_k\|^2}
=  \frac{\|(T_{k+1})_{\Shf}\|^2}{|c_{k+1}|^2\|(T_k)_{\Shf}\|^2}
=  \frac{\|T_{k+1}\|^2}{|c_{k+1}|^2\|T_k\|^2},
\end{align*}
where the former equality follows from part~(a) and the latter equality holds because $\norm{S_{\Shf}}^2 = d\norm{S}^2$ for any~$S \in \Ten(n_0; n, \ldots, n)$ directly from \cref{def:shift_stack}.
The left-hand sides of~\eqref{eq:step-size-KN} also coincide for the two sequences, because we have $\Phi_{\Sum}(\mu((T_{\Shf})_k)) = \Phi(\mu_{\Sym}(T_k))$ by part~(a) and \cref{eq:mu_rot,eq:phisum}, and we have $\tr[\mu((T_{\Shf})_k)\nabla\Phi_{\Sum}(\mu((T_{\Shf})_k))] = \tr[\mu_{\Sym}(T_k)\nabla\Phi(\mu_{\Sym}(T_k))]$ by part~(a), \cref{eq:mu_rot}, and the fact that $\nabla\Phi(d X) = \nabla_\ell\Phi_{\Sum}(X, \ldots, X)$ for all $\ell \in [d]$ and~$X\in\Herm(n)$ used already in part~(a).
Thus the step-size condition~\eqref{eq:step-size-KN} coincides for the two sequences of tensor tuples.
\end{proof}

In particular, we obtain the following corollary of \cref{cor:entanglementpolytope,thm:tensor_f+b}.
We emphasize again that the following results do not require the tensors to be symmetric.

\begin{corollary}[Convergence for symmetric entanglement polytope optimization]\label{cor:symmetricentanglementpolytope}
    Let $0 \neq T \in \Ten(n_0; n, \ldots, n)$.
    Let $\phi \colon \RR^n \to \overline\RR$ be a function that satisfies~\ref{it:phi1gl}--\ref{it:phi3gl} and $\spec\mu_{\Sym}(g\cdot T) \in \intr\dom\phi$ for all~$g \in \GL(n)$.
    Let $p^\star$ be a minimizer of~$\phi$ in~$\Delta_{\Sym}(T)$.
    Let $(T_k)_{k \in \NN}$ be the sequence of tensor tuples generated by \cref{eq:updatesymmetric}.
    Under the step-size condition~\eqref{eq:step-size-KN}, it holds that, for all $k\geq1$,
    \[
        \phi(\spec \mu_{\Sym}(T_k)) - \min_{p \in \Delta_{\Sym}(T)} \phi(p) \leq 2 \frac{\log \norm T - \log\Capacity_{p^\star}(T)}{\eta k}.
    \]
    Here, the capacity refers to the tensor power action.
    If $\phi$ is $L$-smooth, we can take the step size~$\eta = 1/(d^2 L)$.

    If $T$ has Gaussian integer entries, then we can also bound
    \[
        \phi(\spec \mu_{\Sym}(T_k)) - \min_{p \in \Delta_{\Sym}(T)} \phi(p) \leq \frac1{\eta k} \poly(d, n, \braket T),
    \]
    where $\braket T$ denotes the bit-length of the tensor tuple~$T$.
\end{corollary}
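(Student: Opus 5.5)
The plan is to reduce everything to the entanglement-polytope results of \cref{sub:entanglementpolytope} via the shift stack, using \cref{thm:tensor power reduction}. The first claim (the bound with the symmetric capacity) does not need the reduction at all: the tensor power action is a representation of the connected self-adjoint group $\GL(n)$, so \cref{thm:KN-general}~(a) applies verbatim with $\pi=\pi_{\Sym}$. The update~\eqref{eq:updatesymmetric} is exactly the specialization of~\eqref{eq:updatev_discrete}, and $\spec = s$ under the identification of the Weyl chamber. This yields the stated $2(\log\norm T-\log\Capacity_{p^\star}(T))/(\eta k)$ bound, with the capacity taken for the tensor power action. For the step size under $L$-smoothness, I would invoke \cref{cor:KN-Qsmooth} together with $D(\pi_{\Sym})\le d\sqrt2$ from \cref{lem:multihomogeneous}. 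This gives $\eta\le 2/(D(\pi_{\Sym})^2L)$, and in particular $\eta = 1/(d^2L)$ is allowed.

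For the Gaussian integer bound, I would not try to lower bound the symmetric capacity directly, since \cref{prop:tensor_capacitylowerbound,prop:tensor_randomnessbound} are stated for the tensor action. Instead, I pass to $T_{\Shf}\in\Ten(dn_0;n,\dots,n)$, which again has Gaussian integer entries and bit-length $\braket{T_{\Shf}} \le d\braket T$. I then run~\eqref{eq:updateTk} with objective $\phi_{\Sum}$ and the same $\eta$ and $c_k$. The argument proceeds as follows:
\begin{itemize}
\item By \cref{thm:tensor power reduction}~(a), the iterates are $(T_k)_{\Shf}$.
\item By~\eqref{eq:mu_rot} and~\eqref{eq:phisum}, $\phi_{\Sum}(\spec\mu((T_k)_{\Shf})) = \phi(\spec\mu_{\Sym}(T_k))$.
\item By part~(b), the two minima agree.
\item By part~(d), the step-size condition for $(T_k)$ transfers to the shift-stack sequence.
\end{itemize}
Thus the objective gaps of the two sequences coincide exactly.

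I would then apply \cref{thm:tensor_f+b} to $T_{\Shf}$ and $\phi_{\Sum}$. The hypothesis $\spec\mu(g\cdot T_{\Shf})\in\intr\dom\phi_{\Sum}$ is needed only along the iterates (\cref{rem:hypothesis}). Along the iterates it holds because $\mu((T_k)_{\Shf}) = \frac1d(\mu_{\Sym}(T_k),\dots)$, and the domain of $\phi_{\Sum}$ is the product of rescaled domains of $\phi$. This gives a gap of at most $\frac{1}{\eta k}\poly(d,n,\braket{T_{\Shf}}) = \frac1{\eta k}\poly(d,n,\braket T)$, which is the desired bound.

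The main obstacle is this domain/interior hypothesis. \cref{thm:tensor_f+b} and \cref{thm:KN-general} formally ask for $\spec\mu(g\cdot T_{\Shf})\in\intr\dom\phi_{\Sum}$ for all $g$ in the full group $\GL(n)^d$, but we only know it for the diagonal $\GL(n)$. My first approach is to check that the proof of \cref{thm:KN-general} (through \cref{thm:main-discrete}) uses the hypothesis only at the iterates and in identifying $Q(\xi)$ with $\phi(p^\star)$ for a minimizer $p^\star$. Both are available here. The minimizer issue is handled by \cref{thm:tensor power reduction}~(b), which gives a minimizer of $\phi_{\Sum}$ over $\Delta(T_{\Shf})$ with the same value. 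The capacity bound from \cref{thm:tensor_f+b} holds uniformly for every $p\in\Delta(T_{\Shf})$, so no interiority is needed there.
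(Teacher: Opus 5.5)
Your proposal is correct and matches the paper's proof essentially step for step. The first bound and the step size $\eta = 1/(d^2L)$ come directly from \cref{thm:KN-general,cor:KN-Qsmooth} applied to $\pi_{\Sym}$, using $D(\pi_{\Sym})\le d\sqrt2$. For the Gaussian-integer bound, you pass to the shift stack via \cref{thm:tensor power reduction}~(a), (b), (d), verify the interior-domain hypothesis only along the iterates as permitted by \cref{rem:hypothesis}, and then invoke the uniform capacity bound of \cref{thm:tensor_f+b} for $T_{\Shf}$ (whose bit-length is polynomial in $d$ and $\braket T$).
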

\begin{proof}
Everything but the last statement follows from \cref{thm:KN-general,cor:KN-Qsmooth}.
For the last statement, first note that $\spec\mu_{\Sym}(g\cdot T) \in \intr\dom\phi$ implies $\spec\mu((g, \ldots, g)\cdot T_{\Shf}) \in \intr\dom\phi_{\Sum}$ for all $g \in \GL(n)$.
By \cref{thm:tensor power reduction}~(a), it follows that $\spec\mu((T_{\Shf})_k) \in \intr\dom\phi_{\Sum}$ for all $k \in \NN$.
Moreover, part~(d) shows that the sequence~$((T_{\Shf})_k)$ satisfies the step-size condition.
Consequently, the proof of \cref{cor:entanglementpolytope} applies to~$T_{\Shf}$ and~$\phi_{\Sum}$, using \cref{thm:main-discrete} with the weakened interior-domain hypothesis explained in \cref{rem:hypothesis}.
Since $T_{\Shf}$ has Gaussian integer entries and its bit-length is polynomially bounded in~$d$ and~$\braket T$, the capacity bound in \cref{thm:tensor_f+b} yields the desired polynomial bound for the shifted problem.
Finally, parts~(a) and~(b) of \cref{thm:tensor power reduction}, together with \cref{eq:mu_rot,eq:phisum}, identify the sequences, objectives, and minima; hence the polynomial bound also holds for the original problem involving~$T$ and~$\phi$.
\end{proof}

In the case of Gaussian integer entries, the capacity of the tensor power action~$\log\Capacity_{p^\star}(T)$ can also be bounded using the general \cref{thm:GLn_f+b} in the next subsection.
This yields a polynomial bound with an additional logarithmic dependence on~$n_0$.

\subsection{Polynomial actions of general linear groups}\label{sub:GLaction}
In this subsection, we consider a general homogeneous polynomial representation $\pi \colon \GL(n) \to \GL(\VV)$ and show a polynomial lower bound on the invariant logarithmic capacity for Gaussian integer vectors.
This yields a polynomial iteration bound analogous to \cref{thm:tensor_f+b} for the tensor action.
As always, $\VV$ comes with a $\U(n)$-invariant inner product and norm~$\norm{\cdot}$.
The notion of a Gaussian integer vector depends on a choice of basis, which identifies~$\VV\cong\CC^N$.
This basis need not be orthonormal with respect to the $\U(n)$-invariant inner product, because we also wish the group action to be given by polynomials with rational coefficients in this basis.
Thus we distinguish the original norm~$\norm{\cdot}$ and the $\ell^2$-norm~$\norm{\cdot}_2$ with respect to the chosen basis.
Following \cite{BFGOWW2019}, we consider the following parameters:%
\footnote{We follow the notation of \cite{BFGOWW2019}. In particular, in this subsection only, $K$ will denote the distortion factor and not the maximal compact subgroup $\U(n)$.}
\begin{itemize}
\item the \emph{degree}~$d\geq1$, i.e., $\pi(z g) = z^d \pi(g)$ for all $z\in\CC^\times$ and $g\in\GL(n)$.
\item the \emph{coefficient size}~$R$, defined as the least positive integer such that there exists a positive integer~$R' \leq R$ such that all the rescaled matrix entries~$R' \pi_{ij}(g)$ are polynomials in the matrix entries of~$g$, with integer coefficients bounded in absolute value by~$R$.
We assume that $\pi$ has rational coefficients, so that $R$ is well-defined.
\item the \emph{distortion factor} $K\geq1$, which is a number such that $K^{-1} \norm v \leq \norm v_2 \leq K \norm v$ for all $v \in \CC^N$.
\end{itemize}
An irreducible homogeneous polynomial representation of~$\GL(n)$ can be specified by a partition of~$d$, and there exists a well-known choice of basis, called the \emph{Gelfand--Tsetlin basis}; this extends to general homogeneous polynomial representations in a natural way by using lists of partitions and the corresponding union of bases.
If the representation~$\pi$ is specified in this manner and the input vector~$v$ has Gaussian integer entries with respect to this basis, then $\log(KRN)$ and the bound in \cref{thm:GLn_f+b} become polynomial in the input size; for details, see~\cite[\S{}1.6, \S{}8]{BFGOWW2019}.
We only state the capacity bound; as in \cref{thm:tensor_f+b}, the polynomial iteration bound is an immediate consequence.
We also note that the weight diameter can be bounded as $D(\pi) \leq d \sqrt 2$.

\begin{theorem}[Polynomial capacity bound for Gaussian integer vectors]\label{thm:GLn_f+b}
    Let $\pi \colon \GL(n) \to \GL(N)$ be a homogeneous polynomial representation with degree~$d$, coefficient size~$R$, and distortion factor~$K$.
    Then,
    \[
        -\log\Capacity_p(v) \leq \mathcal{O}(n^{5.5}d\log(nd) + \log(KRN)),
    \]
    for all vectors $0 \neq v \in \CC^N$ with Gaussian integer entries and all $p \in \Delta(v)$.
    If $\pi$ is specified by a list of partitions and the basis is the Gelfand--Tsetlin basis, then the right-hand side is $\poly(\braket\pi,\braket v,d)$, where $\braket{\cdot}$ denotes the bit-length.
\end{theorem}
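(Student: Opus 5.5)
The plan is to follow the proof of \cref{thm:tensor_f+b} step by step. There are three ingredients: a capacity lower bound for integral vectors, a randomness bound producing a suitable integer group element, and a Busemann-shift estimate via \cref{lem:asymptoticrays} and \cref{lem:busemann_valuebound}.

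\emph{Step 1 (capacity lower bound for integral vectors).}
For a Gaussian integer vector $w \in \CC^N$ and $p \in \AA$ with $\log\capacity_p(w) > -\infty$, I will invoke the capacity lower bound of \cite{BFGOWW2019} for homogeneous polynomial $\GL(n)$-actions. It states that $\log\capacity_p(w) \geq -\mathcal O(n^{c} d\log(nd) + \log(KRN))$; the $\log(KRN)$ term comes from converting between $\norm\cdot$ and $\norm\cdot_2$ and from clearing the denominators $R'$. The proof there is by the highest-weight argument: there is an invariant polynomial of bounded degree with integer coefficients that does not vanish at $w$. It is stated for rational $p$ in a Weyl chamber. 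As in the footnote preceding \cref{prop:tensor_capacitylowerbound}, I extend it to irrational $p$ using concavity of the logarithmic capacity and rationality of the Borel polytope. I then extend it to all of $\AA$ using \cref{eq:log cap K equivariance} with permutation matrices, which act on the Gelfand--Tsetlin or weight basis while preserving integrality up to the distortion already accounted for in $K$. Integrality also gives $\norm w\ge K^{-1}$, which handles the normalization.

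\emph{Step 2 (a good integer element).}
I need an integer matrix $g \in \GL(n)$ with entries in $[S]$, $\log S \le \poly(n,d)$, such that $\log\capacity_p(g\cdot v) > -\infty$ for all $p\in\Delta(v)$ simultaneously. I will argue as in \cref{prop:tensor_randomnessbound}. By \cref{prop:shiftingtrick_orig}, the condition holds for generic $g$. More precisely, it holds when finitely many polynomials in the entries of $g$ do not vanish: for each vertex or facet datum of $\Delta(v)$, there is a nonvanishing covariant of degree bounded in terms of $n$ and $d$. A Schwartz--Zippel argument then yields $S$ with $\log S = \mathcal O(\poly(n)\cdot d\log(nd))$. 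I expect this to be the main obstacle. It requires a uniform degree bound on the relevant highest-weight vectors (via bounds on the vertices of moment polytopes for degree-$d$ actions of $\GL(n)$), and this bound must be independent of $N$ and $v$ so that only the $\log(KRN)$ term depends on the representation size. The exponent $n^{5.5}$ should come out of combining this degree bound, which is polynomial in $n$ times $d$, with Step 3.

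\emph{Step 3 (Busemann shift).}
As in \cref{eq:tensor bound 2,eq:tensor bound 3}, \cref{lem:asymptoticrays} gives $u\in\U(n)$ such that $2\log\Capacity_p(v) \ge 2\log\capacity_{u^\dagger pu}(v) = \inf f^p_{g\cdot v} - c$, with
\[
  c \le \norm{p}_{\mathrm F}\,\dist(I,g^\dagger g) \le \norm{p}_{\mathrm F}\cdot 2n^{3/2}\log(nS)
\]
by \cref{lem:busemann_valuebound}. For points of $\Delta(v)$, the quantity $\norm p_{\mathrm F}$ is bounded by the weight norm $N(\pi)\le d$ (\cref{lem:multihomogeneous}). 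Therefore $c \le \mathcal O(d\,n^{3/2}\log(nS)) = \mathcal O(n^{5.5} d\log(nd))$, using $\log S = \mathcal O(n^{4}\log(nd))$ from Step 2.

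Finally, I apply Step 1 to the Gaussian-integer-valued vector $g\cdot v$. The polynomial entries of $g\cdot v$ have denominators $R'$, so I rescale by $R'$; this costs only $\log R$. Combining the three steps gives the stated bound. The Gelfand--Tsetlin statement then follows because $\log(KRN)$ is polynomial in the bit-length of $\pi$, as cited from \cite{BFGOWW2019}.
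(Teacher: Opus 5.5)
Your Step~3 is exactly the paper's Busemann-shift estimate: $c\le\norm{p}_{\mathrm F}\,\dist(I,g^\dagger g)\le 2dn^{3/2}\log(nS)$, with $\norm{p}_{\mathrm F}\le N(\pi)\le d$. The overall architecture also matches the paper's proof. The gap is in Steps~1--2, which carry all the quantitative content of the theorem, and neither is established.

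\emph{Step~1.} You assert a deterministic bound $\log\capacity_p(w)\ge-\mathcal O(n^c d\log(nd)+\log(KRN))$ for \emph{every} Gaussian integer $w$ with finite $p$-capacity, but the exponent $c$ is unspecified. You neither prove this bound nor pin down a precise source. The result the paper relies on for homogeneous $\GL(n)$-actions is stated only for random integer translates $g\cdot v$, with the randomization built in. This differs from the tensor case, where \cref{prop:tensor_capacitylowerbound} is deterministic.

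\emph{Step~2.} You flag this step as open yourself, and the degree bound on ``nonvanishing covariants'' is not justified. Its stated output, $\log S=\mathcal O(\poly(n)\,d\log(nd))$, is also inconsistent with the $\log S=\mathcal O(n^4\log(nd))$ you use in Step~3. With an extra factor of $d$ in $\log S$, Step~3 only gives $\mathcal O(n^{5.5}d^2\log(nd))$. And with $c$ unspecified, nothing in your argument yields the exponent $5.5$.

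\emph{The missing ingredient} is \cref{prop:GLn_capacitylowerbound} (\cite[Thm.~7.24]{BFGOWW2019}), which packages your Steps~1 and~2 into a single statement. Fix $p\in\Delta(v)$ and let $g$ be a random integer matrix with entries in $[S]$, where $S=4n^{3n^3+1}d^{n^4}$. Then with probability at least $1/2$, $g\in\GL(n)$ and $-\log\capacity_p(g\cdot v)\le\mathcal O(n^5d\log(nd)+\log(KRN))$. Two features matter:
\begin{itemize}
\item $\log S=\mathcal O(n^4\log(nd))$ carries no factor of $d$, since the degree enters $S$ only through $d^{n^4}$. This is exactly what makes your Step~3 produce $n^{5.5}d\log(nd)$.
\item The statement is per $p$, so no uniformity over $\Delta(v)$ is needed, because $\log\Capacity_p(v)$ is bounded one $p$ at a time.
\end{itemize}
Your extension of Step~1 from the Weyl chamber to all of $\AA$ is also unnecessary, since $p\in\Delta(v)\subseteq C$. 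Moreover, permutation matrices do not act by permutations on the Gelfand--Tsetlin basis, so that extension would introduce denominators controlled by $R$, not $K$. If you replace Steps~1--2 by \cref{prop:GLn_capacitylowerbound}, your Step~3 and the conclusion as in \cref{eq:tensor bound 3} go through verbatim.
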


The proof of \cref{thm:GLn_f+b} follows the same lines as \cref{thm:tensor_f+b}.
Instead of \cref{prop:tensor_capacitylowerbound,prop:tensor_randomnessbound} we use the following shifted capacity lower bound under randomization:

\begin{proposition}[{\cite[Thm.~7.24]{BFGOWW2019}}]\label{prop:GLn_capacitylowerbound}
Let $S \coloneqq 4n^{3n^3 + 1}d^{n^4}$.
Let $0 \neq v \in \CC^N$ have Gaussian integer entries, and let $p \in \Delta(v) \subseteq \RR^n_{\downarrow}$.
Let $g$ be a random $n\times n$ integer matrix with entries drawn independently uniformly at random from~$[S]$.
Then, with probability at least $1/2$, we have $g \in \GL(n)$ and
\begin{equation}\label{eq:capacity_proof1}
    -\log\capacity_p(g\cdot v) \leq \mathcal{O}(n^5 d \log(nd) + \log(KRN)).
\end{equation}
\end{proposition}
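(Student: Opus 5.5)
This proposition is quoted from \cite[Thm.~7.24]{BFGOWW2019}, so we would not reprove it from scratch. The plan below reconstructs how it is obtained. The idea is to combine a representation-theoretic certificate of membership $p\in\Delta(v)$ with a Schwartz--Zippel argument over random integer matrices.

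\textbf{Step 1: a certificate for rational points and a capacity lower bound.} Fix a rational point $p\in\Delta(v)$ with common denominator $k$. By the rational form of the shifting trick (\cref{prop:shiftingtrick_orig}), the statement $\log\capacity_p(g\cdot v)>-\infty$ is equivalent to the existence of a highest-weight polynomial $P$ of degree $k$ and weight $kp$ with $P(g\cdot v)\neq0$. Here $P$ is a Borel-semi-invariant covariant, so its lowering is along the upper-triangular group. For such a $P$ one has the standard estimate
\[
\log\capacity_p(w)\;\geq\;\tfrac1k\log\abs{P(w)}-\tfrac1k\log\norm{P},
\]
obtained from $\abs{P(b\cdot w)}=\abs{\chi_{kp}(b)}\,\abs{P(w)}\le\norm{P}\,\norm{b\cdot w}^k$ for $b$ in the Borel subgroup.

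For $w=g\cdot v$ with $g$ integral and $v$ Gaussian integral, the quantity $R'^{\,k}P(w)$ is a nonzero Gaussian integer, so $\abs{P(w)}\ge R^{-k}$. The norm $\norm{P}$ can be bounded in terms of $K$, $N$ and $k$. Passing from the coordinate norm $\norm\cdot_2$ to the invariant norm $\norm\cdot$ costs a factor $K$, which accounts for the $\log(KRN)$ term in \eqref{eq:capacity_proof1}.

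\textbf{Step 2: bounding $k$ and extending to all of $\Delta(v)$.} First apply Step 1 at the vertices of $\Delta(v)$. Their coordinates are rational with denominators at most roughly $(nd)^{\mathcal O(n^2)}$ by Cramer's rule, since the weights lie in a dilated simplex. This gives $k\le(nd)^{\mathcal O(n^2)}$. Next, invoke degree bounds for the relevant highest-weight covariants, of Derksen type, giving degree $d\cdot(\text{poly in }n)$-exponential. Then extend the bound from vertices to arbitrary $p\in\Delta(v)$ using concavity of $p\mapsto\log\capacity_p(w)$ on the Borel polytope, as in \cite[App.~A.1]{BFGOWW2018}.

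\textbf{Step 3: randomization.} For each vertex, the map $g\mapsto P(g\cdot v)$ is a nonzero polynomial in the entries of $g$, of degree $kd$. It is nonzero because it is nonzero at some $g\in\GL(n)$ and $\GL(n)$ is Zariski dense. Multiply all these polynomials together with $\det g$. The product has total degree at most roughly (number of vertices)$\,\cdot kd+n$. The number of vertices is bounded by the number of subsets of weights, which is at most $(d+1)^{n^2}$-type. Schwartz--Zippel with $S$ at least twice this total degree then gives success probability at least $1/2$. The stated choice $S=4n^{3n^3+1}d^{n^4}$ should be exactly such a bound.

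\textbf{Main obstacle.} The hardest step is Step 2: obtaining an explicit bound, simultaneously, on the denominators of the vertices and on the degree of a nonvanishing highest-weight covariant. This bound is what controls both the exponent $n^5d\log(nd)$ in \eqref{eq:capacity_proof1} and the size of $S$. We would first try the invariant-theoretic degree bounds used for null-cone membership, applied via the shifting trick to $v\otimes(\text{highest weight vector})$.
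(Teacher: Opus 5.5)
The paper gives no proof of this proposition. It imports it verbatim from \cite[Thm.~7.24]{BFGOWW2019}, so deferring to that reference is exactly what the paper does. Read as a reconstruction, however, your sketch leaves the decisive step unjustified, and that step is not the one you name as the main obstacle.

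The crux is your claim in Step~1 that a nonvanishing Borel semi-invariant polynomial $P$ of degree $k$ and weight $kp$ on $\CC^N$ can be taken with Gaussian-integer coefficients and with $\norm{P}$ ``bounded in terms of $K$, $N$ and $k$''. The shifting trick and the rationality of $\pi$ give you \emph{some} such $P$ with rational coefficients, hence integral ones after clearing denominators. They give no control on its height. Generic Cramer/Siegel-type bounds for an integral solution of the linear system cutting out weight-$kp$ highest-weight vectors are exponential in the dimension $\binom{N+k-1}{k}$ of the degree-$k$ polynomials, which is useless here.

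Your own bookkeeping shows that everything hinges on this estimate. In $\log\capacity_p(w)\ge\frac1k\log\abs{P(w)}-\frac1k\log\norm{P}$, the integrality bound $\abs{P(w)}\ge R^{-k}$ contributes only $-\log R$. The degree $k$ enters the capacity bound \emph{only} through $\frac1k\log\norm{P}$; elsewhere it only affects $S$ via Schwartz--Zippel. So the degree bound of Step~2 cannot by itself produce the $n^5d\log(nd)$ term.

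What is needed is an explicit construction of covariants whose height per degree is controlled, say $\frac1k\log\norm{P}\le\mathcal O(\log(KN))+\poly(n,d)\cdot\log k$. One possible tool is a Reynolds-operator or Cayley $\Omega$-process type argument on the auxiliary representation supplied by the shifting trick. Only after such a bound does a Derksen-type degree bound turn this into a polynomial estimate; note that $\log S=\mathcal O(n^4\log(nd))$ is the size such a degree bound would have. Without a height bound of this kind the argument does not go through.

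There are two smaller inaccuracies.
\begin{enumerate}
\item A nonvanishing covariant of weight $mp$ is only guaranteed in \emph{some} degree $m$ with $mp$ integral, not in degree equal to the common denominator of $p$. So the vertex-denominator estimate in Step~2 does not bound $k$. Only the separate degree bound does, and it must depend only on $n$ and $d$, since $S$ does.
\item For a fixed $p$ you need neither all vertices of $\Delta(v)$ nor a count of them. By Carath\'eodory, $p$ is a convex combination of at most $n+1$ vertices. Concavity of $p\mapsto\log\capacity_p(w)$ on the Weyl chamber reduces the bound to those vertices, so the product polynomial in Schwartz--Zippel has degree at most $(n+1)dk+n$.
\end{enumerate}
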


We now prove the desired bound on the unitarily invariant logarithmic $p$-capacity.

\begin{proof}[Proof of \cref{thm:GLn_f+b}]
    This proof proceeds analogously to the proof of \cref{thm:tensor_f+b}.
    \Cref{prop:GLn_capacitylowerbound} ensures the existence of $g \in \GL(n)$ whose entries are positive integers at most~$S$ and which satisfies \cref{eq:capacity_proof1}.
    By \cref{lem:asymptoticrays}, the geodesic $t \mapsto g^\dagger e^{tp}g$ is asymptotic to $t \mapsto e^{tu^\dagger pu}$ for some $u \in \U(n)$, and we can upper bound $b^p(x) - b^{u^\dagger pu}(g^\dagger xg) \eqqcolon c$, which is a constant independent of~$x\in\PD(n)$, as follows:
    \begin{align}
    \nonumber
        c
    &= b^p(I) - b^{u^\dagger pu}(g^\dagger g)
    = - b^{u^\dagger pu}(g^\dagger g) \\
    \label{eq:capacity_proof2}
    &\leq \norm p_{\mathrm F} \dist(I, g^\dagger g)
    \leq 2 d n^{3/2}\log (nS)
    \leq \mathcal{O}(n^{5.5}d\log(nd))
    \end{align}
    where the first inequality follows from the Lipschitz continuity of the Busemann function;
    the second inequality follows from~$\norm p_{\mathrm F} \leq N(\pi) \leq d$ and \cref{lem:busemann_valuebound};
    the last inequality follows from the definition of~$S$
    (except if $n=d=1$; here $\pi(z)=z I$, $p=1$, and the theorem still holds).
    Using \cref{eq:capacity_proof1,eq:capacity_proof2} instead of \cref{eq:tensor bound 1,eq:tensor bound 2}, we can conclude as in \cref{eq:tensor bound 3}.
\end{proof}

\section{Applications}\label{sec:applications}
In this section, we discuss various applications of Hadamard mirror descent.
In \cref{sub:norm minimization}, we address norm minimization over the moment polytope, and show an $\mathcal{O}(k^{-1})$ convergence rate; previously, such a result was only known in the semistable case (i.e., when the norm minimum is zero).
In \cref{sub:quantum}, we show that the quantum functionals of tensors admit a simple scaling algorithm called \emph{entropic tensor scaling}, and we establish polynomial iteration complexity for this algorithm.
No such algorithm was known before our work.
In \cref{sub:symmetric_quantum}, we obtain an analogous algorithm and result for the symmetric quantum functional.
In \cref{sub:G-stable-rk,sub:ncrk}, we address the computation of the $G$-stable ranks and the non-commutative rank; in these cases, the objective functions are not differentiable and we illustrate how to smooth them.

\subsection{Norm minimization over moment polytopes}\label{sub:norm minimization}
We start with a paradigmatic application that is deeply connected to scaling algorithms and to the non-commutative optimization framework~\cite{BFGOWW2019}.
We work in the general setting of \cref{sec:momentpolytope} and consider a representation $\pi \colon G \to \GL(\VV)$ of a self-adjoint subgroup~$G \subseteq \GL(n)$.
Our goal is to minimize the squared Frobenius norm over the moment polytope~$\Delta(v)$ of a given vector $v \in \VV\setminus\{0\}$.
Equivalently, we wish to minimize the squared Frobenius norm of the moment map over the $G$-orbit of~$v$.%
\footnote{Recall that we defined the moment polytope~\eqref{eq:momentpolytope} as a subset of the Weyl chamber~$C \subseteq \AA \subseteq \bbH \subseteq \Herm(n)$, and the moment map~\eqref{eq:moment map} as taking values in~$\bbH \subseteq \Herm(n)$.}
Thus, our goal is to compute:
\begin{align}\label{eq:norm min goal}
    \min_{p \in \Delta(v)} \tfrac12\norm{p}_{\mathrm{F}}^2 = \inf_{g \in G} \tfrac12\norm{\mu(g\cdot v)}_{\mathrm{F}}^2,
\end{align}
corresponding to the objective $\phi(\cdot) = \|\cdot\|_{\mathrm F}^2/2$ (or equivalently, $\Phi(\cdot) = \|\cdot\|_{\mathrm{F}}^2/2$).%
\footnote{The corresponding holonomy-invariant objective is $Q = \frac12\norm{\cdot}^2$, with $\norm{\cdot}$ the dual norm on the cotangent bundle induced by the Riemannian metric. See \cref{eq:norm squared}.}
This problem was studied from the perspective of so-called \emph{scaling problems} (such as matrix, operator and tensor scaling) by B\"{u}rgisser et al.~\cite{BFGOWW2019}; they pointed out that these can be reduced to the problem of setting the moment map to zero for suitable group actions, and proposed minimizing the associated Kempf--Ness functions (whose gradient computes the moment map, see \cref{lem:KNderivative}) as an approach to solving general scaling problems.
Hirai and Sakabe~\cite{HiraiSakabe2024} studied gradient descent in the unbounded case and proved a qualitative convergence result.
We first state our result and then compare it to the results of these prior works at the end of this subsection.

Since $\phi$ is $1$-smooth, the dual smoothness of the Kempf--Ness functions (\cref{thm:kn-dual-smooth}) guarantees that Hadamard mirror descent with the step size $\eta = 2/D(\pi)^2$ solves this problem.
Because $\nabla\Phi$ is the identity map, Hadamard mirror descent reduces to ordinary Riemannian gradient descent.
In particular, the corresponding iteration~\eqref{eq:updatev_discrete} reads as follows:
\begin{equation}\label{eq:moment-map-iter}
    v_{k+1} \coloneqq c_{k+1}\pi\mleft(e^{-\frac1{D(\pi)^2}\mu(v_k)}\mright)v_k,\qquad v_0 \coloneqq c_0v,
\end{equation}
where $0 \neq c_k \in \CC$ are arbitrary scalars.
Then, by applying the general \cref{cor:KN-Qsmooth}, we have the following convergence result for the norm minimization:

\begin{theorem}[Convergence for norm minimization]\label{thm:moment-norm-conv}
    Let $0 \neq v \in \VV$, and let $(v_k)_{k \in \NN}$ be the sequence generated by \cref{eq:moment-map-iter}.
    Then, for every $k\geq1$,
    \[
        \tfrac12\|\mu(v_k)\|_{\mathrm{F}}^2 - \min_{p \in \Delta(v)}\tfrac12\|p\|_{\mathrm{F}}^2
    \leq \frac{D(\pi)^2 \left( \log\|v\| - \log\Capacity_{p^\star}(v) \right)}{k},
    \]
    where $p^\star$ is the minimum-norm point in $\Delta(v)$.

    Moreover, in the case of the tensor action, the tensor power action, or an arbitrary homogeneous polynomial action of~$\GL(n)$ (using a suitable input format), for vectors $v$ with Gaussian-integer entries the numerator can be replaced by a bound that is polynomial in the input bit-length, as in \cref{thm:tensor_f+b,cor:symmetricentanglementpolytope,thm:GLn_f+b}.
\end{theorem}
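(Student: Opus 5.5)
The plan is to specialize \cref{cor:KN-Qsmooth} to $\phi(\cdot)=\tfrac12\norm{\cdot}_{\mathrm F}^2$ on $\AA$. First I will check its hypotheses. This $\phi$ is finite, convex, differentiable and Weyl group invariant, because $W$ acts on $\AA$ through the adjoint action of $K$, which is orthogonal for the Hilbert--Schmidt inner product. It is $1$-smooth since $\nabla\phi$ is the identity. Its $K$-invariant extension is $\Phi(X)=\phi(s(X))=\tfrac12\norm{X}_{\mathrm F}^2$, again because $s(X)$ is a unitary conjugate of $X$. Hence $\nabla\Phi(X)=X$.

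Next I will identify the iteration. Substituting $\nabla\Phi(\mu(v_k))=\mu(v_k)$ and $\eta=2/D(\pi)^2$ into \cref{eq:updatev_discrete}, the exponent $-\tfrac\eta2\nabla\Phi(\mu(v_k))$ becomes $-\mu(v_k)/D(\pi)^2$. This is exactly the update \eqref{eq:moment-map-iter}. We assume $D(\pi)>0$; if $D(\pi)=0$, every element of $\bbH$ acts by a scalar, the moment polytope is a single point, and the claim is trivial.

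Now apply \cref{cor:KN-Qsmooth} with $L=1$ and step size $\eta=2/D(\pi)^2$. It gives
\[
\phi(s(\mu(v_k)))-\min_{p\in\Delta(v)}\phi(p)\le \frac{D(\pi)^2\left(\log\norm v-\log\Capacity_{p^\star}(v)\right)}{k}.
\]
Since $\phi(s(\mu(v_k)))=\tfrac12\norm{\mu(v_k)}_{\mathrm F}^2$ by $K$-invariance of the norm, this is the claimed inequality. The minimizer $p^\star$ exists and is unique because $\Delta(v)$ is a compact convex set and the squared norm is strictly convex. Its capacity is finite by \cref{eq:Delta vs logCap}.

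For the ``moreover'' part, I will bound $-\log\Capacity_{p^\star}(v)$ uniformly over $p\in\Delta(v)$, since $p^\star$ lies in $\Delta(v)$. For the tensor action this uses \cref{thm:tensor_f+b}, for a general homogeneous polynomial action of $\GL(n)$ it uses \cref{thm:GLn_f+b}, and for the tensor power action it uses the latter or the shift-stack reduction in \cref{cor:symmetricentanglementpolytope}. In addition, $\log\norm v$ is bounded by the bit-length: for the tensor action directly, and in the general case up to the $\log K$ distortion factor. The weight diameters are also polynomial: $D(\pi)^2\le 2d$ for the tensor action and $\le 2d^2$ otherwise, by \cref{lem:multihomogeneous}.

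The only point needing care is the tensor power case via the shift stack. That route proceeds through \cref{thm:tensor power reduction} with the objective $\phi_{\Sum}$, and there the step sizes differ by constants. The cleanest option is to invoke \cref{thm:GLn_f+b} directly for $\pi_{\Sym}$, which the paper notes costs only an extra logarithmic dependence on $n_0$. Beyond that, the whole argument is routine specialization.
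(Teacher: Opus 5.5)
Your proposal is correct and matches the paper's proof: you specialize \cref{cor:KN-Qsmooth} to the $1$-smooth, Weyl-invariant objective $\tfrac12\norm{\cdot}_{\mathrm F}^2$ with $\eta=2/D(\pi)^2$, use $\norm{s(\mu(v_k))}_{\mathrm F}=\norm{\mu(v_k)}_{\mathrm F}$, and take the ``moreover'' part from the capacity bounds in \cref{thm:tensor_f+b,cor:symmetricentanglementpolytope,thm:GLn_f+b}. Your caution about the shift-stack route is unnecessary, since by \cref{thm:tensor power reduction}~(d) the step-size condition transfers for the same $\eta$; invoking \cref{thm:GLn_f+b} directly for $\pi_{\Sym}$ works equally well.
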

\begin{proof}
    Since $\phi(\cdot) \coloneqq \|\cdot\|_{\mathrm{F}}^2/2$ is a $1$-smooth convex Weyl group invariant function, the displayed bound follows from \cref{cor:KN-Qsmooth} and the fact that $\|s(\mu(v_k))\|_{\mathrm{F}} = \|\mu(v_k)\|_{\mathrm{F}}$ for all~$k$.
    The ``moreover'' part follows directly from \cref{thm:tensor_f+b,cor:symmetricentanglementpolytope,thm:GLn_f+b}.
\end{proof}

It is instructive to compare this result with prior work.
To minimize Kempf--Ness functions~$f_v$, B\"{u}rgisser et al.~\cite{BFGOWW2019} proposed and analyzed the gradient descent algorithm~\eqref{eq:moment-map-iter} but with step size~$\eta=1/L$, where $L$ is any upper bound on the smoothness parameter of~$f_v$.
They further showed that the Kempf--Ness functions are $N(\pi)^2$-smooth, leading to the step size~$\eta=1/N(\pi)^2$.
This can be improved to $\eta = 4/D(\pi)^2$ using our result on the optimal smoothness parameter (see \cref{prop:kn-smooth,eq:weight diam vs norm}).
Under either choice, their results show that the squared norm of the moment map~$\mu(v_k)$ converges to $0$ at $\mathcal{O}(k^{-1})$ whenever $0 \in \Delta(v)$, which is exactly when $f_v$ is bounded from below (such~$v$ are called \emph{semistable}).%
\footnote{Strictly speaking they proved an $\mathcal{O}(k^{-1})$ bound for the best squared norm within $k$ iterates.
Hirai and Sakabe proved subsequently that the gradient norms along gradient descent with step size~$1/L$ are nonincreasing~\cite[Lem.~3.8~(2)]{HiraiSakabe2024}; this upgrades the best-iterate bound to the last iterate.}
For the same step sizes, Hirai and Sakabe~\cite{HiraiSakabe2024} showed that $\norm{\mu(v_k)}_{\mathrm F}$ converges to its infimum even if the infimum is positive, but did not provide a convergence rate.
Our \cref{thm:moment-norm-conv} proves a convergence rate of~$\mathcal{O}(k^{-1})$ that applies in general, regardless of the value of the optimization problem, provided we choose the step size~$\eta=2/D(\pi)^2$ arising from the use of the dual-smoothness parameter.

\begin{remark}\label{rem:Luna}
In the setting of \cref{thm:moment-norm-conv}, suppose that $v$ is fixed by a self-adjoint Zariski-closed subgroup~$H \subseteq G$, that is, $h \cdot v = v$ for every~$h\in H$.
Consider the \emph{centralizer} $C_G(H)$ of $H$ in $G$, defined by~$C_G(H) \coloneqq \{g\in G\mid ghg^{-1}=h\, (\forall h\in H)\}$.
Luna's criterion \cite[Cor.~2 and Rem.~1]{Luna-75} states that
\begin{equation}\label{eq:luna}
    v \text{ is $G$-semistable}\quad\iff\quad v\text{ is $C_G(H)$-semistable};
\end{equation}
recall that semistability means that the orbit closure does not contain the zero vector (i.e., the Kempf--Ness function is bounded from below); equivalently, the minimum norm over the moment polytope is zero.
For convenience, we assume that $C_G(H)$ is connected.%
\footnote{If the centralizer is not connected, replace it by its identity component; this subgroup defines the same null cone.}

Hadamard mirror descent gives an alternative \emph{algorithmic} proof of this equivalence using conservation of symmetry (\cref{prop:symmetry}).
To see the connection, we choose $c_k \coloneqq 1$ for all $k\in\NN$ in \cref{eq:updatev_discrete}, so~$v_{k} = g_k\cdot v$ remains in the $G$-orbit at all times.
Here, $g_k \in G$ denotes the corresponding sequence of group elements, defined as in \cref{eq:update group} for our objective $\Phi(X)=\norm{X}_{\mathrm{F}}^2/2$ and step size $\eta = 2/D(\pi)^2$.
We claim that the group elements in fact belong to~$C_G(H)$.
To see this, set $L\coloneqq H\cap K$.
Since $L$ is Zariski dense in $H$, we have $C_G(L) = C_G(H)$.
By \cref{prop:symmetry}~(c), we have $u\cdot v_k=v_k$ for every $u\in L$.
Therefore, using the $K$-equivariance of the moment map, we have
\[
    u \mu(v_k) u^{\dag} = \mu(u\cdot v_k) = \mu(v_k) \qquad (u\in L).
\]
Thus, $\mu(v_k)\in\Lie(C_G(H))=\{X\in\Lie(G)\mid u X u^\dag=X\; (u\in L)\}$, hence $e^{-\frac1{D(\pi)^2}\mu(v_k)} \in C_G(H)$.
Since $g_0=I$ and $g_{k+1}=e^{-D(\pi)^{-2}\mu(v_k)} g_k$, induction gives $g_k\in C_G(H)$ for every $k\in\NN$.
In view of \cref{thm:moment-norm-conv}, this already implies Luna's equivalence~\eqref{eq:luna} because norm minimization over the moment polytope is equivalent to minimizing the norm of the gradient of the Kempf--Ness function.

To see this more explicitly, let $\mu_C$ denote the moment map for the restricted $C_G(H)$-action on $\VV$.
It is given by the composition of~$\mu$ with the orthogonal projection onto $\Lie(C_G(H))\cap \Herm(n)$.
Note that every $w\in C_G(H)\cdot v$ is fixed by $H$, and hence also by $L$.
The same equivariance argument therefore shows that $\mu(w) \in \Lie(C_G(H))\cap \Herm(n)$, hence $\mu_C(w)=\mu(w)$.
Finally, by \cref{thm:moment-norm-conv}, we know that~$\lim_{k\rightarrow\infty} \norm{\mu(g_k\cdot v)}_{\mathrm{F}}=\inf_{g\in G}\norm{\mu(g\cdot v)}_{\mathrm{F}}$, so
\[
    \inf_{g\in G}\norm{\mu(g\cdot v)}_{\mathrm{F}}
= \inf_{g\in C_G(H)}\norm{\mu(g\cdot v)}_{\mathrm{F}}
= \inf_{g\in C_G(H)}\norm{\mu_C(g\cdot v)}_{\mathrm{F}}.
\]
The left-hand infimum vanishes exactly when $v$ is $G$-semistable, while the right-hand side infimum vanishes exactly when $v$ is $C_G(H)$-semistable.
Thus we obtain Luna's criterion~\eqref{eq:luna}, concluding the proof.
\end{remark}

\subsection{Quantum functionals and entropic tensor scaling}\label{sub:quantum}
In this subsection, we derive an algorithm that computes the \emph{quantum functionals}, an important class of tensor parameters \cite{CVZ2018}.
The setting is the tensor action in \cref{sub:entanglementpolytope}: the group $G = \GL(n_1)\times\cdots\times\GL(n_d)$ acts on $\VV = \Ten(n_0; n_1, \ldots, n_d)$ as in \cref{eq:tensor action}.
Recall that the associated moment map computes the one-body quantum marginals (\cref{eq:momentmap_tensor})
\[
    \mu \colon \Ten(n_0;n_1, \ldots, n_d)\setminus\{0\} \to \PSD(n_1) \times\cdots\times\PSD(n_d),\quad
    \mu(T) = \left( \frac {T^{(1)}T^{(1)\dagger}} {\norm T^2}, \dots, \frac {T^{(d)}T^{(d)\dagger}} {\norm T^2} \right),
\]
and the moment polytope is known as the \emph{entanglement polytope}~\cite{WDGC2013_entanglement} and is given by \cref{eq:entanglementpolytope}:
\begin{align*}
    \Delta(T) = \overline{\left\{\left(\spec\frac{S^{(1)}S^{(1)\dagger}}{\|S\|^2}, \ldots, \spec\frac{S^{(d)}S^{(d)\dagger}}{\|S\|^2}\right)\ \middle|\ S = (I \ot g_1 \ot \cdots \ot g_d) T, \; g_\ell \in \GL(n_\ell) \right\}}.
\end{align*}
Note that each quantum marginal~$\frac{S^{(\ell)}S^{(\ell)\dagger}}{\|S\|^2}$ is positive semidefinite and has trace one; thus the eigenvalues form a probability distribution and we can consider the entropy.
The logarithmic quantum functionals are defined by maximizing a weighted sum of the marginal entropies over the entanglement polytope.
To define them formally, fix a probability distribution~$\theta$ on~$[d]$ and consider the convex objective
\begin{align}\label{eq:qfunc objective}
    \phi_\theta \colon \RR^{n_1} \times \cdots \times \RR^{n_d} \to \overline\RR, \quad
    \phi_\theta(p_1,\dots,p_d) \coloneqq \begin{cases}-\sum_{\ell \in [d]} \theta_\ell H(p_\ell) & \text{if all } p_\ell \ge \mathbf{0}, \\
    \infty & \text{otherwise},
    \end{cases}
\end{align}
where $H(p)\coloneqq -\sum_ip_i \log p_i$ is the Shannon entropy, with the usual convention that $0\log0 \coloneqq 0$.
It is easily checked that $\phi_\theta$ satisfies \ref{it:phi1gl}--\ref{it:phi3gl}.
We denote by $\Phi_\theta \coloneqq \phi_\theta \circ \spec$ its unitarily invariant extension, obtained by replacing the Shannon entropy by the von Neumann entropy, $H(X) \coloneqq -\tr[X \log X]$; see \cref{ex:F examples}~(a).
The quantum functionals are defined~as~follows.%
\footnote{\label{footnote:qfunc tensor tuple}The quantum functionals were originally proposed for a \emph{single} tensor (i.e., $n_0 = 1$), but the definition generalizes naturally to tensor tuples ($n_0 \ge 2$).
From the perspective of quantum information, this means that we allow $T$ to describe a mixed quantum state of $d$~particles rather than restricting to pure states.
However, identifying $\Ten(n_0; n_1, \ldots, n_d) \cong \Ten(1; n_0, n_1, \ldots, n_d)$, one can show that the quantum functionals of a $d$-tensor tuple~$T$ coincide with the quantum functionals of $T$ regarded as a tensor in~$\Ten(1; n_0, n_1, \ldots, n_d)$ if we extend~$\theta$ by~$\theta_0 \coloneqq 0$.
Hence, the generalization we consider, while natural, does not introduce an essentially new concept.}

\begin{definition}[Quantum functionals, \cite{CVZ2018}]\label{def:qfunc}
For a probability distribution~$\theta$ on~$[d]$, the \emph{logarithmic quantum functional}~$E_\theta$ is, for all tensor tuples~$T \in \Ten(n_0; n_1,\dots,n_d)\setminus\{0\}$, given by
\begin{align*}
    \logqfunc_\theta(T)
\coloneqq \max_{p\in\Delta(T)} \smash{\sum_{\ell=1}^d} \theta_\ell H(p_\ell)
= -\min_{p \in \Delta(T)}\phi_\theta(p).
\end{align*}
Its exponential is called the \emph{quantum functional} $\qfunc_\theta(T)
\coloneqq e^{\logqfunc_\theta(T)}$;
we also set 
$\qfunc_\theta(0) \coloneqq 0$ for the~zero~tensor.
\end{definition}

\noindent
When $T$ is a single matrix ($d=2$, $n_0=1$), the quantum functional coincides with the matrix rank for all $\theta$.
We discuss the motivation for the general definition in \cref{sec:intro}; see also \cref{thm:algorithmicproof} below and the discussion surrounding it.

We will now apply our general results in \cref{sub:entanglementpolytope} to the minimization above.
We may assume that the tensor tuple~$T$ is \emph{concise}: this means that each component of the moment map~$\mu(T)$ is a nonsingular and hence positive definite matrix.%
\footnote{\label{footnote:concise}If $T$ is not concise, its slices in some direction are linearly dependent.
By an invertible transformation in the corresponding tensor factor, we may arrange that the \emph{nonzero} slices are linearly independent.
Deleting the zero slices then allows us to view~$T$ in a smaller tensor subspace.
One obtains a concise tensor~$T'$ by repeating this procedure for every rank-deficient tensor factor.
It is not hard to see that the entanglement polytopes~$\Delta(T)$ and~$\Delta(T')$ coincide (up to padding by zeros), and hence the quantum functionals for the two tensors coincide.
For tensors with Gaussian integer entries, this reduction can be carried out in polynomial time, and the resulting tensor can be arranged to have Gaussian integer entries with bit-lengths polynomially bounded in the input length.}
In this case, $g \cdot T$ is also concise for all~$g \in G$; hence the assumption $\spec\mu(g \cdot T) \in \intr \dom \phi_\theta$ is satisfied for all~$g \in G$.
Thus we can apply Hadamard mirror descent.
Taking~$c_{k+1} \coloneqq e^{\eta/2}$ in \cref{eq:updateTk} and using the formula $-\nabla H(X) = \log X + I$ for the gradient of the von Neumann entropy, we obtain the following iteration:
\begin{align}\label{eq:entropic tensor scaling}
    T_{k+1}
\coloneqq \mleft(I \otimes \bigotimes_{\ell=1}^d \rho_\ell^{-\eta \theta_\ell/2}\mright)T_k,
    \quad\text{where}\quad (\rho_1,\dots,\rho_d) = \mu(T_k),\qquad T_0 \coloneqq T.
\end{align}
For~$\eta=1$, we refer to this algorithm as \emph{entropic tensor scaling}.

We now show that we can always take this step size.
One immediate challenge is that the entropy is not smooth, and hence we cannot apply \cref{cor:KN-Qsmooth}.
Furthermore, numerical experiments suggest that $1$-relative smoothness does not hold either, which would suffice for~$\eta=1$.
Accordingly we establish the step-size condition~\eqref{eq:step-size-KN} directly.
To this end, we first observe that the step-size condition in the present setting amounts to a quantum entropy inequality.

\begin{lemma}\label{lem:quantum reduction}
    To establish the step-size condition~\eqref{eq:step-size-KN} for the sequence~\eqref{eq:entropic tensor scaling}, it suffices to prove that
    for all $\rho \in \PSD(\CC^{n_1} \ot \cdots \ot \CC^{n_d})$ of trace one such that the~$\rho_\ell$ are all invertible,
    \begin{equation}\label{eq:entropic goal}
        \eta \sum_{\ell=1}^d \theta_\ell H\mleft(\frac {\sigma_\ell}{\tr\sigma}\mright) \geq \log \tr \sigma
        \quad\text{where } \sigma \coloneqq \left( \ot_{\ell=1}^d \rho_\ell^{-\eta\theta_\ell/2} \right) \rho \left( \ot_{\ell=1}^d \rho_\ell^{-\eta\theta_\ell/2} \right).
    \end{equation}
    Here, $\rho_\ell \coloneqq \tr_{\check\ell}\rho$ and $\sigma_\ell \coloneqq \tr_{\check\ell} \sigma$ denote the partial traces as defined in \cref{eq:partial trace}.
    Furthermore, in order to establish \cref{eq:entropic goal} for all $\eta\in(0,1]$, all~$d$, all dimensions~$n_1,\dots,n_d$ and all probability distributions~$\theta$, it suffices to establish it in the case~$\eta=1$, in which case \cref{eq:entropic goal} is equivalent to $\sum_{\ell=1}^d \theta_\ell H(\sigma_\ell) \geq 0$.%
\footnote{We emphasize that this inequality is nontrivial: while the~$\sigma_\ell$ are always PSD, their traces will, in general, not be equal to one, so we can have $H(\sigma_\ell) < 0$ for some~$\ell$.
We remark that $\tr\sigma_\ell = \tr\sigma \geq 1$, because $\bigotimes_{\ell\in[d]}\rho_\ell^{-\theta_\ell/2}$ has~all eigenvalues at least $1$ and multiplying it from left and right does not decrease the trace of a positive semidefinite matrix.}
\end{lemma}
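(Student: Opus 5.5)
The plan is to substitute the entropic objective into the concrete step-size condition~\eqref{eq:step-size-KN} and simplify. The reduction from general $\eta\in(0,1]$ to $\eta=1$ will use an auxiliary trivial tensor factor.

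\textbf{Step 1: the step-size condition becomes \eqref{eq:entropic goal}.} Let $\rho \coloneqq \sum_i t_i t_i^\dagger/\norm{T_k}^2$ be the density operator of $T_k$, so that $\mu(T_k)_\ell = \rho_\ell$ by~\eqref{eq:moment map vs partial trace}. Conciseness is preserved along the orbit, so every $\rho_\ell$ is invertible.
\begin{itemize}
\item With $(\nabla\Phi_\theta)_\ell = \theta_\ell(\log\rho_\ell + I)$ we get $\tr[\mu(T_k)\nabla\Phi_\theta(\mu(T_k))] = \sum_\ell \theta_\ell(-H(\rho_\ell)+1) = \Phi_\theta(\mu(T_k)) + 1$. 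Hence the left-hand side of~\eqref{eq:step-size-KN} equals $\sum_\ell\theta_\ell H(\mu(T_{k+1})_\ell) - 1$.
\item In the general update~\eqref{eq:updateTk} with scalar $c_{k+1}$, the identity part of the gradient contributes the factor $\prod_\ell e^{-\eta\theta_\ell/2} = e^{-\eta/2}$. So the unnormalized next iterate is $c_{k+1}e^{-\eta/2}(I\otimes\bigotimes_\ell\rho_\ell^{-\eta\theta_\ell/2})T_k$. Its normalized density operator is $\sigma/\tr\sigma$, with $\sigma$ as in~\eqref{eq:entropic goal}.
\item Consequently $\mu(T_{k+1})_\ell = \sigma_\ell/\tr\sigma$, by linearity of the partial trace.
\item We also get $\norm{T_{k+1}}^2/(\abs{c_{k+1}}^2\norm{T_k}^2) = e^{-\eta}\tr\sigma$. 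The right-hand side of~\eqref{eq:step-size-KN} is therefore $-1 + \frac1\eta\log\tr\sigma$.
\end{itemize}
The constants $-1$ on both sides cancel. Multiplying by $\eta$ gives exactly~\eqref{eq:entropic goal}, independently of the choice of the scalars $c_k$, which is as it should be since~\eqref{eq:step-size-KN} is insensitive to them. Note that~\eqref{eq:entropic goal} is stated for all $\rho$ with invertible marginals, which covers every iterate.

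\textbf{Step 2: rewrite with unnormalized entropies.} Write $t\coloneqq\tr\sigma = \tr\sigma_\ell$. The scaling identity $H(\sigma_\ell/t) = H(\sigma_\ell)/t + \log t$ holds for $H(X) = -\tr X\log X$ extended to unnormalized PSD matrices. Using it, \eqref{eq:entropic goal} is equivalent to
\[
  \eta\sum_{\ell}\theta_\ell H(\sigma_\ell) \;\geq\; (1-\eta)\, t\log t .
\]
For $\eta=1$ the right-hand side vanishes, giving the claimed equivalence with $\sum_\ell\theta_\ell H(\sigma_\ell)\ge0$.

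\textbf{Step 3: reduce general $\eta$ to $\eta=1$.} Given $\eta\in(0,1]$, $\theta$, and $\rho$ on $\CC^{n_1}\ot\cdots\ot\CC^{n_d}$, regard $\rho$ as a state on $d+1$ factors with an extra factor $\CC^{n_{d+1}}$ of dimension $n_{d+1}=1$. Use the probability distribution $\theta' \coloneqq (\eta\theta_1,\dots,\eta\theta_d,1-\eta)$.
\begin{itemize}
\item The extra marginal is the $1\times1$ matrix $\rho_{d+1} = 1$, which is invertible, and $\rho_{d+1}^{-\theta'_{d+1}/2} = 1$.
\item Hence the $\eta=1$ construction with weights $\theta'$ yields the same $\sigma$, with $\sigma_\ell$ unchanged for $\ell\le d$.
\item The new marginal is $\sigma_{d+1} = t$, so $H(\sigma_{d+1}) = -t\log t$.
\end{itemize}
The $\eta=1$ inequality for $(d+1,\theta')$ therefore reads $\eta\sum_{\ell\le d}\theta_\ell H(\sigma_\ell) - (1-\eta)t\log t\ge0$, which is precisely the display in Step~2.

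The main care point is Step~1. There one has to track how the scalar $e^{-\eta/2}$ from the $+I$ term of the gradient interacts with $c_{k+1}$ and with the trace-one term from $\tr[\mu\nabla\Phi]$, and check that these cancel exactly. Steps~2 and~3 are routine once the scaling identity for $H$ is noted.
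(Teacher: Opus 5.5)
Your proposal is correct and follows essentially the same route as the paper. You substitute $(\nabla\Phi_\theta)_\ell=\theta_\ell(\log\rho_\ell+I)$ into \eqref{eq:step-size-KN} so that the constants $-1$ cancel, identify $\mu(T_{k+1})_\ell=\sigma_\ell/\tr\sigma$, and reduce general $\eta$ to $\eta=1$ by appending a one-dimensional tensor factor with weights $(\eta\theta_1,\dots,\eta\theta_d,1-\eta)$, exactly as the paper does. The differences are cosmetic: you keep $c_{k+1}$ general instead of fixing $c_{k+1}=e^{\eta/2}$, and you make the paper's ``one can verify'' step explicit via the scaling identity $H(\sigma_\ell/t)=H(\sigma_\ell)/t+\log t$, which gives the convenient intermediate form $\eta\sum_\ell\theta_\ell H(\sigma_\ell)\ge(1-\eta)\,t\log t$.
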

\begin{proof}
The left-hand side of the step-size condition~\eqref{eq:step-size-KN} simplifies to
\begin{align*}
    -\sum_{\ell=1}^d \theta_\ell H(\mu(T_k)_\ell) + \sum_{\ell=1}^d \theta_\ell H(\mu(T_{k+1})_\ell) + \sum_{\ell=1}^d \theta_\ell \tr\mleft[ \mu(T_k)_\ell \nabla H(\mu(T_k)_\ell) \mright]
= \left( \sum_{\ell=1}^d \theta_\ell H(\mu(T_{k+1})_\ell) \right) - 1,
\end{align*}
where we used $-\nabla H(X) = \log X + I$ and $\tr \mu(T_k)_\ell = 1$ for all $\ell\in[d]$.
The right-hand side is given by
\begin{align*}
\frac1\eta \log\frac{\norm{T_{k+1}}^2}{\abs{c_{k+1}}^2\norm{T_k}^2}
= \frac1\eta \log\frac{\norm{T_{k+1}}^2}{\norm{T_k}^2} - 1,
\end{align*}
where we used our choice $c_{k+1} \coloneqq e^{\eta/2}$ leading to \cref{eq:entropic tensor scaling}.
Thus, the step-size condition is equivalent to
\begin{align}\label{eq:step size simplified}
    \eta \sum_{\ell=1}^d \theta_\ell H(\mu(T_{k+1})_\ell) \geq \log \frac{\norm{T_{k+1}}^2}{\norm{T_k}^2}.
\end{align}
Because both the left-hand side and the right-hand side are invariant under rescaling~$T_k$, we may assume without loss of generality that~$\norm{T_k} = 1$.
Then \cref{eq:step size simplified} becomes \cref{eq:entropic goal} if we set $\rho \coloneqq \sum_{i\in [n_0]}t_{k,i}t_{k,i}^\dagger$ and $\sigma \coloneqq \sum_{i\in [n_0]}t_{k+1,i}t_{k+1,i}^\dagger$, where $T_k = (t_{k,i})_{i\in[n_0]}$, and use \cref{eq:moment map vs partial trace}.

For the last claim, suppose that $\eta<1$.
Then we can add a one-dimensional tensor factor ($n_{d+1}=1$), use the weights~$\tilde\theta_\ell \coloneqq \eta\theta_\ell$ for $\ell\in[d]$ and $\tilde\theta_{d+1} \coloneqq 1-\eta$, and consider $\tilde\rho \coloneqq \rho \ot e_1 e_1^\dagger$.
Then $\tilde\sigma = \sigma \ot e_1 e_1^\dagger$, and one can verify that the $\eta=1$ result for these data is equivalent to \cref{eq:entropic goal}.
\end{proof}

We now establish this entropy inequality.
To this end, we will use a key notion from quantum information theory~\cite{watrous}:
\emph{quantum relative entropy}, which is defined as%
\footnote{We briefly comment on this definition, since the operator logarithm is only well-defined for positive definite operators.
Restricting $A$ to its image yields a positive definite operator, so $\log A$ is well-defined; the trace of $A\log A$ is understood to be taken on~$\operatorname{im} A$, as in the definition of the von Neumann entropy.
Similarly, the assumption $\operatorname{im} A \subseteq \operatorname{im} B$ allows us to restrict to the image of~$B$, where $B$ is positive definite, and to take the trace of~$A\log B$ on this subspace.
This is analogous to the definition of the KL divergence by the formula $D_{\text{KL}}(p\Vert q) \coloneqq \sum_j p_j \log(p_j/q_j)$.}
\[
D(A \Vert B) \coloneqq \begin{cases}
    \tr[A\log A] - \tr[A\log B] & \text{if } A, B \succeq 0 \text{ and} \operatorname{im} A \subseteq \operatorname{im} B, \\
    \infty & \text{otherwise}.
\end{cases}
\]
It satisfies the following monotonicity property, called the \emph{data-processing inequality}~\cite[Thm.~5.35]{watrous}:
\begin{equation}\label{eq:dpi}
    D(A \Vert B) \geq D(\Lambda(A) \Vert \Lambda(B))
    \qquad (A, B \succeq 0),
\end{equation}
for any completely positive trace-preserving map $\Lambda$, such as the partial trace defined in \cref{eq:partial trace}.

\begin{theorem}\label{thm:quantum}
\Cref{eq:entropic goal} holds for all $\eta\in(0,1]$.
As a consequence, the step-size condition~\eqref{eq:step-size-KN} holds for all $\eta\in(0,1]$.
\end{theorem}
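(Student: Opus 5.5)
The plan is to prove the $\eta=1$ inequality $\sum_{\ell}\theta_\ell H(\sigma_\ell)\ge 0$, which suffices by the reduction in \cref{lem:quantum reduction}. Here $\sigma = A\rho A$ with $A\coloneqq\bigotimes_\ell \rho_\ell^{-\theta_\ell/2}$. The final consequence for the step-size condition~\eqref{eq:step-size-KN} then follows directly from that lemma. Equivalently, we must show $\sum_\ell \theta_\ell \tr[\sigma_\ell\log\sigma_\ell]\le 0$.

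\textbf{Step 1: insert the marginals of $\rho$.} For each $\ell$ we split
\[
\tr[\sigma_\ell\log\sigma_\ell]=D(\sigma_\ell\Vert\rho_\ell)+\tr[\sigma_\ell\log\rho_\ell].
\]
The relative entropy is finite because $\rho_\ell$ is invertible. The cross terms recombine on the full space:
\[
\sum_\ell\theta_\ell\tr[\sigma_\ell\log\rho_\ell]=\tr\Bigl[\sigma\log\bigotimes_\ell\rho_\ell^{\theta_\ell}\Bigr]=\tr[\sigma\log A^{-2}].
\]
This uses that $\log$ of a tensor product is the sum $\sum_\ell I\otimes\cdots\otimes\log\rho_\ell^{\theta_\ell}\otimes\cdots\otimes I$, together with the defining property of the partial trace.

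\textbf{Step 2: data processing.} The partial traces $\tr_{\check\ell}$ are CPTP, so the data-processing inequality~\eqref{eq:dpi} gives $D(\sigma_\ell\Vert\rho_\ell)\le D(\sigma\Vert\rho)$ for every $\ell$. Since $\sum_\ell\theta_\ell=1$, this yields $\sum_\ell\theta_\ell D(\sigma_\ell\Vert\rho_\ell)\le D(\sigma\Vert\rho)$. Note that $\operatorname{im}\sigma\subseteq\operatorname{im}\rho$ because $A$ is invertible. Altogether it suffices to show
\[
D(\sigma\Vert A^{-1}\sigma A^{-1})\le \tr[\sigma\log A^{2}],
\]
using $\rho=A^{-1}\sigma A^{-1}$. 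We also use $A\succeq I$: each $\rho_\ell\preceq I$ has trace one, so every factor $\rho_\ell^{-\theta_\ell/2}$ has all eigenvalues at least one.

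\textbf{Step 3: the single-operator inequality (main obstacle).} The inequality left over from Step 2 is a statement about one PSD operator $\sigma$ and one invertible positive $A$. It asks that conjugating the second argument of the relative entropy by $A^{-1}$ cost at most $\tr[\sigma\log A^2]$. When $A$ commutes with $\sigma$ the two sides are equal, so this is a genuine noncommutative trace inequality.

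My first attempt would be to bound Umegaki's relative entropy by the Belavkin--Staszewski relative entropy: $D(X\Vert Y)\le\tr[X\log(X^{1/2}Y^{-1}X^{1/2})]$. This reduces the problem to comparing $\tr[\sigma\log(\sigma^{1/2}A\sigma^{-1}A\sigma^{1/2})]$ with $\tr[\sigma\log A^2]$. For that comparison I would try log-majorization arguments of Araki--Lieb--Thirring type, or operator concavity of $\log$ combined with Jensen.

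If this route fails, the fallback is to interpolate. Set $\sigma_s\coloneqq A^s\rho A^s$ for $s\in[0,1]$ and define $F(s)\coloneqq D(\sigma_s\Vert\rho)-\tr[\sigma_s\log A^{2s}]$. I would then try to show that $F$ is nonincreasing in $s$ by differentiating under the trace with the integral representation of the derivative of $\log$. Since $F(0)=0$, monotonicity gives $F(1)\le 0$, which is exactly the required inequality.

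I expect this Step 3 to be the crux of the whole proof. By contrast, Steps 1 and 2 are mechanical once the relative entropy has been introduced.
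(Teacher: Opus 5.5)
Your Step~1 is correct, and it already isolates the right target. Since $\sum_\ell\theta_\ell\tr[\sigma_\ell\log\rho_\ell]=-\tr[\sigma\log A^2]$, the theorem follows once $D(\sigma_\ell\Vert\rho_\ell)\le\tr[\sigma\log A^2]$ holds for every $\ell$. The gap is Step~2: passing to the global quantity $D(\sigma\Vert\rho)$ loses too much, and the inequality you are left with in Step~3 is \emph{false}. So neither the Belavkin--Staszewski route nor the interpolation can succeed.

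\textbf{Step~3 fails for pure $\rho$.} First, $\operatorname{im}\sigma=A(\operatorname{im}\rho)$, and invertibility of $A$ does not make this a subset of $\operatorname{im}\rho$. In the central case $n_0=1$ the operator $\rho=\psi\psi^\dagger$ is pure. Take $d=2$ and a Schmidt decomposition $\psi=\sum_i\sqrt{\lambda_i}\,e_i\otimes f_i$. Then for every $\theta$ one gets $A\psi=\sum_i e_i\otimes f_i$, so $D(\sigma\Vert\rho)=+\infty$ unless $\lambda$ is uniform. Yet $\sigma_1=\sigma_2=I$ and $D(\sigma_\ell\Vert\rho_\ell)=-\sum_i\log\lambda_i=\tr[\sigma\log A^2]$. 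So the true per-factor inequality holds with \emph{equality} here, leaving no room for any slack. By lower semicontinuity of relative entropy, Step~3 also fails for nearby invertible $\rho$.

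\textbf{Step~3 goes the wrong way in general.} The Hiai--Petz inequality $\tr[X\log(Y^{1/2}XY^{1/2})]\le\tr[X(\log X+\log Y)]$, applied with $X=\sigma$ and $Y=A^{-2}$, gives $D(\sigma\Vert A^{-1}\sigma A^{-1})\ge\tr[\sigma\log A^2]$. This is equality when $A$ commutes with $\sigma$ and strict in general; a second-order expansion around $A=I$ shows the gap is governed by the logarithmic mean being at most the arithmetic mean. In particular your interpolating function satisfies $F(s)\ge 0=F(0)$, so it cannot be nonincreasing unless it vanishes identically.

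\textbf{The missing idea.} For each $\ell$, compare $\sigma$ with a reference operator tailored to $\ell$, and remove the noncommutativity by pinching. For $\ell=1$:
\begin{itemize}
\item Set $Z=\bigotimes_{k\ge2}\rho_k^{\theta_k/2}=\sum_\lambda\lambda P_\lambda$ and $\omega=(\rho_1^{-\theta_1/2}\otimes I)\rho(\rho_1^{-\theta_1/2}\otimes I)=(I\otimes Z)\sigma(I\otimes Z)$. Its first marginal is $\rho_1^{1-\theta_1}$.
\item Pinch both: $\tilde X=\sum_\lambda(I\otimes P_\lambda)X(I\otimes P_\lambda)$. Pinching leaves the first marginals $\sigma_1$ and $\rho_1^{1-\theta_1}$ unchanged.
\item Now $\tilde\omega=(I\otimes Z)\tilde\sigma(I\otimes Z)$ with $I\otimes Z$ commuting with $\tilde\sigma$. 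Hence the two images coincide and $D(\tilde\sigma\Vert\tilde\omega)=-\tr[\sigma\log(I\otimes Z^2)]=-\sum_{k\ge2}\theta_k\tr[\sigma_k\log\rho_k]$ exactly.
\item Data processing under $\tr_{\check1}$ gives $D(\sigma_1\Vert\rho_1^{1-\theta_1})\le-\sum_{k\ge2}\theta_k\tr[\sigma_k\log\rho_k]$.
\end{itemize}
The last inequality rearranges to precisely $D(\sigma_1\Vert\rho_1)\le\tr[\sigma\log A^2]$. Combined with your Step~1, this completes the proof; it is the paper's argument.
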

\begin{proof}
By \cref{lem:quantum reduction}, to establish the theorem it suffices to prove $\sum_{\ell=1}^d \theta_\ell H(\sigma_\ell) \geq 0$ for all~$\sigma$ as defined in \cref{eq:entropic goal} for~$\eta=1$.
We will prove that
\begin{equation}\label{eq:entropic subgoal}
    H(\sigma_\ell) \geq \smash{\sum_{k \neq \ell}} \theta_k \tr\mleft[ \sigma_k \log \rho_k \mright] - (1-\theta_\ell) \tr[\sigma_\ell \log \rho_\ell] \qquad (\ell\in[d]),
\end{equation}
which implies \cref{eq:entropic goal} by multiplying the $\ell$-th inequality by~$\theta_\ell$, summing over~$\ell$, and using $\sum_{\ell=1}^d\theta_\ell=1$.
By symmetry, it suffices to prove \cref{eq:entropic subgoal} for~$\ell=1$, which will simplify the notation.
To this end, define
\begin{align}\label{eq:def omega}
    \omega
\coloneqq ( \rho_1^{-\theta_1/2} \ot I^{\ot(d-1)} ) \rho ( \rho_1^{-\theta_1/2} \ot I^{\ot(d-1)} )
= (I \ot Z) \sigma (I \ot Z),
\end{align}
where we have introduced the positive definite operator $Z \coloneqq \otimes_{\ell=2}^d \rho_\ell^{\theta_\ell/2}$.
Let $Z = \sum_\lambda \lambda P_\lambda$ denote the spectral decomposition (i.e., $\lambda$ ranges over the eigenvalues of~$Z$ and each~$P_\lambda$ denotes the orthogonal projection onto the corresponding eigenspace; note that $\sum_\lambda P_\lambda=I$).
Define
\[
    \tilde\sigma \coloneqq \sum_\lambda (I \ot P_\lambda) \sigma (I \ot P_\lambda),
\qquad \tilde\omega \coloneqq \sum_\lambda (I \ot P_\lambda) \omega (I \ot P_\lambda)
= (I \ot Z) \tilde\sigma (I \ot Z),
\]
where we used the second formula in \cref{eq:def omega} and that $Z$ commutes with its eigenprojections~$P_\lambda$.
Note that $\tilde\sigma$ and $\tilde\omega$ have block-diagonal forms with respect to the eigenspaces of $I\otimes Z$, and hence $I \ot Z$ commutes with $\tilde\sigma$ and $\tilde\omega$; therefore,
$\log \tilde\omega = \log \tilde\sigma + I \ot \log(Z^2)$ on the images of~$\tilde\omega$ and~$\tilde\sigma$ (which coincide).
Thus:
\begin{align}\label{eq:DPI left}
    D(\tilde\sigma \Vert \tilde\omega)
= -\tr\mleft[ \tilde\sigma (I \ot \log(Z^2)) \mright]
= -\tr\mleft[ \sigma (I \ot \log(Z^2)) \mright]
= -\sum_{k=2}^d \theta_k \tr\mleft[ \sigma_k \log \rho_k \mright],
\end{align}
where the second equality follows by inserting the definition of~$\tilde\sigma$, commuting~$P_\lambda$ through~$\log Z$, and using the fact that $\sum_\lambda P_\lambda^2 = \sum_\lambda P_\lambda = I$ because the $P_\lambda$ are a complete set of projections;
the last equality holds by $\log(Z^2) = \sum_{k=2}^d \theta_k I^{\ot(k-2)} \ot \log\rho_k \ot I^{\ot(d-k)}$ and the property~$\tr[\sigma(I^{\ot(k-1)} \ot X \ot I^{\ot(d-k)})] = \tr[\sigma_k X]$ of the partial trace~$\sigma_k = \tr_{\check k}\sigma$.
On the other hand, using the cyclicity property of the partial trace and again that~$\sum_\lambda P_\lambda^2 = I$, we see that
\begin{align*}
    (\tilde\sigma)_1
&= \tr_{\check1} \mleft[ \sum_\lambda (I \ot P_\lambda) \sigma (I \ot P_\lambda) \mright]
= \tr_{\check1} \mleft[ \sigma \left(I \ot \sum_\lambda P_\lambda^2 \right) \mright]
= \tr_{\check1} \sigma = \sigma_1, \\
(\tilde\omega)_1
&= \omega_1
= \rho_1^{-\theta_1/2} \rho_1 \rho_1^{-\theta_1/2}
= \rho_1^{1-\theta_1},
\end{align*}
where the first equality on the second line is derived just like the first line, and the second equality follows from the first formula in \cref{eq:def omega}.
Therefore:
\begin{align}\label{eq:DPI right}
    D\mleft( (\tilde\sigma)_1 \Vert (\tilde\omega)_1 \mright)
= \tr[\sigma_1 \log \sigma_1] - \tr[\sigma_1 \log \rho_1^{1-\theta_1}]
= -H(\sigma_1) - (1-\theta_1) \tr[\sigma_1 \log \rho_1].
\end{align}
Thus, \cref{eq:entropic subgoal} (for $\ell=1$) follows from \cref{eq:DPI left,eq:DPI right} and the data-processing inequality~\eqref{eq:dpi} applied to $\tilde\sigma$, $\tilde\omega$, and the partial trace~$\tr_{\check1}$.
\end{proof}

Thus we obtain an $\mathcal{O}(k^{-1})$ convergence rate for entropic tensor scaling towards the logarithmic quantum functional as a consequence of our general results (\cref{cor:entanglementpolytope,thm:tensor_f+b}).

\begin{theorem}[Convergence of entropic tensor scaling]\label{thm:entropic tensor scaling}
Let $\theta$ be a probability distribution on~$[d]$, let $T \in \Ten(n_0; n_1, \ldots, n_d)\setminus\{0\}$ be a concise tensor tuple, and let~$(T_k)_{k\in\NN}$ be the sequence generated by entropic tensor scaling~\eqref{eq:entropic tensor scaling} with step size~$\eta = 1$.
Then, for every $k\geq1$,
    \[
        0 \leq \logqfunc_\theta(T) - \sum_{\ell \in [d]} \theta_\ell H((\rho_k)_\ell)
        \leq 2 \frac{\log\|T\| - \log\Capacity_{p^\star}(T)}{k},
    \]
    where $((\rho_k)_1, \ldots, (\rho_k)_d) \coloneqq \mu(T_k)$ are the one-body quantum marginals of~$T_k$, and $p^\star$ is any minimizer of $\phi_\theta$ in $\Delta(T)$.
    In particular, if $T$ has Gaussian integer entries, then the right-hand side is bounded by $\poly(d, n_{\max}, \braket T)/k$, where $n_{\max} \coloneqq \max_{\ell = 1}^dn_\ell$ and $\braket T$ denotes the bit-length of the tensor tuple.
\end{theorem}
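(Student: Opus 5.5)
The plan is to obtain the statement as a direct specialization of \cref{cor:entanglementpolytope} to the objective $\phi_\theta$ from \cref{eq:qfunc objective}, with the step-size hypothesis supplied by \cref{thm:quantum}. Since $T$ is concise, $g\cdot T$ is concise for every $g\in G$: each marginal of $g\cdot T$ is congruent, up to normalization, to a positive definite matrix. So $\spec\mu(g\cdot T)$ has strictly positive entries and lies in $\intr\dom\phi_\theta$. As already observed, $\phi_\theta$ satisfies \ref{it:phi1gl}--\ref{it:phi3gl}. Hence the hypotheses of \cref{cor:entanglementpolytope} hold, except for the step-size condition.

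\textbf{Step 1: the sequence.} I will check that \cref{eq:entropic tensor scaling} with $\eta=1$ is exactly the sequence \cref{eq:updateTk} for $\Phi_\theta$ with scalars $c_{k+1}=e^{\eta/2}$. By \cref{prop:FandQ}~(b) and the formula $-\nabla H(X)=\log X+I$, we have $\nabla_\ell\Phi_\theta(\rho)=\theta_\ell(\log\rho_\ell+I)$. Therefore $e^{-\frac\eta2\nabla_\ell\Phi_\theta}=e^{-\eta\theta_\ell/2}\rho_\ell^{-\eta\theta_\ell/2}$. Since $\sum_\ell\theta_\ell=1$, the scalar prefactors multiply to $e^{-\eta/2}$, which $c_{k+1}$ cancels. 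This step is purely bookkeeping and is already implicit in the derivation preceding \cref{lem:quantum reduction}.

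\textbf{Step 2: the step-size condition.} By \cref{thm:quantum}, which goes through \cref{lem:quantum reduction}, the condition \eqref{eq:step-size-KN} holds for $\eta=1$. One detail to check is that the invertibility hypothesis on the $\rho_\ell$ in \cref{lem:quantum reduction} holds at every iterate. It does, because every $T_k$ lies in the $G$-orbit of $T$, up to scalars, and is therefore concise.

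\textbf{Step 3: conclusion.} With these in place, \cref{cor:entanglementpolytope} gives
\[
\phi_\theta(\spec\mu(T_k))-\min_{\Delta(T)}\phi_\theta\le 2\,\frac{\log\|T\|-\log\Capacity_{p^\star}(T)}{k}.
\]
Two identities then give the upper bound. First, $\phi_\theta(\spec\mu(T_k))=-\sum_\ell\theta_\ell H((\rho_k)_\ell)$, since von Neumann entropy depends only on the spectrum. Second, $\min_{\Delta(T)}\phi_\theta=-\logqfunc_\theta(T)$ by \cref{def:qfunc}.

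The lower bound $0\le\cdots$ holds because $\spec\mu(T_k)\in\Delta(T)$, so its weighted entropy is at most the maximum $\logqfunc_\theta(T)$.

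For Gaussian integer $T$, I will apply \cref{thm:tensor_f+b}. It gives $-\log\Capacity_{p^\star}(T)\le\poly(d,n_{\max})$, and $\log\|T\|\le\poly(\braket T)$ because $\|T\|^2$ is an integer of bit-length at most $\braket T$. Combining these yields the stated bound.

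The only genuinely nontrivial ingredient is the step-size inequality, and the paper has already proved it in \cref{thm:quantum}. What remains is the concise-orbit check and the identification of the iteration, and I expect neither to cause difficulty.
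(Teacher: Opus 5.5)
Your proposal is correct and follows the paper's proof essentially step for step. The paper likewise combines conciseness (for the interior-domain hypothesis), \cref{thm:quantum} (for the step-size condition at $\eta=1$), and \cref{cor:entanglementpolytope} with the identities $\phi_\theta(\spec\mu(T_k))=-\sum_\ell\theta_\ell H((\rho_k)_\ell)$ and $\min_{\Delta(T)}\phi_\theta=-\logqfunc_\theta(T)$, and then \cref{thm:tensor_f+b} for the Gaussian integer bound.
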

\begin{proof}
Recall that $\phi_\theta$ satisfies \ref{it:phi1gl}--\ref{it:phi3gl} and that the conciseness of~$T$ implies that $\spec\mu(g\cdot T) \in \intr\dom\phi_\theta$ for all~$g \in G$.
Furthermore, the step-size condition is satisfied for~$\eta=1$ by \cref{thm:quantum}.
Hence, \cref{cor:entanglementpolytope} applies with $\phi = \phi_\theta$.
Using the facts that $\phi_\theta(\spec\mu(T_k)) = -\sum_{\ell=1}^d \theta_\ell H((\rho_k)_\ell)$ and $\min_{p \in \Delta(T)}\phi_\theta(p) = -\logqfunc_\theta(T)$, the displayed convergence bound follows.
The ``in particular'' part follows directly from \cref{thm:tensor_f+b}.
\end{proof}

\noindent
Thus, for Gaussian integer tensors, we can approximate the logarithmic quantum functional up to additive error~$\eps>0$ in a number of iterations that is linear in~$1/\eps$ and polynomial in input size.

The algorithmic result above also leads to a new succinct proof of the celebrated fact that the quantum functionals are universal spectral points~\cite{CVZ2018}.
In the literature, this notion is considered for single tensors, so we take $n_0=1$ and denote~$\Ten(n_1,\dots,n_d) \coloneqq \Ten(1;n_1,\dots,n_d)$.\footnote{But see \cref{footnote:qfunc tensor tuple}, which provides a tensor-tuple interpretation of the results below.}

\begin{definition}[Universal spectral point]
Fix any $d\geq2$.
Suppose that $\qfunc \colon \{ \text{$d$-tensors} \} \to \RR_{\geq0}$ is a function that assigns to any $d$-tensor (with arbitrary dimensions $n_1,\dots,n_d$) a nonnegative number.
Then $\qfunc$ is called a \emph{spectral point} for the semiring of complex $d$-tensors if the following hold:
\begin{itemize}
\item \emph{Monotonicity:} $\qfunc(S) \le \qfunc(T)$ if $S = (A_1\otimes\cdots\otimes A_d)T$ for some matrices $A_1, \ldots, A_d$.
\item \emph{Multiplicativity:} $\qfunc(S\otimes T) = \qfunc(S)\qfunc(T)$, where $S \ot T$ is regarded as a $d$-tensor.
\item \emph{Additivity:} $\qfunc(S\oplus T) = \qfunc(S)+\qfunc(T)$, where $S \op T$ is regarded as a $d$-tensor.
\item \emph{Normalization:} $\qfunc(\langle r\rangle) = r$ for the unit tensor $\langle r\rangle \coloneqq \sum_{j=1}^r e_j^{\ot d}$.
\end{itemize}
\end{definition}

\noindent This terminology is due to Strassen, who defined the \emph{asymptotic spectrum} of the semiring of tensors (in fact, for general sub-semirings of $d$-tensors) and showed that the universal spectral points, i.e., the points in the asymptotic spectrum of the set of all $d$-tensors, characterize asymptotic transformations of tensors~\cite{Strassen-88}.
Accordingly, the asymptotic spectrum has deep applications in algebraic complexity and quantum information theory~\cite{zuiddam2018algebraic}.

For matrices, i.e., 2-tensors, the matrix rank is the only spectral point.
For tensors of higher order, the ranks of their flattenings are trivial examples of universal spectral points.
In a breakthrough work, Christandl, Vrana, and Zuiddam gave the first nontrivial family: they introduced the quantum functionals~$F_\theta$ (as in \cref{def:qfunc}) and proved that they are universal spectral points for any fixed choice of~$\theta$~\cite{CVZ2018}.
Monotonicity, normalization, super-multiplicativity~$\qfunc_\theta(S\otimes T) \geq \qfunc_\theta(S)\qfunc_\theta(T)$, and super-additivity~$\qfunc_\theta(S\oplus T) \ge \qfunc_\theta(S)+\qfunc_\theta(T)$ are straightforward consequences of the definition and elementary properties of the entanglement polytope.
In contrast, sub-multiplicativity $\qfunc_\theta(S\otimes T) \le \qfunc_\theta(S)\qfunc_\theta(T)$ and sub-additivity~$\qfunc_\theta(S\oplus T) \le \qfunc_\theta(S)+\qfunc_\theta(T)$ require more substantial arguments.
In~\cite{CVZ2018}, these two properties were proved via a representation-theoretic dual description of the entanglement polytope.
Only recently have geometric proofs been found that avoid representation theory~\cite{Hirai2025,SDW2026}.

We now give a new \emph{algorithmic} proof of the theorem by Christandl, Vrana, and Zuiddam that uses the convergence of entropic tensor scaling to obtain the multiplicativity and additivity of the quantum functionals.
In contrast, the recent proof in~\cite{Hirai2025} relies on properties of the continuous-time flow to establish sub-additivity, but uses duality for sub-multiplicativity, whereas the proof in~\cite{SDW2026} relies on a minimax formula based on that duality.

\begin{theorem}[\cite{CVZ2018}]\label{thm:algorithmicproof}
The quantum functional~$\qfunc_\theta$ (\cref{def:qfunc}) is a universal spectral point for any fixed~$\theta$.
\end{theorem}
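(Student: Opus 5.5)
We prove the four axioms separately. Monotonicity, normalization, super-multiplicativity and super-additivity follow from the definition and elementary properties of the entanglement polytope, as recalled above. The two substantive inequalities, sub-multiplicativity and sub-additivity, we derive from \cref{thm:entropic tensor scaling}, in the form $\logqfunc_\theta(T)=\lim_{k\to\infty}H_\theta(\spec\mu(T_k))$.

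\emph{Preliminary reductions.} We may assume $S,T\neq0$, and by \cref{footnote:concise} that $S,T$ are concise: the reduction preserves the entanglement polytope up to zero padding, and it commutes with $\otimes$ and $\oplus$ after the corresponding reduction of $S\otimes T$ and $S\oplus T$.

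\emph{Normalization.} The unit tensor $\langle r\rangle$ has all marginals $I/r$. This gives the uniform point, so $\logqfunc_\theta(\langle r\rangle)\ge\log r$. The reverse inequality holds because every $p_\ell$ lies in the simplex on $r$ letters, whose entropy is at most $\log r$.

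\emph{Monotonicity.} If $S=(A_1\otimes\cdots\otimes A_d)T$, then $S\in\overline{G\cdot T}$ after padding. Taking limits of invertible perturbations shows $\Delta(S)\subseteq\Delta(T)$ up to zero entries, which do not change entropy.

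\emph{Multiplicativity.} The flattening identity $(S\otimes T)^{(\ell)}=S^{(\ell)}\otimes T^{(\ell)}$ gives $\mu(S\otimes T)_\ell=\mu(S)_\ell\otimes\mu(T)_\ell$, and $S\otimes T$ is concise. Since $(\rho\otimes\sigma)^{-\theta_\ell/2}=\rho^{-\theta_\ell/2}\otimes\sigma^{-\theta_\ell/2}$, induction on \eqref{eq:entropic tensor scaling} shows $(S\otimes T)_k=S_k\otimes T_k$, with the tensor product taken after regrouping factors. Spectra of Kronecker products are products of spectra, and Shannon entropy is additive on product distributions. Hence $H_\theta(\spec\mu((S\otimes T)_k))=H_\theta(\spec\mu(S_k))+H_\theta(\spec\mu(T_k))$. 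Letting $k\to\infty$ and applying \cref{thm:entropic tensor scaling} to all three tensors gives equality.

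\emph{Additivity.} Here lies the main obstacle, since the marginals of $S\oplus T$ are not simply related to those of $S$ and $T$ under a single scaling. Write $U_k\coloneqq(S\oplus T)_k$. The marginals of $S\oplus T$ are block diagonal, $\mu(S\oplus T)_\ell=a\,\mu(S)_\ell\oplus b\,\mu(T)_\ell$ with $a=\norm S^2/(\norm S^2+\norm T^2)$ and $b=1-a$. By induction, the iterate has the form $U_k=\alpha_kS_k\oplus\beta_kT_k$ for scalars $\alpha_k,\beta_k$: the block scalars commute with the matrix powers and only produce a scalar factor $a^{-\sum\theta_\ell/2}=a^{-1/2}$ on each block. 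Consequently $\spec\mu(U_k)_\ell=(\lambda_k\,\spec\mu(S_k)_\ell,\;(1-\lambda_k)\,\spec\mu(T_k)_\ell)$ for a common weight $\lambda_k\in(0,1)$. The grouping rule then gives
\[
H_\theta(\spec\mu(U_k))=h(\lambda_k)+\lambda_kH_\theta(\spec\mu(S_k))+(1-\lambda_k)H_\theta(\spec\mu(T_k)),
\]
where $h$ is the binary entropy. Maximizing over $\lambda\in[0,1]$ shows that this is at most $\log(e^{x}+e^{y})$ with $x=H_\theta(\spec\mu(S_k))$ and $y=H_\theta(\spec\mu(T_k))$. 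Taking $k\to\infty$ gives $\logqfunc_\theta(S\oplus T)\le\log(\qfunc_\theta(S)+\qfunc_\theta(T))$, which is sub-additivity. Super-additivity then yields equality.

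The delicate step is verifying the block-scalar structure cleanly, tracking the common weight $\lambda_k$ across all $\ell$ simultaneously.
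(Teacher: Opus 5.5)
Your proposal is correct and follows the paper's proof essentially step for step. Multiplicativity comes from $(S\otimes T)_k=S_k\otimes T_k$ and additivity of entropy. Sub-additivity comes from the block form $(S\oplus T)_k=\alpha_k S_k\oplus\beta_k T_k$, the grouping rule, and $h(\lambda)+\lambda x+(1-\lambda)y\le\log(e^x+e^y)$, which is the paper's Jensen step. The remaining axioms come from elementary properties of the entanglement polytope. One cosmetic point: at step $k$ the block scalars are $\lambda_k^{-1/2}$ and $(1-\lambda_k)^{-1/2}$, not $a^{-1/2}$.
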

\begin{proof}
We first show multiplicativity.
Let $S\in\Ten(m_1,\dots,m_d)$ and $T\in\Ten(n_1,\dots,n_d)$ be two $d$-tensors; we may assume without loss of generality that both are nonzero and concise.\footref{footnote:concise}
We can think of $S \ot T \in \Ten(m_1n_1,\dots,m_dn_d)$.
Then the flattenings are given by $(S\ot T)^{(\ell)} = S^{(\ell)}\ot T^{(\ell)}$ ($\forall \ell\in [d]$) and the moment map is given by
\[
    \mu(S\ot T) = \left( \mu(S)_1 \ot \mu(T)_1 , \dots,\mu(S)_d \ot \mu(T)_d \right).
\]
Thus, we see inductively that the entropic tensor scaling sequences~\eqref{eq:entropic tensor scaling} started at $S$, $T$, and~$S \ot T$, respectively, are related by
\[
    (S\ot T)_{k}=S_k\ot T_k \qquad (\forall k \in \NN).
\]
Consequently, $\mu((S \ot T)_k)_\ell = \mu(S_k)_\ell \ot \mu(T_k)_\ell$ for all $k\in\NN$ and $\ell\in[d]$, and
\[
    \qfunc_\theta(S\ot T)
= \lim_{k\to\infty} e^{\sum_{\ell \in [d]}\theta_\ell H(\mu( (S\ot T)_k )_\ell)}
= \lim_{k\to\infty} e^{\sum_{\ell \in [d]}\theta_\ell (H(\mu( S_{k} )_\ell)+H(\mu( T_{k} )_\ell))}
= \qfunc_\theta(S)\qfunc_\theta(T),
\]
where the first and last equalities hold because entropic tensor scaling computes the quantum functionals (\cref{thm:entropic tensor scaling}) and the middle equality follows from the additivity of the von Neumann entropy: $H(\rho\otimes\sigma)=H(\rho)+H(\sigma)$.

We next show sub-additivity.
Regarding $S \op T$ as a tensor in $\Ten(m_1+n_1,\dots,m_d+n_d)$, the flattenings of $S\op T$ are given by $(S\op T)^{(\ell)} = S^{(\ell)}\op T^{(\ell)}$ ($\forall \ell\in [d]$) up to the removal of zero columns.
Since zero columns in the flattenings do not impact the marginals, we have
\[
    \mu(S\op T) = \mleft( \lambda \,\mu(S)_1\op (1-\lambda) \, \mu(T)_1, \dots, \lambda \, \mu(S)_d \op (1-\lambda) \,\mu(T)_d \mright),
\]
where $\lambda \coloneqq\norm{S}^2/(\norm{S}^2+\norm{T}^2)$.
Inductively, we see that the entropic tensor scaling sequences~\eqref{eq:entropic tensor scaling} started at $S$, $T$, and~$S \op T$, respectively, are related by
\[
    (S\op T)_{k} \propto \sqrt{\lambda_k} \frac{S_k}{\norm{S_k}} \op \sqrt{1-\lambda_k} \frac{T_k}{\norm{T_k}} \qquad (\forall k \in \NN)
\]
for some~$\lambda_k\in [0,1]$.
Thus, $\mu((S \op T)_k)_\ell = \lambda_k \mu(S_k)_\ell \op (1-\lambda_k) \mu(T_k)_\ell$ for all $k\in\NN$ and $\ell\in[d]$, and
\begin{align*}
  \qfunc_\theta(S\op T)
&= \lim_{k\to\infty} e^{\sum_{\ell \in [d]}\theta_\ell H(\mu( (S\op T)_k )_\ell)} \\
&= \lim_{k\to\infty} e^{\sum_{\ell \in [d]}\theta_\ell H(\lambda_k \mu(S_k)_\ell \,\op\, (1-\lambda_k) \mu(T_k)_\ell )} \\
&= \lim_{k\to\infty} e^{\lambda_k \sum_{\ell \in [d]}\theta_\ell H(\mu(S_k)_\ell )
\,+\, (1-\lambda_k) \sum_{\ell \in [d]}\theta_\ell H(\mu(T_k)_\ell )
\,+\, h(\lambda_k) } \\
&\leq \lim_{k\to\infty} \left( e^{\sum_{\ell \in [d]}\theta_\ell H(\mu(S_k)_\ell )} + e^{\sum_{\ell \in [d]}\theta_\ell H(\mu(T_k)_\ell )} \right)
= \qfunc_\theta(S) + \qfunc_\theta(T),
\end{align*}
where the first and the last equalities hold again because entropic tensor scaling computes the quantum functionals (\cref{thm:entropic tensor scaling}),
the third equality follows from the standard recursion property of entropy, with $h(\lambda) \coloneqq -\lambda \log\lambda-(1-\lambda)\log(1-\lambda)$ denoting the binary entropy function,
and the inequality holds because $e^{\lambda x+(1-\lambda)y+h(\lambda)} \le e^x+e^y$ for all $x, y \in \RR$ and $\lambda \in [0, 1]$ by Jensen's inequality (e.g., \cite[(2.13)]{Strassen1991}).
This proves sub-additivity.

As discussed above, the remaining properties of a spectral point are easily satisfied.
We briefly recall their standard proof.
Super-additivity follows readily from the observation that $\Delta(S \op T)$ contains all points of the form $\lambda p \op (1-\lambda) q$ (up to sorting nonincreasingly) for $p \in \Delta(S)$, $q \in \Delta(T)$ and $\lambda\in[0,1]$.
Monotonicity holds because $\Delta(S) \subseteq \Delta(T)$ for any $S \in \overline{G \cdot T}$, so in particular for $S = (A_1 \ot \cdots \ot A_d) T$ if $A_1,\dots,A_d$ are square matrices (and we can always reduce to this situation by padding with zeros).
Finally, normalization holds because the marginals of the unit tensor are uniform ($\rho_\ell = I/r$), and uniform distributions have maximum entropy ($\log r)$.
\end{proof}

\subsection{Symmetric quantum functional}\label{sub:symmetric_quantum}
In this subsection, we give an algorithm for the \emph{symmetric quantum functional}~\cite{CFTZ-22}.
The setting is the tensor power action in \cref{sub:tensor_power_action}:
the group $G = \GL(n)$ acts on tuples of $d$-tensors in $\VV = \Ten(n_0; n, \ldots, n)$ as in \cref{eq:tenpow}.
The moment map for this action (\cref{eq:sym moment map}) computes the sum of the $d$ quantum marginals, hence the symmetric entanglement polytope (\cref{eq:sym moment polytope}) consists of vectors~$p\in\RR^n$ such that $p/d$ is a probability distribution.
The logarithmic symmetric quantum functional is then defined as the maximum entropy of the distributions thus obtained.%
\footnote{Like the ordinary quantum functionals, the symmetric quantum functionals were originally proposed for a single tensor ($n_0 = 1$), but the definition generalizes naturally to tensor tuples.}
Recall the Shannon entropy, defined by $H(p)\coloneqq -\sum_ip_i \log p_i$, with the usual convention that $0\log0 \coloneqq 0$.

\begin{definition}[Symmetric quantum functional, \cite{CFTZ-22}]
The \emph{logarithmic symmetric quantum functional}~$\logqfunc_{\Sym}$ is, for all tensors~$T \in \Ten(n_0; n,\dots,n)\setminus\{0\}$, given by
\[
    \logqfunc_{\Sym}(T) \coloneqq \max_{p \in \Delta_{\Sym}(T)} H(p/d).
\]
The \emph{symmetric quantum functional} is defined as $\qfunc_{\Sym}(T) \coloneqq e^{\logqfunc_{\Sym}(T)}$; we also set 
$\qfunc_{\Sym}(0) \coloneqq 0$.
\end{definition}

\noindent
Hence, this falls into our minimization framework with the objective function $\phi(p) \coloneqq -H(p/d)$; as usual we set $H(p)\coloneqq-\infty$ if $p\not\in\RR^n_{\geq0}$.
Instead of analyzing this problem from scratch, we can use \cref{thm:tensor power reduction} to relate the minimization of~$\phi$ on the symmetric entanglement polytope to the minimization of the objective~\eqref{eq:phisum}
\[
    \phi_{\Sum}(p_1, \ldots, p_d) \coloneqq \frac1d\sum_{\ell=1}^d\phi(dp_\ell) = -\frac1d\sum_{\ell=1}^dH(p_\ell)
\]
on the entanglement polytope $\Delta(T_{\Shf})$ of the shift stack tensor tuple~$T_{\Shf}$ from \cref{def:shift_stack}.
Note that this is precisely the objective function~\eqref{eq:qfunc objective} for the logarithmic quantum functional with the uniform distribution $\theta = (1/d, \ldots, 1/d)$.
Consequently, the symmetric quantum functional of~$T$ can be computed as the quantum functional of $T_{\Shf}$ with the uniform distribution.
We may assume that the moment map image~$\mu_{\Sym}(T)$ is invertible and hence positive definite.%
\footnote{\label{footnote:symmetric-concise}To see that this can be assumed without loss of generality, note that $\mu_{\Sym}(T) = \sum_{\ell=1}^d T^{(\ell)} T^{(\ell)\dagger} / \norm T^2$ is singular if and only if their common kernel $K \coloneqq \bigcap_{\ell=1}^d \ker T^{(\ell)\dagger}$ is nontrivial.
Analogously to \cref{footnote:concise} we can project~$T$ onto~$(K^\perp)^{\ot d}$ to obtain a tensor~$T'$ of smaller format such that $\mu_{\Sym}(T')$ is invertible.}
Since $\mu_{\Sym}(T)$ is the sum of the one-body quantum marginals, this is in general a weaker assumption than requiring that $T$ is concise (in fact it is equivalent to requiring that $T_{\Shf}$ is concise).
If we describe the iterations in terms of tensors~$T_k \in \Ten(n_0;n,\dots,n)$ instead of their shifts~$(T_{\Shf})_k = (T_k)_{\Shf}$, we obtain the following iteration for step size~$\eta=1$, which we call \emph{symmetric entropic tensor scaling}:
\begin{equation}\label{eq:symmetric_tensor_scaling}
    T_{k + 1} \coloneqq \left(I\otimes\rho_\text{avg}^{-1/(2d)}\otimes\cdots\otimes\rho_\text{avg}^{-1/(2d)}\right)T_k, \quad \text{where}\quad\rho_\text{avg} \coloneqq \mu_{\Sym}(T_k)/d,\qquad T_0 \coloneqq T.
\end{equation}
Note that $\rho_\text{avg} = (\rho_1+\ldots+\rho_d)/d$ for $(\rho_1,\dots,\rho_d)\coloneqq\mu(T_k)$, hence the notation.
Of course, \cref{eq:symmetric_tensor_scaling} can also be obtained from \cref{eq:updatesymmetric} and the same step size~$\eta=1$.

The above computes the logarithmic symmetric quantum functional with an $\mathcal{O}(k^{-1})$ convergence rate:

\begin{theorem}[Convergence of symmetric entropic tensor scaling]\label{thm:symmetric-quantum}
    Let $0 \neq T \in \Ten(n_0; n, \ldots, n)$ be a tuple of $d$-tensors such that $\mu_{\Sym}(T)$ is nonsingular, and let $(T_k)_{k\in\NN}$ denote the sequence generated by symmetric entropic tensor scaling~\eqref{eq:symmetric_tensor_scaling}.
    Then, for every~$k\geq1$,
    \[
        0 \leq \logqfunc_{\Sym}(T) - H((\rho_k)_\text{avg}) \leq 2 \frac{\log\|T\| - \log\Capacity_{p^\star}(T)}{k},
    \]
    where $(\rho_k)_\text{avg} \coloneqq \mu_{\Sym}(T_k)/d$ is the average one-body marginal of~$T_k$, $p^\star$ is any minimizer of $\phi$ in $\Delta_{\Sym}(T)$, and the capacity refers to the tensor power action.
    Moreover, if $T$ has Gaussian integer entries, then the objective gap is also bounded by $\poly(d, n, \braket T)/k$, where $\braket T$ denotes the bit-length of the tensor tuple.
\end{theorem}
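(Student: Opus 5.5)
The plan is to deduce the theorem from \cref{thm:tensor power reduction} and \cref{thm:entropic tensor scaling} applied to the shift stack $T_{\Shf}$ with uniform weights $\theta=(1/d,\dots,1/d)$. Nonsingularity of $\mu_{\Sym}(T)$ is equivalent to conciseness of $T_{\Shf}$. Indeed, by \cref{eq:mu_rot} every marginal of $T_{\Shf}$ equals $\mu_{\Sym}(T)/d$, so \cref{thm:entropic tensor scaling} applies to $T_{\Shf}$.

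First I would identify the iterations. The update \eqref{eq:symmetric_tensor_scaling} is \eqref{eq:updatesymmetric} with $\phi(p)=-H(p/d)$, $\eta=1$ and a suitable scalar $c_{k+1}$. To check this, note $\nabla\Phi(X)=\tfrac1d(\log(X/d)+I)$, so $e^{-\frac12\nabla\Phi(\mu_{\Sym})}\propto\rho_{\mathrm{avg}}^{-1/(2d)}$. Similarly, the entropic tensor scaling \eqref{eq:entropic tensor scaling} of $T_{\Shf}$ with uniform $\theta$ multiplies each factor by $(\mu_{\Sym}(T_k)/d)^{-1/(2d)}$. By \cref{thm:tensor power reduction}(a) (equivalently, \cref{eq:shift equivariance} and an induction) we get $(T_{\Shf})_k=(T_k)_{\Shf}$. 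Moreover $\phi_{\Sum}=\phi_\theta$ for uniform $\theta$, and $\phi_{\Sum}(\spec\mu((T_k)_{\Shf}))=-H((\rho_k)_{\mathrm{avg}})$.

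Next I would transfer values and step sizes. By \cref{thm:tensor power reduction}(b), $\logqfunc_{\Sym}(T)=-\min_{\Delta_{\Sym}(T)}\phi=-\min_{\Delta(T_{\Shf})}\phi_{\Sum}=\logqfunc_\theta(T_{\Shf})$. The step-size condition for $\eta=1$ holds for the shift-stack sequence by \cref{thm:quantum}, hence for $(T_k)$ by \cref{thm:tensor power reduction}(d). The capacity-form bound then follows from \cref{cor:symmetricentanglementpolytope} with $\eta=1$. The lower bound $0\le$ is immediate, because $\spec\mu_{\Sym}(T_k)\in\Delta_{\Sym}(T)$.

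For Gaussian integer $T$, the final claim is the last part of \cref{cor:symmetricentanglementpolytope}. Alternatively, $T_{\Shf}$ has Gaussian integer entries and bit-length $\poly(d,\braket T)$, so \cref{thm:entropic tensor scaling} applied to $T_{\Shf}$ gives the bound $\poly(d,n,\braket T)/k$ directly. The main obstacle is bookkeeping rather than new mathematics. The scalars $c_k$ must be chosen consistently so that the two sequences agree exactly and the step-size condition transfers; this is handled by \cref{thm:tensor power reduction}(d), since the condition is scale-invariant up to the explicit $c_{k+1}$ term. One must also make sure the interior-domain hypothesis is only needed along the iterates, as noted in \cref{rem:hypothesis}.
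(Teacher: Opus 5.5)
Your proposal is correct and follows essentially the paper's proof: the step-size condition for $\eta=1$ comes from \cref{thm:quantum} applied to the shift stack with uniform $\theta$, it is transferred back via \cref{thm:tensor power reduction}, and then \cref{cor:symmetricentanglementpolytope} yields both the capacity bound and the polynomial bound. Your alternative for the Gaussian-integer bound, applying \cref{thm:entropic tensor scaling} directly to $T_{\Shf}$, is also valid; the paper notes exactly this in the remark right after the theorem.
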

\begin{proof}
    Recall that~$\phi$ satisfies \ref{it:phi1gl}--\ref{it:phi3gl} and that the assumption on~$T$ implies that $\spec\mu_{\Sym}(g\cdot T) \in \intr\dom\phi$ for all~$g \in G$.
    By \cref{thm:tensor power reduction,thm:quantum}, since $\phi_{\Sum}$ is the objective function for the quantum functional with $\theta=(1/d,\dots,1/d)$, the step-size condition is satisfied for~$\eta=1$.
    Hence, the theorem follows from \cref{cor:symmetricentanglementpolytope}, using that $\phi(\spec\mu_{\Sym}(T_k)) = -H((\rho_k)_\text{avg})$ and $\min_{p \in \Delta_{\Sym}(T)}\phi(p) = -\logqfunc_{\Sym}(T)$.
\end{proof}

We remark that one can also obtain a convergence result as a direct consequence of \cref{thm:entropic tensor scaling} (involving the capacity of $T_{\Shf}$ and an additional term $\log d$ in the numerator).

\subsection{\texorpdfstring{$G$}{G}-stable rank via smoothing}\label{sub:G-stable-rk}
Next, we turn to the computation of the \emph{$G$-stable rank}, a tensor invariant introduced by Derksen~\cite{Derksen-22}.
As in \cref{sub:quantum}, the setting of this subsection is the tensor action in \cref{sub:entanglementpolytope}: the group $G = \GL(n_1)\times\cdots\times\GL(n_d)$ acts on $\VV = \Ten(n_0; n_1, \ldots, n_d)$ as in \cref{eq:tensor action}.
This $G$-stable rank is defined for tensors over arbitrary perfect fields, but for complex tensors it can also be expressed in terms of the entanglement polytope~\cite[Thm~5.2]{Derksen-22}, as follows.%
\footnote{The $G$-stable rank was originally introduced for a single tensor (i.e., $n_0 = 1$), but the definition generalizes naturally to tensor tuples.}

\begin{definition}[{$G$-stable rank, \cite{Derksen-22}}]
For a parameter $\alpha \in \RR_{>0}^d$, the $G$-stable rank~$\rk_{\alpha}^G$ is, for all $T \in \Ten(n_0; n_1, \ldots, n_d)\setminus\{0\}$, defined as
\[
    \rk_{\alpha}^G(T)
\coloneqq \sup_{g \in G} \min_{\ell\in[d]} \frac {\alpha_\ell \norm{g \cdot T}^2} {\norm{(g \cdot T)^{(\ell)}}_{\ope}^2}
= \sup_{g \in G} \min_{\ell\in[d]} \frac {\alpha_\ell} {\norm{\mu(g \cdot T)_\ell}_{\ope}}
= \max_{p \in \Delta(T)} \min_{\ell\in[d]} \frac {\alpha_\ell} {\norm{p_\ell}_\infty},
\]
where $\norm{(g \cdot T)^{(\ell)}}_{\ope}$ is the operator norm of the $\ell$-th flattening of the tensor~$g \cdot T$.
\end{definition}

\noindent
The first equality follows immediately from the moment map for the tensor action (\cref{eq:momentmap_tensor}).
Accordingly, the inverse of the $G$-stable rank can be written as a convex minimization over the entanglement polytope~$\Delta(T)$ (\cref{eq:entanglementpolytope}): 
\begin{align}\label{eq:Gstable-rank}
    \frac 1 {\rk_{\alpha}^G(T)} = \min_{p \in \Delta(T)} \max_{\ell\in[d]} \frac {\norm{p_\ell}_\infty} {\alpha_\ell} = \min_{p\in\Delta(T)} \phi(p),
\end{align}
for the convex objective
\begin{align*}
    \phi \colon \RR^{n_1} \times \cdots \times \RR^{n_d} \to \RR, \quad \phi(p_1,\dots,p_d) \coloneqq \max_{\ell\in[d]} \max_{i\in[n_\ell]} \frac {p_{\ell,i}} {\alpha_\ell}.
\end{align*}
However, $\phi$ is not differentiable.
In order to treat this problem in our framework, we utilize the \emph{smoothing technique} in \cite[\S{}4]{Nesterov-smoothing}.
Concretely, we approximate~$\phi$ with the following objective of log-sum-exp type, parametrized by a smoothing parameter~$\delta > 0$:
\[
    \phi_\delta \colon \RR^{n_1} \times \cdots \times \RR^{n_d} \to \RR, \quad \phi_\delta(p) \coloneqq \delta\log\sum_{\ell \in [d], i \in [n_\ell]}e^{p_{\ell, i}/(\delta \alpha_\ell)}.
\]
The function $\phi_\delta$ is convex, symmetric in each component, and $L$-smooth, where~$L \coloneq 1/(\delta\alpha_{\min}^2)$, with $\alpha_{\min} \coloneqq \min_{\ell \in [d]}\alpha_\ell$~\cite[Lem~3.10]{BLNW2020}.
Moreover, it gives a controlled additive approximation to~$\phi$~\cite[(17)]{Nesterov-smoothing}:
\begin{equation}\label{eq:nesterov guarantee}
    \phi(p) \le \phi_\delta(p) \le \phi(p) + \delta\log n_{\Sum}
    \quad (n_{\Sum} \coloneqq \sum_{\ell=1}^d n_\ell).
\end{equation}
Hence, by minimizing $\phi_\delta$ for sufficiently small $\delta$, the inverse of the $G$-stable rank can be approximately computed with an arbitrarily small additive error.

We now derive the Hadamard mirror descent iteration for this objective function.
First, the unitarily invariant extension of $\phi_\delta$ to tuples of Hermitian matrices is given by (see \cref{ex:F examples}~(b))
\[
    \Phi_\delta \colon \Herm(n_1) \times \cdots \times \Herm(n_d) \to \RR, \quad
    \Phi_\delta(X) \coloneqq \phi_\delta(\spec X) = \delta\log\sum_{\ell=1}^d\tr e^{X_\ell/(\delta\alpha_\ell)},
\]
with the following gradient:
\[
    \nabla_\ell\Phi_\delta(X)
= \frac1{\alpha_\ell} \frac {e^{X_\ell/(\delta\alpha_{\ell})}} {\sum_{\ell'=1}^d\tr e^{X_{\ell'}/(\delta\alpha_{\ell'})}}.
\]
By substituting this into \cref{eq:updateTk}, and taking the step size $\eta = 1 / (d L) = \delta\alpha_{\min}^2/d$, we obtain the following iteration:
\begin{equation}\label{eq:g stable update}
    T_{k+1} \coloneqq \left(I \otimes \bigotimes_{\ell=1}^d\exp\left(-\frac{\delta\alpha_{\min}^2}{2d\alpha_\ell}\frac {e^{\rho_\ell/(\delta\alpha_{\ell})}} {\sum_{\ell'=1}^d \tr e^{\rho_{\ell'}/(\delta\alpha_{\ell'})}}\right)\right)T_k,\quad\text{where}\ (\rho_1,\dots,\rho_d) = \mu(T_k).
\end{equation}
By our general convergence results (\cref{cor:entanglementpolytope,thm:tensor_f+b}), this computes the inverse of the $G$-stable rank approximately:

\begin{theorem}[Approximation algorithm for inverse $G$-stable rank]\label{thm:Gstable}
Let $\alpha\in\RR^d_{>0}$, and let $T \in \Ten(n_0; n_1, \ldots, n_d)\setminus\{0\}$ be a tensor tuple.
Fix $\delta>0$, and let~$(T_k)_{k\in\NN}$ be the sequence generated by the iteration~\eqref{eq:g stable update} starting from~$T_0 \coloneqq T$.
Then, for every $k\geq1$,
\[
    0 \leq \phi_\delta(\spec\mu(T_k)) - \frac1{\rk_\alpha^G(T)} \leq \delta\log n_{\Sum} + \frac{2d\left(\log\|T\| - \log\Capacity_{p^\star}(T)\right)}{\delta\alpha_{\min}^2k},
\]
where $p^\star$ is any minimizer of $\phi_\delta$ in $\Delta(T)$.
Thus, to compute the inverse $G$-stable rank with additive error~$\eps>0$, we may choose%
\footnote{If $n_{\Sum}=1$, then $\phi_\delta$ is independent of $\delta$ and $\delta\in\RR_{>0}$ can be chosen arbitrarily. In this case, we have $\phi_{\delta}(\spec\mu(T_k))=\frac{1}{\rk_\alpha^G(T)}$ for every $k\geq 1$.}%
~$\delta\coloneqq\eps/(2\log n_{\Sum})$ and~$k\geq 8d\log n_{\Sum}(\log\|T\| - \log\Capacity_{p^\star}(T))/(\alpha_{\min}\varepsilon)^2$ iterations.
In particular, if $T$ has Gaussian integer entries, then $k = \poly(n_{\Sum}, \braket T)/(\alpha_{\min}\varepsilon)^2$ iterations suffice, where $\braket T$ denotes the bit-length of the tensor tuple~$T$.
\end{theorem}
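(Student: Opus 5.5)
The argument is a direct instantiation of \cref{cor:entanglementpolytope} with the smoothed objective~$\phi_\delta$, combined with the sandwich bound~\eqref{eq:nesterov guarantee}.

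First I verify the hypotheses of \cref{cor:entanglementpolytope}. The function~$\phi_\delta$ is finite-valued, convex, symmetric in each argument, and $L$-smooth with $L=1/(\delta\alpha_{\min}^2)$. Hence \ref{it:phi1gl}--\ref{it:phi3gl} hold, and trivially $\spec\mu(g\cdot T)\in\intr\dom\phi_\delta$. Since~$\phi_\delta$ is $L$-smooth, the corollary allows the step size $\eta=1/(dL)=\delta\alpha_{\min}^2/d$. With $c_{k+1}\coloneqq1$ and the gradient formula for~$\Phi_\delta$, the update~\eqref{eq:updateTk} is exactly~\eqref{eq:g stable update}. The corollary then gives
\[
\phi_\delta(\spec\mu(T_k))-\min_{p\in\Delta(T)}\phi_\delta(p)\le \frac{2d(\log\|T\|-\log\Capacity_{p^\star}(T))}{\delta\alpha_{\min}^2 k},
\]
where $p^\star$ minimizes~$\phi_\delta$ over~$\Delta(T)$.

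Next I relate $\min_{\Delta(T)}\phi_\delta$ to $1/\rk_\alpha^G(T)=\min_{\Delta(T)}\phi$, using~\eqref{eq:Gstable-rank}. By~\eqref{eq:nesterov guarantee}, $\min\phi_\delta\le\min\phi+\delta\log n_{\Sum}$, which combined with the display gives the upper bound. For the lower bound, $\phi_\delta(\spec\mu(T_k))\ge\phi(\spec\mu(T_k))\ge\min_{\Delta(T)}\phi$, since $\spec\mu(T_k)\in\Delta(T)$. This yields the displayed inequality.

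For the iteration count, set $\delta=\eps/(2\log n_{\Sum})$, so the first term equals~$\eps/2$. The second term is at most~$\eps/2$ once $k\ge 4d\log n_{\Sum}(\log\|T\|-\log\Capacity_{p^\star}(T))/(\alpha_{\min}^2\eps^2)\cdot 2$, which is the stated~$8d\cdots$ bound. The case $n_{\Sum}=1$ is handled separately as in the footnote.

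For Gaussian integer tensors, \cref{thm:tensor_f+b} bounds $-\log\Capacity_p(T)\le\poly(d,n_{\max})$ uniformly over $p\in\Delta(T)$, so in particular at the (unknown) minimizer~$p^\star$. Also $\log\|T\|\le\poly(\braket T)$. Since $d\le n_{\Sum}$, $n_{\max}\le n_{\Sum}$ and $\log n_{\Sum}\le n_{\Sum}$, the count is $\poly(n_{\Sum},\braket T)/(\alpha_{\min}\eps)^2$.

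There is no real obstacle. The only points needing care are that the capacity bound applies at the minimizer of~$\phi_\delta$ (it holds uniformly on~$\Delta(T)$), and that the lower bound $0\le$ in the display uses $\phi\le\phi_\delta$ rather than the optimality of~$p^\star$.
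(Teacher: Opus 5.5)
Your proposal is correct and follows the paper's proof essentially step for step. You instantiate \cref{cor:entanglementpolytope} with $\eta=\delta\alpha_{\min}^2/d$, sandwich $\min_{\Delta(T)}\phi_\delta$ against $1/\rk_\alpha^G(T)$ via~\eqref{eq:nesterov guarantee}, choose $\delta$ and $k$ the same way, and invoke the uniform capacity bound of \cref{thm:tensor_f+b} at the Gaussian-integer step. The only cosmetic difference is that you obtain the lower bound $0\le$ directly from $\phi\le\phi_\delta$ at the iterate, whereas the paper routes it through $\min_{\Delta(T)}\phi_\delta$.
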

\begin{proof}
    Since $\phi_\delta$ is $1/(\delta\alpha_{\min}^2)$-smooth convex and symmetric in each argument, $\phi_\delta$ satisfies \ref{it:phi1gl}--\ref{it:phi3gl} and it holds that $\spec \mu(g\cdot T) \in \intr\dom\phi_\delta$ for all $g \in G$.
    Hence, \cref{cor:entanglementpolytope} applies with~$\phi = \phi_\delta$ and the step size~$\eta = \delta\alpha_{\min}^2/d$, and it gives
    \[
        0 \leq \phi_\delta(\spec\mu(T_k)) - \min_{p \in \Delta(T)}\phi_\delta(p) \le \frac{2d\left(\log\|T\| - \log\Capacity_{p^\star}(T)\right)}{\delta\alpha_{\min}^2k}.
    \]
    By \cref{eq:nesterov guarantee}, we know that $0 \le \min_{p \in \Delta(T)}\phi_\delta(p) - 1/\rk_\alpha^G(T) \le \delta\log n_{\Sum}$, and hence the displayed convergence bound follows.
    For the subsequent claim, note that our choice of~$\delta$ and~$k$ leads to an additive error
    \[
        \phi_\delta(\spec\mu(T_k)) - \frac1{\rk_\alpha^G(T)} \le \delta\log n_{\Sum} + \frac{2d\left(\log\|T\| - \log\Capacity_{p^\star}(T)\right)}{\delta\alpha_{\min}^2k} \leq \frac\varepsilon2 + \frac\varepsilon2 = \varepsilon.
    \]
    The claim for Gaussian integer tensors follows from the capacity lower bound of~\cref{thm:tensor_f+b}.
\end{proof}

\subsection{Non-commutative rank by \texorpdfstring{$1$}{1}-norm minimization}\label{sub:ncrk}
Finally, we revisit the problem of computing the \emph{non-commutative rank} of a matrix tuple.
Let $\mathcal{A} = (A_1, \ldots, A_{n_0}) \in (\CC^{n \times n})^{\op n_0}$ be a tuple of square matrices.
The \emph{non-commutative rank} of $\mathcal{A}$, denoted by $\ncrk(\mathcal{A})$, is given by the rank of the following linear symbolic matrix (see, e.g., \cite{Fortin-Rautenauer-04})
\[
    A(x) \coloneqq \sum_{i \in [n_0]}x_iA_i,
\]
in \emph{non-commuting} indeterminates $(x_i)_{i \in [n_0]}$ (that is, $x_ix_j \neq x_jx_i$ for $i\neq j$), defined formally over the \emph{free skew field}~\cite{Cohn1995}.
See also \cite[Thm.~1.17]{GGOW20} and \cite[Def.~2.1]{FSG-22} for several equivalent definitions and motivations for this fundamental notion.
Celebrated results state that the non-commutative rank is computable in deterministic polynomial time (e.g., when the matrix entries are rational numbers encoded in binary)~\cite{GGOW20,IQS-18,Hamada-Hirai-21}.

In this section, we derive a new algorithm for computing the non-commutative rank.
It is based on Hadamard mirror descent and a recent formula by Hirai.
Before stating our results, we introduce some identifications to connect with the literature.
We shall think of $\Ten(n_0; n, n) \cong (\CC^{n \times n})^{\op n_0}$ as the space of $n\times n$-matrix tuples $\mathcal{A} = (A_1, \ldots, A_{n_0})$.
Under this identification, the tensor action reads~$(g_1,g_2) \cdot \mathcal A\coloneqq (g_1A_1g_2^\top, \ldots, g_1A_{n_0}g_2^\top)$ for $g_1,g_2\in\GL(n)$; this is commonly called the \emph{left-right action}.
Then the two components of the moment map~\eqref{eq:momentmap_tensor} take the form
\begin{align*}
    \mu_{\LR}(\mathcal A)_1 = \sum_{i=1}^{n_0} \frac {A_i A_i^\dagger} {\norm{\mathcal A}^2}, \qquad
    \mu_{\LR}(\mathcal A)_2 = \left( \sum_{i=1}^{n_0} \frac {A_i^\dagger A_i } {\norm{\mathcal A}^2} \right)^\top,
\end{align*}
where $\norm{\mathcal A}^2 \coloneqq \sum_{i=1}^{n_0} \norm{A_i}_{\mathrm F}^2$.
The transpose in the right-hand side formula does not impact the spectrum.
Hence in the present setting, the entanglement polytope~\eqref{eq:entanglementpolytope} can be written as
\begin{align*}
    \Delta_{\LR}(\mathcal A) = \overline{\left\{ \left( \spec \sum_{i=1}^{n_0} \frac {B_i B_i^\dagger} {\norm{\mathcal B}^2}, \spec \sum_{i=1}^{n_0} \frac {B_i^\dagger B_i} {\norm{\mathcal B}^2}\right) \,\middle|\, g_1, g_2 \in \GL(n), \; B_i = g_1 A_i g_2^\top \, (\forall i \in[n_0]) \right\}}.
\end{align*}
Without loss of generality, we assume that at least one of $\mu_{\mathrm{LR}}(\mathcal{A})_1$ or $\mu_{\mathrm{LR}}(\mathcal{A})_2$ is invertible; otherwise, we may represent $\mathcal{A}$ as a tuple of smaller square matrices.%
\footnote{For a matrix tuple $\mathcal{A} \in \Ten(n_0; n, n)$, let $\mathcal A' \in \Ten(n_0'; n_1', n_2')$ be its concise reduction obtained by the method in \cref{footnote:concise}.
If $n_1' > n_2'$, by adding auxiliary zero slices to the last mode, we may make it a tuple of \emph{square} matrices $\mathcal A'' \in \Ten(n_0'; n_1', n_1')$ while keeping $\mu_{\mathrm{LR}}(\mathcal{A}'')_1$ invertible.
Both the concise reduction and the padding by zero slices preserve the non-commutative rank.
The case $n_1' < n_2'$ is handled symmetrically.}
This condition is equivalent to the following \emph{kernel condition}:
\begin{equation}\label{eq:kernel-condition}
    \bigcap_{i \in [n_0]}\ker A_i = \{0\} \qquad\text{or}\qquad \bigcap_{i \in [n_0]}\ker A^\dagger_i = \{0\}.
\end{equation}
Under this hypothesis, Hirai~\cite{Hirai-ncrank,Hirai2025} derived the following expression for the non-commutative rank as an $\ell^1$-norm minimization over the entanglement polytope.

\begin{proposition}[{\cite[Thm.~1.4]{Hirai-ncrank}, \cite[Thm.~4.10]{Hirai2025}}]\label{prop:hirai}
    Suppose that $\mathcal{A} = (A_1, \ldots, A_{n_0}) \in (\CC^{n\times n})^{\op n_0}$ satisfies the kernel condition~\eqref{eq:kernel-condition}.
    Then, its non-commutative rank is given by
    \[
        \ncrk(\mathcal{A}) = n - \frac{n}2\min_{(p_1, p_2) \in \Delta_{\mathrm{LR}}(\mathcal{A})}\left(\left\|p_1 - \frac{\bm{1}_n}n\right\|_1 + \left\|p_2 - \frac{\bm{1}_n}n\right\|_1\right).
    \]
\end{proposition}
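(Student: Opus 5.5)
The plan is to prove the two inequalities separately, and in both directions to use the Fortin--Reutenauer characterization
\[
n-\ncrk(\mathcal A)=\max_{U\subseteq\CC^n}\bigl(\dim U-\dim \textstyle\sum_i A_iU\bigr),
\]
the maximum being over subspaces $U$ (a ``shrunk subspace'' when the difference is positive). Write $c\coloneqq n-\ncrk(\mathcal A)$. The claim is then $\min_{\Delta_{\LR}(\mathcal A)}\phi=2c/n$, where $\phi(p_1,p_2)\coloneqq\|p_1-\bm 1_n/n\|_1+\|p_2-\bm 1_n/n\|_1$.

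\emph{Lower bound} $\min\phi\ge 2c/n$. Fix a maximizing $U$ with $\dim U=a$, and let $W\coloneqq\sum_iA_iU$ with $\dim W=b=a-c$.
\begin{itemize}
\item For any $g=(g_1,g_2)$, the scaled tuple $B_i=g_1A_ig_2^\top$ maps $U'\coloneqq(g_2^\top)^{-1}U$ into $W'\coloneqq g_1W$, with the same dimensions. Hence the block of each $B_i$ with rows in $W'^\perp$ and columns in $U'$ vanishes.
\item Every entry of $B$ therefore lies either in a row of $W'$ or in a column of $U'^\perp$. This gives $\tr[\Pi_{W'}\rho_1]+\tr[\Pi_{U'^\perp}\bar\rho_2]\ge 1$, where $\rho_1=\mu_{\LR}(\mathcal B)_1$ and $\bar\rho_2$ is the untransposed second marginal.
\item By Ky Fan's principle, the sum of the largest $b$ entries of $p_1$ plus the sum of the largest $n-a$ entries of $p_2$ is at least $1$.
\item For a probability vector $p$ and any $m$, one has $\|p-\bm 1/n\|_1\ge 2\bigl(\sum_{j\le m}p_j^{\downarrow}-m/n\bigr)$. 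Applying this with $m=b$ to $p_1$ and $m=n-a$ to $p_2$ gives
\[
\phi(p)\ge 2\bigl(1-\tfrac{b+n-a}{n}\bigr)=\tfrac{2c}{n}.
\]
\end{itemize}
This holds on the orbit, and it passes to the closure $\Delta_{\LR}(\mathcal A)$ because $\phi$ is continuous.

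\emph{Upper bound} $\min\phi\le 2c/n$. This is where I expect the main difficulty. I would first choose bases adapted to a maximal shrunk subspace, taking the minimal one (or a canonical one, as in the Hamada--Hirai/Iwata--Oki type canonical form). In these bases $\mathcal A$ becomes block triangular with zero block $(n-b)\times a$. Maximality and minimality of $U$ should then imply that the two diagonal blocks satisfy strict Hall-type (``no further shrinking'') conditions. Concretely, the $b\times a$ block restricted to $U\to W$ has no shrunk subspace and is ``wide-regular'', and similarly the complementary $(n-b)\times(n-a)$ block is ``tall-regular''.

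The target point is
\[
p_1=\Bigl(\underbrace{\tfrac1n,\dots,\tfrac1n}_{n-b},\underbrace{\tfrac1n,\dots}_{b}\Bigr)
\]
rebalanced so that its only deviation from uniform is mass $c/n$ moved onto $b$ coordinates, with the analogous choice for $p_2$ on $n-a$ coordinates. Such a point attains equality in each Ky Fan step above, so that $\phi=2c/n$. To show it lies in $\Delta_{\LR}(\mathcal A)$, I would use the capacity characterization \eqref{eq:Delta vs logCap}, or equivalently operator scaling with prescribed non-uniform marginals (Franks). The Hall-type regularity of the diagonal blocks gives feasibility for each block. The off-diagonal block is killed in the limit by the one-parameter subgroup $\mathrm{diag}(t^{-1}I,I)$ on the left and $\mathrm{diag}(I,t)$ on the right. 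Since the polytope is closed, the limit point belongs to $\Delta_{\LR}(\mathcal A)$.

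The kernel condition \eqref{eq:kernel-condition} enters in this step: it ensures that the marginals are full-rank on the relevant side, so that the boundary cases $a=n$ or $b=0$ do not force zero eigenvalues that would break the block scaling. The hardest part is verifying the block-wise scalability to the exact rectangular marginals. If this proves cumbersome, the fallback is Hirai's duality route: compute $\inf_x Q(\diff f)$ through Busemann functions, where the minimizing direction at infinity corresponds to the flag $U\subseteq\CC^n$, $W$, and the value of $Q(\xi)$ along that direction equals $2c/n$.
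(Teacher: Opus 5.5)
Your lower bound is correct and complete. It is the weak-duality half, and rightly it never uses the kernel condition; without that condition the formula fails, e.g.\ $n=2$, $\mathcal A=(E_{11})$ has $\ncrk(\mathcal A)=1$ but $\min\phi=2$. The paper gives no proof of this proposition: it imports it from Hirai, who states it for the unnormalized moment map and normalizes using the kernel condition. So everything rests on your upper bound, and there is a genuine gap there.

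\emph{The false step.} Maximality and minimality of $U$ only give that the diagonal blocks have full non-commutative rank, i.e.\ $\dim V-\dim\sum_iA_iV\le c$ for $V\subseteq U$. Your target needs equal entries on each block, which means scaling the $b\times a$ block to uniform marginals. That requires the much stronger slope condition $\dim\sum_iA_iV\ge\frac ba\dim V$ for all $V\subseteq U$, and this does not follow.

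\emph{Counterexample.} Let $n=5$, $n_0=3$, $A_1=E_{11}+E_{24}$, $A_2=E_{12}+E_{25}$, $A_3=E_{13}$, with $E_{ij}$ the matrix units. Then $\bigcap_i\ker A_i=0$, and no $4$-dimensional subspace is mapped into a line. Hence $U=\CC^5$ is the unique maximizer, $W=\operatorname{span}(e_1,e_2)$, $a=5$, $b=2$, $c=3$, and there are no off-diagonal or lower blocks. Since $a=n$, equality in your chain forces $p_2=\bm{1}_5/5$, so your target is $\bigl((\tfrac12,\tfrac12,0,0,0),\bm{1}_5/5\bigr)$. But $V=\operatorname{span}(e_1,e_2,e_3)$ is mapped into $\CC e_1$. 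Your own first step applied to the pair $(V,\CC e_1)$ gives $\lambda_{\max}(\rho_1)\ge\tr[\Pi_{V'}\bar\rho_2]$ on the orbit. Hence $(p_1)_1\ge\frac35$ at every point of $\Delta_{\LR}(\mathcal A)$ with $p_2$ uniform. So your target is not in the polytope, and the $2\times5$ block is not scalable to uniform marginals. The actual minimizer is $\mu_{\LR}(\mathcal A)$ itself, with spectra $\bigl((\tfrac35,\tfrac25,0,0,0),\bm{1}_5/5\bigr)$ and $\phi=\frac65=\frac{2c}n$.

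\emph{What is missing.} You need to refine the two-block split to the full Harder--Narasimhan (slope) filtration $0=V_0\subsetneq\dots\subsetneq V_r=\CC^n$, with $W_j\coloneqq\sum_iA_iV_j$. Its subquotients are semistable, hence approximately scalable to uniform marginals by King's criterion (rectangular operator scaling). Their slopes $\dim(V_j/V_{j-1})/\dim(W_j/W_{j-1})$ are decreasing. Degenerate to the associated graded and give every input coordinate mass $1/n$. This is possible when $\bigcap_i\ker A_i=0$; otherwise transpose, and this is where the kernel condition really enters. The result is $p_2=\bm{1}_n/n$ and $\|p_1-\bm{1}_n/n\|_1=\frac2n\sum_j\bigl(\dim(V_j/V_{j-1})-\dim(W_j/W_{j-1})\bigr)_+$. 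This equals $\frac{2c}n$ by the extremal property of the HN filtration.

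\emph{Three smaller points.}
\begin{itemize}
\item Your target as written moves mass $c/n$ in \emph{both} marginals, which would give $\phi=4c/n$. In a block-diagonal limit, the top-$b$ sum of $p_1$ plus the top-$(n-a)$ sum of $p_2$ equals $1$ exactly, so the total deviation $c/n$ must be split between the marginals (e.g.\ $p_2$ uniform).
\item For $\begin{psmallmatrix}X&Y\\0&Z\end{psmallmatrix}$, your choice of $\diag(t^{-1}I,I)$ on the left and $\diag(I,t)$ on the right rescales $X,Y,Z$ by $t^{-1},1,t$. The off-diagonal block is therefore not suppressed; use $\diag(I,t^{-1}I)$ on the right instead.
\item The Busemann fallback is circular as stated. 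A direction $\xi$ certifies $\inf Q(\diff f)\le Q(\xi)$ only when $f^\xi$ is bounded below, i.e.\ only when the corresponding point already lies in $\Delta_{\LR}(\mathcal A)$. The shrunk-subspace flag naturally certifies the lower bound you already have.
\end{itemize}
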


\noindent
We note that his original characterization used the \emph{unnormalized} moment map~\cite[Thm.~1.4]{Hirai-ncrank}.
The above expression then follows under the kernel condition~\eqref{eq:kernel-condition}; see the discussion following~\cite[Thm.~4.10]{Hirai2025}.

Thus, the non-commutative rank is obtained by minimization over the entanglement polytope (\cref{sub:entanglementpolytope}) of the following convex objective:
\[
    \phi \colon \RR^n \times \RR^n \to \RR, \quad \phi(p_1, p_2) = \left\|p_1 - \frac{\bm{1}_n}n\right\|_1 + \left\|p_2 - \frac{\bm{1}_n}n\right\|_1.
\]
As in the previous subsection, $\phi$ is not differentiable, so we construct a smooth approximation.
Observe that $\phi$ is the sum of $|p_{\ell, i_\ell} - 1/n|$ for all $\ell \in \{1,2\}$ and~$i_\ell \in [n]$.
We can approximate $|\cdot| \approx \sqrt{(\cdot)^2 + \delta^2}$, where $\delta > 0$ controls the error.
Thus we arrive at the following objective
\[
    \phi_\delta \colon \RR^n \times \RR^n \to \RR, \quad\phi_\delta(p_1, p_2) \coloneqq \sum_{\ell \in \{1, 2\}}\sum_{i \in [n]}\sqrt{\left(p_{\ell, i} - \tfrac1n\right)^2 + \delta^2}.
\]
Since $\sqrt{(\cdot)^2 + \delta^2}$ is convex and $1/\delta$-smooth, $\phi_\delta$ is also convex and $1/\delta$-smooth, and it is clearly permutation-symmetric in both arguments.
Moreover, from $|\cdot| \le \sqrt{(\cdot)^2 + \delta^2} \le |\cdot| + \delta$, it follows that $\phi_\delta$ approximates $\phi$ up to additive error~$2n\delta$:
\begin{equation}\label{eq:ncrank error}
    \phi(p_1, p_2) \le \phi_\delta(p_1, p_2) \le \phi(p_1, p_2) + 2n\delta.
\end{equation}
Hence, by minimizing $\phi_\delta$, the original minimization can be solved with an additive error $\mathcal{O}(n\delta)$.

We now derive the Hadamard mirror descent iteration for this objective.
The unitarily invariant extension of $\phi_\delta$ to pairs of Hermitian matrices is given by (see \cref{ex:F examples}~(c))
\[
    \Phi_\delta \colon \Herm(n) \times \Herm(n) \to \RR, \quad
    \Phi_\delta(X_1, X_2) \coloneqq \tr\sqrt{\left(X_1 - \tfrac{I}n\right)^2 + \delta^2I} + \tr\sqrt{\left(X_2 - \tfrac{I}n\right)^2 + \delta^2I},
\]
with the following gradient:
\[
    \nabla\Phi_\delta(X_1, X_2) = \left(\left(X_1 - \tfrac{I}n\right)\left(\left(X_1 - \tfrac{I}n\right)^2 + \delta^2I\right)^{-\frac12}, \left(X_2 - \tfrac{I}n\right)\left(\left(X_2 - \tfrac{I}n\right)^2 + \delta^2I\right)^{-\frac12}\right).
\]
By substituting this into \cref{eq:updateTk} with the step size $\eta = \delta/2$, we obtain the following iteration:
\begin{equation*}
    \mathcal{A}_{k+1} \coloneqq \left(I \otimes\bigotimes_{\ell \in \{1, 2\}} e^{-\frac{\delta}4\left(\rho_\ell -\frac{I}n\right)\left(\left(\rho_\ell - \frac{I}n\right)^2 + \delta^2I\right)^{-\frac12}}\right)\mathcal A_k,
    \quad ((\rho_1,\rho_2) = \mu_{\mathrm{LR}}(\mathcal{A}_k)),
\end{equation*}
or, concretely in terms of the components of the matrix tuples~$\mathcal A_k = (A^{(k)}_1,\dots,A^{(k)}_{n_0})$:
\begin{equation}\label{eq:ncrank update concrete}
    A^{(k+1)}_i \coloneqq e^{-\frac{\delta}4 \sigma_L \left(\sigma_L^2 + \delta^2I\right)^{-\frac12}} A^{(k)}_i \, e^{-\frac{\delta}4\sigma_R\left(\sigma_R^2 + \delta^2I\right)^{-\frac12}},
    \quad
    \sigma_L \coloneqq \sum_{j=1}^{n_0} \tfrac {A_j^{(k)} A_j^{(k)\dagger}} {\norm{\mathcal A_k}^2} - \tfrac I n, \;
    \sigma_R \coloneqq \sum_{j=1}^{n_0} \tfrac {A_j^{(k)\dagger} A_j^{(k)}} {\norm{\mathcal A_k}^2} - \tfrac I n.
\end{equation}
By our general convergence results (\cref{cor:entanglementpolytope,thm:tensor_f+b}), this allows computing the non-commutative rank approximately.
Furthermore, since the non-commutative rank takes an integer value, the exact value can be computed by rounding.

\begin{theorem}[Non-commutative rank computation]\label{thm:ncrk}
    Let $\mathcal{A} \in (\CC^{n\times n})^{\op n_0} \cong \Ten(n_0; n, n)$ satisfy the kernel condition~\eqref{eq:kernel-condition}.
    Fix $\delta>0$, and let~$(\mathcal A_k)_{k\in\NN}$ be the sequence generated by the iteration~\eqref{eq:ncrank update concrete} starting from~$\mathcal A_0 \coloneqq \mathcal A$.
    Then, for every $k\geq1$,
    \[
        0 \leq \ncrk(\mathcal{A}) - \left( n - \frac n2 \phi_\delta(\spec\mu_{\mathrm{LR}}(\mathcal{A}_k)) \right)  \leq n^2\delta + \frac{2n\left(\log\|\mathcal{A}\| - \log\Capacity_{p^\star}(\mathcal{A})\right)}{\delta k},
    \]
    where $p^\star$ is any minimizer of $\phi_\delta$ in $\Delta_{\mathrm{LR}}(\mathcal{A})$.
    Thus, to compute the non-commutative rank of the matrix tuple~$\mathcal A$, we may choose~$\delta=1/(2n^2)$ and $k > 8n^3(\log\|\mathcal{A}\| - \log\Capacity_{p^\star}(\mathcal{A}))$ iterations; then we have
    \begin{equation}\label{eq:ncrank by rounding}
        \ncrk(\mathcal{A}) = n - \left\lfloor (n/2)\phi_\delta(\spec\mu_{\mathrm{LR}}(\mathcal{A}_k))\right\rfloor.
    \end{equation}
    If $\mathcal{A}$ has Gaussian integer entries, then $k = \poly(n, \braket{\mathcal A})$ iterations suffice to compute the non-commutative rank exactly, where $\braket{\mathcal A}$ denotes the bit-length.
\end{theorem}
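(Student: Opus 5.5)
The plan mirrors the proof of \cref{thm:Gstable}. First, check that the smoothed objective $\phi_\delta$ fits the hypotheses of \cref{cor:entanglementpolytope}. It is finite everywhere, convex, permutation-symmetric in each of its two arguments, and $1/\delta$-smooth, so \ref{it:phi1gl}--\ref{it:phi3gl} hold and $\spec\mu_{\mathrm{LR}}(g\cdot\mathcal A)\in\intr\dom\phi_\delta=\RR^n\times\RR^n$ trivially. With $d=2$, \cref{cor:entanglementpolytope} allows the step size $\eta=1/(dL)=\delta/2$, and substituting $\nabla\Phi_\delta$ into \cref{eq:updateTk} (with $c_k=1$) gives exactly the iteration \eqref{eq:ncrank update concrete}. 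The corollary then yields
\[
0\le \phi_\delta(\spec\mu_{\mathrm{LR}}(\mathcal A_k))-\min_{\Delta_{\mathrm{LR}}(\mathcal A)}\phi_\delta\le \frac{4(\log\|\mathcal A\|-\log\Capacity_{p^\star}(\mathcal A))}{\delta k}.
\]

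Next, compare with $\ncrk$. By \eqref{eq:ncrank error}, $0\le\min\phi_\delta-\min\phi\le 2n\delta$. By \cref{prop:hirai}, which applies under the kernel condition, $\ncrk(\mathcal A)=n-\frac n2\min\phi$. Multiplying by $n/2$ gives
\[
\ncrk(\mathcal A)-\left(n-\tfrac n2\phi_\delta(\cdot)\right)=\tfrac n2\left(\phi_\delta(\cdot)-\min\phi\right)\in\left[0,\;n^2\delta+\frac{2n(\log\|\mathcal A\|-\log\Capacity_{p^\star})}{\delta k}\right],
\]
which is the displayed bound.

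For the rounding claim, take $\delta=1/(2n^2)$, so the first error term equals $1/2$. The second term is $4n^3(\cdots)/k$, which is $<1/2$ once $k>8n^3(\log\|\mathcal A\|-\log\Capacity_{p^\star}(\mathcal A))$. Then $n-\ncrk(\mathcal A)$ is an integer with
\[
n-\ncrk\le\tfrac n2\phi_\delta(\cdot)<n-\ncrk+1,
\]
so taking the floor recovers it, which is \eqref{eq:ncrank by rounding}. One edge case needs a remark: if $\log\|\mathcal A\|-\log\Capacity_{p^\star}=0$, the second error term vanishes and the strict inequality still holds because it equals $1/2<1$.

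Finally, for Gaussian integer $\mathcal A$, \cref{thm:tensor_f+b} applied with $d=2$ and $n_{\max}=n$ gives $-\log\Capacity_{p^\star}(\mathcal A)\le\poly(n)$. Together with $\log\|\mathcal A\|\le\poly(\braket{\mathcal A})$, this makes the required $k$ polynomial.

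The main obstacle is the bookkeeping that the capacity appearing in \cref{cor:entanglementpolytope} is the capacity at a minimizer of $\phi_\delta$, not of $\phi$. This causes no trouble, because \cref{thm:tensor_f+b} bounds the capacity uniformly over all $p\in\Delta$. The other point to confirm is that the kernel-condition reduction leaves the hypothesis of \cref{prop:hirai} in force, and this is given.
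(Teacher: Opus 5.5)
Your proposal is correct and follows the paper's proof essentially step by step. You apply \cref{cor:entanglementpolytope} to the $1/\delta$-smooth objective $\phi_\delta$ with $\eta=\delta/2$, combine it with the smoothing error \eqref{eq:ncrank error} and Hirai's formula (\cref{prop:hirai}), round with $\delta=1/(2n^2)$, and invoke the uniform capacity bound of \cref{thm:tensor_f+b} for Gaussian integer inputs.
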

\begin{proof}
    Since $\phi_\delta$ is $1/\delta$-smooth convex and symmetric in both arguments, $\phi_\delta$ satisfies \ref{it:phi1gl}--\ref{it:phi3gl} and it holds that $\spec \mu_{\mathrm{LR}}((g_1, g_2)\cdot \mathcal A) \in \intr\dom\phi_\delta$ for all $g_1, g_2 \in \GL(n)$.
    Hence, \cref{cor:entanglementpolytope} applies with~$\phi = \phi_\delta$ and the step size~$\eta = \delta/2$, and it gives
    \[
        0 \leq \phi_\delta(\spec\mu_{\mathrm{LR}}(\mathcal{A}_k)) - \min_{p \in \Delta_{\mathrm{LR}}(\mathcal{A})}\phi_\delta(p) \leq 4 \frac{\log\|\mathcal{A}\| - \log\Capacity_{p^\star}(\mathcal{A})}{\delta k}.
    \]
    From \cref{eq:ncrank error}, we see that $0 \le \min_{p \in \Delta_{\mathrm{LR}}(\mathcal A)}\phi_\delta(p) - \min_{p \in \Delta_{\mathrm{LR}}(\mathcal A)}\phi(p) \le 2n\delta$;
    moreover we know that $\min_{p \in \Delta_{\mathrm{LR}}(\mathcal A)}\phi(p) = 2 - (2/n)\ncrk(\mathcal{A})$ from \cref{prop:hirai}.
    Thus we obtain the displayed convergence bound.
    For the rounding claim, note that our choice of~$\delta$ and~$k$ leads to an additive error
    \begin{align*}
        0 \leq \ncrk(\mathcal{A}) - \left(n - \frac n2\phi_\delta(\spec\mu_{\mathrm{LR}}(\mathcal{A}_k))\right) \leq n^2\delta + 2n \frac{\log\|\mathcal{A}\| - \log\Capacity_{p^\star}(\mathcal{A})}{\delta k} < \frac12 + \frac12 = 1;
    \end{align*}
    hence \cref{eq:ncrank by rounding} follows because $\ncrk(\mathcal{A})$ is an integer.
    Lastly, the claim for Gaussian integer matrix tuples follows from the capacity lower bound of~\cref{thm:tensor_f+b}.
\end{proof}

While our algorithm has polynomial iteration complexity for Gaussian integer (or rational) inputs, it does not directly yield a polynomial-time algorithm in the Turing model due to its use of exact real arithmetic; see \cref{rem:numerical}.
We plan to carry out a precision analysis to obtain such an algorithm in future work.
We still find it interesting to compare the method with the existing algorithms for non-commutative rank computation~\cite{GGOW20,IQS-18,Hamada-Hirai-21} (see also \cite{FSG-22}), as they are all based on different approaches.
The algorithm in \cite{GGOW20} solves the decision problem of whether $\ncrk(\mathcal{A}) \ge r$ by operator scaling on suitable matrix tuples constructed from~$\mathcal A$; then the non-commutative rank is obtained by a search over~$r$.
The other two algorithms~\cite{IQS-18,Hamada-Hirai-21} compute the non-commutative rank more directly, but they involve more complex steps.
In contrast, our algorithm iterates a single simple update formula and it targets the value of the non-commutative rank directly---combining these two desirable properties of prior work in one algorithm.

Another way to compute the non-commutative rank is via the $G$-stable rank.
Derksen~\cite[Prop.~2.9]{Derksen-22} showed that if one identifies the matrix tuple~$\mathcal A \in \Ten(n_0; n, n)$ with a $3$-tensor $T_{\mathcal A}\in \Ten(n_0, n, n) \coloneqq \Ten(1; n_0, n, n)$, the non-commutative rank equals the $G$-stable rank with $\alpha = (\ell, 1, 1)$ for all $\ell \ge n$.
By considering the limit $\ell \to \infty$, the corresponding minimization problem for the inverse of the $G$-stable rank~\eqref{eq:Gstable-rank} reduces to one that does not depend on the first mode.
For the resulting ``$\alpha = (\infty,1,1)$'' objective function, we can construct log-sum-exp approximations that do not depend on the first mode.
Then, the corresponding Hadamard mirror descent update for the tensor~$T_{\mathcal A}$ serves as the one for the matrix tuple~$\mathcal A\in\Ten(n_0; n, n)$ and $\alpha = (1, 1)$ (but with a smaller step size than the natural choice).
Thus, $\rk^G_{(1,1)}(\mathcal A) = \ncrk(\mathcal A)$, and the method in \cref{thm:Gstable} can also be used for non-commutative rank computation.

However, note that to compute $\rk^G_{(1,1)}(\mathcal A) \in \{1,\dots,n\}$, we have to approximate the inverse $G$-stable rank $1/\rk^G_{(1,1)}(\mathcal A)$ to additive error $\eps < \frac1{n-1} - \frac1n = \Theta(n^{-2})$.
To achieve this, \cref{thm:Gstable} shows that $k = \mathcal{O}\bigl((n^4 \log n) (\log\|\mathcal A\| - \log\Capacity_{p^\star}(\mathcal A))\bigr)$ iterations suffice.
In contrast, \cref{thm:ncrk} shows that $k = \mathcal{O}\bigl(n^3 (\log\|\mathcal A\| - \log\Capacity_{p^\star}(\mathcal A))\bigr)$ iterations suffice to compute the non-commutative rank exactly.
The $p^\star$ are not necessarily the same (they are minimizers for different objective functions), but we can upper bound both capacity terms by the common uniform estimate from \cref{thm:tensor_f+b}.
Then the resulting upper bounds on the iteration complexities can be compared: for the $G$-stable rank based method, we find an additional factor~$n \log n$ compared to the method presented in this section.

\section*{Acknowledgments}
We thank Hiroshi Hirai for sharing an early version of his manuscript~\cite{Hirai2025} and for suggesting~\cite{Nesterov-smoothing}.
We also thank Maxim van den Berg for valuable discussions and Frank Verstraete for asking about rigorous step sizes for inverse marginal iterations as in \cref{eq:intro-entropic}.
We acknowledge support by the European Union (ERC Grant SYMOPTIC, 101040907), by the Deutsche Forschungsgemeinschaft (DFG, German Research Foundation, 556164098), by the Deutsche Forschungsgemeinschaft under Germany's Excellence Strategy~--~EXC-2111~--~390814868, and by the German Federal Ministry of Research, Technology and Space (QuSol, 13N17173), and by the Klaus Tschira Foundation.
We thank the Simons Institute for the Theory of Computing at UC Berkeley and Q-FARM and the Leinweber Institute for Theoretical Physics at Stanford University for hospitality.

\appendix
\addtocontents{toc}{\protect\setcounter{tocdepth}{1}}

\section{Matrix scaling via marginal entropy maximization}\label{app:matrix scaling}
Let $A = (A_{ij}) \in \RR^{n \times m}$ be an entrywise nonnegative matrix.
The \emph{matrix scaling problem}~\cite{Sinkhorn1964,Idel2016} asks to transform it into an approximately doubly-stochastic matrix\footnote{The term ``doubly-stochastic'' is usually used for square matrices, but we also use it for rectangular matrices under appropriate normalization as in the text: namely, that the row and the column sums are uniform distributions.} by scaling with positive definite diagonal matrices:
\begin{equation}\label{eq:matrix scaling goal}
    B\bm{1} \approx \frac{\bm{1}}n, \quad B^\top\bm{1}\approx \frac{\bm{1}}m,\qquad \text{where}\quad B = XAY, \quad X, Y:\ \text{diagonal},\quad X, Y \succ 0.
\end{equation}
We may assume that each row and column of~$A$ has at least one positive entry, since otherwise this problem cannot be solved.

In this appendix, we recast the matrix scaling problem as a marginal entropy maximization problem, and obtain an algorithm that we call \emph{entropic matrix scaling}.
Its iterations take the~simple~form:
\begin{equation}\label{eq:entropic-matrix-scaling}
A_{k+1} \propto \diag(A_k\bm{1})^{-\frac{1}{2}} A_k \diag( A_k^{\top}\bm{1})^{-\frac{1}{2}}, \qquad A_0\coloneqq A.
\end{equation}
This algorithm maximizes the sum of the marginal entropies at a rate inversely proportional to the iteration count.
Because uniform distributions maximize the entropy, this solves problem~\eqref{eq:matrix scaling goal} whenever possible, but the algorithm also applies to ``non-scalable matrices'' for which~\eqref{eq:matrix scaling goal} cannot be achieved for arbitrarily small error.
We note that the iteration~\eqref{eq:entropic-matrix-scaling} has previously been analyzed in \cite{knight2014symmetry} (for scalable square matrices, and without the connection to mirror descent or entropy maximization).
It can also be obtained as a special case of the SMART iteration, which has a mirror descent interpretation~\cite{raus2024accelerated} (but on different variables from our mirror descent formalization below).

The problem of maximizing the sum of marginal entropies can be formalized as follows in a simplified manner.%
\footnote{\label{footnote:precise matrix scaling}
Strictly speaking, the rigorous way to formalize this problem is to restrict $f_A$ to a subspace $\MM \coloneqq \{(x, y) \in \RR^n \times \RR^m \mid \bm{1}^\top x = \bm{1}^\top y = 0\} \cong \RR^{n - 1} \times \RR^{m - 1}$, and define $\phi(p', q') \coloneqq -\frac12(H(p' + \bm{1}/n)+H(q' + \bm{1}/m))$ for $(p', q') \in \MM$; as usual we define~$H(p) \coloneqq -\infty$ for~$p\not\in\RR^n_{\geq0}$.
This ensures that $\dom\phi$ has a nonempty interior and the assumption in \cref{def:grad} is satisfied.
However, to avoid cumbersome notation, we proceed as if $f_A$ is defined on~$\RR^n\times\RR^m$.}
Let $W \coloneqq \{(p, q) \in \RR^n \times \RR^m \mid p \ge \bm{0}, \, q \ge \bm{0}, \, \bm{1}^\top p = \bm{1}^\top q = 1\}$ denote the set of pairs of probability distributions, and consider the following convex functions $f_A\colon \RR^n\times\RR^m \to \RR$ and $\phi\colon \RR^n\times\RR^m \to \overline\RR$:
\[
    f_A(x, y) \coloneqq \log\sum_{i, j}A_{ij}e^{x_i +y_j},\qquad
    \phi(p, q) \coloneqq \begin{cases}
        -\frac{1}{2}\left(H(p) + H(q)\right) & \text{if } (p, q) \in W,\\
        +\infty& \text{otherwise},
    \end{cases}
\]
where $H(p) \coloneqq -\sum_ip_i\log p_i$ with $0\log0 \coloneqq 0$ denotes the Shannon entropy.%
\footnote{The function~$f_A$ can be identified with the Kempf--Ness function for the action of~$G=(\CC^\times)^n  \times (\CC^\times)^m$ on~$\VV = \CC^{n \times m}$ by left-right-multiplication with diagonal matrices and the vector~$v = \sum_{i,j} \smash{A_{ij}^{1/2}} e_i e_j^\top$; see \cref{eq:kn,ex:euclidean as selfadjoint}. We will not use this connection in the subsequent discussion.}
The components of the Euclidean gradient~$\nabla f_A(x,y)$ are the row and column sums of the normalized rescaled matrix $B_{ij} \coloneqq A_{ij}e^{x_i +y_j} / \sum_{i', j'}A_{i'j'}e^{x_{i'} +y_{j'}}$; hence $\nabla f_A(x,y)$ is an element of~$W$.
Hence, the problem of minimizing~$\phi\circ\nabla f_A$ captures precisely the problem of maximizing the sum of the marginal entropies over all scalings of the matrix~$A$.

Hadamard mirror descent~\eqref{eq:descent} for minimizing $\phi\circ\nabla f_A$, which---as explained in \cref{sub:descent and euclidean}---coincides with ordinary mirror descent~\eqref{eq:mirror in grad space} for $h=f_A^*$~and~$\phi$ in gradient coordinates, is~given~by
\begin{equation}\label{eqn:matrixscaling-mirror}
    (x_{k + 1})_i \coloneqq (x_k)_i - \frac{\eta}{2}\log\mleft({\sum_{j = 1}^m}(B_k)_{ij}\mright) + c_k,\quad (y_{k + 1})_j \coloneqq (y_k)_j - \frac{\eta}{2}\log\mleft({\sum_{i = 1}^n}(B_k)_{ij}\mright) + c'_k,
\end{equation}
where $(B_k)_{ij} \coloneqq A_{ij}e^{(x_k)_i+(y_k)_j}/\sum_{i',j'}A_{i'j'}e^{(x_k)_{i'}+(y_k)_{j'}}$, and $c_k$ and $c'_k$ are any constants originating from the indeterminacy of $\nabla\phi$ in the directions normal to the affine hull of $W \subseteq \RR^n \times \RR^m$.
These constants do not change the matrices~$B_k$, which are normalized, nor the objective values~$\phi(\nabla f_A(x_k,y_k))$.
We take $x_0\coloneqq0$ and~$y_0\coloneqq0$ as the starting point.

We determine the relative smoothness parameter~$L$ in this setting by an analysis similar to \cite{raus2024accelerated}.
We use the characterization in \cref{eq:relsmooth-Bregman dual}, which is a precise special case of the Riemannian definition in \cref{eq:relsmooth} and in the present setting reads as follows:
\begin{align}\label{eq:matrix scaling smoothness goal}
    D_{\phi}(\nabla f_A(x',y')\Vert \nabla f_A(x,y)) \leq L \cdot D_{f_A}(x,y\Vert x',y') \qquad (x,x'\in\RR^n, y,y' \in \RR^m).
\end{align}

\begin{lemma}
Relative smoothness holds with~$L=1$.
Moreover, this cannot be improved for all matrices~$A$ (if $n,m\geq2$).
\end{lemma}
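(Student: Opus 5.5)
The plan is to compute both sides of \cref{eq:matrix scaling smoothness goal} in closed form, express each as Kullback--Leibler divergences, and then compare them using the data-processing inequality.

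\textbf{Right-hand side.} Write $B$ and $B'$ for the normalized rescaled matrices at $(x,y)$ and $(x',y')$. Set $\Delta x\coloneqq x-x'$ and $\Delta y\coloneqq y-y'$. Then
\[
B_{ij}=\frac{B'_{ij}e^{\Delta x_i+\Delta y_j}}{Z}, \qquad Z\coloneqq\sum_{i,j}B'_{ij}e^{\Delta x_i+\Delta y_j},
\]
so that $f_A(x,y)-f_A(x',y')=\log Z$. Moreover $\nabla f_A(x',y')=(B'\bm 1,B'^\top\bm 1)$, hence
\[
\langle\nabla f_A(x',y'),(\Delta x,\Delta y)\rangle=\sum_{i,j}B'_{ij}(\Delta x_i+\Delta y_j).
\]
Substituting these into the definition of the Bregman divergence gives
\[
D_{f_A}(x,y\Vert x',y')=\log Z-\sum_{i,j}B'_{ij}\log\frac{B_{ij}Z}{B'_{ij}}=D_{\mathrm{KL}}(B'\Vert B),
\]
where $B,B'$ are viewed as probability distributions on $[n]\times[m]$. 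Entries with $B'_{ij}=0$ occur only where $A_{ij}=0$, in which case $B_{ij}=0$ as well, so they cause no trouble.

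\textbf{Left-hand side.} On $W$, $\phi$ is one half of a sum of negative entropies. The Bregman divergence of $\sum_i p_i\log p_i$ between probability vectors is the KL divergence; the linear correction term vanishes because $\bm 1^\top(p'-p)=0$. With $p=B\bm 1$, $q=B^\top\bm 1$ and analogously $p',q'$, this gives
\[
D_\phi\bigl(\nabla f_A(x',y')\Vert\nabla f_A(x,y)\bigr)=\tfrac12\bigl(D_{\mathrm{KL}}(p'\Vert p)+D_{\mathrm{KL}}(q'\Vert q)\bigr).
\]
Here $p,q$ have full support by the standing assumption on the rows and columns of $A$, so $\nabla\phi$ is defined there. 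A little care is needed with the indeterminacy of $\nabla\phi$ normal to $\operatorname{aff}W$; it is harmless, as noted in \cref{footnote:precise matrix scaling}.

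\textbf{Comparison.} Marginalization is a stochastic map, so the (classical) data-processing inequality, the commutative case of \cref{eq:dpi}, yields $D_{\mathrm{KL}}(p'\Vert p)\le D_{\mathrm{KL}}(B'\Vert B)$ and likewise for $q$. Averaging the two bounds gives \cref{eq:matrix scaling smoothness goal} with $L=1$.

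\textbf{Optimality.} I expect the main obstacle to be this step, since it needs a matrix for which both marginalizations lose no information simultaneously.
\begin{itemize}
\item For $n=m=2$, take $A=I$. Then every $B$ is diagonal with $p=q=\operatorname{diag}(B)$, so $D_{\mathrm{KL}}(B'\Vert B)=D_{\mathrm{KL}}(p'\Vert p)=D_{\mathrm{KL}}(q'\Vert q)$. Any choice with $p'\ne p$ gives ratio exactly $1$.
\item For general $n,m\ge2$, take $A$ with $A_{11}=A_{22}=1$ and every other entry equal to $\varepsilon>0$, so that each row and column has a positive entry. Vary only $x_1,y_1$, keeping the mass in the $2\times2$ leading block. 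As $\varepsilon\to0$, both divergences converge to those of the $2\times2$ identity example, so the ratio tends to $1$ and no $L<1$ works for all matrices.
\end{itemize}
The point to verify is that the $\varepsilon$-perturbations contribute $o(1)$ to both sides while the leading-block difference stays of order one. Fixing $(x,y)$ and $(x',y')$ before letting $\varepsilon\to0$ should make this a routine continuity argument.
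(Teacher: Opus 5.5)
Your proposal is correct. The $L=1$ bound is the paper's argument essentially verbatim: both Bregman divergences become KL divergences, and the classical data-processing inequality finishes it.

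For optimality you take a different and somewhat weaker route.

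\emph{The paper's construction.} It uses one fixed matrix for each size, namely $A=\begin{psmallmatrix}1&0\\0&J\end{psmallmatrix}$ with $J$ the all-ones $(n-1)\times(m-1)$ block. Its normalized scalings include $B=\begin{psmallmatrix}t&0\\0&(1-t)U\end{psmallmatrix}$ with $U$ the normalized all-ones block. For two such scalings, $D_{\mathrm{KL}}(B'\Vert B)$, $D_{\mathrm{KL}}(p'\Vert p)$ and $D_{\mathrm{KL}}(q'\Vert q)$ all equal the binary KL divergence between $t'$ and $t$. So the ratio is exactly $1$ for every $n,m\geq2$, with no limit. The underlying reason is that the conditional distribution of the column given the row (and of the row given the column) is the same for $B$ and $B'$. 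By the chain rule for KL divergence, the data-processing inequality is then tight.

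\emph{Your construction.} Your $A=I$ example for $n=m=2$ is the smallest instance of the paper's construction. The $\varepsilon$-perturbation is needed only because you did not fill the extra rows and columns with a uniform block.

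Your limiting argument is nonetheless valid. Fix the two points, differing only in $x_1,y_1$ with $x_1'+y_1'\neq x_1+y_1$. The $\varepsilon$-entries then contribute terms of size $O(\varepsilon)$ times bounded log-ratios to all three divergences. The leading-block contributions converge to the positive binary KL divergence of the $2\times2$ diagonal example. This proves exactly the statement as worded: no $L<1$ works uniformly over all matrices.

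\emph{What each buys.} The paper's construction gives slightly more: for each size, a single fixed $A$ for which $L=1$ is already sharp. It also avoids the continuity bookkeeping.
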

\begin{proof}
For any $x,y$, we define $B_{ij} \coloneqq A_{ij}e^{x_i +y_j} / \sum_{i', j'}A_{i'j'}e^{x_{i'} +y_{j'}}$ and denote by $p \coloneqq B\bm{1}$ and $q \coloneqq B^\top\bm{1}$ the row and column sums, respectively.
We similarly define~$B'$, $p'$ and $q'$ for $x',y'$.
Then, $\nabla f_A(x, y) = (p, q)$ and we can take $\nabla\phi(p,q) = \frac12 (\log p, \log q)$ (recall that~$\nabla\phi$ is only determined up to irrelevant multiples of the all-ones vector in each component).
By direct computation:
\begin{align*}
    D_\phi((p',q')\Vert (p,q)) &= \frac12 \left( D_{\mathrm{KL}}(p' \Vert p) + D_{\mathrm{KL}}(q' \Vert q) \right), \\
    D_{f_A}((x,y) \Vert (x',y')) &= D_{\mathrm{KL}}(B'\Vert B),
\end{align*}
where $D_{\mathrm{KL}}$ is the KL-divergence (we regard $B$ and $B'$ as probability distributions on $[n]\times[m]$).
As a consequence of the data processing inequality, which gives $D_{\mathrm{KL}}(B'\Vert B) \geq D_{\mathrm{KL}}(p'\Vert p)$ and $D_{\mathrm{KL}}(B'\Vert B) \geq D_{\mathrm{KL}}(q'\Vert q)$, we conclude that~\eqref{eq:matrix scaling smoothness goal} holds with~$L=1$.

To see that this cannot be improved for all matrices if $n,m\geq2$, consider $B=\begin{psmallmatrix}t&0\\0&(1-t)U\end{psmallmatrix}$ and $B'=\begin{psmallmatrix}t'&0\\0&(1-t')U\end{psmallmatrix}$, where $U \coloneqq \frac1{(n-1)(m-1)} \bm{1}_{n-1}\bm{1}_{m-1}^\top$ is a normalized all-ones matrix and $t, t'\in(0,1)$ with $t \neq t'$.
These clearly arise as scalings of the same matrix~$A$.
Then $D_{\mathrm{KL}}(B'\Vert B)
= D_{\mathrm{KL}}(p' \Vert p)
= D_{\mathrm{KL}}(q' \Vert q)
> 0$
are all equal (to the binary KL-divergence between~$t$ and~$t'$).
\end{proof}

Taking the step size $\eta=1/L=1$ in \cref{eqn:matrixscaling-mirror}, we see that the corresponding sequence of matrices
\begin{equation}\label{eq:matrixscaling-correspondence}
    A_k \propto e^{\diag x_k}\, A \, e^{\diag y_k}
\end{equation}
amounts precisely to the entropic matrix scaling algorithm~\eqref{eq:entropic-matrix-scaling}.
As a consequence of \cref{thm:main-discrete,lem:relsmooth}, we have the following convergence result:

\begin{theorem}[Convergence of entropic matrix scaling]\label{thm:matrix scaling}
    Let $A \in \RR^{n \times m}$ be an entrywise nonnegative matrix such that each row and column has at least one positive entry.
    Let $(A_k)_{k \in \NN}$ be a matrix sequence generated by the entropic matrix scaling~\eqref{eq:entropic-matrix-scaling}, and let $B_k \coloneqq A_k / \sum_{i, j}(A_k)_{ij}$.
    Then, for $k\geq1$,
    \[
        \phi\mleft(B_k\bm{1} , B_k^{\top}\bm{1}\mright) - \min_{(p, q) \in \overline{\nabla f_A(\RR^n \times \RR^m)}}\phi(p, q) \le \frac{\log \kappa_A}{k},
    \]
    where $\kappa_A \coloneqq  \frac{\sum_{i,j} A_{ij}}{\min_{i,j : A_{ij}>0}\; A_{ij}}$ is a natural condition number of the matrix~$A$.
\end{theorem}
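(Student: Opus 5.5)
We apply \cref{thm:main-discrete} in the Euclidean setting $\MM=\RR^n\times\RR^m$, with $f=f_A$ and objective $\phi$, using step size $\eta=1$. The step-size condition holds because the preceding lemma gives $1$-relative smoothness, and \cref{lem:relsmooth}~(b) converts this into the step-size condition. For the rigorous version we restrict to the subspace of \cref{footnote:precise matrix scaling}, so that $\intr\dom\phi\neq\emptyset$. The iterates $\nabla f_A(x_k,y_k)$ stay in the relative interior, because every row and column of $A$ has a positive entry, so all row and column sums of $B_k$ are positive. By \cref{eq:matrixscaling-correspondence}, $B_k$ is the normalized $e^{\diag x_k}Ae^{\diag y_k}$, hence $\phi(\nabla f_A(x_k,y_k))=\phi(B_k\bm 1,B_k^\top\bm 1)$.

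Next we handle the comparison point. In Euclidean space a Busemann function is linear, $b^\xi=-\langle\xi,\cdot\rangle$, and by \cref{lem:Q-busemann} and \cref{ex:Q_xi} we have $Q(\xi)=\phi(\xi)$. We choose $\xi=(p^\star,q^\star)$, a minimizer of $\phi$ over the compact set $\overline{\nabla f_A(\RR^n\times\RR^m)}$; it exists because $\phi$ is l.s.c. \Cref{thm:main-discrete} then yields
\[
\phi(B_k\bm1,B_k^\top\bm1)-\phi(p^\star,q^\star)\le \frac{f_A(0,0)-\langle\xi,0\rangle-\inf_{x,y}\bigl(f_A(x,y)-\langle p^\star,x\rangle-\langle q^\star,y\rangle\bigr)}{k}.
\]
The remaining task is to bound this numerator, which is a (logarithmic) capacity, by $\log\kappa_A$.

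To bound the numerator, let $\Omega=\{(e_i,e_j):A_{ij}>0\}$. Writing $f_A(x,y)=\log\sum_{(i,j)}A_{ij}e^{\langle(e_i,e_j),(x,y)\rangle}$, the closure of the gradient set is $\operatorname{conv}\Omega$, as in \cref{eq:gradient-trick}. So we can write $\xi=\sum_{(i,j)}\lambda_{ij}(e_i,e_j)$ with $\lambda\ge0$, $\sum\lambda_{ij}=1$, supported on the positive entries. Lower-bounding the sum by $\min_{A_{ij}>0}A_{ij}\cdot\sum_{A_{ij}>0}e^{x_i+y_j}$ and applying Jensen or AM–GM with weights $\lambda$ gives
\[
\sum_{i,j}A_{ij}e^{x_i+y_j}\ge A_{\min}\sum\lambda_{ij}e^{x_i+y_j}\ge A_{\min}\,e^{\sum\lambda_{ij}(x_i+y_j)}=A_{\min}\,e^{\langle\xi,(x,y)\rangle}.
\]
Hence $\inf(f_A-\langle\xi,\cdot\rangle)\ge\log A_{\min}$. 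Together with $f_A(0,0)=\log\sum_{i,j}A_{ij}$, this shows the numerator is at most $\log\kappa_A$. With the subspace restriction, the same identities hold up to constants that cancel, because $\bm1^\top p=\bm1^\top q=1$.

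The main obstacle is the domain technicality. $\phi$ has empty interior in $\RR^n\times\RR^m$, so everything must be carried out on the restricted subspace of \cref{footnote:precise matrix scaling}, and we must check that the Busemann/linear shift and the minimizer transfer correctly under this restriction. We handle this by translating $(p,q)\mapsto(p-\bm1/n,q-\bm1/m)$, noting that the linear term changes only by constants that cancel in the difference.
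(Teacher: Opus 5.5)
Your proposal is correct and follows the paper's proof. You apply \cref{thm:main-discrete} with $\eta=1$, justified by the $L=1$ relative-smoothness lemma and \cref{lem:relsmooth}, and you bound the numerator $f_A^{\xi}(0,0)-\inf f_A^{\xi}$ by $\log\kappa_A$. The paper obtains that bound by citing \cite[Eq.~(1.9)]{BLNW2020}, whereas you prove it directly with the Jensen/AM--GM argument (which only needs the easy inclusion $\overline{\nabla f_A(\RR^n\times\RR^m)}\subseteq\operatorname{conv}\Omega$), and you also spell out the subspace reduction that the paper relegates to \cref{footnote:precise matrix scaling}.
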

\begin{proof}
    One can observe that $\phi(B_k\bm{1},B_k^{\top}\bm{1}) = \phi(\nabla f_A(x_k, y_k))$ from \cref{eq:matrixscaling-correspondence}.
    From the general theory of log-sum-exp functions, we have $f_A^{(p, q)}(0, 0) - \inf_{x, y} f_A^{(p, q)}(x, y) \le \log\kappa_A$ for all $(p, q) \in \overline{\nabla f_A(\RR^n \times \RR^m)}$; see \cite[Eq.~(1.9)]{BLNW2020} and the discussion following it.
    This bounds the numerator in the right-hand side of \cref{eq:main discrete first}.
    Thus, we obtain the upper bound from \cref{thm:main-discrete,lem:relsmooth}.
\end{proof}

We remark that entropic matrix scaling is a special case of entropic tensor scaling discussed in \cref{sub:quantum} applied to the matrix tuple $(\sqrt{A_{ij}}e_i\otimes e_j)_{i \in [n], j \in [m]}$ and~$\theta=(1/2,1/2)$.

\section{Technical proofs}
\label{app:technical proofs}
In this section, we prove \cref{lem:asymptoticrays,lem:Q-busemann} and the second equivalence in \cref{prop:shiftingtrick_orig}.
The first two results rely on the following lemma:

\begin{lemma}
    \label{lem:longdistanceparalleltransport}
    Let $\MM$ be a Hadamard manifold, and let $\gamma(t) \coloneqq \exp_x(tu)$ and $\beta(t) \coloneqq \exp_y(tv)$ $(u \in T_x\MM, v \in T_y\MM)$ be asymptotic geodesic rays.
    Then, it holds that
    \[
        v = \lim_{t \to \infty}\tau_{\gamma(t) \to y}\circ\tau_{x \to \gamma(t)}(u).
    \]
\end{lemma}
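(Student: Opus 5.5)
The plan is to compare the two rays through the transported vectors $u_t \coloneqq \tau_{x\to\gamma(t)}(u) = \dot\gamma(t)$ and $w_t \coloneqq \tau_{\gamma(t)\to y}(u_t) \in T_y\MM$, and to show that $w_t \to v$. The natural intermediate object is the direction from $y$ towards $\gamma(t)$. Set $z_t \coloneqq \exp_y^{-1}(\gamma(t))$. Since $\gamma$ and $\beta$ are asymptotic, a standard Hadamard-space fact gives $z_t/t \to v$: the geodesic from $y$ to $\gamma(t)$, reparametrized on $[0,t]$, stays within distance $\max(\dist(x,y),\dist(\gamma(t),\beta(t)))$ of $\beta$ by convexity of the distance function. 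Comparing the endpoints at time $s$ with $s$ fixed and $t\to\infty$ then yields $\exp_y(s z_t/t)\to\beta(s)$, and hence $z_t/t\to v$.

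It therefore suffices to show that $\norm{w_t - z_t/t}\to 0$. First I would use that $-z_t$ is parallel-transported to $\exp_{\gamma(t)}^{-1}(y)$ along the geodesic from $y$ to $\gamma(t)$. Since parallel transport along a geodesic is an isometry, this gives
\[ \norm{w_t - z_t/t} = \norm{u_t + \tfrac1t\exp_{\gamma(t)}^{-1}(y)}. \]
Next I would compare $\exp_{\gamma(t)}^{-1}(y)/t$ with $\exp_{\gamma(t)}^{-1}(x)/t = -u_t$. The two points $x$ and $y$ lie at bounded distance $\dist(x,y)$ from each other and at distance about $t\norm{u}$ from $\gamma(t)$. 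In a Hadamard manifold the angle they subtend at $\gamma(t)$ is at most the corresponding Euclidean comparison angle, which is $O(1/t)$, by CAT(0) comparison. Moreover the lengths differ by at most $\dist(x,y)$. Together these give $\norm{\exp_{\gamma(t)}^{-1}(y) - \exp_{\gamma(t)}^{-1}(x)} = O(1)$, so after dividing by $t$ the difference tends to $0$.

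Combining the pieces, $w_t = z_t/t + o(1) \to v$.

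The main obstacle is the first step, the convergence $z_t/t\to v$. The degenerate case $u=0$ is easy, since then $v=0$ because the rays are asymptotic. In the case $u\neq0$ I would argue via convexity of $s\mapsto \dist(\sigma_t(s),\beta(s))$, where $\sigma_t$ is the geodesic from $y$ to $\gamma(t)$ on $[0,t]$. This function is bounded by $C\coloneqq\dist(\gamma(t),\beta(t))$ at $s=t$, which is uniformly bounded, and it vanishes at $s=0$. Convexity gives $\dist(\sigma_t(s),\beta(s))\le sC/t\to0$, and then the local diffeomorphism property of $\exp_y$ yields $\dot\sigma_t(0)=z_t/t\to v$.
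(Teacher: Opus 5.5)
Your proof is correct and follows the same route as the paper's: both pass through the direction from $y$ to $\gamma(t)$. Both control the angle at $\gamma(t)$ in the triangle $x,y,\gamma(t)$ by CAT(0) comparison, carry it back to $y$ by parallel transport along the geodesic from $y$ to $\gamma(t)$, and separately show that the direction towards $\gamma(t)$ approaches $v$.

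The differences are minor. You compare vectors rather than angles, so the paper's preliminary step $\norm{u}=\norm{v}$ is not needed. You also obtain $z_t/t\to v$ from convexity of $s\mapsto\dist(\sigma_t(s),\beta(s))$ together with $\exp_y$ being a \emph{global} diffeomorphism (Cartan--Hadamard), whereas the paper applies the comparison law of cosines to the triangle $y,\beta(t),\gamma(t)$.
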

\begin{proof}
    The case of $u = 0$ is trivial; hence, we assume $u \neq 0$.
    Since $\gamma \sim \beta$, we have $\|v\| = \|u\| = \|\tau_{\gamma(t) \to y}\circ\tau_{x \to \gamma(t)}(u)\|$ for all $t \in \RR_{\ge 0}$.
    Therefore, it suffices to show convergence of the angle: $\lim_{t \to \infty}\angle_{y}(v, \tau_{\gamma(t) \to y}\circ\tau_{x \to \gamma(t)}(u)) = 0$.
    Let $\alpha_t: [0, 1] \to \MM$ be a geodesic such that $\alpha_t(0) = y$ and $\alpha_t(1) = \gamma(t)$.
    Let $\theta_t \coloneqq \angle_{\gamma(t)}(\dot\gamma(t), \dot\alpha_t(1))$ and $\varphi_t \coloneqq \angle_{y}(v, \dot\alpha_t(0))$.
    The law of cosines on nonpositively-curved spaces \cite[II.1.7~Proposition]{BridsonHaefliger1999} (or the $\mathrm{CAT}(0)$ inequality) yields,
    \begin{align*}
        \cos\theta_t
        &\ge \frac{\dist(x, \gamma(t))^2 + \dist(y, \gamma(t))^2 - \dist(x, y)^2}{2\dist(x, \gamma(t))\dist(y, \gamma(t))}\\
        &= \frac12\left(\frac{t\|u\|}{\dist(y, \gamma(t))} + \frac{\dist(y, \gamma(t))}{t\|u\|}\right) - \frac{\dist(x, y)^2}{2t\|u\|\dist(y, \gamma(t))} \ge 1 - \frac{\dist(x, y)^2}{2t\|u\|(t\|u\| - \dist(x, y))}\ \underset{t \to \infty}{\longrightarrow} 1 \\
    \intertext{and}
        \cos\varphi_t
        &\ge \frac{\dist(y, \beta(t))^2 + \dist(y, \gamma(t))^2 - \dist(\beta(t), \gamma(t))^2}{2\dist(y, \beta(t))\dist(y, \gamma(t))}\\
        &= \frac12\left(\frac{t\|u\|}{\dist(y, \gamma(t))} + \frac{\dist(y, \gamma(t))}{t\|u\|}\right) - \frac{\dist(\beta(t), \gamma(t))^2}{2t\|u\|\dist(y, \gamma(t))} \ge 1 - \frac{\dist(\beta(t), \gamma(t))^2}{2t\|u\|(t\|u\| - \dist(x, y))}\ \underset{t \to \infty}{\longrightarrow} 1,
    \end{align*}
    where the final inequalities hold for all sufficiently large~$t$ so that the denominators are positive;
    in the last limit we also used that $\dist(\beta(t), \gamma(t)) < C$ for some $C \in \RR$.
    Then,
    \[
        \angle_{y}(v, \tau_{\gamma(t) \to y}\circ\tau_{x \to \gamma(t)}(u)) \le \angle_{y}(v, \dot\alpha_t(0)) + \angle_y(\dot\alpha_t(0), \tau_{\gamma(t) \to y}\circ\tau_{x \to \gamma(t)}(u)) = \varphi_t + \theta_t\ \underset{t \to \infty}{\longrightarrow} 0,
    \]
    which completes the proof.
\end{proof}

Then we show \cref{lem:asymptoticrays}.
\begin{proof}[Proof of \cref{lem:asymptoticrays}]
    We denote by $\gamma(t) \coloneqq g^\dagger e^{tX}g$ a geodesic on $P$.
    By the concrete formula of parallel transport, we have $\tau_{\gamma(0) \to I}\dot\gamma(0) = \hat u^\dagger X\hat u$, where $\hat u \coloneqq g(g^\dagger g)^{-1/2} \in K$.
    Define $Y \in \bbH$ such that $t \mapsto e^{tY}$ is asymptotic to $\gamma$.
    By \cref{lem:longdistanceparalleltransport}, $Y$ is given by
    \[
        Y = \lim_{t \to \infty}\tau_{\gamma(t) \to I}\circ\tau_{\gamma(0) \to \gamma(t)}\dot\gamma(0)
        = \lim_{t \to \infty}\tau_{\gamma(t) \to I}\circ\tau_{\gamma(0) \to \gamma(t)}\circ\tau_{I \to \gamma(0)}(\hat u^\dagger X\hat u).
    \]
    Since $\tau_{\gamma(t) \to I}\circ\tau_{\gamma(0) \to \gamma(t)}\circ\tau_{I \to \gamma(0)}$ is an element of $\Hol_I(P)$, this is given by the adjoint action of some ${\tilde u}_t \in K$.
    Hence $Y = \lim_{t \to \infty}{\tilde u}_t^\dagger \hat u^\dagger X\hat u{\tilde u}_t$.
    Since every conjugated matrix lies in the $K$-orbit of~$X$, which is compact and hence closed, so does~$Y$.
\end{proof}

We now prove the second equivalence in \cref{prop:shiftingtrick_orig}.
The result itself is known (see, e.g., \cite{FW2020}), but we give a geometric proof that directly applies to arbitrary points~$X \in \bbH$.

\begin{lemma}\label{lem:shiftingtrick}
In the setting of \cref{sub:convex_optimization_on_moment_polytopes}, let $v \in \VV\setminus\{0\}$ and $X \in \bbH$.
Then:
    \[
        \exists g \in G,\ \log\capacity_X(g\cdot v) > -\infty \iff \exists u \in K,\ \log\capacity_X(u \cdot v) > -\infty.
    \]
\end{lemma}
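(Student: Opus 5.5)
The direction ($\Leftarrow$) is trivial, since $K \subseteq G$. For ($\Rightarrow$), suppose $\log\capacity_X(g\cdot v) > -\infty$ for some $g\in G$, i.e., $f^X_{g\cdot v} = f_{g\cdot v} + b^X$ is bounded below on $P$. The plan is to transport this shifted Kempf--Ness function back to one for $v$ itself via the isometry $x\mapsto g^\dagger x g$, and then to identify the transported Busemann function with a Busemann function $b^{u^\dagger X u}$ up to an additive constant, exactly as in the proof of \cref{thm:tensor_f+b}.

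\textbf{Step 1.} I would first use the equivariance $f_{g\cdot v}(x) = f_v(g^\dagger x g)$, which holds because $\pi(g^\dagger)=\pi(g)^\dagger$. Since $x\mapsto g^\dagger x g$ is an isometry of $P$, it maps the geodesic ray $t\mapsto e^{tX}$ to the ray $\gamma(t)\coloneqq g^\dagger e^{tX} g$. The Busemann function therefore transforms as $b^X(x) = b^\gamma(g^\dagger x g)$, with no constant, because the definition is intrinsic. Consequently
\[
\inf_{x\in P} f^X_{g\cdot v}(x) = \inf_{y\in P}\bigl(f_v(y) + b^\gamma(y)\bigr).
\]

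\textbf{Step 2.} By \cref{lem:asymptoticrays}, $\gamma$ is asymptotic to $t\mapsto e^{tu^\dagger X u}$ for some $u\in K$. I would then show that Busemann functions of asymptotic rays of the same speed differ by a constant: $b^\gamma - b^{u^\dagger X u}$ is constant on $P$. This is the standard fact for Hadamard manifolds, or it can be derived from the definition: if $\dist(\gamma(t),\beta(t))\le C$, the difference of the limits defining the two Busemann functions is independent of $x$. The cleanest route is to note that both functions are geodesically convex with gradients of the same norm that agree by the asymptotic-ray criterion.

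This step is the only delicate point. I would handle it by citing the standard result that asymptotic rays define the same Busemann function up to an additive constant (Bridson--Haefliger II.8). The remark following \eqref{eq:diff busemann}, that $b^\xi$ is well defined up to a constant, already uses this fact implicitly. The normalizations agree because both rays have speed $\norm{X}_{\mathrm F} = \norm{u^\dagger X u}_{\mathrm F}$.

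\textbf{Step 3.} Finally, since the difference is a finite constant $c$,
\[
\inf_{y\in P}\bigl(f_v + b^{u^\dagger X u}\bigr) = \inf_{y\in P}\bigl(f_v + b^\gamma\bigr) - c > -\infty.
\]
Hence $\log\capacity_{u^\dagger X u}(v) > -\infty$. By \cref{eq:log cap K equivariance}, this equals $\log\capacity_X(u\cdot v)$, which gives the claim with this $u\in K$.
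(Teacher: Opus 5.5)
Your proposal is correct and matches the paper's proof step for step: pull back along the isometry $x\mapsto g^\dagger x g$, use \cref{lem:asymptoticrays} to obtain $u\in K$, use the standard fact that Busemann functions of asymptotic rays differ by an additive constant (which the paper also takes as known), and finish with \cref{eq:log cap K equivariance}. The paper records this as the identity $\log\capacity_X(g\cdot v) = \log\capacity_X(u\cdot v) + c_g/2$, which is exactly what your Step~3 computes.
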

\begin{proof}
Only the forward direction needs proof.
Let $g \in G$.
By \cref{lem:asymptoticrays}, there exists $u \in K$ such that the geodesic ray $\gamma(t) \coloneqq g^\dagger e^{tX}g$ is asymptotic to~$t \mapsto e^{t u^\dagger Xu}$, i.e., both determine the same equivalence class.
Then we have
\begin{align*}
    b^X(x)
= b^\gamma(g^\dagger x g)
= b^{u^\dagger X u}(g^\dagger x g) + c_g \qquad (x \in P),
\end{align*}
where the former equality follows from the isometry $x \mapsto g^\dagger xg$, and the latter from the fact that the Busemann functions of equivalent geodesics differ by a constant, which we call~$c_g \in \RR$.
Hence,
\begin{align*}
    f_{g\cdot v}^X(x)
= f_{g\cdot v}(x) + b^X(x)
= f_v(g^\dagger x g) + b^{u^\dagger X u}(g^\dagger x g) + c_g
= f_v^{u^\dagger X u}(g^\dagger x g) + c_g
\qquad (x \in P).
\end{align*}
By dividing by two and taking the infimum over $x \in P$, we obtain
\begin{align*}
    \log\capacity_X(g \cdot v)
= \log\capacity_{u^\dagger X u}(v) + \tfrac{c_g}2
= \log\capacity_X(u \cdot v) + \tfrac{c_g}2,
\end{align*}
where the second equality follows from \cref{eq:log cap K equivariance}.
Thus, if the left-hand side is finite, then so is the right-hand side.
\end{proof}

Next, we show \cref{lem:Q-busemann}.

\begin{proof}[Proof of \cref{lem:Q-busemann}]
    We denote $\gamma(t) \coloneqq \exp_x(tu)$.
    We let $R \coloneqq Q \circ \flat\colon T\MM \to \overline{\RR}$;
    then it is easy to see that $R$ is parallel transport invariant, and that $R_x$ is (Euclidean) l.s.c.\ convex.
    We have
    \[
        R_y(v) = R_y\left(\lim_{t \to \infty}\tau_{\gamma(t) \to y}\circ\tau_{x \to \gamma(t)}(u)\right) \le \lim_{t \to \infty}R_y\left(\tau_{\gamma(t) \to y}\circ\tau_{x \to \gamma(t)}(u)\right) = R_x(u),
    \]
    where the first equality follows from \cref{lem:longdistanceparalleltransport}, the inequality follows from the lower semicontinuity of $R_y$, and the last equality follows since $R$ is invariant under parallel transport.
    By symmetry, we also have $R_x(u) \le R_y(v)$.
    Hence, we have $Q_x(\flat u) = Q_y(\flat v)$.

    For $\xi \in C\MM^\infty$, the differential of the Busemann function at $x \in \MM$ is given by \cref{eq:diff busemann}: $\diff b^\xi(x) = -\flat u$, where $u \in T_x\MM$ is the unique tangent vector such that the geodesic $t\mapsto \exp_x(tu)$ belongs to~$\xi$.
    Therefore, if the differential at another point $y \in \MM$ is given by $\diff b^\xi(y) = -\flat v$ ($v \in T_y\MM$), then $\exp_x(tu)$ and $\exp_y(tv)$ are asymptotic.
    Thus, it holds that $Q_x(-\diff b^\xi(x)) = Q_x(\flat u) = Q_y(\flat v) = Q_y(-\diff b^\xi(y))$, which implies that $Q(\xi) \coloneqq Q_x(-\diff b^\xi(x))$ is well-defined.
\end{proof}

\bibliographystyle{amsalpha}
\bibliography{references}

\end{document}